\pgfplotsset{compat=newest}
\tikzstyle{decision} = [rectangle, draw, fill=green!20, 
    \tikzstyle{block} = [rectangle, draw, fill=blue!20, 
    \tikzstyle{line} = [draw, -latex']
    \tikzstyle{cloud} = [draw, ellipse,fill=red!20, node distance=3cm,
\newcommand{\sig}{\operatorname{Sig}}
\theoremstyle{thmstyleone}%
\newtheorem{theorem}{Theorem}[section]
\newtheorem{proposition}[theorem]{Proposition}%
\newtheorem{lemma}[theorem]{Lemma}%
\newtheorem{assumption}{Assumption}
\newtheorem{corollary}{Corollary}
\theoremstyle{thmstyletwo}%
\theoremstyle{thmstylethree}%
\newtheorem{remark}{Remark}
\newtheorem{definition}[theorem]{Definition}%
\newtheorem{proof-app}{Proof}
\begin{document}

\title[Learning 2-complexes]{Towards Stratified Space Learning: 2-complexes}


\author{\fnm{Yossi} \sur{Bokor Bleile}}\email{yossib@math.aau.dk}

\affil{\orgdiv{Department of Mathematical Sciences}, \orgname{Aalborg University}, \orgaddress{\street{Fredrik Bajers Vej 7K}, \city{Aalborg {\O}st}, \postcode{9220}, \country{Denmark}}}


\abstract{In this paper, we consider a simple class of stratified spaces -- 2-complexes. We present an algorithm that learns the abstract structure of an embedded 2-complex from a point cloud sampled from it. We use tools and inspiration from computational geometry, algebraic topology, and topological data analysis and prove the correctness of the identified abstract structure under assumptions on the embedding.}

\keywords{Stratified space learning, embedded spaces, applied topology, computational geometry}


\pacs[MSC Classification]{55N31, 68T09, 51-08}

\maketitle
\section{Introduction}

    Recent developments in technology have led to a dramatic increase in the quantity and complexity of data we can collect. These increases require new methods to enable efficient discovery and modelling of the structures underlying them. As the dimension in which we can observe data increases, it becomes more important to be able to reduce the dimensionality of large amounts of data. Some of the difficulties can be addressed by expanding the class of structures we can identify. In \cite{fods-graph}, the authors removed the assumption that the dimension is constant and presented an algorithm for learning the simplest class of stratified spaces -- graphs. A stratified space is a space described by gluing together (manifold) pieces, called strata. There are no restrictions placed upon each stratum's dimension, and the gluing can give rise to a variety of interesting and complex local structures. We extend their work to the identification of the abstract structure underlying a 2-complex. 

    As observed in \cite{fods-graph}, manifold learning can be used to detect and model structures underlying data sets. A variety of approaches and algorithms exist to learn manifold structures from (noisy) samples, see \cite{deymanifoldreconstruction}, \cite{deycurveandsurface}, \cite{deydimensiondetection}. These methods often place assumptions on the manifold and the sampling procedure, generally in the form of restrictions on curvatures, as well as on the density of the sample and the type of noise. The assumptions on curvature are not satisfied by data sets arising in many applications, in particular geospatial data sets arising from person and vehicle movement in transportation networks. We make a second step towards in expanding the set of allowable underlying structures to include stratified spaces of dimension $2$. \cite{stratlearning} focuses on an algorithm to identify when two points have been sampled from the same stratum of a stratified space, but does not present a method for detecting what the dimension of this piece is, or what the global structure is. \cite{nandaintersection} present an algorithm for detecting samples of two intersecting manifolds, which is a first approximation of splitting a space into stratified pieces, and it comes with experimental verification but no theoretical guarantees. In \cite{metricgraph}, the focus is on reconstructing a metric on a graph, with the input consisting of intrinsic distance on the metric graph, the associated theoretical guarantees are about the lengths of the edges in the metric, instead of relating to the geometry of the embedding. In particular, they do not need to consider vertices of degree $2$, as in their setting these are points on an edge. \cite{chazal-sampling} presents a method for sampling and reconstructing compact sets in Euclidean space, with a similar focus on samples with bounded Hausdorff noise and a sufficient density. They guarantee a homotopy equivalence under sufficient conditions but do not present a method for learning the stratified structure. \cite{infering-local-homology} present an algorithm using persistent homology to assess the local homology of a sample at a particular point, using Delauney triangulations, which comes with a great computational cost. 

\subsection{Contribution}
    This paper describes an algorithm for learning the abstract structure underlying an embedded $2$-complex, and provides theoretical guarantees in terms of the geometric embedding that the sample has come from. In particular, the algorithm can be used to learn the number of cells of each dimension, and how they piece together. The output of this algorithm can then be used as a starting point to learn the particular embedding the sample came from. Previous work has focussed on using persistent homology to approximate the local homology at a point, which comes with significant computational overheads. We avoid this by approximating the local homology at each point using a fixed approximation scale, relating to the geometric conditions on the embedding. The algorithm easily works in parallel, which significantly reduces the run time on large data sets. While the algorithm only applies to $2$-complexes, many data sets arising from applications are $2$-dimensional and it provides a foundation for further developments to increase/remove the dimensionality assumption. We acknowledge that from certain perspectives, moving from graphs to $2$-complexes is a small step, yet there are many technical and geometric details involved in guaranteeing the accuracy of the structure learnt even for $2$-complexes, and this is the limiting factor for removing the dimensionality assumption at this stage.

    This article begins with \Cref{sec:defs-notations}, containing definitions of the main objects and tools we use throughout the article. After this, \Cref{sec:geom} consists of geometric lemmas used in \Cref{sec:local}, which considers the local geometry and topology we use to partition the sample $P$. Finally, \Cref{sec:alg} presents algorithms for recovering the abstract structure. \Cref{sec:alg} contains a sequence of lemmas (\Cref{lem:lmedge-disconnected,lem:lmedge-connected,lem:lmtriangle-disconnected,lem:lmtriangle-share-1-vertex,lem:lmtriangle-share-1-edge-with-vertices,lem:lmtriangle-share-2-vertices,lem:share-edge-with-vertex-and-vertex,lem:share-edge-with-vertices-and-vertex,lem:share-three-vertex,lem:share-edge-with-vertex-and-2-vertices,lem:share-edge-with-2-vertices,lem:share-2-edge-with-vertex-and-vertex,lem:share-edge-vertex-vertex-vertex,lem:share-edge-with-vertex-edge-vertex-vertex,lem:share-edge-vertex-vertex-vertex,lem:share-edge-edge-vertex-vertex-vertex,lem:share-edge-edge-edge-vertex-vertex-vertex}), which cover cases used in \Cref{thm:recover}, also known as the `Big Theorem' of this article.

   \section{Definitions and Notations}\label{sec:defs-notations}

    We begin with some definitions and notations we use throughout this article. We begin with the following definition of complex, following Definition 2.4 \cite{carlsson_tda_pattern}. 

    \begin{definition}[Abstract Complex, Definition 2.4 \cite{carlsson_tda_pattern}]
        An \emph{abstract simplicial complex} $X$ consists of a pair $(V(X), \Sigma(X))$, with $V(X)$ a finite set, and $\Sigma(X)$ a subset of the power set of $V(X)$, such for all $\sigma \in \Sigma(X)$ and $\emptyset \neq \tau \subseteq \sigma$, we have $\tau \in \Sigma(X)$. We call $V(X)$ the vertices, and $\Sigma(X)$ the \emph{simplices} of $X$.
    \end{definition}

    For ease of notation and to avoid confusion later in this paper, we will use the following specialised definition for abstract simplicial complexes with top dimension 2.
    
    \begin{definition}[Abstract $2$-Complex]\label{def:complex}
        An \emph{abstract $2$-complex} $X$ consists of 
        \begin{enumerate}
            \item a set $V = V(X)$ of vertices, 
            \item a set $E = \{ \, \sigma \in \Sigma(X) \, | \, \sigma \text{ contains 2 unique elements} \}$ of edges, 
            \item a set $T = \{ \, \sigma \in \Sigma(X) \, | \, \sigma \text{ contains 3 unique elements} \}$ of triangles, 
        \end{enumerate}
        and an incidence operator $\mathcal{I}$, which acts as follows: for any pair of cells $\sigma, \tau \in X$

        \[
            \mathcal{I} \left(\sigma, \tau \right) = \begin{cases} 1 & \text{ if } \sigma \subsetneq \tau \\ 0 & \text{ otherwise} \end{cases}
        \]
    \end{definition}

    We restrict ourselves to linear embeddings of $2$-complexes $X$ in $\mathbb{R}^n$ for some $n \geq 3$.

    \begin{definition}[Linear embedding of  $2$-complex]\label{def:linearembedding}
        Fix $n \geq 3$, then a linear embedding of a $2$-complex $X$ in $\mathbb{R}^n$, $(X, \Theta)$, consists of an abstract $2$-complex $X$ and a map 
        \[
            \Theta: X \to \mathbb{R}^n
        \] 
        
         such that 

         \begin{enumerate}
             \item on vertices $v \in V$, $\Theta$ is injective, 
             \item on edges $\{u,v\} \in E$, $\Theta$ is defined by linear interpolation on $\Theta(u)$ and $\Theta(v)$: $\Theta(\{\, u, v \, \}) = \overline{uv}$ is the line segment between $\Theta(u)$ and $\Theta(v)$,
             \item on triangles $\{\, u, v, w \, \} \in E$, $\Theta$ is defined by linear interpolation on $\Theta(u)$, $\Theta(v)$ and $\Theta(w)$:
             $\Theta(\{\, u, v, w \, \}) = \triangle u v w$ is the triangle with vertices $\Theta(u), \, \Theta(v)$ and $\Theta(w)$, and $\Theta(u), \, \Theta(v), \, \Theta(w)$ are no co-linear, 
            \item for any two cells $\sigma, \tau$ of $X$, we have $\Theta(\sigma) \cap \Theta(\tau) = \Theta(\sigma \cap \tau)$.
            
         \end{enumerate}

        We restrict our attention to embedded  $2$-complexes $|X|_{\Theta}$ such that
         \begin{enumerate}[resume] 
            \item if a vertex $v$ is in the boundary of precisely two edges $\{v,u_1\}$ and $\{v,u_2\}$, then $\angle u_1 v u_2 \neq \pi$,
            \item if an edge $\{v_0,v_1\}$ is in the boundary of precisely two triangles $\{v_0, v_1, u_1\}$ and $\{v_0, v_1 , u_2\}$, then $v_0, v_1, u_1, u_2$ are not co-planar.
         \end{enumerate}

         We denote the image of $\Theta$ in $\mathbb{R}^n$ by $|X|_{\Theta}$.
    \end{definition}

    We often talk about the \emph{boundary} of a cell. 

    \begin{definition}[Cell boundary]
        Let $X$ be an abstract $2$-complex, and take a cell $\sigma \in X$. Then the \emph{boundary} of $\tau$, $\partial \tau$, consists of the cells $\sigma \in X$ such that $\mathcal{I}(\sigma, \tau) = 1$.
    \end{definition}

    An important property of a cell $\sigma \in X$, is whether it is \emph{locally maximal} or not. 

    \begin{definition}[Locally maximal cell]\label{def:locallymaximal}
        Let $\sigma$ be a cell in a  $2$-complex. We say $\sigma$ is \emph{locally maximal} if there is no cell $\tau \in X, \tau \neq \sigma$ with $\sigma \subset \tau$. That is, there is no cell $\tau$ with $\sigma$ in the boundary of $\tau$.
    \end{definition}

    \begin{remark}
        Consider two cells $\sigma, \tau$ in a complex $X$, we say \emph{$\sigma$ is a face of $\tau$} if $\sigma$ is in the boundary of $\tau$, and we say \emph{$\sigma$ is a co-face of $\tau$} if $\tau$ is in the boundary of $\sigma$.
    \end{remark}

    We can represent the incidence relationships of cells in $X$ in a weighted graph $B$.

    \begin{definition}[Incidence graph]\label{def:incidencegraph}
        Take an abstract $2$-complex $X$. The \emph{incidence graph} $B$ of $X$ is the weighted graph with
        \begin{enumerate}
            \item a weight $0$ node $n_v$ for each vertex $v$ of $X$,
            \item a weight $1$ node $n_e$ for each edge $e=\{u,v\}$ of $X$,
            \item a weight $2$ node $n_t$ for each triangle $t=\{u,v,w\}$ of $X$,
            \item an edge between a weight $2$ node $n_t$ and weight $1$ node $n_e$ if $e \subset t$,
            \item an edge between a weight $2$ node $n_t$ and weight $0$ node $n_v$ if $v \in t$,
            \item an edge between a weight $1$ node $n_e$ and weight $0$ node $n_v$ if $v \in e$.
        \end{enumerate}
    \end{definition}
    
    Abusing notation, we usually write $|X|$ instead of $|X|_{\Theta}$ or $(X, \Theta)$, use $v$ to denote both the abstract vertex and its embedded location $\Theta(v)$, $\overline{u v}$ to denote both the abstract edge and the embedded image $\Theta(\{ \, u, v \, \})$, and $\triangle u v w$ to denote both the abstract triangle and the embedded image $\Theta( \{\, u, v, w \, \})$. Whether we are referring to an element of the abstract  $2$-complex or its image in $\mathbb{R}^n$ should be clear from the context.

    We use the following conventions in this article. Given two points $x, y \in \mathbb{R}^n$, $\lVert x - y \rVert$ is the standard Euclidean distance between $x$ and $y$, for a point $x \in \mathbb{R}^n$ and a set $Y \subset \mathbb{R}^n$, we set 

    \[
        d(x,Y ) := \inf_{y \in Y} \lVert x - y \rVert, 
    \]

    and for two sets $X, Y \subset \mathbb{R}^n$, we set
   
    \begin{align*}
        d (X,Y ) &:= \min \left\{ \inf_{x \in X} d(x, Y) ,  \inf_{y \in Y} d(y, X) \right \}, \\ 
        d_H(X,Y ) &:= \max \left\{ \sup_{x \in X} d(x, Y) ,  \sup_{y \in Y} d(y, X) \right \}.
    \end{align*}
    
    We also consider thickenings of a subset $X$: we let 
    
    \begin{equation*}
        X^{\alpha}:= \{ p \in \mathbb{R}^n \mid d(p,H) \leq \alpha\}.
    \end{equation*} 

    In proofs towards the end of this article, we use the \emph{weak feature size} of $X$ to allow us to construct isomorphism, which was introduced in \cite{chazal_lieutier_wfs} as the infimum of the positive critical values of the distance function of $X$.
    
    At various moments in the algorithm, we consider the \emph{diameter} of a set of points $X$. The \emph{diameter of $X$}, $\mathcal{D}(X)$, is the maximum distance between any pair of points $x, y \in X$:
        \[
            \mathcal{D}(X) := \operatorname{max}_{x, y \in X} \lVert x - y \rVert.
        \]
        
    We use $B_r(p)$ to denote the ball of radius $r$ centred at a point $p \in \mathbb{R}^n$, by $\partial B_R(p)$ we mean the boundary of such a ball, and let \[ \mathbb{S}^k = \{ x \in \mathbb{R}^n \mid \lVert x \rVert = 1 \} \]  denote the standard $k$-sphere. We also regularly consider points in a \emph{spherical shell}.

     \begin{definition}
        Fix $a < b$, and let $y$ be a point in $\mathbb{R}^n$. The \emph{spherical shell of radii $a$ and $b$} centered at $p$, $S_{a}^{b}(p)$ is the set 
        \[
            \left\{ q \in \mathbb{R}^n \mid a \leq \lVert q - p \rVert \leq b \right\}.
        \]
    \end{definition}

    We consider dihedral angles between two half-planes.

    \begin{definition}\label{def:dihedral}
        Let $H_1, H_2$ be two half-planes with a common boundary line $L$. Then, the \emph{dihedral angle} $\alpha$ between $H_1$ and $H_2$ is the angle formed by two vectors $v_1 \in H_1$ and $v_2 \in H_2$ originating from the same point $x \in L$ such that both $v_1$ and $v_2$ are perpendicular to $L$.
    \end{definition}
    We work with $\varepsilon$-samples $P$ of the embedded  $2$-complex $|X|$.

    \begin{definition}[$\varepsilon$-sample]\label{def:esample}
        Let $|X| \subset \mathbb{R}^n$ be an embedded  $2$-complex. An \emph{$\varepsilon$-sample} $P$ of $|X|$ is a finite subset of $\mathbb{R}^n$ such that $d_{H} (|X|, P) \leq \varepsilon$.
    \end{definition}

    In \cite{fods-graph} the authors use the threshold graph on a set of points, which we will also use.
    
    \begin{definition}[Threshold graph, Definition 3.1 \cite{fods-graph}]\label{def:threshold}
		Let $P \subset \mathbb{R}^N$ be a finite collection of points, and fix $r> 0$. The \emph{graph at threshold $r$ on $P$}, $\mathfrak{G}_r(P)$, is the graph with vertices $p \in P$, and edges $(p,q)$ if $\|p-q\| \leq r$.
	\end{definition}
    
    The objects we consider in this article are $2$-dimensional, and so we also use \emph{\v{C}ech} complexes.
    
    \begin{definition}[\v{C}ech Complex]\label{def:CC}
        Let $P \subset \mathbb{R}^n$ be a finite set of points. The \emph{\v{C}ech complex at scale $\delta$}, $\check{\mathcal{C}}_{\delta}(P)$ is the complex with j-cells $\{v_i\}_{i=0}^{j}$ such that the intersection $\bigcap_{i=0}^j B_\delta(v_i)$ is non-empty.
    \end{definition}

    Now, we formalise the aim of this article. Given an $\varepsilon$-sample $P$ of some linearly embedded  $2$-complex $|X|$, we want to recover the abstract  structure of the $2$-complex $X$. To do this, we need to learn the number of vertices, the number of edges, and the number of triangles, as well as the incidence relations between them. We achieve this by first deciding for each $p \in P$ if it is near a cell that is not locally maximal, or far away from all cells which are not locally maximal. This partitions $P$ into two subsets which intuitively are $P_{NLM}$ containing samples $p$ near non-locally maximal cells, and $P_{LM}$ containing samples $p$ only near locally maximal cells. Rigorous definitions of $P_{NLM}$ and $P_{LM}$ are in \Cref{def:PNMPLM}. Part of this process involves approximating the local homology at each $p \in P$ using a radius $r$. This requires a choice of scale at which to approximate $|X|$ from $P$. Unlike in \cite{fods-graph}, the relationship between clusters in $P_{NLM}$ and $P_{LM}$ to vertices, edges and triangles is not direct. We can, however, still infer the incidence operator. 

    \begin{remark}
        In this paper, we use \emph{local homology} in very restrictive settings. It is a very generally construction: for a space $X$, the local homology of $X$ at a point $x \in X$ is the relative homology $H\left( X, X \setminus \{ x \} \right)$.
    \end{remark}
    
\section{Geometric Lemmas}\label{sec:geom}
    We provide some geometric lemmas as motivation for the definitions of local structures and the geometric assumptions we place on the embeddings of a  $2$-complex. There are two parts to the definition of the local structure of a point cloud $P$ at a sample $p$: the first is a topological condition relating to the homology of the samples in a spherical shell around $p$, and the second relates to the geometry of these samples. The geometric lemmas in this section allow us to distinguish between points near cells that are not locally maximal and those that are only near locally maximal cells when the topological structure of $P$ at $p$ does not, see \Cref{sec:local}. The proofs of the lemmas in this section can be found in Appendix \ref{sec:geom-lems-proofs}.

    We begin with a helpful lemma that bounds the distance between a point in a spherical shell within $\varepsilon$ of a ray and the point in the ray in the middle of the shell.

     \begin{lemma}\label{lem:sample-sphere-distance}
        Let $L \subset \mathbb{R}^n$ be a ray originating at a point $z$, and fix \[ R\geq 14\varepsilon >0. \] Let $P \subset \mathbb{R}^n$ have $d_H(P,L) \leq \varepsilon$ and take $p \in \mathbb{R}^n$ with \[\lVert p - z \rVert \leq \frac{R}{2}.\] Let $x$ be the point in $L$ with $\lVert x- p \rVert = R$. Then for all $q  \in S_{R-\varepsilon}^{R+\varepsilon}(p) \cap P$ \[ \lVert q - x \rVert \leq \sqrt{2} \varepsilon.\]
    \end{lemma}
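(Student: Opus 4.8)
The plan is to split $q-x$ into its component along the ray and its component perpendicular to it, and to bound each by $\varepsilon$. Write $\ell$ for the line supporting $L$, and let $q'$ be the orthogonal projection of $q$ onto $\ell$. Since $q'-x$ and $x$ both lie on $\ell$ while $q-q'\perp\ell$, the two vectors $q-q'$ and $q'-x$ are orthogonal, so $\|q-x\|^2=\|q-q'\|^2+\|q'-x\|^2$. The perpendicular part is immediately controlled: I would first show $q'$ actually lies on the ray $L$, whence $\|q-q'\|=d(q,\ell)=d(q,L)\le\varepsilon$. This reduces everything to the single inequality $\|q'-x\|\le\varepsilon$, from which $\|q-x\|\le\sqrt{\varepsilon^2+\varepsilon^2}=\sqrt2\,\varepsilon$ follows. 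I would also record at the outset that $x$ is well defined (unique): because $\|p-z\|\le R/2<R$, the sphere $\partial B_R(p)$ meets the ray in exactly one point, the far intersection.

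For the claim that $q'$ lands on the ray, I would argue by contradiction. If the foot of the perpendicular fell beyond the endpoint $z$, the nearest point of $L$ to $q$ would be $z$ itself, so $d(q,L)\le\varepsilon$ would force $\|q-z\|\le\varepsilon$. Then $\|q-p\|\le\|q-z\|+\|z-p\|\le \varepsilon+R/2$, which is strictly less than $R-\varepsilon$ once $R>4\varepsilon$ (guaranteed by $R\ge14\varepsilon$), contradicting $q\in S_{R-\varepsilon}^{R+\varepsilon}(p)$. Hence the projection is interior to the ray and $\|q-q'\|\le\varepsilon$, and moreover $\|q'-p\|$ lies within $\varepsilon$ of $\|q-p\|\in[R-\varepsilon,R+\varepsilon]$.

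The main obstacle is the tangential estimate $\|q'-x\|\le\varepsilon$, and this is where the hypotheses $\|p-z\|\le R/2$ and $R\ge14\varepsilon$ must be used quantitatively. I would parametrize $\ell$ by arclength $\gamma$ and set $f(t)=\|\gamma(t)-p\|=\sqrt{(t-a)^2+h^2}$, where $h=d(p,\ell)\le\|p-z\|\le R/2$ and $a$ is the parameter of the foot of $p$. The point $x$ sits at the parameter $t_x$ with $f(t_x)=R$, and $q'$ at $t_{q'}$ with $f(t_{q'})=\|q'-p\|\in[R-2\varepsilon,R+2\varepsilon]$; the bound $\sqrt{(R-2\varepsilon)^2-h^2}>R/2\ge a$ (valid precisely because $R\ge14\varepsilon$ and $h\le R/2$) places both $t_x$ and $t_{q'}$ on the far, increasing branch $t>a$. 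On that branch $f$ is convex with slope $f'(t_x)=\sqrt{R^2-h^2}/R\ge\sqrt3/2$, so the profile crosses the shell steeply and is nearly linear on the scale $\varepsilon$; I would then bound $\|q'-x\|=|t_{q'}-t_x|$ by comparing $f(t_{q'})$ and $f(t_x)$ through this slope. The delicate point is that a naive comparison through $\|q'-p\|$ loses a factor of two, so I expect to keep $\|q-p\|$ itself in the estimate and reduce to a one-variable inequality in $h\in[0,R/2]$ and the shell radius; verifying that this inequality closes (so that the tangential displacement is at most $\varepsilon$) once $R\ge14\varepsilon$ is the technical heart of the argument, the constant $14$ being exactly what makes the steep, near-linear regime large enough to absorb the shell thickness.
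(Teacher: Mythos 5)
Your proof skeleton coincides with the paper's: decompose $q-x$ at the orthogonal projection of $q$ onto the supporting line $\ell$, bound the perpendicular component by $\varepsilon$, and bound the tangential displacement. Your preliminary steps are all correct --- uniqueness of $x$, the foot of $q$ landing on the ray rather than behind $z$, both feet lying on the far branch $t>a$ --- and these are points the paper's own proof never checks. But the proposal stops at precisely the step you label the technical heart: the tangential bound $\lVert q'-x\rVert\le\varepsilon$ is only conjectured, with a sketch of a mean-value argument, an admission that its naive form loses a factor of two, and a hope that a refined one-variable inequality ``closes''. That is a genuine gap, and it is the entire content of the lemma.

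Moreover, the gap cannot be closed, because the inequality you are after --- and indeed the lemma as stated --- is false. The hypotheses allow $d(p,\ell)$ to be as large as $R/2$, so the line crosses the shell at slope as small as $\sqrt3/2$; the $\pm\varepsilon$ radial slack and the $\pm\varepsilon$ perpendicular wobble then produce tangential displacements of order $\sqrt3\,\varepsilon$, not $\varepsilon$. Explicitly, in $\mathbb{R}^2$ take $z=(0,0)$, $L=\{(t,0): t\ge 0\}$, $R=14\varepsilon$, $p=(0,7\varepsilon)$ and $q=\bigl(\sqrt{189}\,\varepsilon,\,\varepsilon\bigr)$: then $\lVert p-z\rVert=R/2$, $d(q,L)=\varepsilon$ (so $q$ may belong to $P$), and $\lVert q-p\rVert=15\varepsilon=R+\varepsilon$, while $x=\bigl(\sqrt{147}\,\varepsilon,0\bigr)$ and
\begin{equation*}
  \lVert q-x\rVert=\sqrt{\bigl(\sqrt{189}-\sqrt{147}\bigr)^{2}+1}\;\varepsilon\approx 1.91\,\varepsilon>\sqrt{2}\,\varepsilon,
\end{equation*}
so no one-variable inequality will rescue the estimate. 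The statement only becomes salvageable after adding the hypothesis $d(p,L)\le\varepsilon$, which every application satisfies (\Cref{lem:diamline}, \Cref{prop:nlmvertex}) and which the paper's own proof uses silently: the bounds it claims for its quantities $A$ and $B$ hold only when the projection of $p$ onto $\ell$ is within $\varepsilon$ of $p$, not merely within $R/2$. Even then the constant $\sqrt2$ is marginally too small: with $p=(0,\varepsilon)$ and $q=(R+\varepsilon,\varepsilon)$ one gets $\lVert q-x\rVert^{2}=\bigl(R+\varepsilon-\sqrt{R^{2}-\varepsilon^{2}}\bigr)^{2}+\varepsilon^{2}>2\varepsilon^{2}$, and the true worst case at $R=14\varepsilon$ is about $1.50\,\varepsilon$. (The paper's proof reaches $\sqrt2\,\varepsilon$ only by substituting $\lVert q-p\rVert$ for $\lVert x-p\rVert=R$ in its formula for $\lVert p_L-x\rVert$, i.e.\ it measures the tangential displacement from the sphere through $q$ rather than from the sphere of radius $R$; that is exactly the error these examples exploit.) The repairable statement is therefore: assume additionally $d(p,L)\le\varepsilon$ and conclude $\lVert q-x\rVert\le\sqrt2\,\varepsilon+O(\varepsilon^{2}/R)$, numerically $\le 1.51\,\varepsilon$ for $R\ge 14\varepsilon$. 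Your mean-value/slope argument, run with $h\le\varepsilon$, does prove that weaker bound; neither it nor any other argument can recover the constant $\sqrt2$ claimed here.
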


    Next, \Cref{lem:antipodal-flat}, which motivates part 3 in \Cref{def:maxstruct}. The lemma considers the distances between triples of points in $S_{R-\varepsilon}^{R+\varepsilon}(p) \cap H^{\varepsilon}$ for some point $p \in H^{\varepsilon}$, where $H^{\varepsilon}$ is the thickening of a plane $H$ by $\varepsilon$, with $\varepsilon > 0$.
    
    \newcommand{\xZxO}{2R^2\left(1+\sqrt{1 - \frac{\left( \frac{\sqrt{3}R}{2} - \varepsilon - \qxUpper \right)^2}{R^2}}\right)-2\varepsilon^2}
    \newcommand{\qxUpperSquare}{\varepsilon^2 + \left(R - \sqrt{R^2-2R\varepsilon}\right)^2}
    \newcommand{\qxUpper}{\sqrt{\varepsilon^2 + \left(R - \sqrt{R^2-2R\varepsilon}\right)^2}}
    \newcommand{\qtwo}{2\sqrt{(R^2-\varepsilon^2)\left(1-\frac{\left(-2R^2+\left(\sqrt{3(R^2-\varepsilon^2)}-(2+2\sqrt{2}) \varepsilon\right)^2\right)^2}{4(R^2-\varepsilon^2)}\right)}}
    
    \begin{lemma}\label{lem:antipodal-flat}
        Consider an affine $2$-hyperplane $H \subset \mathbb{R}^n$ and fix \[ R \geq 14 \varepsilon \geq 0. \] Let $P \subset \mathbb{R}^n$ be such that $d_H(P,H) \leq \varepsilon$, and take $p$ with $d(p, H) \leq \varepsilon$. Then, for all $q_1 \in S_{R-\varepsilon}^{R+\varepsilon}(p) \cap P$, there exists $q_2 \in S_{R-\varepsilon}^{R+\varepsilon}(p) \cap P$ with 
        \begin{align*}
            \lVert q_2 - q_1 \rVert &\geq 2\sqrt{R^2-\varepsilon^2} - (1+ \sqrt{2})\varepsilon.
        \end{align*}
    \end{lemma}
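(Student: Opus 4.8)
The plan is to reduce this to an essentially planar computation by orthogonally projecting onto $H$, and then to manufacture $q_2$ as a sample that is guaranteed to exist near a carefully chosen ``antipodal'' point of $H$. First I would set up the projection. Let $\pi\colon\mathbb{R}^n\to H$ be orthogonal projection onto the affine plane $H$, write $p'=\pi(p)$ and $q_1'=\pi(q_1)$, and set $h=\lVert p-p'\rVert=d(p,H)\le\varepsilon$. Since $x-\pi(x)$ is normal to $H$ for every $x$, Pythagoras gives $\lVert q_1-p\rVert^2=\lVert q_1'-p'\rVert^2+\nu^2$, where $\nu$ is the length of the normal component of $q_1-p$; as $d(q_1,H)\le\varepsilon$ and $d(p,H)=h\le\varepsilon$ this normal component has length $\nu\le 2\varepsilon$. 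Writing $\rho_1=\lVert q_1'-p'\rVert$ and using $\lVert q_1-p\rVert\ge R-\varepsilon$, this yields the in-plane lower bound $\rho_1\ge\sqrt{(R-\varepsilon)^2-4\varepsilon^2}$, which is strictly positive for $R\ge 14\varepsilon$.

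Next I would construct the target point and the sample $q_2$. Let $e=(q_1'-p')/\rho_1\in H$, which is well defined by the previous bound, and set $x_2=p'-\sqrt{R^2-h^2}\,e\in H$, the point of $H$ on the opposite side of $p'$ from $q_1'$ at in-plane distance $\sqrt{R^2-h^2}$ from $p'$; a Pythagorean computation then gives $\lVert x_2-p\rVert=R$. Because $d_H(P,H)\le\varepsilon$ forces $H$ to lie within $\varepsilon$ of $P$, there is a sample $q_2\in P$ with $\lVert q_2-x_2\rVert\le\varepsilon$, and the triangle inequality immediately places $q_2$ in the shell $S_{R-\varepsilon}^{R+\varepsilon}(p)$, as required.

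Finally I would estimate $\lVert q_2-q_1\rVert$ from below. Using $\lVert q_2-q_1\rVert\ge\lVert x_2-q_1\rVert-\lVert q_2-x_2\rVert$ together with $\lVert x_2-q_1\rVert\ge\lVert x_2-q_1'\rVert$ (projecting $q_1$ onto $H$ only shortens its distance to the point $x_2\in H$), and observing that $x_2$ and $q_1'$ lie on opposite sides of $p'$ along $e$ so that $\lVert x_2-q_1'\rVert=\sqrt{R^2-h^2}+\rho_1$, I obtain
\[
\lVert q_2-q_1\rVert\ \ge\ \sqrt{R^2-h^2}+\rho_1-\varepsilon\ \ge\ \sqrt{R^2-\varepsilon^2}+\sqrt{(R-\varepsilon)^2-4\varepsilon^2}-\varepsilon ,
\]
where the last step uses $h\le\varepsilon$ and the bound on $\rho_1$.

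The main obstacle is the concluding algebraic step: showing that the right-hand side dominates the claimed bound $2\sqrt{R^2-\varepsilon^2}-(1+\sqrt2)\varepsilon$, which amounts to verifying $\sqrt{(R-\varepsilon)^2-4\varepsilon^2}\ge\sqrt{R^2-\varepsilon^2}-\sqrt2\,\varepsilon$. I expect this to reduce, after isolating the square roots and squaring twice, to the polynomial inequality $R^2-4R\varepsilon-6\varepsilon^2\ge 0$, which holds comfortably once $R\ge 14\varepsilon$ (the left side is then at least $134\varepsilon^2$). This is precisely where the standing hypothesis $R\ge 14\varepsilon$ enters, and the only delicate point is tracking the positivity of each side when squaring so that the chain of equivalences remains valid.
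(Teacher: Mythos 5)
Your proof is correct, and it shares the paper's core geometric idea --- produce a point $x_2$ of the circle $H \cap \partial B_R(p)$ diametrically opposite $q_1$, use $d_H(P,H)\le\varepsilon$ to find a sample $q_2$ within $\varepsilon$ of $x_2$, and finish with the triangle inequality --- but the execution is genuinely different. The paper first applies \Cref{lem:sample-sphere-distance} to replace $q_1$ by the nearest circle point $x_1$ with $\lVert q_1 - x_1\rVert \le \sqrt{2}\,\varepsilon$, and then simply subtracts the two approximation errors $\sqrt{2}\,\varepsilon$ and $\varepsilon$ from the circle's diameter $2\sqrt{R^2 - \lVert p_H - p\rVert^2} \ge 2\sqrt{R^2-\varepsilon^2}$; the constant $(1+\sqrt{2})\varepsilon$ appears immediately, and the hypothesis $R \ge 14\varepsilon$ is consumed inside that auxiliary lemma. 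You bypass \Cref{lem:sample-sphere-distance} altogether: you control the in-plane radial distance $\rho_1 = \lVert q_1'-p'\rVert \ge \sqrt{(R-\varepsilon)^2 - 4\varepsilon^2}$ by Pythagoras, exploit collinearity of $x_2$, $p'$, $q_1'$ to get $\lVert q_2 - q_1\rVert \ge \sqrt{R^2-h^2} + \rho_1 - \varepsilon$, and then need the closing inequality $\sqrt{(R-\varepsilon)^2-4\varepsilon^2} \ge \sqrt{R^2-\varepsilon^2} - \sqrt{2}\,\varepsilon$, which is exactly where $R \ge 14\varepsilon$ enters; your reduction to $R^2 - 4R\varepsilon - 6\varepsilon^2 \ge 0$ and the value $134\varepsilon^2$ at $R = 14\varepsilon$ both check out, and the two squarings are legitimate since each side is nonnegative in the stated range. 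What each route buys: the paper's is shorter on the page because the $\sqrt{2}\,\varepsilon$ estimate is outsourced to a lemma it needs elsewhere anyway, while yours is self-contained, yields the slightly stronger intermediate bound $\sqrt{R^2-\varepsilon^2}+\sqrt{(R-\varepsilon)^2-4\varepsilon^2}-\varepsilon$, and, unlike the paper's write-up, explicitly verifies that $q_2$ actually lands in the shell $S_{R-\varepsilon}^{R+\varepsilon}(p)$.
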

    
    Now that we have a geometric property to test if a point $p$ and the samples in $S_{R-\varepsilon}^{R+\varepsilon}(p)$ are from a subset of a plane. We want to understand what conditions need to be placed on points near an edge in two triangles to guarantee this property does not hold. In particular,  \Cref{lem:antipodal-not-flat} motivates part 4 of \Cref{def:notmaxstruct}. 
    
    \newcommand{\AngleBoundTriangles}{\arccos\left(\frac{(R+2\varepsilon)^2 + \left( \frac{3R}{2} -\varepsilon\right)^2 - \left( 2 \sqrt{R^2 - \varepsilon^2} - \left( 2 +   2\sqrt{2}\right) \varepsilon \right)^2}{2(R+2\varepsilon) \left( \frac{3R}{2} -\varepsilon \right)}\right)}
    
    For ease of reading, we let 
     \begin{dmath*}
        \Psi(\varepsilon, R) = \AngleBoundTriangles.
    \end{dmath*}
    The following lemma motivates the conditions we place on the dihedral angle between two triangles with a common boundary edge $\overline{uv}$ (of degree 2). This allows us to guarantee that the geometry of the samples in $S_{R - \varepsilon}^{R + \varepsilon}(p)$ for a sample $p$ \emph{near} $\overline{uv}$ is not the same as the geometry of samples in $S_{R - \varepsilon}^{R + \varepsilon}(p)$ when $p$ is near a triangle but \emph{far away} from its boundary.

    \begin{lemma}\label{lem:antipodal-not-flat}
        Consider two affine $2$-half-planes $H_1, H_2 \subset \mathbb{R}^n$ whose boundaries are equal, say $L$, and fix $R \geq 14 \varepsilon > 0$. Let $\alpha$ be the dihedral angle between $H_1$ and $H_2$. Let $P$ be a set of points such that $d_H(P, H_1 \cup H_2) \leq \varepsilon$. Further, take $p$ such that $d(p, H_1) \leq \varepsilon$. If \[d(L,p) \leq \frac{R}{2} - 2\varepsilon\] and \[\alpha \in \left( 0, \Psi \left(\varepsilon,R\right)\right)\] then there exist $q_1 \in S_{R-\varepsilon}^{R+\varepsilon}(p) \cap P$ such that for all $q_2 \in S_{R-\varepsilon}^{R+\varepsilon}(p) \cap P$
        \begin{equation*}
            \lVert q_2 - q_1 \rVert < 2\sqrt{R^2- \varepsilon^2} - \left( 1+ \sqrt{2} \right)\varepsilon.
        \end{equation*}        
    \end{lemma}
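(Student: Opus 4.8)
The plan is to reduce to the essential three-dimensional picture, discretise the samples onto two circular arcs by means of \Cref{lem:sample-sphere-distance}, and then bound every distance from a carefully chosen $q_1$ by the law of cosines, with the dihedral angle $\alpha$ entering as the included angle. Since only the affine span of $H_1 \cup H_2$ and Euclidean distances are relevant, I would first set up coordinates with $L$ as an axis, $H_1$ in a coordinate half-plane and $H_2$ at dihedral angle $\alpha$, and write $h := d(p,L) \le R/2 - 2\varepsilon$. The sphere $\partial B_R(p)$ meets $H_1$ and $H_2$ in two circular arcs $A_1, A_2$ that share their endpoints, namely the two points where the sphere crosses $L$. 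By \Cref{lem:sample-sphere-distance}, every sample in $S_{R-\varepsilon}^{R+\varepsilon}(p) \cap P$ lies within $\sqrt2\,\varepsilon$ of a point of $A_1 \cup A_2$, so it suffices to control distances between points of the arcs and then absorb the sampling error.

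I would choose $q_1$ to be a sample within $\varepsilon$ of the \emph{apex} $y_1$ of $A_1$, i.e. the point of $A_1$ farthest from $L$; such a sample exists because $y_1 \in H_1$ and $d_H(P, H_1 \cup H_2) \le \varepsilon$, and it lies in the shell since $\|y_1 - p\| = R$. For an arbitrary sample $q_2$ near an arc point $\tilde q_2$, the triangle inequality gives $\|q_1 - q_2\| \le \|y_1 - \tilde q_2\| + (1+\sqrt2)\,\varepsilon$, so the claimed bound reduces to the clean estimate $\|y_1 - \tilde q_2\| < 2\sqrt{R^2-\varepsilon^2} - (2+2\sqrt2)\,\varepsilon$, which is precisely the quantity appearing inside $\Psi(\varepsilon,R)$.

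I would then split on which arc $\tilde q_2$ belongs to. If $\tilde q_2 \in A_1$, then $y_1$ and $\tilde q_2$ lie on a common circle of radius $\le R$, and the point of $A_1$ farthest from the apex is an endpoint on $L$, giving $\|y_1 - \tilde q_2\| \le \sqrt{2R(R+h)}$; here the hypotheses $h \le R/2 - 2\varepsilon$ and $R \ge 14\varepsilon$ are exactly what push this below $2\sqrt{R^2-\varepsilon^2} - (2+2\sqrt2)\,\varepsilon$. If $\tilde q_2 \in A_2$, I would take the foot $X$ of the perpendicular from $p$ to $L$ as the apex of a triangle $y_1 X \tilde q_2$: the shell radius together with $h \le R/2 - 2\varepsilon$ gives $\|X y_1\| \le 3R/2 - \varepsilon$, a direct computation on the arc $A_2$ gives $\|X \tilde q_2\| \le \sqrt{R^2 - h^2} \le R + 2\varepsilon$, and the included angle $\angle y_1 X \tilde q_2$ is at most the dihedral angle $\alpha$. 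The law of cosines, together with the definition of $\Psi(\varepsilon,R)$ and the monotonicity of $a^2 + b^2 - 2ab\cos\alpha$ in $\alpha$, then yields the required strict inequality for $\alpha < \Psi(\varepsilon,R)$.

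The main obstacle is the angle-and-side bookkeeping in the $A_2$ case: one must verify that $\angle y_1 X \tilde q_2 \le \alpha$ uniformly over $A_2$, and that the law-of-cosines expression is genuinely monotone in both side lengths over the relevant range. This is where the sign of $\cos\alpha$ matters, so the argument naturally separates $\alpha > \pi/2$ (where the monotonicity in the sides holds and the bound is binding) from $\alpha \le \pi/2$ (where the farthest point of $A_2$ is again an endpoint on $L$ and the case folds back into the single-arc estimate). A secondary but delicate point is tracking the $\varepsilon$-slack, so that the offset $d(p,H_1)\le\varepsilon$ and the $\sqrt2\,\varepsilon$ sampling error land exactly on the constants $R+2\varepsilon$, $3R/2 - \varepsilon$ and $(1+\sqrt2)\,\varepsilon$; in particular the constant $14$ in $R\ge 14\varepsilon$ is tuned precisely to close the single-arc estimate.
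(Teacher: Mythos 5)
Your proposal is correct and follows essentially the same route as the paper's own proof: the same choice of $q_1$ near the apex of the arc in $H_1$, the same discretisation of shell samples onto the two arcs via \Cref{lem:sample-sphere-distance}, the same case split according to which half-plane $q_2$ is near, and the same law-of-cosines estimate at the foot of the perpendicular on $L$ with side bounds $3R/2-\varepsilon$ and $R+2\varepsilon$ matching the constants inside $\Psi(\varepsilon,R)$. If anything, your explicit handling of the acute/obtuse dichotomy (for $\alpha<\pi/2$ the farthest point of $A_2$ from the apex is the arc endpoint on $L$, attained at included angle $\pi/2$ rather than $\alpha$, and that case folds back into the single-arc estimate) is more careful than the paper, which simply asserts that the maximum occurs when $\angle x_1 p_L x_2 = \alpha$ with both sides maximal.
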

    
    Next, we investigate the geometry of points near a ray and half-plane, to develop a test for points near not locally maximal cells.
   
    There are several local structures that have the same topological structure: they consist of two connected components with no $1$-cycles. In \cite{fods-graph}, the authors used the angle between the centroids of the two connected components to distinguish between points near a degree 2 vertex and points near the \emph{interior} of an edge. Unfortunately, this test is not sufficient after introducing triangles. If we first check for the presence of triangles, we can again use the inner-product test. To test for the presence of triangles, we examine the diameters of the two connected components. 
    
    So, we first bound the diameter of a set of samples only near a line. 

    \begin{lemma}[Diameter of points near ray]\label{lem:diamline}
        Let $L \subset \mathbb{R}^n$ be a ray originating at a point $z$, and fix $R > 14\varepsilon >0$. Let $P \subset \mathbb{R}^n$ have $d_H(P,L) \leq \varepsilon$ and take $p \in \mathbb{R}^n$ with $d(L,p) \leq \varepsilon$ and $\lVert p - z \rVert \leq \frac{R-\varepsilon}{2}$. Then $\left(S_{R-\varepsilon}^{R+\varepsilon}(p) \cap P \right)^{\frac{3\varepsilon}{2}}$ has $1$ connected component $c$, and the diameter is less than $2\sqrt{2}\varepsilon$.
    \end{lemma}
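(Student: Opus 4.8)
The plan is to reduce the whole statement to \Cref{lem:sample-sphere-distance}, which already pins down where the shell samples can sit. First I would verify that lemma's hypotheses: we are given $d_H(P,L)\le\varepsilon$ and $R>14\varepsilon$, and since $\lVert p-z\rVert\le\frac{R-\varepsilon}{2}<\frac{R}{2}$ the point $p$ meets the required radius bound. Let $x$ be the point of $L$ with $\lVert x-p\rVert=R$; this is well defined and unique, since writing $L=\{z+tu:t\ge 0\}$ the map $t\mapsto\lVert z+tu-p\rVert^2-R^2$ is an upward parabola that is negative at $t=0$ (because $\lVert p-z\rVert<R$), hence has exactly one nonnegative root. \Cref{lem:sample-sphere-distance} then yields $\lVert q-x\rVert\le\sqrt{2}\,\varepsilon$ for every $q\in S_{R-\varepsilon}^{R+\varepsilon}(p)\cap P$; equivalently, the entire shell sample lies in the ball $B_{\sqrt{2}\varepsilon}(x)$.

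Before counting components I would record that the set is non-empty, so that the claim is not vacuous: because $x\in L$ and $d_H(P,L)\le\varepsilon$, there is a sample $q_0\in P$ with $\lVert q_0-x\rVert\le\varepsilon$, and then $\lVert q_0-p\rVert\in[R-\varepsilon,R+\varepsilon]$ by the triangle inequality, so $q_0\in S_{R-\varepsilon}^{R+\varepsilon}(p)\cap P$.

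For the connectivity claim I would exploit the numerical slack in the thickening radius. Since $\sqrt{2}\,\varepsilon<\tfrac{3\varepsilon}{2}$, for every $q\in S_{R-\varepsilon}^{R+\varepsilon}(p)\cap P$ the ball $B_{3\varepsilon/2}(q)$ contains the common point $x$. Hence $\left(S_{R-\varepsilon}^{R+\varepsilon}(p)\cap P\right)^{3\varepsilon/2}$ is a union of balls all of which contain $x$, so it is path-connected (route any point to $x$ inside its own ball), giving exactly one connected component $c$. For the diameter, I would again pass through $x$: any two samples $q_1,q_2$ in the shell satisfy $\lVert q_1-q_2\rVert\le\lVert q_1-x\rVert+\lVert x-q_2\rVert\le 2\sqrt{2}\,\varepsilon$, which bounds $\mathcal{D}\!\left(S_{R-\varepsilon}^{R+\varepsilon}(p)\cap P\right)$, i.e. the diameter of the (single) component $c$.

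The computations are light, and essentially all the structural content is carried by \Cref{lem:sample-sphere-distance}; the care-points are merely (i) checking its hypotheses and the uniqueness of $x$, (ii) the non-emptiness of the shell sample, and (iii) the strictness of the diameter bound. On (iii), the triangle inequality only gives the non-strict bound $2\sqrt{2}\,\varepsilon$; to obtain strict inequality one additionally uses that the samples lie within $\varepsilon$ of the ray $L$, which prevents two of them from sitting exactly antipodally at distance $\sqrt{2}\,\varepsilon$ from $x$ (the bound of \Cref{lem:sample-sphere-distance} cannot be tight simultaneously in opposite directions along $L$). I expect this last bookkeeping to be the only mildly delicate point.
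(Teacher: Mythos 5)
Your proposal is correct and takes essentially the same route as the paper: the paper's own proof consists precisely of invoking \Cref{lem:sample-sphere-distance} to place every sample of $S_{R-\varepsilon}^{R+\varepsilon}(p) \cap P$ within $\sqrt{2}\,\varepsilon$ of the point $x$, and then concluding in one step that the $\frac{3\varepsilon}{2}$-thickening is a single connected component of diameter less than $2\sqrt{2}\,\varepsilon$. Your extra care points -- uniqueness of $x$, non-emptiness of the shell sample, and the strictness of the diameter bound (which the paper asserts without justification, since the triangle inequality alone only gives $\leq 2\sqrt{2}\,\varepsilon$) -- are refinements beyond, not departures from, the paper's argument.
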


    The previous lemma bounds the diameter of a connected component containing points with $\varepsilon$ of an edge, that are within $S_{R-\varepsilon}^{R+\varepsilon}(p)$ for a sample $p$ near a vertex in the boundary of this edge. We need to guarantee that if $p$ is near the interior of an edge, it does not fail the diameter test. To ensure this, we obtain the following as a corollary of \Cref{lem:diamline}.
    
    \begin{corollary}\label{cor:diamedge}
        Let $L \subset \mathbb{R}^n$ be a line, and fix $R > 3\varepsilon >0$. Let $P \subset \mathbb{R}^n$ have $d_H(P,L) \leq \varepsilon$ and take $p \in \mathbb{R}^n$ with $d(L,p) \leq \varepsilon$ and \[ \lVert p - z \rVert \leq \frac{R-\varepsilon}{2}. \]  Then $\left(S_{R-\varepsilon}^{R+\varepsilon}(p) \cap P \right)^{\frac{3\varepsilon}{2}}$ has $2$ connected components $c_1, c_2$, and their diameters are less than $2\sqrt{2}\varepsilon$.
    \end{corollary}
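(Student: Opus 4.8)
The plan is to reduce the corollary to \Cref{lem:diamline} by viewing the line $L$ as the union of two opposite rays. First I would let $f \in L$ denote the point of $L$ nearest to $p$, so that $p - f$ is orthogonal to $L$ and $\lVert p - f \rVert = d(p,L) \le \varepsilon$, and decompose $L = L_1 \cup L_2$ into the two rays with common origin $f$ pointing in opposite directions. The point of splitting at the foot $f$ (rather than at an arbitrary reference point $z$, which is the only role $z$ can play for a line) is that $f$ is the endpoint of both rays and is the global nearest point of $L$ to $p$; hence $d(L_i,p) = \lVert p - f\rVert \le \varepsilon$ for \emph{each} $i$, and $\lVert p - f\rVert \le \varepsilon \le \frac{R-\varepsilon}{2}$ since $R > 3\varepsilon$. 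Thus both conditions that \Cref{lem:diamline} imposes on the location of $p$ hold for each ray, with $f$ playing the role of $z$.

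Next I would partition the shell samples. Set $P_i := P \cap (L_i)^{\varepsilon}$ and $Q_i := S_{R-\varepsilon}^{R+\varepsilon}(p) \cap P_i$. I first check $d_H(P_i, L_i) \le \varepsilon$: every point of $P_i$ is within $\varepsilon$ of $L_i$ by definition, and for each $\ell \in L_i \subset L$ the bound $d_H(P,L)\le\varepsilon$ supplies a sample within $\varepsilon$ of $\ell$, which then lies in $(L_i)^{\varepsilon}$ and hence in $P_i$. I would then show that every shell sample lies in exactly one $Q_i$. For $q \in S_{R-\varepsilon}^{R+\varepsilon}(p)\cap P$ we have $\lVert q - f\rVert \ge \lVert q - p\rVert - \lVert p - f\rVert \ge (R-\varepsilon)-\varepsilon$, so the nearest point of $L$ to $q$ lies at distance at least $R - 3\varepsilon > 0$ from $f$ along one ray, say $L_i$; the nearest point of the other ray $L_j$ to $q$ is then the endpoint $f$, at distance $\ge R - 2\varepsilon > \varepsilon$, so $q \notin (L_j)^{\varepsilon}$. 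This is exactly where $R > 3\varepsilon$ is used, and it yields the disjoint decomposition $S_{R-\varepsilon}^{R+\varepsilon}(p)\cap P = Q_1 \sqcup Q_2$.

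Applying \Cref{lem:diamline} to each triple $(L_i, P_i, p)$ then shows that $(S_{R-\varepsilon}^{R+\varepsilon}(p)\cap P_i)^{3\varepsilon/2} = Q_i^{3\varepsilon/2}$ is a single connected component whose sample set has diameter less than $2\sqrt{2}\varepsilon$; concretely \Cref{lem:sample-sphere-distance} places every point of $Q_i$ within $\sqrt{2}\varepsilon$ of the unique $x_i \in L_i$ at distance $R$ from $p$. It then remains to check that the two thickened clusters do not merge. Since $x_1$ and $x_2$ lie on opposite rays from $f$ at distance $R$ from $p$, a short computation gives $\lVert x_1 - x_2\rVert = 2\sqrt{R^2 - \lVert p-f\rVert^2} \ge 2\sqrt{R^2-\varepsilon^2}$, whence $d(Q_1,Q_2) \ge 2\sqrt{R^2-\varepsilon^2} - 2\sqrt{2}\varepsilon > 3\varepsilon$, which holds comfortably under the standing hypothesis $R \ge 14\varepsilon$ inherited from \Cref{lem:diamline}. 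Consequently $Q_1^{3\varepsilon/2}$ and $Q_2^{3\varepsilon/2}$ are disjoint, and since $(S_{R-\varepsilon}^{R+\varepsilon}(p)\cap P)^{3\varepsilon/2} = Q_1^{3\varepsilon/2} \cup Q_2^{3\varepsilon/2}$, the thickened shell has exactly two connected components of the required diameter.

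I expect the main obstacle to be bookkeeping rather than geometry: one must ensure the hypotheses of \Cref{lem:diamline} genuinely transfer to each ray. The two delicate points are (i) restricting $P$ to $P_i$ so that the symmetric bound $d_H(P_i,L_i)\le\varepsilon$ holds, since the full set $P$ is not within $\varepsilon$ of a single ray, and (ii) taking the splitting point to be the perpendicular foot $f$, which is what lets $p$ be within $\varepsilon$ of \emph{both} rays at once — a generic point of $L$ would fail this. Once these are arranged, the separation of the two clusters is secured by $R \ge 14\varepsilon$, while $R > 3\varepsilon$ is precisely the threshold guaranteeing the clean two-way partition of the shell samples.
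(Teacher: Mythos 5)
Your proof is correct and follows essentially the same route as the paper's: split $L$ into two opposite rays (the paper splits $S_{R-\varepsilon}^{R+\varepsilon}(p) \cap L$ into its two components) and apply \Cref{lem:diamline} to each piece. You are in fact more careful than the paper's three-line proof, which silently skips the restriction of $P$ to the samples near each ray, the verification that the two thickened clusters stay disjoint, and the fact that invoking \Cref{lem:diamline} forces $R \geq 14\varepsilon$ rather than the stated $R > 3\varepsilon$ --- all points you handle explicitly.
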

    
    \begin{proof}
        First note that $S_{R-\varepsilon}^{R+\varepsilon}(p) \cap L$ consists of two connected components, $C_1, C_2$, and the distance between them is $R-\varepsilon$.  Hence, we can apply \Cref{lem:diamline}, to $C_1$ and $C_2$ individually, obtaining a connected component for each, say $c_1$ and $c_2$. Further, the diameters of $c_1$ and $c_2$ are less than $2\sqrt{2} \varepsilon$.
    \end{proof}
    
   The following lemma guarantees that if there are samples in $S_{R-\varepsilon}^{R+\varepsilon}(p)$ that are within $\varepsilon$ of a triangle, the diameter test fails.
   
    \begin{lemma}\label{lem:diamtri}
        Let $L_1, L_2 \subset \mathbb{R}^n$ be two rays originating at the same point $z$ with the angle $\alpha$ between in the interval 
        \begin{equation*}
            \left [ \frac{\pi}{6} , \pi \right ), 
        \end{equation*}
        and fix $R \geq 14\varepsilon >0$. Let $T$ be the set between $L_1$ and $L_2$. Take $p \in \mathbb{R}^n$ with $d( T, p ) \leq \varepsilon$ and $\lVert p - x \rVert \leq \frac{R-\varepsilon}{2}$, and $P \subset \mathbb{R}^n$ with $d_H(P, T) \leq \varepsilon$. Then, there exist points $q_1, q_2$ in $P$ with $\lVert q_1 -p \rVert, \lVert q_2 - p \rVert \in [R-\varepsilon, R+\varepsilon]$ such that $\lVert q_1 - q_2 \rVert > 2\sqrt{2} \varepsilon$, and $q_1, q_2$ are path connected. Furthermore, the connected component containing $q_1$ and $q_2$ has diameter bigger than $2\sqrt{2} \varepsilon$.
    \end{lemma}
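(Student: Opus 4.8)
The plan is to pass to the $2$-plane $H$ containing the sector $T$, exhibit a single long circular arc inside $T$ every point of which lies at distance exactly $R$ from $p$, and then transfer this arc to the sample $P$ both to produce the far-apart pair $q_1,q_2$ and to connect them by a short-step chain.

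First I would project $p$ orthogonally onto $H$, obtaining $p'$ with $\lVert p-p'\rVert = d(p,H)\le\varepsilon$; since the closest point of $T$ to $p$ already lies in $H$, orthogonal projection onto $H$ is a contraction of that distance, so $d(T,p')\le\varepsilon$ and $\lVert p'-z\rVert\le\lVert p-z\rVert+\varepsilon\le (R+\varepsilon)/2$. For $y\in H$ we have $\lVert y-p\rVert^2=\lVert y-p'\rVert^2+d(p,H)^2$, so the sphere $\partial B_R(p)$ meets $H$ in the circle of radius $\rho:=\sqrt{R^2-d(p,H)^2}\in[\sqrt{R^2-\varepsilon^2},R]$ centred at $p'$, every point of which is at distance exactly $R$ from $p$. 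Because $\lVert p'-z\rVert\le (R+\varepsilon)/2<\rho$, the apex $z$ lies strictly inside the disk $B_\rho(p')$, so each ray $L_i$ meets the circle in exactly one point $y_i$ (the outer root), and since $T$ is convex the set $A:=\partial B_\rho(p')\cap T$ is a single arc running from $y_1$ to $y_2$.

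The quantitative heart is a lower bound on the chord $\lVert y_1-y_2\rVert$. Writing $\theta=\angle y_1 p' y_2$ for the angle subtended at $p'$, we have $\lVert y_1-y_2\rVert=2\rho\sin(\theta/2)$, so it suffices to show $\theta\gtrsim\alpha$. Here I would invoke the inscribed-angle principle: the apex $z$ sees the chord $y_1y_2$ under the sector angle $\alpha$, hence lies on a circular arc through $y_1,y_2$ on which that chord subtends exactly $\alpha$; as $p'$ lies between the apex and the chord, it falls inside the circular segment bounded by that arc and the chord, where the subtended angle is $\theta\ge\alpha$. Combined with $\alpha\ge\pi/6$ and $\rho\ge\sqrt{R^2-\varepsilon^2}$ this yields $\lVert y_1-y_2\rVert \ge 2\sqrt{R^2-\varepsilon^2}\,\sin(\pi/12)$, which for $R\ge 14\varepsilon$ exceeds $2\sqrt2\varepsilon+2\varepsilon$ with room to spare. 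The main obstacle is precisely this step: the position of $p'$ inside the sector is not pinned down (and $p'$ may even fall up to $\varepsilon$ outside $T$), so the bound on $\theta$ must be made uniform over all admissible $p'$. I expect to dispatch the stray case $p'\notin T$ by replacing $p'$ with its nearest point of $T$ (a shift of at most $\varepsilon$) and absorbing the resulting $O(\varepsilon)$ loss into the generous margin above; the extremal configuration is $p'=z$, giving exactly $2\rho\sin(\alpha/2)$.

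Finally I would transfer everything to $P$. Using $d_H(P,T)\le\varepsilon$, choose samples $q_1,q_2\in P$ within $\varepsilon$ of $y_1,y_2$; then $\lVert q_i-p\rVert\in[R-\varepsilon,R+\varepsilon]$ and $\lVert q_1-q_2\rVert\ge\lVert y_1-y_2\rVert-2\varepsilon>2\sqrt2\varepsilon$. For path-connectivity I would pick points $w_0=y_1,w_1,\dots,w_k=y_2$ along the arc $A\subset T$ with $\lVert w_j-w_{j+1}\rVert\le\varepsilon$ and, for each, a sample $s_j\in P$ with $\lVert s_j-w_j\rVert\le\varepsilon$; every $s_j$ lies in $S_{R-\varepsilon}^{R+\varepsilon}(p)\cap P$ and consecutive samples satisfy $\lVert s_j-s_{j+1}\rVert\le 3\varepsilon$, so their closed $\tfrac{3\varepsilon}{2}$-balls meet and $q_1=s_0,\dots,s_k=q_2$ all lie in one connected component of $(S_{R-\varepsilon}^{R+\varepsilon}(p)\cap P)^{3\varepsilon/2}$. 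Consequently that component has diameter at least $\lVert q_1-q_2\rVert>2\sqrt2\varepsilon$, which is exactly the claim.
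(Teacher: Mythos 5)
Your overall architecture is the same as the paper's: locate the two points where the rays exit $\partial B_R(p)$, bound the distance between them from below, transfer to nearby samples, and connect those samples by a chain of samples running along the shell inside $T$. Where you genuinely differ is the chord bound itself: the paper stays in $\mathbb{R}^n$ and applies the law of cosines at the apex $z$, using only $\lVert x_i - z\rVert \geq \tfrac{R+\varepsilon}{2}$, whereas you project to the plane of $T$ and argue via the central angle of the circle $\partial B_R(p)\cap H$. That difference is not cosmetic, and it is exactly where your proposal has a gap.

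The gap is the inscribed-angle step. The claim that $p'$ ``lies between the apex and the chord,'' hence $\theta = \angle y_1 p' y_2 \geq \alpha$, is not a general fact about a point inside a circle and two rays: take $\rho = 1$, $p'$ the origin, $z = (1/2,0)$, and rays from $z$ in directions $(1,0)$ and $(0,1)$; then $\alpha = \pi/2$, but $y_1 = (1,0)$ and $y_2 = (1/2,\sqrt{3}/2)$ give $\theta = \pi/3 < \alpha$. So the step is only salvageable by actually invoking $d(p',T)\leq\varepsilon$: one must prove that $p'$ (or its nearest point in $T$) lies in the triangle $z y_1 y_2$, which also requires ruling out the configuration where $p'$ sits on the far side of the chord (benign, since that forces $\theta$ to be large, but it must be said). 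Worse, your fix for $p'\notin T$ --- shift to the nearest point of $T$ and ``absorb the $O(\varepsilon)$ loss into the generous margin'' --- has essentially no margin at the corner of the admissible range. At $R = 14\varepsilon$, $\alpha = \pi/6$, your extremal bound is $2\sqrt{R^2-\varepsilon^2}\,\sin(\pi/12)\approx 7.2\varepsilon$ while the conclusion needs $2\sqrt{2}\varepsilon + 2\varepsilon \approx 4.83\varepsilon$; and recentering the circle by $\varepsilon$ moves each exit point not by $\varepsilon$ but by up to $\varepsilon/\cos\phi_i$, where $\phi_i$ is the angle between the ray and the radius at the exit point (here $\phi_i$ can be close to $28^\circ$), i.e.\ a total loss of about $2.3\varepsilon$, leaving a margin of order $0.1\varepsilon$. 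So the step you flag as the ``main obstacle'' and then wave through is precisely the hard part, and it is not dispatched by the stated reasoning; it needs the incidence-angle estimate and exact constants. For what it is worth, the paper's own proof is also shaky at this corner: it never uses $d(T,p)\leq\varepsilon$ at all, and its bound $(R+\varepsilon)\sin(\alpha/2) - 2\varepsilon \approx 1.9\varepsilon$ does not exceed $2\sqrt{2}\varepsilon$ when $R=14\varepsilon$, $\alpha=\pi/6$ --- the constraint you are trying to exploit really is needed for the lemma to hold there, but exploiting it takes a careful case analysis that neither your sketch nor the paper supplies.
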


\section{Local Structures}\label{sec:local}

    To identify the abstract structure of the $2$-complex, the algorithm in \Cref{sec:alg} first partitions the sample $P$ into sets $P_{LM}$, containing samples that are only near locally maximal cells, and $P_{NLM}$, containing samples near cells that are not locally maximal. The decision tree for if a point is in $P_{NLM}$ or $P_{LM}$ is summarised in \Cref{fig:flowchart}. After this, we further partition $P_{LM}$ and $P_{NLM}$ to infer the number of cells and their dimensions, as well as the incidence operator. 
    
    Take an embedded  $2$-complex $|X| \subset \mathbb{R}^n$, fix (an appropriate) $0 < \varepsilon \leq R$ and take $p\in \mathbb{R}^n$ with $d(|X|, p) \leq \varepsilon$. Consider the topological and geometric structure of $|X|$ in a neighbourhood of $p$, beginning with $B_{R}(p) \cap |X|$. If $B_{R}(p) \cap |X|$ is disconnected, we restrict to the connected component $C_p$ containing $\operatorname{proj}_{|X|}(p)$. Then, we consider $\partial B_{R}(p) \cap C_p$. Let $\text{proj}_{|X|}(p)$ be the projection of $p$ to $|X|$, and let $\sigma_p$ be the cell containing $\text{proj}_{|X|}(p)$. If $\sigma_p$ is locally maximal and $d(|\partial \sigma_p |, p) > R$, then $\partial B_{R}(p) \cap C_p$ has one of the following structures:
    
    \begin{enumerate}
        \item $\partial B_{R}(p) \cap C_p$ is empty, in which case $\sigma_p$ is a locally maximal vertex,
        \item $\partial B_{R}(p) \cap C_p$ is a pair of antipodal points, in which case $\sigma_p$ is a locally maximal 1-cell,
        \item $\partial B_{R}(p) \cap C_p$ is homotopic to $\mathcal{S}^1$ lying in a plane, in which case $\sigma_p$ is a 2-cell.
    \end{enumerate}

    The above structures consist of two parts: we examine the topological structure of $\partial B_{R}(p) \cap C_p$, and then look at its geometry. If $p$ is within $R$ of some cell $\tau_p$ (possibly $ \tau_p = \sigma_p$) which is not locally maximal, then either the topological structure or the geometric structure is not one of the above cases. As such, we use a two-step process to decide if a given sample $p$ is within $R$ of some not locally maximal cell $\tau_p$: first, we examine the topological structure of $\partial B_{R} (p) \cap C_p$ by looking at its homology, and then if necessary, we consider its geometric structure. We let 
    
    \begin{equation*}
        \mathcal{H}_{\bullet}(p) := H_{\bullet} \left( \partial B_{R}(p) \cap C_p \right).
    \end{equation*}
    
    As we are restricting ourselves to  $2$-complexes, we focus on $\mathcal{H}_{0}(p)$ and $\mathcal{H}_{1}(p)$.

    \begin{definition}[Local homology signature]\label{def:localsig-complex}
        Let $|X| \subset \mathbb{R}^n$ be an embedded $2$-complex, and fix $R>\varepsilon>0$. Take a point $p \in \mathbb{R}^n$ with $d(p, |X|) \leq \varepsilon$. The \emph{local homology signature} of $|X|$ at $p$ is 
        \begin{equation*}
            \sig(p) := \left(|\mathcal{H}_{0}(p)|, |\mathcal{H}_{1}(p)|\right).
        \end{equation*}
    \end{definition}
    
    In the above cases, the local homology signature of $|X|$ at $p$ is as follows.

    \begin{enumerate}
        \item $\sig(p) = (0,0)$,
        \item $\sig(p) = (2,0)$,
        \item $\sig(p) = (1,1)$.
    \end{enumerate}
    and so if $\sig(p)$ is not equal to $(0,0), (2,0)$ or $(1,1)$, then $p$ is within $R$ of a cell $\tau_p$ which is not locally maximal. If $\sig(p)$ is $(0,0)$ then $p$ is within $\varepsilon$ of a degree $0$ vertex. Unfortunately, if $\sig(p)$ is either $(2,0)$ or $(1,1)$, we need to examine the geometric structure of $\partial B_{R}(p) \cap C_p$. When $\sig(p) = (2,0)$, we can distinguish between the case where $\sigma_p$ is a locally maximal 1-cell and where $\sigma_p$ is a vertex of degree $2$ as follows: let the two points in $\partial B_R(p) \cap C_p$ be $c_1$ and $c_2$. If $\sigma_p$ is a 1-cell, then $\angle c_1 p c_2 = \pi$, and other $\angle c_1 p c_2 \neq \pi$. When $\sig(p) = (1,1)$ we need to distinguish between if $\sigma_q$ is a 2-cell, and if $\sigma_p$ is in the boundary of $2$-cells. We can do so by checking if $\partial B_{R}(p) \cap C_p$ is contained in a plane: if it is, then $\sigma_p$ is a $2$-cell, if not $\sigma_p$ is either an edge or a vertex that is not locally maximal. 
    
    Recall that we are working with an $\varepsilon$-sample $P$ of the embedded  $2$-complex $|X|$ instead of $|X|$. We want to approximate $\sig(p)$ with $P$. As $P$ is an $\varepsilon$-sample, we can approximate $\partial B_{R}(p) \cap C_p$ by first considering the structure of $B_{R+\varepsilon}(p) \cap P$, then the structure of $S_{R-\varepsilon}^{R+\varepsilon}(p) \cap P$. Before we define the $(\varepsilon, R)$-local structure of $P$ at $p$ (\Cref{def:localhomsig}), we need the following notation.

    \begin{definition}\label{def:inclusion-rank}
        Let $P \subset \mathbb{R}^n$ be a finite set of points. Then, $\operatorname{rk}_{k}^{\delta,\gamma}(P)$ is the rank of the map on the $k^{th}$ homology groups induced by the inclusion $P^{\delta} \hookrightarrow P^{\gamma}$.
    \end{definition}
    We can now formally define the $(\varepsilon, R)$-local structure of $P$ at $p$.
    
    \begin{definition}[$(\varepsilon, R)$-local homology signature]\label{def:localhomsig}
        Let $P \subset \mathbb{R}^n$ be an $\varepsilon$-sample of an embedded  $2$-complex $|X|$, and fix $R \geq 14 \varepsilon$. Let $C_p^{\frac{3\varepsilon}{2}}$ be samples in the same connected component of threshold graph $\mathfrak{G}  _{3\varepsilon}\left( B_{R+\varepsilon}(p) \cap P \right)$ as $p$. The \emph{$(\varepsilon, R)$-local homology signature} $\sig_{\varepsilon, R}(p)$ of $P$ at a sample $p$ is
        \begin{equation*}
                \sig_{\varepsilon, R}(p) := \left( \operatorname{rk}_{0}^{\frac{3\varepsilon}{2},\frac{7\varepsilon}{2}}\left( S_{R-\varepsilon}^{R+\varepsilon}(p) \cap C^{\frac{3\varepsilon}{2}}_p \right), \operatorname{rk}_{1}^{\frac{3\varepsilon}{2},\frac{7\varepsilon}{2}}\left( S_{R-\varepsilon}^{R+\varepsilon}(p) \cap C^{\frac{3\varepsilon}{2}}_p \right)\right).
        \end{equation*}
    \end{definition}

    We now define the types of local structures, beginning with \emph{maximal} local structures.
    
    \begin{definition}[Maximal $(\varepsilon,R)$-local structure]\label{def:maxstruct}
        Let $P$ be an $\varepsilon$ sample of a linearly embedded  $2$-complex $|X|$ and fix $R \geq 14\varepsilon$. Let $C^{\frac{3 \varepsilon}{2}}_p$ be the set of samples in the same connected component of $\left( B_{R+\varepsilon}(p) \cap P \right)^{\frac{3\varepsilon}{2}}$ as $p$. We say the \emph{$(\varepsilon, R)$-local structure of $P$ at $p$ is maximal} if any of the following hold:
        
        \begin{enumerate}
            \item $\sig_{\varepsilon, R}(p) = (0,0)$, in which case we say that the \emph{$(\varepsilon, R)$-local structure of $P$ at $p$ is maximal of dimension 0},
            \item $\sig_{\varepsilon, R}(p) = (2,0)$, and the two connected components $c_1, c_2$ of $\left( S_{R-\varepsilon}^{R+\varepsilon}(p) \cap C^{\frac{3\varepsilon}{2}}_p \right)^{\frac{3\varepsilon}{2}}$ have diameters less than $5\varepsilon$ and mid-points $q_1$ and $q_2$ such that 
                \begin{equation*}
                    \langle q_1 -p, q_2 - p\rangle \leq -R^2 + 2 R \varepsilon + 7 \varepsilon^2,
                \end{equation*}
                in which case we say that the \emph{$(\varepsilon, R)$-local structure of $P$ at $p$ is maximal of dimension 1},
            \item $\sig_{\varepsilon, R}(p) = (1,1)$, and for all $q_1 \in S_{R-\varepsilon}^{R+\varepsilon}(p) \cap P, \exists q_2 \in S_{R-\varepsilon}^{R+\varepsilon}(p) \cap P$ with 
                \begin{equation*}
                    \lVert q_2 - q_1 \rVert < 2\sqrt{R^2- \varepsilon^2} - \left( 1+ \sqrt{2} \right)\varepsilon.
                \end{equation*}
                 in which case we say that the \emph{$(\varepsilon, R)$-local structure of $P$ at $p$ is maximal of dimension 2},
        \end{enumerate}
    \end{definition}

    Next, we define \emph{not maximal} $(\varepsilon,R)$-local stuctures.
    
    \begin{definition}[Not maximal $(\varepsilon, R)$-local structure]\label{def:notmaxstruct}
         Let $P$ be an $\varepsilon$ sample of a linearly embedded  $2$-complex $|X|$ and fix $R \geq 14\varepsilon$. Let $C^{\frac{3\varepsilon}{2}}_p$ be the set of samples in the same connected component of $\check{\mathcal{C}}_{\frac{3\varepsilon}{2}}\left( S_{R+\varepsilon}(p) \cap P \right)$ as $p$. We say that \emph{the $(\varepsilon, R)$-local structure of $P$ at $p \in P$ is not maximal} if any of the following hold:
        
        \begin{enumerate}
            \item $\sig_{\varepsilon, R}(p) = (n,0)$ for some $n \in \mathbb{Z}_{\geq 0}, \, n \neq 0, 2$,
                
            \item $\sig_{\varepsilon, R}(p) = (1,n)$ for some $n \in \mathbb{Z}_{\geq 0}, \, n \neq 1$,
            \item $\sig_{\varepsilon, R}(p) = (2,0)$ and letting two connected components of 
                \begin{equation*}
                    \left( S_{R - \varepsilon}^{R + \varepsilon}(p) \cap C^{\frac{3\varepsilon}{2}}_p\right)^{\frac{3 \varepsilon}{2}}
                \end{equation*}
                be $c_1, c_2$, either $\operatorname{max}\left\{ \mathcal{D}(c_1), \mathcal{D}(c_2)\right\} \leq2 \sqrt{2}\varepsilon$ and letting mid-points of $c_1 , c_2$ be $q_1, q_2$
                \begin{equation*}
                    \langle q_1 -p, q_2 - p\rangle > -R^2 + 2 R \varepsilon + 7 \varepsilon^2,
                \end{equation*}
           \item $\sig_{\varepsilon, R}(p) = (1,1)$ and there exists $q_1 \in P \cap S_{R-\varepsilon}^{R+\varepsilon}$ such that for all $q_2 \in P \cap S_{R-\varepsilon}^{R+\varepsilon}$
                \begin{equation*}
                    \lVert q_2 - q_1 \rVert < 2\sqrt{R^2- \varepsilon^2} - \left( 1+ \sqrt{2} \right)\varepsilon.
                \end{equation*}
        \end{enumerate}
    \end{definition}

    Having defined the two classes of $(\varepsilon, R)$-local structures, we can define our initial partition.
    
    \begin{definition}[$P_{LM}$ and $P_{NLM}$]\label{def:PNMPLM}
        Let $P$ be an $\varepsilon$-sample of an embedded  $2$-complex $|X|$. We partition $P$ into two sets $P_{LM}$ and $P_{NLM}$ defined as 
        \begin{align*}
            P_{LM} &:= \{ p \in P \mid \text{ the $(\varepsilon, R)$-local structure at of $P$ at $p$ is maximal.} \} \\
            P_{NLM} &:= \{ p \in P \mid\text{ the $(\varepsilon, R)$-local structure of $P$ at $p$ is not maximal.} \} 
        \end{align*}
    \end{definition}
    
    \begin{remark}
        For all $p \in P$, $P$ either has maximal $(\varepsilon, R)$-local structure at $p \in P$ or it does not. Hence, the partitioning of $P$ into $P_{LM}$ and $P_{NLM}$ defined in \Cref{def:PNMPLM} is disjoint.
    \end{remark}

    Recall that the samples we are working with can contain noise, and we use the homology of $\check{\mathcal{C}}_{\frac{3\varepsilon}{2}}\left( S_{R-\varepsilon}^{R+\varepsilon}(p) \cap C^{\frac{3\varepsilon}{2}}_p\right)$ in the definition of $(\varepsilon, R)$-local structure. Hence, we place assumptions on $|X|$ to ensure that we correctly detect when samples are near cells that are not locally maximal. We place assumptions on the distances between any two vertices $u$ and $v$, the distance between an edge $\overline{uw}$ and a vertex $v \neq u,w$, the angle between any pair of edges with a common boundary vertex. Additionally, we place assumptions on the dihedral angle between any two 2-cells which have common boundary components. So that we can infer the incidence operator, we will require an upper bound on the relationship between $R$ and $\varepsilon$, and so we also restrict out choice of $R$ in terms of $\varepsilon$. We use the following notation in the decicision flow chart (\Cref{fig:flowchart}):
    \begin{align*}
            \beta &= -R^2 + 2 R \varepsilon + 7 \varepsilon^2, \\
            \gamma &= 2\sqrt{R^2- \varepsilon^2} - \left( 1+ \sqrt{2} \right)\varepsilon.
    \end{align*}

    \begin{figure}[h]
        \centering
        \begin{tikzpicture}[scale=0.5,every node/.style={scale=0.6},node distance = 4.2cm]
            \node [cloud] (init) {$p \in P$};
            \node [block, above of= init] (00){$\sig_{\varepsilon,R}(p)=(0,0)$};
            \node [block, below left of= init] (20) {$\sig_{\varepsilon,R}(p)=(2,0)$};
            \node [block, below right of= init] (11) {$\sig_{\varepsilon,R}(p)=(1,1)$};
            \node [block, left of= init] (n0) {$\sig_{\varepsilon,R}(p)=(n,0)$};
            \node [block, right of= init] (1n) {$\sig_{\varepsilon,R}(p)=(1,n)$};
            \node [decision, above of = 00] (00lm) {$p$ has a maximal $(\varepsilon, R)$-local structure of dimension $0$};
            \node [decision, above left of = n0] (n0nlm) {$P$ has non-maximal $(\varepsilon, R)$-local structure};
            \node [decision, above right of = 1n] (1nnlm) {$P$ has non-maximal $(\varepsilon, R)$-local structure};
            \node [block, below of = 20] (20diam1) {$\operatorname{max}\{diam(c_1), $ $diam(c_2)\} \leq 2\sqrt{2}  \varepsilon$};
            \node [block, left of = 20] (20diam2) {$\operatorname{max}\{diam(c_1), $ $diam(c_2)\} > 2\sqrt{2} \varepsilon$};
            \node [block, right of = 11] (11geom1) {$\forall q_1 \exists q_2$ $\lVert q_2 - q_1 \rVert \geq \gamma$};
            \node[block, below of = 11] (11geom2) {$\exists q_1 \forall q_2$ $\lVert q_2 - q_1 \rVert < \gamma$};
            \node [block, below of = 20diam1] (20geom1) {$\langle q_1 - p, q_2 - p \rangle > \beta $};
            \node [block, left of = 20diam1] (20geom2) {$\langle q_1 - p, q_2 - p \rangle \leq \beta$};
            \node [left of = 20diam2] (20diam2-1) {};
            \node [decision, below left of = 20diam2] (20nlm) {$p$ has a non-maximal $(\varepsilon, R)$-local structure};
            \node [left of = 20geom1] (20geom1-1) {};
            \node [decision, left of = 20geom1-1] (20lm) {$p$ has a maximal $(\varepsilon, R)$-local structure of dimension 1};
            \node [decision, below of = 11geom2] (11nlm) {$p$ has a non-maximal $(\varepsilon, R)$-local structure};
            \node [right of = 1n] (11geome) {};
            \node [decision, below right of = 11geom1] (11lm) {$p$ has a maximal $(\varepsilon, R)$-local structure of dimension 2};
            \path [line] (init) -- (00);
            \path [line] (init) -- (20);
            \path [line] (init) -- (11);
            \path [line] (init) -- (n0);
            \path [line] (init) -- (1n);
            \path [line] (00) -- (00lm);
            \path [line] (n0) -- (n0nlm);
            \path [line] (1n) -- (1nnlm);
            \path [line] (20) -- (20diam1);
            \path [line] (11) -- (11geom1);
            \path [line] (11) -- (11geom2);
            \path [line] (11geom2) -- (11nlm);
            \path [line] (11geom1) -- (11lm);
            \path [line] (20geom2) -- (20nlm);
            \path [line] (20geom1) -- (20lm);
            \path [line] (20diam1) -- (20geom1);
            \path [line] (20) -- (20diam2);
            \path [line] (20diam2) -- (20nlm);
            \path [line] (20diam1) -- (20geom2);
        \end{tikzpicture}
        \caption{Flow chart for determining if the $(\varepsilon,R)$-local structure of $P$ at $p$ is maximal or not. If maximal, what the dimension is.}\label{fig:flowchart}
    \end{figure}
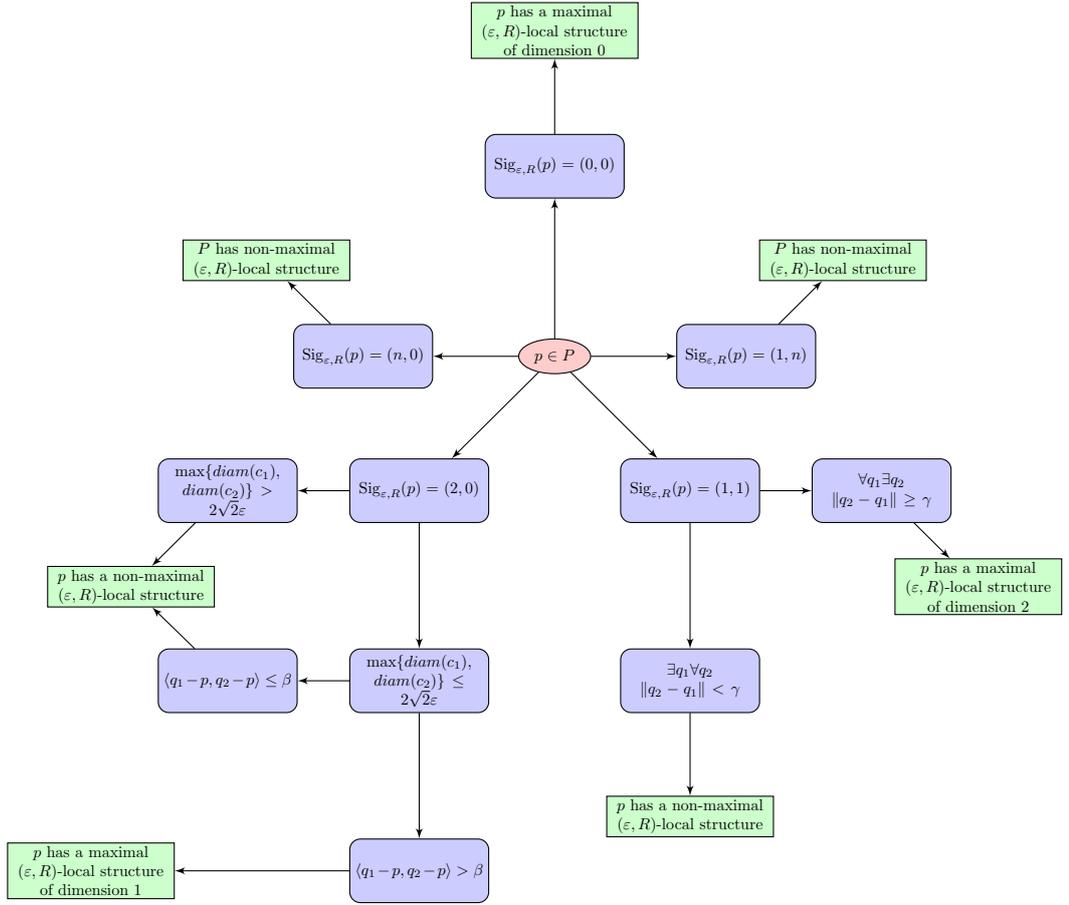
    
    To increase the readability of this article, we define the following functions.
 
    \begin{definition}
        Fix $R > 14 \varepsilon >0$. We define the following functions:
        \begin{enumerate}
            \item 
                \begin{dmath*} 
                    \Psi_1(\varepsilon,R) = \arccos \left ( \frac{\left(\frac{R}{2}-\varepsilon\right)^2 - 18 \varepsilon^2}{\left(\frac{R}{2} - \varepsilon \right)^2}\right) \geq  \arccos\left(\frac{(R-\varepsilon)^2-18\varepsilon^2}{(R-\varepsilon)^2}\right)+2\arcsin\left(\frac{2\varepsilon}{(R-\varepsilon)}\right)
                \end{dmath*}
            \item 
                \begin{dmath*} 
                    \Psi_2(\varepsilon,R) = \pi-\arctan\left(\frac{R+3\varepsilon}{6\varepsilon}\right) +\arcsin\left(\frac{R^2-4R\varepsilon-9\varepsilon^2}{(R+\varepsilon)\sqrt{R^2+6R\varepsilon+34\varepsilon^2}}\right)  
                \end{dmath*}
            \item 
                \begin{align*} 
                    \Psi_3(\varepsilon,R) = \AngleBoundTriangles
                \end{align*}
        \end{enumerate}
    \end{definition}
    
        To improve intuition of these functions, \Cref{fig:Psi_1,fig:Psi_2,fig:Phi_2} provide graphs of them. Note they are effectively a function of $\frac{R}{\varepsilon}$ as they are invariant to scaling both $R$ and $\varepsilon$ by the same amount.
	
	\begin{figure}
	     \includegraphics[width=300pt]{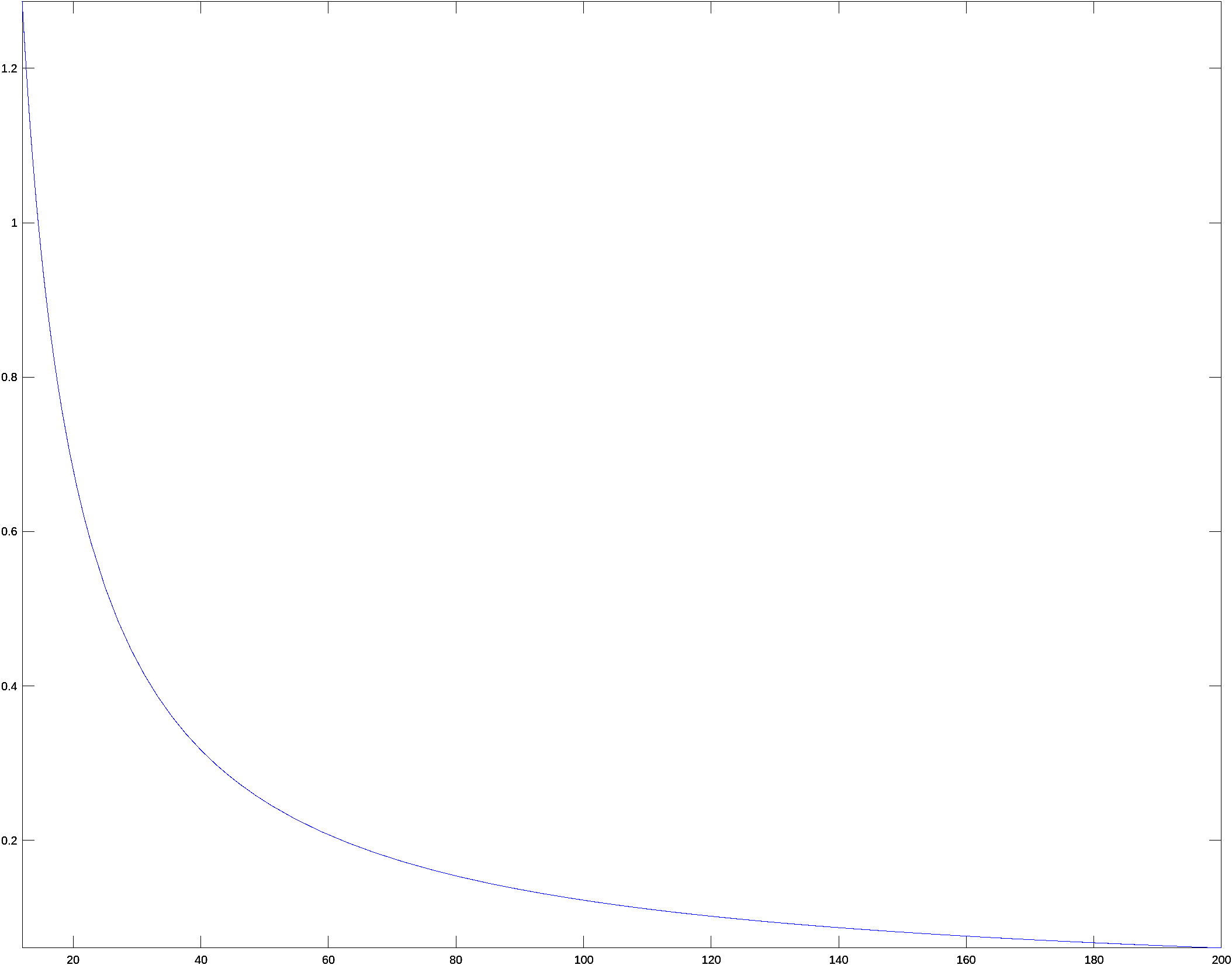}
	     \caption{Graph of $\Psi_1\left(1, \frac{R}{\varepsilon}\right)$.}\label{fig:Psi_1}
	\end{figure}
	
        \begin{figure}
	     \includegraphics[width=300pt]{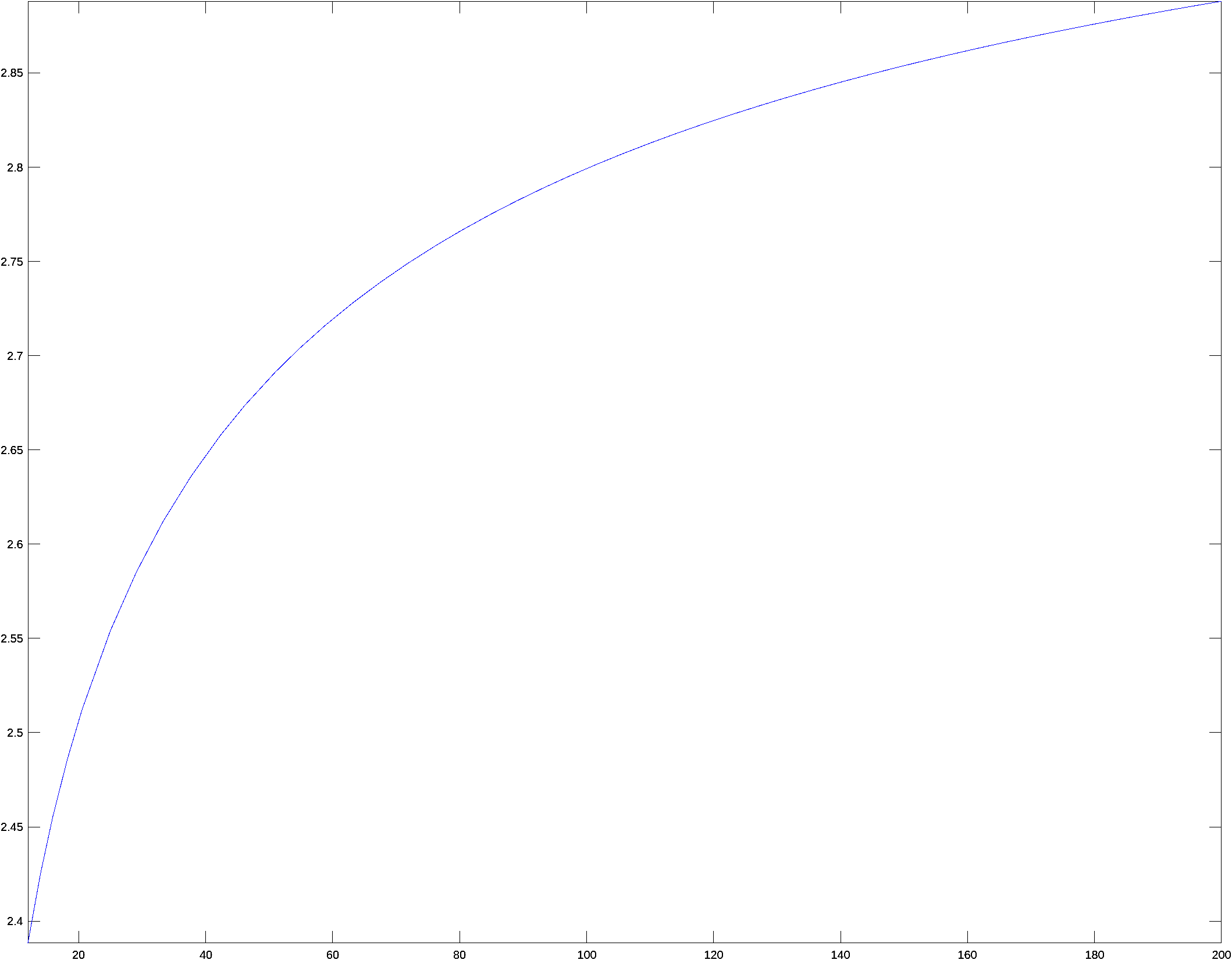}
	      \caption{Graph of $\Psi_2\left(1, \frac{R}{\varepsilon} \right)$.}\label{fig:Psi_2}
	\end{figure}

        \begin{figure}
	     \includegraphics[width=300pt]{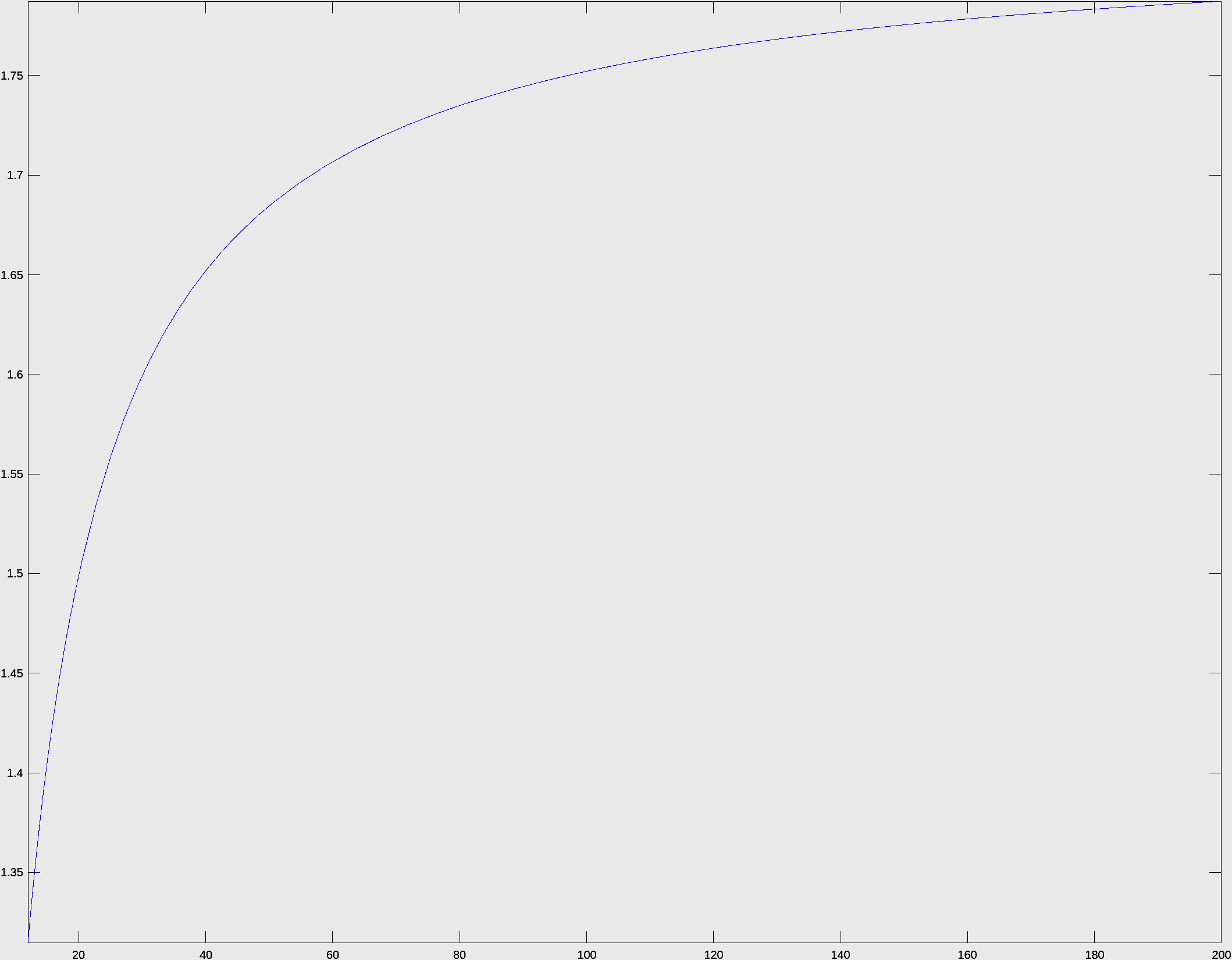}
	     \caption{Graph of $\Psi_3\left(1, \frac{R}{\varepsilon}\right)$.}\label{fig:Phi_2}
	\end{figure}

    We now state the assumptions we place on $|X|$.

    \begin{assumption}\label{as:embed}
        Fix $ R \geq 14 \varepsilon > 0$. We restrict to embedded  2-complexes $|X| = (X, \pi)$ which satisfy the following.
        
        \begin{enumerate}
            \item For all vertices $u,v$, \[\lVert u - v \rVert > 6(R+ \varepsilon).\]
            \item For a vertex $v$ and edge $\overline{uw}$ with $v \neq u,w$, \[d(\overline{uw}, v) > 6(R+\varepsilon).\]
            \item For a vertex $v$ and a triangle $\triangle uwx$ with $v \neq u, w, x$, \[d(\triangle uwx, v) > 6(R+\varepsilon).\]
            \item For an edge $\overline{uv}$ and a triangle $\triangle wxy$ with $v,u \neq w, x, y$, \[d(\triangle wxy, \overline{uv}) > 6(R+\varepsilon).\]
            \item For any triangle $\triangle u v w$, \[ \angle u v w, \angle v w u , \angle w u v \geq \frac{\pi}{6}.\]
            \item For any pair of edges $\overline{uv}, \overline{xy}$ with no common vertex, \[d(\overline{uv}, \overline{xy}) > 6(R+\varepsilon).\]
            \item For any triangles $\triangle uwv, \triangle xyz$, \[ d(\triangle uwv, \triangle xyz) > 6(R+\varepsilon).\]
            \item For any pair of edges $\overline{uv}, \overline{wv}$, \[\angle uvw \geq \Psi_1(\varepsilon, R).\]
            \item For all degree 2 vertices $v$ with edges $\overline{uv}, \overline{wv}$ and no triangle $\triangle u v w$, \[\angle uvw \leq \Psi_2 (\varepsilon, R).\]     
            \item For any pair of triangles $\triangle uvw_1, \triangle uvw_2$, the dihedral angle between them is bounded below by $\Psi_1(\varepsilon, R)$.
            \item For any pair of triangles $\triangle uvw_1, \triangle uvw_2$, with $\overline{uv}$ of degree 2, the dihedral angle between them is bounded above by $\Psi_2(\varepsilon, R)$.
            \item For any triangle $\triangle ww v w_2$ and edge $\overline{uv}$ the angle between $\overline{uv}$ and and ray $L$ in $\triangle w_1 v w_2$  at $v$ is bounded below by $\Psi_1(\varepsilon, R)$ and the radius of the largest circle inscribed by $\triangle u v w$ is at least $2R + 3 \varepsilon$.
            \item For any vertex $v$ such that 
                \begin{equation*}
                    |H_0\left(B_R(v) \cap |X|\right)|=1, \text{ and } |H_1\left(B_R(v) \cap |X|\right)|=1,
                \end{equation*}
                the angle between any two rays $L_1, L_2 \in |X|$ at $v$ is bounded above $\Psi_3(\varepsilon,R)$.
        \end{enumerate}
    \end{assumption}

    \begin{remark}
        The reasons behind some of the conditions in \Cref{as:embed} are relatively clear, while others are a bit more obscure. In particular, the roles of conditions 11 and 12 are not immediately clear. Condition 12 allows us to detect the vertex $v$ in our algorithms. In particular, it is used in \Cref{prop:nlmvertex} show that we obtain $\sig_{\varepsilon, R} = (n, \bullet), \,\ n \geq 2$. Condition 13 allows us to detect which topologically looks similar to an edge of degree 2 or a triangle, and so we place restrictions on the formation of the \emph{cone}, potentially with \emph{fins}, so that we can detect the vertex (\Cref{prop:nlmvertex}). This condition is equivalent to bounding the angle at $v$ of the convex hull which contains the triangles with vertex $v$.
    \end{remark}

    The following Propositions provide us with `regions' near locally maximal $i$-cells $\sigma$ (for $i =0,\, 1, \, 2$), where we can guarantee that at any sample in this region, the $(\varepsilon, R)$-local structure of $P$ at $p$ is maximal of dimension $i$.

    We begin with the region around a locally maximal vertex.
    
    \begin{proposition}\label{prop:lmvertex}
        Let $v$ be a vertex of $|X| \subset \mathbb{R}^n$, which is locally maximal, and let $P$ be an $\varepsilon$-sample of $|X|$. Then, for all $p \in P$ with $\lVert p - v \rVert \leq 4 \varepsilon$, the $(\varepsilon,R)$-local structure of $P$ at $p$ is maximal of dimension $0$.
    \end{proposition}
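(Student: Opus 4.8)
The plan is to reduce the statement to showing that the spherical shell $S_{R-\varepsilon}^{R+\varepsilon}(p) \cap P$ is empty, after which $\sig_{\varepsilon,R}(p) = (0,0)$ is immediate. The starting point is that a locally maximal vertex is isolated: by \Cref{def:locallymaximal}, $v$ lies in the boundary of no edge and no triangle, so \emph{every} edge and triangle of $|X|$, and every other vertex, is a cell not containing $v$. This lets me invoke conditions 1--3 of \Cref{as:embed} against all remaining cells simultaneously, each giving distance more than $6(R+\varepsilon)$ from $v$. Combining them, I would first record the clean consequence $|X| \cap B_{6(R+\varepsilon)}(v) = \{v\}$.

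Next I would fix $p \in P$ with $\lVert p - v \rVert \leq 4\varepsilon$ and take an arbitrary sample $q \in P \cap B_{R+\varepsilon}(p)$, aiming to force $q$ to be $\varepsilon$-close to $v$. A triangle inequality gives $\lVert q - v \rVert \leq \lVert q - p \rVert + \lVert p - v \rVert \leq R + 5\varepsilon$. Since $P$ is an $\varepsilon$-sample (\Cref{def:esample}), there is $x \in |X|$ with $\lVert q - x \rVert \leq \varepsilon$, whence $\lVert x - v \rVert \leq R + 6\varepsilon < 6(R+\varepsilon)$. By the isolation statement above, $x \in |X| \cap B_{6(R+\varepsilon)}(v) = \{v\}$, so $x = v$ and therefore $\lVert q - v \rVert \leq \varepsilon$.

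With this bound in hand, every sample in $B_{R+\varepsilon}(p)$ satisfies $\lVert q - p \rVert \leq \lVert q - v \rVert + \lVert v - p \rVert \leq 5\varepsilon$. Because $R \geq 14\varepsilon$ we have $R - \varepsilon \geq 13\varepsilon > 5\varepsilon$, so no sample of $P$ can lie in the shell $S_{R-\varepsilon}^{R+\varepsilon}(p)$; in particular $S_{R-\varepsilon}^{R+\varepsilon}(p) \cap C_p^{\frac{3\varepsilon}{2}} = \emptyset$. The homology of the empty set is trivial in all degrees, so both ranks appearing in \Cref{def:localhomsig} vanish, giving $\sig_{\varepsilon,R}(p) = (0,0)$, which is exactly case 1 of \Cref{def:maxstruct} and hence maximal of dimension $0$.

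The proof is essentially a chain of triangle inequalities, so I do not expect a serious obstacle. The only care needed is the bookkeeping of constants, ensuring the separation $6(R+\varepsilon)$ guaranteed by \Cref{as:embed} comfortably dominates the accumulated $\varepsilon$-errors (here $R + 6\varepsilon$), together with the slightly pedantic but essential point that local maximality of $v$ really does allow conditions 1--3 to be applied to \emph{all} remaining cells of $|X|$, not merely some of them.
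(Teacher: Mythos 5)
Your proof is correct and follows essentially the same route as the paper's: invoke local maximality together with conditions 1--3 of \Cref{as:embed} to isolate $v$, deduce that every sample in $B_{R+\varepsilon}(p)$ lies within $\varepsilon$ of $v$, conclude the shell $S_{R-\varepsilon}^{R+\varepsilon}(p)\cap P$ is empty, and hence $\sig_{\varepsilon,R}(p)=(0,0)$. Your version actually spells out the triangle-inequality bookkeeping that the paper leaves implicit, so nothing is missing.
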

    
    \begin{proof}
        As $v$ is locally maximal, it is not in the boundary of any other cell, and from \Cref{as:embed} for all vertices $u \neq v$, $\lVert u - v \rVert > 6(R+\varepsilon)$, for all edges $\overline{uw}$ with $v \neq u,w$, 
        \[
            d(\overline{uv}, v) > 6(R+\varepsilon),
        \]
        and for all triangles $\triangle uwx$ with $v \neq u, w, x$, 
        \[
            d(\triangle uwx, v) > 6(R+\varepsilon).
        \]
        
        Hence, any sample $p \in P$ within $4\varepsilon$ of $v$ is within $\varepsilon$ of $v$. Thus, $\left( B_{R+\varepsilon}(p) \cap P\right)^{\frac{3\varepsilon}{2}}$ consists of a single connected component, and $S_{R-\varepsilon}^{R+\varepsilon}(p) \cap P = \emptyset$. 
        
        Thus, $S_{R-\varepsilon}^{R+\varepsilon}(p) \cap P$, $\sig_{\varepsilon, R}(p) = (0,0)$, and the $(\varepsilon, R)$-local structure of $P$ at $p$ is maximal of dimension $0$.
    \end{proof}

    Next, we bound the region near a locally maximal edge.
    
    \begin{proposition}\label{prop:lmedge}
        Let $\overline{uv}$ be an edge of $|X| \subset \mathbb{R}^n$, which is locally maximal, and let $P$ be an $\varepsilon$-sample of $|X|$. Then, for all $p \in P$ with $d(\overline{uv},p) \leq \varepsilon$, and $\lVert p - u \rVert, \lVert p- v \rVert \geq \frac{3R}{2}+\varepsilon$, the $(\varepsilon,R)$-local structure of $P$ at $p$ is maximal of dimension $1$.
    \end{proposition}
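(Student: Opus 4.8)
The plan is to show that, even though several cells of $|X|$ may reach into $B_{R+\varepsilon}(p)$, the component $C_p^{\frac{3\varepsilon}{2}}$ of $p$ only contains samples lying within $\varepsilon$ of $\overline{uv}$; once this is established the local picture is exactly that of an $\varepsilon$-sample of a single line passing within $\varepsilon$ of $p$, and the three requirements of case 2 of \Cref{def:maxstruct} follow from \Cref{cor:diamedge} and \Cref{lem:sample-sphere-distance}. First I would isolate $\overline{uv}$. Since $\lVert p-u\rVert,\lVert p-v\rVert\ge \tfrac{3R}{2}+\varepsilon$, neither endpoint lies in $B_{R+\varepsilon}(p)$, and any sample $q\in B_{R+\varepsilon}(p)$ whose nearest cell point $x$ lies on a cell incident to $u$ satisfies $\lVert x-u\rVert\ge \lVert p-u\rVert-\lVert p-q\rVert-\lVert q-x\rVert\ge \tfrac{R}{2}-\varepsilon=:m$. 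By the far-separation clauses of \Cref{as:embed} (parts 1--4, 6, 7) every vertex, edge or triangle sharing no vertex with $\overline{uv}$ stays at distance $>6(R+\varepsilon)-\varepsilon>R+\varepsilon$ from $p$, hence never enters $B_{R+\varepsilon}(p)$. For a cell sharing the vertex $u$ (or $v$), parts 8 and 12 of \Cref{as:embed} guarantee that every ray of that cell emanating from $u$ makes an angle at least $\Psi_1(\varepsilon,R)$ with $\overline{uv}$.

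Two rays from $u$ with angular separation $\ge\Psi_1$, restricted to arc-length $\ge m$, are at Euclidean distance at least $2m\sin\!\big(\tfrac{\Psi_1}{2}\big)$; the definition of $\Psi_1$ gives exactly $\sin\!\big(\tfrac{\Psi_1}{2}\big)=\tfrac{3\varepsilon}{R/2-\varepsilon}$, so this distance equals $6\varepsilon$. Hence a sample within $\varepsilon$ of $\overline{uv}$ and a sample within $\varepsilon$ of any other incident cell, both inside $B_{R+\varepsilon}(p)$, are at distance at least $4\varepsilon>3\varepsilon$, so they lie in different components of $\mathfrak{G}_{3\varepsilon}\big(B_{R+\varepsilon}(p)\cap P\big)$. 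Therefore $C_p^{\frac{3\varepsilon}{2}}$ consists solely of samples within $\varepsilon$ of $\overline{uv}$. Writing $L$ for the line through $\overline{uv}$, the hypotheses of \Cref{cor:diamedge} now hold on $C_p^{\frac{3\varepsilon}{2}}$, so $\big(S_{R-\varepsilon}^{R+\varepsilon}(p)\cap C_p^{\frac{3\varepsilon}{2}}\big)^{\frac{3\varepsilon}{2}}$ has exactly two components $c_1,c_2$, each of diameter less than $2\sqrt2\varepsilon<5\varepsilon$. Each is small enough to carry no $1$-cycle across the scales $\tfrac{3\varepsilon}{2}\to\tfrac{7\varepsilon}{2}$, and the two (being roughly $2R$ apart) do not merge, so $\operatorname{rk}_0^{\frac{3\varepsilon}{2},\frac{7\varepsilon}{2}}=2$ and $\operatorname{rk}_1^{\frac{3\varepsilon}{2},\frac{7\varepsilon}{2}}=0$, i.e. $\sig_{\varepsilon,R}(p)=(2,0)$, verifying the signature and diameter requirements.

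It remains to check the inner-product inequality for the mid-points $q_1,q_2$ of $c_1,c_2$. Let $x_1,x_2$ be the two points where $L$ meets $\partial B_R(p)$ and write $x_i-p=R\hat u_i$; with $h=d(L,p)\le\varepsilon$ one computes $\langle\hat u_1,\hat u_2\rangle=-1+\tfrac{2h^2}{R^2}$, so $x_1,x_2$ are nearly antipodal through $p$ with $\lVert\hat u_1+\hat u_2\rVert\le \tfrac{2\varepsilon}{R}$. By \Cref{lem:sample-sphere-distance} every shell sample near $L$ lies within $\sqrt2\varepsilon$ of the corresponding $x_i$, hence so does each mid-point; write $q_i=x_i+e_i$ with $\lVert e_i\rVert\le\sqrt2\varepsilon$. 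Then
\begin{equation*}
\langle q_1-p,q_2-p\rangle = R^2\langle\hat u_1,\hat u_2\rangle + R\langle\hat u_1,e_2\rangle + R\langle e_1,\hat u_2\rangle + \langle e_1,e_2\rangle,
\end{equation*}
where the first term is $\le -R^2+2\varepsilon^2$ and the last is $\le 2\varepsilon^2$. The decisive point is that the two cross terms must be bounded by about $R\varepsilon$ each, not by the Cauchy--Schwarz value $\sqrt2\,R\varepsilon$: the thin-shell constraint $\lVert q_i-p\rVert\le R+\varepsilon$ forces the component of $e_i$ along $\hat u_i$ to be at most $\varepsilon+O(\varepsilon^2/R)$, while the $O(\sqrt2\varepsilon)$ tangential part of $e_2$ is nearly orthogonal to $\hat u_1$ (since $\hat u_1=-\hat u_2+O(\varepsilon/R)$) and so contributes only $O(\varepsilon^2)$. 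A careful accounting of the $\varepsilon^2$ terms then yields $\langle q_1-p,q_2-p\rangle\le -R^2+2R\varepsilon+7\varepsilon^2$, completing case 2 of \Cref{def:maxstruct}.

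I expect this last estimate to be the main obstacle: the naive bound on the cross terms overshoots the target coefficient $2R\varepsilon$, so one has to split each displacement $e_i$ into a radial part (small because the samples sit in the thin shell $S_{R-\varepsilon}^{R+\varepsilon}(p)$) and a tangential part (whose contribution is quadratic in $\varepsilon$ by the near-antipodality $\lVert\hat u_1+\hat u_2\rVert\le \tfrac{2\varepsilon}{R}$), and then keep track of every $\varepsilon^2$ term. A secondary subtlety, handled in the first two paragraphs, is ensuring the angular separation $\Psi_1$ covers \emph{triangles} incident to $u$ or $v$ and not just edges, which is precisely the role of part 12 of \Cref{as:embed}.
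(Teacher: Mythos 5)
Your first two paragraphs are correct and, if anything, more careful than the paper's own proof: the paper disposes of all cells not containing $u$ or $v$ via the far-separation clauses of \Cref{as:embed} and leaves the cells incident to $u$ or $v$ essentially implicit, whereas you make the angular-separation argument explicit (parts 8 and 12 of \Cref{as:embed}, the identity $\sin(\Psi_1/2)=\tfrac{3\varepsilon}{R/2-\varepsilon}$, and the resulting $4\varepsilon>3\varepsilon$ gap in the threshold graph). The reduction to \Cref{cor:diamedge} and the verification of $\sig_{\varepsilon,R}(p)=(2,0)$ also match the paper's route.

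The gap is in your final step, exactly where you predicted, and it is real: the accounting you outline does not reach the constant $7\varepsilon^2$, and you never carry it out. Writing $q_i=x_i+e_i$ with $\lVert e_i\rVert\le\sqrt2\varepsilon$, the radial constraint (which does pass to mid-points, since for each sample it is a linear inequality) gives $-\langle\hat u_2,e_2\rangle\le\varepsilon+\tfrac{\varepsilon^2}{2R}$, and $\hat u_1=-\hat u_2+\delta$ with $\lVert\delta\rVert\le\tfrac{2\varepsilon}{R}$, so $R\langle\hat u_1,e_2\rangle\le R\varepsilon+\tfrac{\varepsilon^2}{2}+2\sqrt2\,\varepsilon^2$; adding the symmetric cross term, the leading term $\le -R^2+2\varepsilon^2$, and $\langle e_1,e_2\rangle\le 2\varepsilon^2$ yields only $-R^2+2R\varepsilon+(5+4\sqrt2)\varepsilon^2\approx -R^2+2R\varepsilon+10.7\,\varepsilon^2$, which is weaker than what case 2 of \Cref{def:maxstruct} requires. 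The encoding of the constraints through $x_i$ and near-antipodality is lossy, so no bookkeeping of $\varepsilon^2$ terms along this route will produce $7$. The paper sidesteps this entirely by invoking Lemma 2.1 of \cite{fods-graph}, which is stated for precisely this configuration. If you want a self-contained argument, work in coordinates adapted to the line $L$ rather than to the sphere points: place $p$ at the origin with $L=\{(t,h,0):t\in\mathbb{R}\}$, $h=d(p,L)\le\varepsilon$. Every sample of $c_1$ (resp.\ $c_2$) then satisfies the linear constraint $t\ge\sqrt{(R-\varepsilon)^2-(h+\varepsilon)^2}$ (resp.\ $t\le-\sqrt{(R-\varepsilon)^2-(h+\varepsilon)^2}$) and has transverse coordinates in the convex set $\{(y,z):(y-h)^2+\lvert z\rvert^2\le\varepsilon^2\}$; both constraints survive averaging, hence hold for the mid-points, and then
\[
\langle q_1-p,\,q_2-p\rangle \;\le\; -\bigl((R-\varepsilon)^2-(h+\varepsilon)^2\bigr)+(h+\varepsilon)^2 \;\le\; -R^2+2R\varepsilon+7\varepsilon^2,
\]
with the constant $7$ coming out exactly (and tightly, so nothing cruder can work).
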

    
    \begin{proof}
        By \Cref{as:embed}, for any vertex $w \neq u,v$ 
        \[
            d (\overline{uv}, w) > 6(R+\varepsilon),
        \]
        for any edge $\overline{wx}$, with $w,x \neq u, v$, 
        \[
            d(\overline{uv}, \overline{wx}) > 6(R+\varepsilon),
        \]
        for any triangle $\triangle wxy$, with $w,x,y \neq u, v$, 
        \[
            d(\triangle wxy, \overline{uv}) > 6(R+\varepsilon),
        \]
        and so the connected component $C^{\frac{3\varepsilon}{2}}_p$ of $\left(B_{R+\varepsilon}(p) \cap P\right)^{\frac{3\varepsilon}{2}}$ which contains $p$, contains only points $q \in P$ with $d(q, \overline{uv}) \leq \varepsilon$.
        
        Hence, $\check{\mathcal{C}}_{\frac{3\varepsilon}{2}} \left ( S_{R-\varepsilon}^{R + \varepsilon}(p) \cap C^{\frac{3\varepsilon}{2}}_p \right)$ consists of two connected components, $c_1$ and $c_2$. By \Cref{lem:diamline}, the diameters of $c_1$ and $c_2$ are less than $5\varepsilon$. Let $x_1$ and $x_2$ be the centroids of $c_1$ and $c_2$. Then, applying Lemma 2.1 in \cite{fods-graph}, 
        \[
            \langle x_1 -p, x_2 -p \rangle \leq -R^2 + 2R \varepsilon + 7 \varepsilon^2,        
        \]
        so the $(\varepsilon, R)$-local structure of $P$ at $p$ is maximal of dimension 1.
    \end{proof}

    Finally, we bound the region near (locally maximal) triangles.
    
    \begin{proposition}\label{prop:lmtriangle}
        Let $\triangle uvw$ be an triangle of $|X| \subset \mathbb{R}^n$, and let $P$ be an $\varepsilon$-sample of $|X|$. Then, for all $p \in P$ with $d(\triangle uvw, p) \leq \varepsilon$, and $d(\partial \triangle uvw, p) \geq \frac{3R}{2} + \varepsilon$, the $(\varepsilon,R)$-local structure of $P$ at $p$ is maximal of dimension $2$.
    \end{proposition}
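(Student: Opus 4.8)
The plan is to reduce everything to the planar case and then reuse \Cref{lem:antipodal-flat}. Let $H$ denote the affine $2$-plane spanned by $\Theta(u),\Theta(v),\Theta(w)$, so that $\triangle uvw\subset H$. \textbf{First}, I would use the boundary hypothesis together with \Cref{as:embed} to show that inside $B_{R+\varepsilon}(p)$ the sample $P$ behaves like an $\varepsilon$-sample of $H$. Since $d(\partial\triangle uvw,p)\geq\frac{3R}{2}+\varepsilon>R+\varepsilon$, the ball $B_{R+\varepsilon}(p)$ misses $\partial\triangle uvw$ entirely. Writing $p'=\operatorname{proj}_H(p)$, the nearest point of $\triangle uvw$ to $p$ cannot lie on $\partial\triangle uvw$, so it equals $p'$ and $d(p,p')=d(\triangle uvw,p)\leq\varepsilon$; consequently $d(\partial\triangle uvw,p')\geq\frac{3R}{2}$ and, as $R+\varepsilon<\frac{3R}{2}$, the disk $B_{R+\varepsilon}(p)\cap H$ lies in the relative interior of $\triangle uvw$ and coincides with $B_{R+\varepsilon}(p)\cap\triangle uvw$. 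Conditions 3, 4 and 7 of \Cref{as:embed} (vertex--triangle, edge--triangle and triangle--triangle separation, each exceeding $6(R+\varepsilon)$) then force every other cell of $|X|$ to be farther than $R+\varepsilon$ from $p$, so the connected component $C_p^{\frac{3\varepsilon}{2}}$ of $\left(B_{R+\varepsilon}(p)\cap P\right)^{\frac{3\varepsilon}{2}}$ containing $p$ consists only of samples $q$ with $d(q,H)\leq\varepsilon$.

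\textbf{Second}, I would compute $\sig_{\varepsilon,R}(p)=(1,1)$. Intersecting the shell with the plane, $S_{R-\varepsilon}^{R+\varepsilon}(p)\cap H$ is a genuine planar annulus centred at $p'$ with inner radius $\sqrt{(R-\varepsilon)^2-d(p,H)^2}>0$ and outer radius $\sqrt{(R+\varepsilon)^2-d(p,H)^2}$, which has one component and a single $1$-cycle. The task is to transfer this to the persistent ranks $\operatorname{rk}_0^{\frac{3\varepsilon}{2},\frac{7\varepsilon}{2}}$ and $\operatorname{rk}_1^{\frac{3\varepsilon}{2},\frac{7\varepsilon}{2}}$ of $S_{R-\varepsilon}^{R+\varepsilon}(p)\cap C_p^{\frac{3\varepsilon}{2}}$. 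Because $P$ is locally an $\varepsilon$-sample of $H$, the thickenings at scales $\frac{3\varepsilon}{2}$ and $\frac{7\varepsilon}{2}$ cover the annulus and are in turn covered by a slightly fattened annulus; via the nerve lemma and an interleaving between these two scales, the inclusion-induced map has rank exactly $1$ in degree $0$ (the annulus is connected and no spurious extra component survives to scale $\frac{7\varepsilon}{2}$) and rank exactly $1$ in degree $1$ (the essential loop persists while any small noise loop dies before scale $\frac{7\varepsilon}{2}$).

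\textbf{Third}, I would verify the geometric clause for dimension $2$ directly from \Cref{lem:antipodal-flat}. With the plane $H$, $d_H(P,H)\leq\varepsilon$ holds locally, $d(p,H)\leq\varepsilon$, and $R\geq14\varepsilon$, so the lemma yields for every $q_1\in S_{R-\varepsilon}^{R+\varepsilon}(p)\cap P$ a point $q_2\in S_{R-\varepsilon}^{R+\varepsilon}(p)\cap P$ with $\lVert q_2-q_1\rVert\geq 2\sqrt{R^2-\varepsilon^2}-(1+\sqrt{2})\varepsilon=\gamma$. Combined with $\sig_{\varepsilon,R}(p)=(1,1)$, this places $p$ in the dimension-$2$ branch of the flow chart (\Cref{fig:flowchart}), so by \Cref{def:maxstruct} the $(\varepsilon,R)$-local structure of $P$ at $p$ is maximal of dimension $2$.

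I expect the main obstacle to be the second step: rigorously pinning the persistent ranks to exactly $(1,1)$. Unlike the edge and vertex cases (\Cref{prop:lmedge,prop:lmvertex}), where the shell meets $|X|$ in finitely many well-separated point clusters whose homology is immediate, here we must certify that a sampled planar annulus carries no extra persistent $H_0$ or $H_1$ classes between the scales $\frac{3\varepsilon}{2}$ and $\frac{7\varepsilon}{2}$. This is the crux, and it is precisely where the bound $R\geq14\varepsilon$ and the particular choice of thickening radii are needed to guarantee both the sampling density around the annulus and the clean death of all noise-generated cycles before the larger scale.
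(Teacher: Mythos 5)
Your outline coincides with the paper's proof: your step one (restricting $C_p^{\frac{3\varepsilon}{2}}$ to samples near $\triangle uvw$ via the separation conditions of \Cref{as:embed}) and step three (feeding $\sig_{\varepsilon,R}(p)=(1,1)$ into \Cref{lem:antipodal-flat} to obtain the geometric clause of \Cref{def:maxstruct}) are exactly the paper's first and last moves. The problem is step two, which you yourself flag as the crux and then leave as an expectation (``no spurious extra component survives'', ``any small noise loop dies'') rather than an argument; as written this is a genuine gap, and it sits precisely where the proposition's content lies.

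The paper closes it with a concrete mechanism that your phrase ``nerve lemma and an interleaving between these two scales'' gestures at but does not supply. Set $S = S_{R-\varepsilon}^{R+\varepsilon}(p)$ and $A = S\cap\triangle uvw$. The $\varepsilon$-sampling hypothesis together with $d(\partial\triangle uvw,p)\geq\frac{3R}{2}+\varepsilon$ yields the interleaved chain of inclusions
\begin{equation*}
A \hookrightarrow \left(S\cap P\right)^{\frac{3\varepsilon}{2}} \hookrightarrow A^{\frac{5\varepsilon}{2}} \hookrightarrow \left(S\cap P\right)^{\frac{7\varepsilon}{2}} \hookrightarrow A^{\frac{9\varepsilon}{2}},
\end{equation*}
and the distance bounds of \Cref{as:embed} imply that the weak feature size of $A$ exceeds $5\varepsilon$, so the inclusions $A\hookrightarrow A^{\frac{5\varepsilon}{2}}\hookrightarrow A^{\frac{9\varepsilon}{2}}$ induce isomorphisms on homology. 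Because these isomorphisms factor through the two sample thickenings, the rank of the inclusion-induced map $H_\bullet\left(\left(S\cap P\right)^{\frac{3\varepsilon}{2}}\right)\to H_\bullet\left(\left(S\cap P\right)^{\frac{7\varepsilon}{2}}\right)$ is squeezed between them and therefore equals $\left|H_\bullet(A)\right|$ in each degree, namely $1$ in degree $0$ and $1$ in degree $1$; this is exactly the statement $\sig_{\varepsilon,R}(p)=(1,1)$. Two further remarks: the nerve lemma is not actually needed at this point, since \Cref{def:localhomsig} is phrased in terms of thickenings rather than \v{C}ech complexes (the nerve correspondence only enters later, in \Cref{sec:alg}); and the role of $R\geq 14\varepsilon$ here is to keep the inner radius of the annulus $A$ well above $\frac{9\varepsilon}{2}$, which is what makes the weak feature size bound, and hence the sandwich, work. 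To complete your proposal you would need to prove your survival/death claims, and the shortest route is precisely this chain.
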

    
    \begin{proof}
        From \Cref{as:embed}, for all triangles $\triangle xyz$, with $x,y,z \neq u,v,w$, 
        \[ 
            d(\triangle uwv, \triangle xyz) > 6(R+\varepsilon),
        \]
        and hence the connected component $C^{\frac{3\varepsilon}{2}}_p$ of $\check{\mathcal{C}}_{\frac{3\varepsilon}{2}}\left( B_{R+\varepsilon}(p) \cap P\right)$ containing $p$, consists only of samples $q \in P$ with $d(q, \triangle uvw) \leq \varepsilon$, as the angle between triangles is bounded below (\Cref{as:embed}). 
        
        First, we need to show that $\sig_{\varepsilon, R}(p) = (1,1)$, after which \Cref{lem:antipodal-flat} implies that for all $q_1 \in S_{R-\varepsilon}^{R+\varepsilon}(p) \cap P$, there exists $q_2 \in S_{R-\varepsilon}^{R+\varepsilon}(p) \cap P$ such that 
        
        \begin{align*}
           \lVert q_2 - q_1 \rVert   & \geq 2\sqrt{R^2 - \varepsilon^2} - (1 + \sqrt{2})\varepsilon.
        \end{align*}
        As $d(\partial \triangle uvw, p) > \frac{3R}{2} + \varepsilon$, we have the following inclusions
        \begin{align*}
            S_{R-\varepsilon}^{R+\varepsilon}(p) \cap \triangle uvw &\hookrightarrow \left( S_{R-\varepsilon}^{R+\varepsilon}(p) \cap P \right)^{\frac{3\varepsilon}{2}} &\\
            &\hookrightarrow  \left( S_{R-\varepsilon}^{R+\varepsilon}(p) \cap \triangle uvw \right )^{\frac{5\varepsilon}{2}} \\&
            \hookrightarrow \left( S_{R-\varepsilon}^{R+\varepsilon}(p) \cap P \right )^{\frac{7\varepsilon}{2}}\\
            &\hookrightarrow \left( S_{R-\varepsilon}^{R+\varepsilon}(p) \cap \triangle uvw \right )^{\frac{9\varepsilon}{2}}.
        \end{align*}
        
        By the bounds in \Cref{as:embed} on the distances between a triangle and cells not in its boundary, the weak feature size of $S_{R-\varepsilon}^{R+\varepsilon}(p) \cap \triangle uvw$ is greater than $5\varepsilon$, and so the inclusion maps induce isomorphisms 
        
        \begin{equation*}
            H_{\bullet}\left( S_{R-\varepsilon}^{R+\varepsilon}(p) \cap \triangle uvw \right) \cong H_{\bullet}\left( \left( S_{R-\varepsilon}^{R+\varepsilon}(p) \cap \triangle uvw \right )^{\frac{5\varepsilon}{2}} \right) \cong H_{\bullet}\left( \left( S_{R-\varepsilon}^{R+\varepsilon}(p) \cap \triangle uvw \right )^{\frac{9\varepsilon}{2}} \right).
        \end{equation*}
        
        The above homology factors through $\left( S_{R-\varepsilon}^{R+\varepsilon}(p) \cap P \right)^{\frac{3\varepsilon}{2}}$ and $\left( S_{R-\varepsilon}^{R+\varepsilon}(p) \cap P \right)^{\frac{7\varepsilon}{2}}$ so we have 
        \begin{align*}
            \operatorname{rk}_{\bullet}^{\frac{3\varepsilon}{2}, \frac{5\varepsilon}{2}}\left(S_{R-\varepsilon}^{R+\varepsilon}(p) \cap P \right) &= \left| H_{\bullet} \left( S_{R-\varepsilon}^{R+\varepsilon}(p) \cap \triangle uvw  \right) \right|,
        \end{align*}
        and as 
        \begin{align*}
            |H_{0}\left( S_{R-\varepsilon}^{R+\varepsilon}(p) \cap \triangle uvw \right)| = 1, \, & |H_{1}\left( S_{R-\varepsilon}^{R+\varepsilon}(p) \cap \triangle uvw \right)| = 1,
        \end{align*}
        it follows that $\sig_{\varepsilon, R}(p) = (1,1)$. Now we apply \Cref{lem:antipodal-flat} and conclude that the $(\varepsilon, R)$-local structure of $P$ at $p$ is maximal of dimension 2. 
    \end{proof}

    Now, we obtain the regions around not locally maximal $i$-cells $\sigma$ ($i = 0,\, 1$) in which we can guarantee that the $(\varepsilon, R)$-local structure of $P$ at a sample $p$ in this region is not locally maximal. Again, we begin with non-locally maximal vertices.

    \begin{remark}
        As we have restricted our considerations to $2$-complexes, every triangle $\sigma$ is locally maximal; hence, we need only to consider vertices and edges that are not locally maximal.
    \end{remark}
    
    \begin{proposition}\label{prop:nlmvertex}
        Let $v$ be a vertex of $|X| \subset \mathbb{R}^n$, which is not locally maximal, and let $P$ be an $\varepsilon$-sample of $|X|$. Then, for all $p \in P$ with 
        \begin{equation*}
            \lVert p - v \rVert \leq \frac{R}{2} -2\varepsilon, 
        \end{equation*}
        the $(\varepsilon,R)$-local structure of $P$ at $p$ is not maximal.
    \end{proposition}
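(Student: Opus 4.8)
The plan is to reduce to the local star of $v$, identify $\sig_{\varepsilon,R}(p)$ with the Betti numbers of the link of $v$, and then check that $p$ fails each of the three maximality criteria of \Cref{def:maxstruct}. First I would localise using parts 1--4 of \Cref{as:embed}: every vertex, edge and triangle not in the star of $v$ sits at distance more than $6(R+\varepsilon)$ from $v$, while $\lVert p-v\rVert\le\frac R2-2\varepsilon$, so $C_p^{3\varepsilon/2}$ contains only samples within $\varepsilon$ of cells incident to $v$. Since $\lVert p-v\rVert<R-\varepsilon$, the vertex $v$ lies strictly inside the inner sphere of $S_{R-\varepsilon}^{R+\varepsilon}(p)$, so each edge at $v$ is met by the shell in a single cluster and each triangle at $v$ in a single arc. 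Mirroring the interleaving argument in the proof of \Cref{prop:lmtriangle} --- a chain of thickening inclusions between $S_{R-\varepsilon}^{R+\varepsilon}(p)\cap P$ and $S_{R-\varepsilon}^{R+\varepsilon}(p)\cap|X|$, with the weak feature size controlled by \Cref{as:embed} --- I would conclude that $\sig_{\varepsilon,R}(p)$ equals $\bigl(b_0,b_1\bigr)$ of $\partial B_R(p)\cap C_p$, i.e.\ of the link of $v$. By \Cref{lem:diamline} and \Cref{cor:diamedge} every shell cluster arising from an edge has diameter below $2\sqrt2\varepsilon$, whereas by \Cref{lem:diamtri} (whose angle hypothesis is guaranteed by part 5) any shell component meeting a triangle has diameter above $2\sqrt2\varepsilon$.

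With this dictionary I would dispatch the three maximal signatures. As $v$ is not locally maximal its link is non-empty, so $\sig_{\varepsilon,R}(p)\neq(0,0)$ and the dimension-$0$ test fails. Suppose $\sig_{\varepsilon,R}(p)=(2,0)$. If either component meets a triangle then its diameter exceeds $2\sqrt2\varepsilon$, so the diameter test already excludes the dimension-$1$ structure. Otherwise both components come from edges, so $v$ is a degree-$2$ vertex with no incident triangle; then I would combine the centroid/inner-product estimate (Lemma 2.1 of \cite{fods-graph}) with the angle bound $\angle u v w\le\Psi_2(\varepsilon,R)$ of part 9 to obtain $\langle q_1-p,\,q_2-p\rangle>\beta$, violating \Cref{def:maxstruct}(2).

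The remaining case is $\sig_{\varepsilon,R}(p)=(1,1)$, where the link is connected with a single loop and hence $v$ is the apex of a closed fan of triangles (possibly carrying additional edges), so $\lvert H_0(B_R(v)\cap|X|)\rvert=\lvert H_1(B_R(v)\cap|X|)\rvert=1$ and part 13 of \Cref{as:embed} bounds the cone angle at $v$ by $\Psi_3(\varepsilon,R)=\Psi(\varepsilon,R)$. Applying \Cref{lem:antipodal-not-flat} to a pair of fan triangles meeting along a line $L$ through $v$ (so that $d(L,p)\le\lVert p-v\rVert\le\frac R2-2\varepsilon$) then yields a sample $q_1\in S_{R-\varepsilon}^{R+\varepsilon}(p)\cap P$ with $\lVert q_2-q_1\rVert<\gamma$ for every $q_2$, which is exactly the negation of the dimension-$2$ criterion \Cref{def:maxstruct}(3). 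Having failed all three criteria, $p$ has a not-maximal $(\varepsilon,R)$-local structure.

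I expect the $(1,1)$ case to be the main obstacle: a closed fan of triangles is topologically indistinguishable from a single $2$-cell, so the argument must be purely geometric, and the delicate point is to convert the convex-hull/cone-angle bound of part 13 into the uniform antipodal inequality demanded by \Cref{lem:antipodal-not-flat} while controlling the error introduced by passing from $|X|$ to the $\varepsilon$-sample $P$. A secondary bookkeeping difficulty is verifying that the thickening interleaving computes $\sig_{\varepsilon,R}(p)$ as the link's Betti numbers uniformly across the many possible configurations of fins and triangles in the star of $v$.
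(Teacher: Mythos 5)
Your localisation step, the interleaving/weak-feature-size computation identifying $\sig_{\varepsilon,R}(p)$ with the Betti numbers of the link of $v$, and your treatment of the signatures $(n,0)$ with $n \neq 2$, $(2,0)$, and $(1,n)$ with $n \neq 1$ all follow the paper's proof essentially verbatim. The gap is in the $(1,1)$ case. You propose to apply \Cref{lem:antipodal-not-flat} to a single pair of adjacent fan triangles meeting along a line $L$ through $v$, but that lemma's hypotheses fail in this configuration: it requires $d_H(P, H_1 \cup H_2) \leq \varepsilon$, i.e.\ that \emph{all} of $P$ lies near the two chosen half-planes, whereas in the $(1,1)$ case the star of $v$ contains at least three triangles and $P$ contains samples near the remaining ones. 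Consequently the lemma's conclusion (``there exists $q_1$ such that for all $q_2 \in S_{R-\varepsilon}^{R+\varepsilon}(p)\cap P$ \dots'') cannot be imported: the quantifier over $q_2$ must range over samples near the whole star, which a pairwise application never controls. Moreover, the angular hypothesis of \Cref{lem:antipodal-not-flat} is a bound on the \emph{dihedral} angle along $L$ lying in $\left(0, \Psi(\varepsilon,R)\right)$; condition 13 of \Cref{as:embed} does not supply this --- it bounds the angle at $v$ between rays of $|X|$ by $\Psi_3(\varepsilon,R)$ --- and the dihedral bound that \Cref{as:embed} does give for an interior degree-$2$ fan edge is $\Psi_2(\varepsilon,R)$ (condition 11), which is not known to lie below $\Psi_3(\varepsilon,R)$.

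The paper closes this case without invoking \Cref{lem:antipodal-not-flat}: it takes $x_1 \in \partial B_R(p) \cap |X|$ colinear with $v$ and the projection $p_X$ of $p$ to $|X|$, together with a sample $q_1$ within $\varepsilon$ of $x_1$; then, for an \emph{arbitrary} $q_2 \in S_{R-\varepsilon}^{R+\varepsilon}(p) \cap P$, it lets $x_2$ be the nearest point of $|X| \cap \partial B_R(p)$, gets $\lVert q_2 - x_2 \rVert \leq \sqrt{2}\,\varepsilon$ from \Cref{lem:sample-sphere-distance}, and bounds $\lVert x_1 - x_2 \rVert$ by the law of cosines in the triangle $x_1 v x_2$, using exactly condition 13: the angle $\angle x_1 v x_2$ between the two rays from $v$ is at most $\Psi_3(\varepsilon,R)$ no matter which triangle of the star contains $x_2$. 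This is how the cone-angle bound becomes a uniform antipodal inequality over the entire shell --- the step you correctly flagged as the main obstacle but did not resolve. Your argument would be repaired by replacing the appeal to \Cref{lem:antipodal-not-flat} with this direct computation, or by reproving that lemma for a cone of arbitrarily many half-planes under the ray-angle hypothesis of condition 13.
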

    
    \begin{proof}
        There are several cases we need to consider, which we can classify by the homology of $\partial B_{R}(v) \cap |X|$:
        
        \begin{enumerate}
            \item $\left|H_0\left(\partial B_{R}(v) \cap |X|\right)\right| = n,\, \left|H_1\left(\partial B_{R}(v) \cap |X|\right)\right| = 0, \, n \neq 2$,
            \item $\left|H_0\left(\partial B_{R}(v) \cap |X|\right)\right| = 2,\, \left|H_1\left(\partial B_{R}(v) \cap |X|\right)\right| = 0$,
            \item $\left|H_0\left(\partial B_{R}(v) \cap |X|\right)\right| = 1,\, \left|H_1\left(\partial B_{R}(v) \cap |X|\right)\right| = 1$,
            \item $\left|H_0\left(\partial B_{R}(v) \cap |X|\right)\right| = 1,\, \left|H_1\left(\partial B_{R}(v) \cap |X|\right)\right| = n, \, n \geq 2$.
        \end{enumerate}

        In each of these cases, the following argument holds. Let $C_p$ be the connected component of $B_{R+\varepsilon}(p) \cap |X|$ which contains the projection of $p$ to $|X|$, and let $C^{\frac{3\varepsilon}{2}}_p$ be the connected component of $\left(S_{R-\varepsilon}^{R+\varepsilon}(p) \cap P \right)^{\frac{3\varepsilon}{2}}$. As $P$ is a $\varepsilon$-sample of $|X|$, we have the following inclusions
        \begin{align*}
            S_{R-\varepsilon}^{R+\varepsilon}(p) \cap \triangle uvw &\hookrightarrow \left( S_{R-\varepsilon}^{R+\varepsilon}(p) \cap P \right)^{\frac{3\varepsilon}{2}} &\\
            &\hookrightarrow  \left( S_{R-\varepsilon}^{R+\varepsilon}(p) \cap \triangle uvw \right )^{\frac{5\varepsilon}{2}} \\&
            \hookrightarrow \left( S_{R-\varepsilon}^{R+\varepsilon}(p) \cap P \right )^{\frac{7\varepsilon}{2}}\\
            &\hookrightarrow \left( S_{R-\varepsilon}^{R+\varepsilon}(p) \cap \triangle uvw \right )^{\frac{9\varepsilon}{2}}.
        \end{align*}

        By the bounds in \Cref{as:embed} on 
        \begin{itemize}
            \item the angle betwen edges at a common vertex,
            \item the distance between vertices,
            \item the angles between triangles with a common vertex or edge,
            \item the distance between vertices and cells they do not intersect with,
        \end{itemize}
        the weak feature size of $S_{R-\varepsilon}^{R+\varepsilon}(p) \cap C_p^{\frac{3\varepsilon}{2}}$ is greater than $5\varepsilon$, and we have the following isomorphism on homology induced by the inclusions above

        \begin{equation*}
            H_{\bullet}\left( S_{R-\varepsilon}^{R+\varepsilon}(p) \cap |X| \right) \cong H_{\bullet}\left( \left( S_{R-\varepsilon}^{R+\varepsilon}(p) \cap |X|\right )^{\frac{5\varepsilon}{2}} \right) \cong H_{\bullet}\left( \left( S_{R-\varepsilon}^{R+\varepsilon}(p) \cap |X|\right )^{\frac{9\varepsilon}{2}} \right).
        \end{equation*}
        
        The above homology factors through $\left( S_{R-\varepsilon}^{R+\varepsilon}(p) \cap P \right)^{\frac{3\varepsilon}{2}}$ and $\left( S_{R-\varepsilon}^{R+\varepsilon}(p) \cap P \right)^{\frac{7\varepsilon}{2}}$ so we have 
        \begin{align*}
            \operatorname{rk}_{\bullet}^{\frac{3\varepsilon}{2}, \frac{7\varepsilon}{2}}\left(S_{R-\varepsilon}^{R+\varepsilon}(p) \cap P \right) &= \left| H_{\bullet} \left( S_{R-\varepsilon}^{R+\varepsilon}(p) \cap |X| \right) \right|.
        \end{align*}

        As $\lVert p - v \rVert \leq \frac{R}{2} - 2 \varepsilon$, we have
        \begin{equation*}
            \left| H_{\bullet} \left( S_{R-\varepsilon}^{R+\varepsilon}(p) \cap |X| \right) \right| = \left| H_{\bullet} \left( \partial B_R(v) \cap |X| \right) \right|,
        \end{equation*}
        giving 
        \begin{equation*}
            \operatorname{rk}_{\bullet}^{\frac{3\varepsilon}{2}, \frac{7\varepsilon}{2}}\left(S_{R-\varepsilon}^{R+\varepsilon}(p) \cap P \right) = \left| H_{\bullet} \left( \partial B_R(v) \cap |X| \right) \right|.
        \end{equation*}
        
        \medskip 
        
        \noindent {\bf Case 1:} $\left|H_0\left(\partial B_{R}(v) \cap |X|\right)\right| = n,\, \left|H_1\left(\partial B_{R}(v) \cap |X|\right)\right| = 0, \, n \neq 2$ \\

        By the above, we have $\sig(p) = (n,0),\, n \neq 2$, and so the $(\varepsilon, R)$-local structure of $P$ at $p$ is not maximal.
        
        \medskip 
        
        \noindent {\bf Case 2:} $\left|H_0\left(\partial B_{R}(v) \cap |X|\right)\right| = 2,\, \left|H_1\left(\partial B_{R}(v) \cap |X|\right)\right| = 0$\\
        
        By the above, we have $\sig(p) = (2,0)$. Let $C^{2\varepsilon}_{p}$ be the connected component of $\check{\mathcal{C}}_{\frac{3\varepsilon}{2}}\left( S_{R-\varepsilon}^{R+\varepsilon}(p) \cap P \right)$ containing $p$. 
        
        Assume that $v$ is a face of some triangle $\triangle u v w$. Then by the bounds placed on angles between edges, and distances between edges without a common face, edges and vertices which are not faces, and vertices and triangles they are not a face of (see \Cref{as:embed}), and \Cref{lem:diamtri} at least one connected component in $\left( S_{R-\varepsilon}^{R+\varepsilon}(p) \cap C^{\frac{3\varepsilon}{2}}_p\right)^{\frac{3\varepsilon}{2}}$ has a diameter greater than $2\sqrt{2}\varepsilon$. Thus, the $(\varepsilon, R)$-local structure of $P$ at $p$ is not maximal.
        
        If $v$ is only the face of edges, then by the bounds placed on angles between edges, and distances between edges without a common face, edges and vertices which are not faces, and vertices and triangles they are not a face of (see \Cref{as:embed}), both connected components come from two edges $uv$ and $wv$, Lemma 2.1, in \cite{fods-graph} and \Cref{lem:diamline} give that the $(\varepsilon, R)$-local structure of $P$ at $p$ is not maximal.
        
        \medskip
        
        {\bf Case 3:} $\left|H_0\left(\partial B_{R}(v) \cap |X|\right)\right| = 1\, \left|H_1\left(\partial B_{R}(v) \cap |X|\right)\right| = 1$\\
        
        Again, we have $\sig(p) = (1,1)$ so there are at least three triangles having $v$ as a common vertex. Let $p_X$ be the closest point in $|X|$ to $p$, and let $x_1 \in \partial B_R(p) \cap |X|$ be colinear with $v$ and $p_X$, then there is $q_1 \in S_{R-\varepsilon}^{R+\varepsilon} \cap P$ with $\lVert q_1 - x_1 \rVert \leq \varepsilon$. 
        
        Now take any $ q_2 \in S_{R-\varepsilon}^{R+\varepsilon} \cap P$, and let $x_2$ be the point in $|X| \cap \partial B_R(p)$ closest to $q_2$. Then from \Cref{lem:sample-sphere-distance} \[\lVert q_2 - x_2 \rVert \leq \sqrt{2} \varepsilon.\] Consider the rays $L_1, L_2$ from $v$ through $x_1, x_2$ respectively, and assume $d(p, L_1) \leq \varepsilon$.  
        
        \begin{figure}[h!]
            \centering
            \begin{tikzpicture}[x=1.5pt,y=1.5pt]
                \draw (25,52.5) node [anchor = south] {$x_2$};
                \draw (0,0) node [anchor = south east] {$v$};
                \draw (29.5,-40) node [anchor = north] {$x_1$};
                \draw (29.5,-40) -- (0,0);
                \draw (25,52.5) -- (0,0);
                \draw (25,52.5) -- (29.5,-40);
                \draw (10,-15) node [anchor = east] {$p_X$};
            \end{tikzpicture}
            \caption{$d(q_2, H_2) \leq \varepsilon$}
        \end{figure}
        We have 
        \begin{align*}
            \lVert x_1 - v \rVert & = \lVert x_1 - p_X \rVert + \lVert p_X - v \rVert \leq \frac{3R}{2} - 2 \varepsilon, \\
            \lVert x_2 - v \rVert & \leq R + \varepsilon,
        \end{align*}
        and so 
        \begin{align*}
            \lVert x_2 - x_1 \rVert & = \lVert x_2 - v \rVert^2 + \lVert x_1 - v \rVert^2  - 2 \lVert x_2 - v \rVert \lVert x_1 - v \rVert^2 \cos \angle x_1 v x_2 \\
                                    &\leq \left( \frac{3R}{2} - 2 \varepsilon \right)^2 + (R+\varepsilon)^2 - \left( \frac{3R}{2} - 2 \varepsilon \right) (R+\varepsilon) \cos x_1 v x_2 . \\
        \end{align*}
        By condition 13 in \Cref{as:embed} the angle between them is bounded above by $\Psi_3(\varepsilon,R)$, so 
        \begin{align*}
            \lVert x_2 - x_1 \rVert & \leq 2\sqrt{R^2 - \varepsilon^2} - (1 + \sqrt{2}) \varepsilon,
        \end{align*}
        and so 
        \[ \lVert q_2 - q_1 \rVert \leq 2\sqrt{R^2 - \varepsilon^2} - (2 + 2\sqrt{2}) \varepsilon.\]
        
        Thus, the $(\varepsilon, R)$-local structure of $P$ at $p$ is not maximal.
        
        \medskip
        
        {\bf Case 4:} $\left|H_0\left(\partial B_{R}(v) \cap |X|\right)\right| = 1,\, \left|H_1\left(\partial B_{R}(v) \cap |X|\right)\right| = n, \, n \geq 2$\\

        By the argument at the start of this proof, $\sig(p) = (1, n ), \, n \geq 2$ and so the $(\varepsilon,R)$-local structure of $P$ at $p$ is not maximal.
    \end{proof}

    Next, we bound the region near edges that are not locally maximal.
    
    \begin{proposition}\label{prop:nlmedge}
         Let $\overline{u v}$ be an edge of $|X| \subset \mathbb{R}^n$, which is not locally maximal, and let $P$ be an $\varepsilon$-sample of $|X|$. Then, for all $p \in P$ with $d(\overline{u v}, p)\leq \frac{R}{2} - 2\varepsilon$, the $(\varepsilon,R)$-local structure of $P$ at $p$ is not maximal.
    \end{proposition}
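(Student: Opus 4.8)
The plan is to follow the template of \Cref{prop:nlmvertex}: first reduce the computation of $\sig_{\varepsilon,R}(p)$ to a homology computation on the continuous space $|X|$, and then read off non-maximality from the resulting signature, invoking geometry only when the signature is topologically ambiguous. Since $\overline{uv}$ is not locally maximal and $|X|$ is a $2$-complex, $\overline{uv}$ is a face of some $k \geq 1$ triangles $\triangle uvw_1, \dots, \triangle uvw_k$, and because $p \in P$ we have $d(p,|X|) \leq \varepsilon$. As in \Cref{prop:nlmvertex}, the distance and angle bounds in \Cref{as:embed} (between vertices, between edges and triangles not sharing a face, and on the angles and dihedral angles at $\overline{uv}$) confine the component $C_p$ to samples within $\varepsilon$ of the triangles meeting $\overline{uv}$ and force the weak feature size of $S_{R-\varepsilon}^{R+\varepsilon}(p) \cap C_p^{\frac{3\varepsilon}{2}}$ to exceed $5\varepsilon$. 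Consequently the chain of inclusions
\begin{align*}
    S_{R-\varepsilon}^{R+\varepsilon}(p) \cap |X| &\hookrightarrow \left( S_{R-\varepsilon}^{R+\varepsilon}(p) \cap P \right)^{\frac{3\varepsilon}{2}} \hookrightarrow \left( S_{R-\varepsilon}^{R+\varepsilon}(p) \cap |X| \right)^{\frac{5\varepsilon}{2}} \\
    &\hookrightarrow \left( S_{R-\varepsilon}^{R+\varepsilon}(p) \cap P \right)^{\frac{7\varepsilon}{2}} \hookrightarrow \left( S_{R-\varepsilon}^{R+\varepsilon}(p) \cap |X| \right)^{\frac{9\varepsilon}{2}}
\end{align*}
induces isomorphisms on homology, so that $\operatorname{rk}_\bullet^{\frac{3\varepsilon}{2},\frac{7\varepsilon}{2}}\left( S_{R-\varepsilon}^{R+\varepsilon}(p) \cap C_p^{\frac{3\varepsilon}{2}} \right) = \left| H_\bullet\left( S_{R-\varepsilon}^{R+\varepsilon}(p) \cap |X| \right)\right|$ and I can argue entirely on $|X|$.

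Next I would compute $H_\bullet\left(S_{R-\varepsilon}^{R+\varepsilon}(p) \cap |X|\right)$, which deformation retracts onto $\partial B_R(p) \cap C_p$. Because $d(\overline{uv}, p) \leq \frac{R}{2} - 2\varepsilon < R$, the line $L$ through $\overline{uv}$ meets $\partial B_R(p)$ in two points, and—provided the shell stays clear of $u$ and $v$—each of the $k$ half-planes carrying a triangle (whose common boundary is $L$) meets the sphere in an arc joining exactly these two points, since the triangles are large enough near $\overline{uv}$ by \Cref{as:embed}. Thus $\partial B_R(p) \cap C_p$ is homotopy equivalent to a wedge of $k-1$ circles, giving $\sig_{\varepsilon,R}(p) = (1, k-1)$. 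If $k = 1$ then $\sig_{\varepsilon,R}(p) = (1,0)$, and if $k \geq 3$ then $k-1 \geq 2$; in either case the signature has the form $(1,n)$ with $n \neq 1$, so $p$ has a non-maximal $(\varepsilon,R)$-local structure directly by condition~2 of \Cref{def:notmaxstruct}.

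The remaining, and genuinely delicate, case is $k = 2$, where $\sig_{\varepsilon,R}(p) = (1,1)$—topologically identical to the signature of a $2$-cell in \Cref{def:maxstruct}, so non-maximality must come from geometry. Here $\triangle uvw_1, \triangle uvw_2$ span two half-planes $H_1, H_2$ with common boundary $L \supseteq \overline{uv}$, and $d(p,|X|)\leq\varepsilon$ gives (after relabelling) $d(p,H_1) \leq \varepsilon$, while $d(L,p) \leq d(\overline{uv}, p) \leq \frac{R}{2} - 2\varepsilon$. I would then apply \Cref{lem:antipodal-not-flat} to the dihedral angle $\alpha$ between $H_1$ and $H_2$: it produces a point $q_1 \in S_{R-\varepsilon}^{R+\varepsilon}(p) \cap P$ with $\lVert q_2 - q_1 \rVert < \gamma$ for every $q_2 \in S_{R-\varepsilon}^{R+\varepsilon}(p) \cap P$, which is precisely condition~4 of \Cref{def:notmaxstruct}. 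The main obstacle is exactly this step: one must ensure that the dihedral-angle bound of \Cref{as:embed} places $\alpha$ in the regime where \Cref{lem:antipodal-not-flat} is valid, so that a bent pair of triangles is never mistaken for a flat $2$-cell. A secondary technical point is the transition region where the shell around $p$ reaches an endpoint $u$ or $v$: there the cross-section is no longer the clean $k$-fan, but since $u$ and $v$ are themselves non-locally-maximal vertices, the case analysis of \Cref{prop:nlmvertex} applies to recover non-maximality.
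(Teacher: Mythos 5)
Your proposal follows essentially the same route as the paper's proof: the same reduction of $\sig_{\varepsilon,R}(p)$ to the homology of $S_{R-\varepsilon}^{R+\varepsilon}(p)\cap|X|$ via the weak-feature-size and inclusion-chain argument borrowed from \Cref{prop:nlmvertex}, the same case split on the number $k$ of triangles containing $\overline{uv}$, and the same appeal to \Cref{lem:antipodal-not-flat} in the two-triangle case. If anything you are slightly more careful than the paper: your count $\sig_{\varepsilon,R}(p)=(1,k-1)$ is the accurate one (the paper writes $(1,n)$ with $n\geq 3$ for $k\geq 3$, a harmless off-by-one since $n\neq 1$ either way), and the two caveats you flag --- checking that the dihedral-angle bound in \Cref{as:embed} actually places $\alpha$ in the range required by \Cref{lem:antipodal-not-flat}, and the transition region where the shell reaches $u$ or $v$ --- are passed over silently in the paper's proof.
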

    
    \begin{proof}
        If an edge $\overline{uv}$ is not locally maximal, then there is at least one triangle $\triangle u v w$.
        
        We consider 3 cases:
        
        \begin{enumerate}
            \item there is a unique triangle $\triangle u v w$ with $\overline{u v}$ in the boundary, 
            \item there are exactly two triangles $\triangle u v w_1$ and $\triangle u v w_2$ with $\overline{u v}$ in their boundaries, 
            \item there are three or more triangles $\triangle u v w_1, \triangle u v w_2$ and $\triangle u v w_3$ with $\overline{u v}$ in their boundaries. 
        \end{enumerate}

        Recall that we restrict our attention to the connected components $C_p, C_p^{\frac{3\varepsilon}{2}}$ of $S_{R-\varepsilon}^{R+\varepsilon}(p) \cap |X|$ and $\left(S_{R-\varepsilon}^{R+\varepsilon}(p) \cap P\right)^{\frac{3\varepsilon}{2}}$ which contains $p$.
        
        By the bounds in \Cref{as:embed} on 
        \begin{itemize}
            \item the angle betwen edges at a common vertex,
            \item the distance between edges that do not have a common face,
            \item the angles between triangles with a common edge,
            \item the distance between edges and cells they do not intersect with,
        \end{itemize}
        the weak feature size of $C_p$ is greater than $5\varepsilon$. Hence by the same argument as at the start of the poof of \Cref{prop:nlmvertex},
        
        \begin{equation*}
            \sig_{\varepsilon, R}(p) = \left(\left|H_0\left(\partial B_{R}(m) \cap |X|\right)\right| ,\, \left|H_1\left(\partial B_{R}(m) \cap |X|\right)\right| \right).
        \end{equation*}

        Thus, in cases 1 and 3, we get $\sig(p) = (1,0)$ and $\sig(p) = (1,n)$ for $n \geq 3$ respectively.

        In case 2, we get $\sig(p) = (1,1)$, and so need to check the geometric condition. By \Cref{lem:antipodal-not-flat}, there is a $q_1 \in S_{R-\varepsilon}^{R+\varepsilon}(p) \cap P$ such that for all $q_2 \in S_{R-\varepsilon}^{R+\varepsilon}(p) \cap P$ 
        \begin{equation*}
            \lVert q_2 - q_1 \rVert < 2\sqrt{R^2- \varepsilon^2} - (1 + \sqrt{2})\varepsilon, 
        \end{equation*} 
        and so the $(\varepsilon, R)$-local structure of $P$ at $p$ is not maximal.

        Hence, in all 3 cases, the $(\varepsilon, R)$ local structure of $P$ at $p$ is not maximal. 
    \end{proof}
    
\section{2-Complex Algorithm and Correctness}\label{sec:alg}

    In this section, we present a set of algorithms, which together, recover the structure of $X$ from an $\varepsilon$-sample $P$ of an embedding $(X, \Theta) \subset \mathbb{R}^n$. \Cref{thm:recover} states that given an $\varepsilon$-sample $P$ of an embedded $2$ complex $|X|= (X, \Theta_X) \subset \mathbb{R}^n$ satisfying \Cref{as:embed}, we can recover the structure of $X$ using this algorithm. There is a sequence of lemmas (\Cref{lem:lmedge-disconnected,lem:lmedge-connected,lem:lmtriangle-disconnected,lem:lmtriangle-share-1-vertex,lem:lmtriangle-share-1-edge-with-vertices,lem:lmtriangle-share-2-vertices,lem:share-edge-with-vertex-and-vertex,lem:share-edge-with-vertices-and-vertex,lem:share-three-vertex,lem:share-edge-with-vertex-and-2-vertices,lem:share-edge-with-2-vertices,lem:share-2-edge-with-vertex-and-vertex,lem:share-edge-vertex-vertex-vertex,lem:share-edge-with-vertex-edge-vertex-vertex,lem:share-edge-vertex-vertex-vertex,lem:share-edge-edge-vertex-vertex-vertex,lem:share-edge-edge-edge-vertex-vertex-vertex}), which culminates in the `big theorem' (\Cref{thm:recover}). The proofs of the lemmas are in Appendix \ref{sec:correctness-lems-proofs}.
    
    The algorithm partitions $P$ into $P_{LM}$ and $P_{NLM}$, such that for each $p \in P_{LM}$ the $(\varepsilon, R)$-local structure of $P$ at $p$ is maximal, and for each $p \in P_{NLM}$ the $(\varepsilon, R)$-local structure of $P$ at $p$ is not maximal. We then detect the number of vertices, the number of edges, the number of triangles and the incidence operator. To obtain $P_{LM}$ and $P_{NLM}$, we use  \[ \Delta_{\varepsilon, R}: P \to \{0,1\}, \] see \Cref{alg:localstruct}.
    
    Let $\mathcal{C}_p$ be the samples $q \in P$ in the connected component containing $p$ in the threshold graph 
    \begin{equation*}
        \mathcal{G}_p = \mathfrak{G}_{3\varepsilon}\left(B_{R+\varepsilon}(p) \cap P \right)
    \end{equation*}    
    with $\lVert q - p \rVert \in [R- \varepsilon, R+ \varepsilon]$.
    In the definitions of $(\varepsilon, R)$-local structure (\Cref{def:maxstruct,def:notmaxstruct}), we used 
    \begin{equation*}
        \operatorname{rk}_{\bullet}^{\frac{3\varepsilon}{2}, \frac{7\varepsilon}{2}}\left( S_{R- \varepsilon}^{R+\varepsilon}(p) \cap P \right),
    \end{equation*}
    which by the Nerve Lemma (Corollary 4G.3 \cite{hatcher}) is equal to the rank, $\mathcal{RK}_{\bullet}$, of the map
    \begin{equation*}
        H_{\bullet} \left( \check{\mathcal{C}}_{\frac{3 \varepsilon}{2}}\left( S_{R-\varepsilon}^{R+\varepsilon}(p) \cap \mathcal{C}_p\right)\right) \rightarrow H_{\bullet} \left( \check{\mathcal{C}}_{\frac{7 \varepsilon}{2}}\left( S_{R-\varepsilon}^{R+\varepsilon}(p) \cap \mathcal{C}_p\right)\right) 
    \end{equation*}
    induced by the inclusion
    \begin{equation*}
        \check{\mathcal{C}}_{\frac{3 \varepsilon}{2}}\left( S_{R-\varepsilon}^{R+\varepsilon}(p) \cap P\right)\hookrightarrow \check{\mathcal{C}}_{\frac{7 \varepsilon}{2}}\left( S_{R-\varepsilon}^{R+\varepsilon}(p) \cap P\right).
    \end{equation*}
    Hence, $\Delta_{\varepsilon, R}(p)$ returns $0$ if the $(\varepsilon,R)$-local structure of $P$ at $P$ is not maximal, and returns $1$ if it is maximal. Then,  \[ P_{NLM}= \Delta_{\varepsilon, R}^{-1}(0) \] and \[P_{NLM} = \Delta_{\varepsilon, R}^{-1}(1).\]

    \begin{remark}
        We can appeal to the Nerve Lemma, as the balls used in the construction of $\check{\mathcal{C}}_{\frac{3 \varepsilon}{2}}\left( S_{R-\varepsilon}^{R+\varepsilon}(p) \cap \mathcal{C}_p\right)$ and $\check{\mathcal{C}}_{\frac{7 \varepsilon}{2}}\left( S_{R-\varepsilon}^{R+\varepsilon}(p) \cap \mathcal{C}_p\right)$ lead us to \emph{good covers} of $\left( S_{R- \varepsilon}^{R+\varepsilon}(p) \cap P \right)^{\frac{3\varepsilon}{2}}$ and $\left( S_{R- \varepsilon}^{R+\varepsilon}(p) \cap P \right)^{\frac{7\varepsilon}{2}}$ respectively. To see that these covers satisfy the `every non-empty intersection is contractible' condition required to be a good cover, note that we are using the $\check{\text{C}}$hech complex, rather than the Viertoris-Rips complex. Combining this with the linearity of the embedding and the assumptions placed on both $\varepsilon$ and $R$, we have covers that satisfy the Nerve Lemma.
    \end{remark}

    \begin{algorithm}\caption{$\Delta_{\varepsilon, R}(p)$}\label{alg:localstruct}
    		\KwData{An $\varepsilon$-dense sample $P$ of an embedded $2$-complex $|X|$, a point $p \in P$.} \KwResult{0 if the $(\varepsilon, R)$-local structure of $P$ at $p$ is not maximal, \newline 1 if the $(\varepsilon, R)$-local structure of $P$ at $p$ is maximal.}
    		\Begin{
      			$\mathcal{G}_p\longleftarrow \{ q \in P \mid \|p-q\| \le R+\varepsilon\}$\;
				connect $q, q' \in \mathcal{G}_p$ if $\|q-q'\| \le 3 \varepsilon$\;
      			$\mathcal{C}_p \longleftarrow \{q \in \mathcal{G}_p \mid q \text{ is path connected to } p \text{ in }  \mathcal{G}_p\}$\;
                remove $q \in \mathcal{C}_p$ if $\lVert p - q \rVert \geq  R -\varepsilon$\;
                \If{$\mathcal{RK}_0 =0$ and $\mathcal{RK}_1 =0$}{\Return{1}} 
            	\ElseIf{$\mathcal{RK}_0 = 1$ and $\mathcal{RK}_1 \neq 1$}{\Return{0}}
            	\ElseIf{$\mathcal{RK}_0 =1$ and $\mathcal{RK}_1 =1$}{
                    \If{$\forall q_1,q_2\in C_p, \, \exists q_0$ such that $\lVert q_1 - q_0 \rVert, \, \lVert q_2 - q_0 \rVert ,\, \lVert q_2 - q_1 \rVert \in [\sqrt{3(R^2-\varepsilon^2)}, \sqrt{3} R]$}{\Return{1}}
                    \Else{\Return{0}}
                }
                \ElseIf{$\mathcal{RK}_0 =2$ and $\mathcal{RK}_1 =0$}
                {
                    \If{$\operatorname{max}\left\{ \mathcal{D}(c_1), \mathcal{D}(c_2)\right\} \leq 5\varepsilon$}
                    {
                        \If{$\langle q_1 -p, q_2 -p\rangle > -R^2 +2 R\varepsilon - 7\varepsilon^2$}{\Return{1}}
                        \Else{\Return{0}}
                    }
                    \Else{\Return{0}}
                }
                \ElseIf{$\mathcal{RK}_0 =n, \, n \neq 0,1,2$ and $\mathcal{RK}_1 =0$}{\Return{0}}
                }
 		\end{algorithm}

    After we have $P_{LM}$, we use the function \[\mathfrak{D}_{\varepsilon, R}(p): P_{LM} \to \{ \, 0, \, 1, \, 2 \, \},\] see \Cref{alg:localstructdim} to determine what dimension of $(\varepsilon,R)$-local structure each sample in $P_{LM}$ has. 

    \begin{algorithm}\caption{$\mathfrak{D}_{\varepsilon, R}(p)$}\label{alg:localstructdim}
    		\KwData{An $\varepsilon$-dense sample $P$ of an embedded $2$-complex $|X|$, a point $p \in P$ such that the $(\varepsilon, R)$-local structure of $P$ at $p$ is maximal.} \KwResult{0 if the $(\varepsilon, R)$-local structure of $P$ at $p$ is maximal of dimension 0, \newline 1 if the $(\varepsilon, R)$-local structure of $P$ at $p$ is maximal of dimension 1, \newline 2 if the $(\varepsilon, R)$-local structure of $P$ at $p$ is maximal of dimension 2.}
    		\Begin{
                $\mathcal{G}_p\longleftarrow \{ q \in P \mid \|p-q\| \le R+\varepsilon\}$\;
				connect $q, q' \in \mathcal{G}_p$ if $\|q-q'\| \le 3 \varepsilon$\;
      			$\mathcal{C}_p \longleftarrow \{q \in \mathcal{G}_p \mid q \text{ is path connected to } p \text{ in }\mathcal{G}_p\}$\;
                remove $q \in C_p$ if $\|p-q\| \le R -\varepsilon$\;
                \If{$\mathcal{RK}_0 =0$ and $\mathcal{RK}_1 =0$}{\Return{0}}
                \ElseIf{$\mathcal{RK}_0 =2, \, n \neq 0,1,2$ and $\mathcal{RK}_1 =0$}
                {\Return{1}}
            	\ElseIf{$\mathcal{RK}_0 =1, \, n \neq 0,1,2$ and $\mathcal{RK}_1 =1$}{\Return{2}}
            }
 		\end{algorithm}

    Recall that our end goal is to learn the combinatorial structure of $X$. We begin by learning the number of triangles, locally maximal edges, and locally maximal vertices. Consider the following three subsets of $P_{LM}$:
    \begin{align*}
        P_{LM,2}    &= \left \{ p \in P_{LM}\, \mid \, \mathfrak{D}_{\varepsilon, R}(p) = 2 \right \}, \\
        P_{LM,1}    &= \left \{ p \in P_{LM}\, \mid \, \mathfrak{D}_{\varepsilon, R}(p) = 1 \right \}, \\
        P_{LM,0}    &= \left \{ p \in P_{LM}\, \mid \, \mathfrak{D}_{\varepsilon, R}(p) = 0 \right \}.
    \end{align*}
    
    When partitioning $P$ into $P_{LM}$ and $P_{NLM}$, there is a \emph{grey} region where a sample $p$ could be in either of these two sets. This presents a problem for learning the combinatorics of $X$ from the partitioning $P_{LM}$ and $P_{NLM}$. We can overcome this, by \emph{cleaning} $P_{LM}$. In particular, we clean $P_{LM,2}$ and $P_{LM,1}$.

    We begin by introducing the notion of a connected component of $\check{\mathcal{C}}_{\frac{3\varepsilon}{2}}\left( P_{LM,1} \right)$ \emph{spanning} an edge, and then introduce the notion of a connected component of $\check{\mathcal{C}}_{\frac{3\varepsilon}{2}}\left( P_{LM,2} \right)$ \emph{spanning} a triangle.

    \begin{definition}[Spanning an edge]\label{def:2comps:spanedge}
        We say a connected component of $\check{\mathcal{C}}_{\frac{3\varepsilon}{2}}(P_{LM,1})$ \emph{spans} a locally maximal edge $\overline{uv}$ if it contains a sample $p$ within $\varepsilon$ of the midpoint of $\overline{uv}$.
    \end{definition}

     \begin{definition}[Spanning a triangle]\label{def:spantri}
        We say a connected component of $\check{\mathcal{C}}_{\frac{3\varepsilon}{2}}(P_{LM,2})$ \emph{spans} a triangle $\triangle uvw$ if it contains a sample $p$ within $\varepsilon$ of the midpoint of $\triangle uvw$.
    \end{definition}

    We require some geometric conditions on when a connected component spans an edge or a triangle. For an edge, we will use the diameter of the connected component as a condition. 
    
    \begin{proposition}\label{prop:spanningedge}
        A connected component $C$ of $\check{\mathcal{C}}_{\frac{3\varepsilon}{2}}(P_{LM,1})$ \emph{spans} a locally maximal edge $\overline{uv}$ if and only if $\mathcal{D}(C) \geq \frac{3R}{2} - 2\varepsilon$.
    \end{proposition}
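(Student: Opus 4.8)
The plan is to prove the two implications separately: the forward implication by extending a midpoint sample along the edge, and the converse by its contrapositive, showing that a component which spans no locally maximal edge is trapped in a region too small to have diameter $\frac{3R}{2}-2\varepsilon$.

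For the forward direction, suppose $C$ spans $\overline{uv}$, so by \Cref{def:2comps:spanedge} it contains a sample $p$ with $\lVert p-m\rVert\le\varepsilon$, where $m$ is the midpoint. By \Cref{as:embed}(1) the edge has length $\ell=\lVert u-v\rVert>6(R+\varepsilon)$, so the point $x\in\overline{uv}$ at distance $\tfrac{3R}{2}+\varepsilon$ from $u$ satisfies $d(x,u),d(x,v)\ge \tfrac{3R}{2}+\varepsilon$ and lies strictly between $u$ and $m$. By \Cref{prop:lmedge} every sample within $\varepsilon$ of the sub-segment of $\overline{uv}$ lying at distance $\ge\tfrac{3R}{2}+\varepsilon$ from both endpoints belongs to $P_{LM,1}$, and since $P$ is an $\varepsilon$-sample these samples are pairwise $\check{\mathcal C}_{\frac{3\varepsilon}{2}}$-connected along the segment. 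Hence a sample $s_x$ within $\varepsilon$ of $x$ lies in $C$, and since $\ell/2>3(R+\varepsilon)$ we get $\lVert p-s_x\rVert\ge \lVert m-x\rVert-2\varepsilon>\tfrac{3R}{2}$, so $\mathcal D(C)>\tfrac{3R}{2}\ge\tfrac{3R}{2}-2\varepsilon$.

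For the converse I would assume $C$ spans no locally maximal edge and bound $\mathcal D(C)$. The first step is to pin down $P_{LM,1}$: a sample $p\in P_{LM,1}$ has $\sig_{\varepsilon,R}(p)=(2,0)$ with the antipodal condition, which by \Cref{prop:nlmvertex} and \Cref{prop:nlmedge} cannot occur within $\tfrac{R}{2}-2\varepsilon$ of a non-locally-maximal vertex or edge, and which (being $(2,0)$ rather than $(1,\bullet)$) cannot occur near a triangle. Thus $\operatorname{proj}_{|X|}(p)$ lies within $\varepsilon$ of the interior of a unique locally maximal edge — unique by the edge-separation bound \Cref{as:embed}(6) — at distance $>\tfrac{R}{2}-3\varepsilon$ from both its endpoints. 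The second step uses the hypothesis: if any $p\in C$ projected into the solid interior of an edge $\overline{uv}$ (distance $\ge\tfrac{3R}{2}+\varepsilon$ from both endpoints), then as in the forward direction that whole solid interior — including the midpoint — consists of $P_{LM,1}$ samples connected to $p$, so $C$ would span $\overline{uv}$. Hence every sample of $C$ projects into the grey annulus of its edge, i.e. within $\tfrac{3R}{2}+\varepsilon$ of some (non-locally-maximal) endpoint.

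Finally I would confine $C$. Along a single edge the grey annulus near one endpoint has radial width $(\tfrac{3R}{2}+\varepsilon)-(\tfrac{R}{2}-3\varepsilon)=R+4\varepsilon$, and with perpendicular offsets at most $\varepsilon$ this gives a diameter at most $\sqrt{(R+4\varepsilon)^2+4\varepsilon^2}$, which is strictly below $\tfrac{3R}{2}-2\varepsilon$ for all $R\ge14\varepsilon$. A component of diameter $\ge\tfrac{3R}{2}-2\varepsilon$ therefore cannot fit in one endpoint's annulus, so its $\check{\mathcal C}_{\frac{3\varepsilon}{2}}$-path must either enter a solid interior (whence $C$ spans, a contradiction) or pass from the annulus of one edge to that of a second edge meeting it at a common vertex $v_0$. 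Ruling out the latter is the crux: since every sample on the path is at distance $>\tfrac{R}{2}-2\varepsilon$ from $v_0$ (nearer samples are $P_{NLM}$ by \Cref{prop:nlmvertex}), I would use the lower bound \Cref{as:embed}(8) on the angle at $v_0$, for which $\sin\!\big(\tfrac{1}{2}\Psi_1(\varepsilon,R)\big)=\tfrac{3\varepsilon}{R/2-\varepsilon}$, to show that at these radii the two edges are separated by more than $3\varepsilon$, so no $\check{\mathcal C}_{\frac{3\varepsilon}{2}}$-edge crosses between them. The main obstacle is exactly this separation estimate at a shared vertex: it must be made quantitatively tight using the precise form of $\Psi_1$ so that it holds throughout the admissible range $R\ge14\varepsilon$ rather than only for large $R/\varepsilon$, and the combined diameter of any residual cross-vertex configuration must be checked to remain below $\tfrac{3R}{2}-2\varepsilon$.
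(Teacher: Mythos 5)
Your forward implication is, in essence, the paper's own argument: chain $P_{LM,1}$ samples along $\overline{uv}$ from the midpoint sample to a point at controlled distance from an endpoint (using \Cref{prop:lmedge} and $\varepsilon$-density), then read off the diameter. That half is fine. The converse is where you depart from the paper, and where there is a genuine gap. The paper argues directly: from $\mathcal{D}(C)\geq\frac{3R}{2}-2\varepsilon$ it takes a $3\varepsilon$-chain in $C$ joining two samples at distance $\geq\frac{3R}{2}-2\varepsilon$, asserts that all projections of the chain lie on a single locally maximal edge $\overline{uv}$, and then either some projection is the midpoint, or all projections lie on one side of it, in which case the diameter forces the far end of the chain past the grey zone (distance $\geq\frac{3R}{2}+\varepsilon$ from the near endpoint) and \Cref{prop:lmedge} extends the chain to the midpoint, so $C$ spans. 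Your contrapositive instead must confine a non-spanning component to a single endpoint annulus, and its crux -- that a $\check{\mathcal{C}}_{\frac{3\varepsilon}{2}}$-path of $P_{LM,1}$ samples cannot hop between two edges meeting at a vertex $v_0$ -- cannot be delivered by the metric separation you propose. In the worst case the angle at $v_0$ equals $\Psi_1(\varepsilon,R)$, so $\sin\left(\tfrac{1}{2}\Psi_1\right)=\frac{3\varepsilon}{R/2-\varepsilon}$, as you note. But \Cref{prop:nlmvertex} only excludes $P_{LM,1}$ samples within $\frac{R}{2}-2\varepsilon$ of $v_0$, so their projections can sit at radius $r_0=\frac{R}{2}-3\varepsilon$, where the two rays are only $2r_0\sin\left(\tfrac{1}{2}\Psi_1\right)=\frac{6\varepsilon(R-6\varepsilon)}{R-2\varepsilon}$ apart. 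Ruling out a hop requires this to exceed $5\varepsilon$ (each sample may be $\varepsilon$ from its projection and the connection threshold is $3\varepsilon$), which is equivalent to $R>26\varepsilon$. At $R=14\varepsilon$ the ray separation is exactly $4\varepsilon$, so two $P_{LM,1}$-eligible samples on the two edges can be as close as $2\varepsilon<3\varepsilon$. Hence the hop is metrically possible throughout $14\varepsilon\leq R\leq 26\varepsilon$: this is not a matter of "making the estimate quantitatively tight", as your closing caveat hopes -- the estimate is simply false in that range.

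What actually blocks such a hop is classification, not distance: a sample $p$ within $\varepsilon$ of one edge at radius about $\frac{R}{2}$ from $v_0$, whose shell $S_{R-\varepsilon}^{R+\varepsilon}(p)$ also meets the second edge, has both shell components lying roughly in the direction pointing away from $v_0$, so $\langle q_1-p,q_2-p\rangle\approx +R^2$, grossly violating the antipodality threshold $-R^2+2R\varepsilon+7\varepsilon^2$ of \Cref{def:maxstruct}; such a $p$ lands in $P_{NLM}$ by \Cref{def:notmaxstruct} and cannot be a link of the path. To complete your route you would need to carry out this signature analysis for every admissible configuration at $v_0$ (more edges, triangles), or else restructure the converse along the paper's direct lines, which never needs the annulus confinement. (In fairness, the paper's proof also asserts single-edge containment of its chain with only a citation of \Cref{as:embed} and \Cref{prop:lmedge}; but its architecture does not rest on the quantitative separation estimate that yours does.)
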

    
    Unfortunately, it is not immediately clear that such a test is suitable for detecting components that span triangles. For instance, consider a complex which consists of a single triangle, its three edges, and the three required vertices. While heuristically, it is unlikely to occur, the sampling could lead to 2 connected components  $C_1, C_2 \in \check{\mathcal{C}}_{\frac{3\varepsilon}{2}}\left(P_{LM,2}\right)$: one which is far away from the boundary of the triangle, and one that is \emph{surrounded} by points in $P_{NLM}$, both with large diameters. In fact, the one we wish to say is spanning, say $C_1$, will have a smaller diameter than the other one, $C_2$. Note, however, that as $C_2$ does not contain a sample $p$ near the midpoint of $\triangle uvw$, if $\mathcal{D}(C_1) \leq \mathcal{D}(C_2)$, then $C_2$ contains a non-contractible loop. However, a sample $p \in P$ near the midpoint $m_{\triangle u v w}$ of a triangle $\triangle u v w$ is not near any samples $q \notin P_{LM,2}$, and so we can exploit this fact to obtain a geometric test.

    \begin{proposition}\label{prop:spanningtri}
        A connected component $C$ of $\check{\mathcal{C}}_{\frac{3\varepsilon}{2}}$ \emph{spans} a triangle $\triangle uvw$ if and only if there is a point $p \in C$ such that 
        \begin{equation*}
            B_{\frac{R}{2} +  \varepsilon}(p) \cap P \subset P_{LM,2}.
        \end{equation*}
    \end{proposition}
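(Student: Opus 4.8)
The plan is to prove the two implications separately, using Proposition~\ref{prop:lmtriangle} and Proposition~\ref{prop:nlmedge} to convert the set-membership condition $B_{\frac{R}{2}+\varepsilon}(p)\cap P\subset P_{LM,2}$ into a statement about the distance from $p$ to $\partial\triangle uvw$, and then closing the converse with a convexity/connectivity argument.

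For the forward implication, suppose $C$ spans $\triangle uvw$, so by Definition~\ref{def:spantri} there is a sample $p\in C$ with $\lVert p-m\rVert\le\varepsilon$, where $m$ is the midpoint of $\triangle uvw$. The midpoint lies deep in the interior, $d(m,\partial\triangle uvw)\ge 2R+3\varepsilon$, by the inscribed-circle bound in Assumption~\ref{as:embed}, so $d(p,\partial\triangle uvw)\ge 2R+2\varepsilon$. I would then take any $q\in B_{\frac{R}{2}+\varepsilon}(p)\cap P$; the triangle inequality gives $d(q,\partial\triangle uvw)\ge d(p,\partial\triangle uvw)-(\tfrac{R}{2}+\varepsilon)\ge\tfrac{3R}{2}+\varepsilon$, while the separation bounds in Assumption~\ref{as:embed} force the nearest cell to such a $q$ to be $\triangle uvw$ itself, so $d(q,\triangle uvw)\le\varepsilon$. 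Proposition~\ref{prop:lmtriangle} then yields $q\in P_{LM,2}$, and since $q$ was arbitrary, $B_{\frac{R}{2}+\varepsilon}(p)\cap P\subset P_{LM,2}$. Note the constant $2R+3\varepsilon$ in the inscribed-circle hypothesis is exactly what lands us on the threshold $\tfrac{3R}{2}+\varepsilon$ of Proposition~\ref{prop:lmtriangle}.

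For the converse, suppose $p\in C$ satisfies $B_{\frac{R}{2}+\varepsilon}(p)\cap P\subset P_{LM,2}$. Since $p\in P_{LM,2}$ it lies within $\varepsilon$ of a single triangle $\triangle uvw$ (the separation bounds make this well defined); write $h=d(p,\partial\triangle uvw)$ and let $p'$ be the orthogonal projection of $p$ onto the plane of $\triangle uvw$. The first step is a depth lower bound: by Proposition~\ref{prop:nlmedge} every sample within $\tfrac{R}{2}-2\varepsilon$ of a (non-locally-maximal) edge of $\triangle uvw$ lies in $P_{NLM}$, so the ball hypothesis forces this $P_{NLM}$ strip around the nearest edge to lie outside $B_{\frac{R}{2}+\varepsilon}(p)$; chasing the triangle inequalities gives $h>R-O(\varepsilon)$, i.e. $p$ is genuinely interior. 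The second step is connectivity: let $I$ be the set of points of $\triangle uvw$ whose distance to $\partial\triangle uvw$ exceeds the threshold guaranteeing maximal local structure of dimension $2$. As the inner parallel body of a convex polygon, $I$ is convex, and by the inscribed-circle bound it is nonempty and contains $m$; since $P$ is an $\varepsilon$-sample, the $P_{LM,2}$ samples near $I$ form a single connected component $C^{\ast}$ of $\check{\mathcal{C}}_{\frac{3\varepsilon}{2}}(P_{LM,2})$ that contains a sample within $\varepsilon$ of $m$, hence spans. To finish I would show $C=C^{\ast}$ by exhibiting a common sample: the solid disk $B_{\frac{R}{2}+\varepsilon}(p)\cap\triangle uvw$ consists entirely of $P_{LM,2}$ samples that are $\check{\mathcal{C}}_{\frac{3\varepsilon}{2}}$-connected to $p$ by convexity and density, and, moving from $p'$ towards the incenter, this disk reaches into $I$; any sample in the overlap lies in both $C$ and $C^{\ast}$, so $C=C^{\ast}$ spans.

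The main obstacle is this last reachability step, and it is genuinely delicate because of the \emph{grey} band: the depth bound $h>R-O(\varepsilon)$ places $p$ above the guaranteed $P_{NLM}$ region but, if one uses the conservative threshold $\tfrac{3R}{2}+\varepsilon$ of Proposition~\ref{prop:lmtriangle}, possibly below the guaranteed $P_{LM,2}$ region, and the solid disk of radius $\tfrac{R}{2}+\varepsilon$ only just fails to cover the gap by an $O(\varepsilon)$ margin. I expect to close it by sharpening the guaranteed-$P_{LM,2}$ threshold directly: once $d(p,\partial\triangle uvw)\gtrsim R+\varepsilon$ the shell $S_{R-\varepsilon}^{R+\varepsilon}(p)\cap\triangle uvw$ is already a complete flat annulus, so $\sig_{\varepsilon,R}(p)=(1,1)$ and the flatness test of Definition~\ref{def:maxstruct} holds, placing $p$ in $P_{LM,2}$; with this lower threshold the convex region $I$ extends far enough that the solid disk provably overlaps it. This is the same phenomenon flagged before the statement: a non-spanning component avoiding the centre can have large diameter only by enclosing it in a non-contractible loop, and it is precisely the solid-disk (ball) hypothesis, rather than a diameter test, that rules this out.
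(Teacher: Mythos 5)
Your forward implication is correct and is exactly the paper's argument (inscribed radius $\geq 2R+3\varepsilon$ gives $d(p_m,\partial\triangle uvw)\geq 2R+2\varepsilon$, so every sample in the ball sits at depth $\geq \frac{3R}{2}+\varepsilon$ and \Cref{prop:lmtriangle} applies). The genuine gap is in the converse, at precisely the step you flag as the main obstacle: the proposed repair --- that $d(p,\partial\triangle uvw)\gtrsim R+\varepsilon$ already forces $p\in P_{LM,2}$ because $S_{R-\varepsilon}^{R+\varepsilon}(p)\cap\triangle uvw$ is a flat annulus --- is false under \Cref{as:embed}. The signature $\sig_{\varepsilon,R}(p)$ of \Cref{def:localhomsig} is computed from $S_{R-\varepsilon}^{R+\varepsilon}(p)\cap C_p^{\frac{3\varepsilon}{2}}$, not from $S_{R-\varepsilon}^{R+\varepsilon}(p)\cap\triangle uvw$, and $C_p$ can pick up samples of cells attached along $\partial\triangle uvw$. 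Concretely, condition 10 of \Cref{as:embed} permits a second triangle $\sigma$ glued to $\triangle uvw$ along an edge $e$ at dihedral angle exactly $\Psi_1(\varepsilon,R)$; since $\sin\left(\Psi_1/2\right)=3\varepsilon/\left(\frac{R}{2}-\varepsilon\right)$, at fold-distance $r$ from $e$ the fin $\sigma$ lies within $6r\varepsilon/\left(\frac{R}{2}-\varepsilon\right)$ of the triangle. If $d(p,e)=R+c\varepsilon$ with $c=O(1)$, then $B_{R+\varepsilon}(p)$ contains points of both surfaces at fold-distances down to $(c-1)\varepsilon$, where this separation is far below $3\varepsilon$, so matched sample pairs join the fin's samples to $C_p$; the shell then carries a second loop of samples coming from $\sigma$, giving $\sig_{\varepsilon,R}(p)=(1,2)$, i.e.\ $p\in P_{NLM}$ by \Cref{def:notmaxstruct}. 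To guarantee the fin's samples disconnect from $C_p$ inside the ball one needs every fold-distance met by the ball to exceed $\tfrac{5}{6}\left(\frac{R}{2}-\varepsilon\right)$, i.e.\ depth roughly $\geq\frac{17R}{12}$; this is exactly why \Cref{prop:lmtriangle} is stated with the threshold $\frac{3R}{2}+\varepsilon$, and the threshold cannot be lowered to $R+O(\varepsilon)$. Note $\Psi_1(\varepsilon,R)\approx 12\varepsilon/R\to 0$ as $R/\varepsilon$ grows, so this is not a borderline effect.

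With the sharpened threshold unavailable, your reachability step stays open: the disk $B_{\frac{R}{2}+\varepsilon}(p)\cap\triangle uvw$ reaches depth at most about $\frac{3R}{2}-O(\varepsilon)$, and only in the direction perpendicular to the nearest edge, so it does not provably meet the inner parallel body defined with the legitimate threshold $\frac{3R}{2}+\varepsilon$, and the grey band is not bridged. This does not make \Cref{prop:spanningtri} false: in the fin configuration above, $\sigma$ comes within about $12\varepsilon$ of $p$, so there are samples on the fin within $\frac{R}{2}+\varepsilon$ of $p$ which are themselves in $P_{NLM}$, and such a $p$ therefore violates the ball hypothesis. But this shows the hypothesis must be exploited globally rather than converted into a pointwise depth criterion. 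That is what the paper's converse does: from the ball hypothesis, the contrapositives of \Cref{prop:nlmedge,prop:nlmvertex}, and convexity it derives $d(\partial\triangle uvw,x)\geq R$ for the point $x\in|X|$ near $p$, and then connects $x$ to the midpoint by $\varepsilon$-chains whose nearby samples are taken to lie in $P_{LM,2}$, so all chain samples belong to the single component $C$. Your skeleton (depth bound, convex inner region, chain to the midpoint) agrees with the paper's up to that point; the defect is that the quantitative claim you substitute for the paper's final assertion is provably wrong.
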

    
    We now have geometric conditions for determining if a connected component of $\check{\mathcal{C}}_{\frac{3\varepsilon}{2}}\left(P_{LM,2}\right)$/$\check{\mathcal{C}}_{\frac{3\varepsilon}{2}}\left(P_{LM,1}\right)$ spans a triangle/edge respectively. Next, show that the locally maximal vertices of $X$ are in bijection with connected components of $\check{\mathcal{C}}_{\frac{3\varepsilon}{2}}\left(P_{LM,0}\right)$, the locally maximal edges of $X$ are in bijection with the spanning connected components of $\check{\mathcal{C}}_{\frac{3\varepsilon}{2}}(P_{LM,1})$, and that the triangles of $X$ are in bijection with the spanning connected components of $\check{\mathcal{C}}_{\frac{3\varepsilon}{2}}(P_{LM,2})$.

    We begin with the locally maximal vertices.
    
    \begin{proposition}\label{prop:equivlmvert}
        The connected components of $\check{\mathcal{C}}_{\frac{3\varepsilon}{2}} \left(P_{LM,0}\right)$ are in bijection with the set $V_{LM}$ of locally maximal vertices of $X$.
    \end{proposition}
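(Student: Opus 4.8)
The plan is to identify $P_{LM,0}$ explicitly and then read off its connected components. Concretely, I will first establish the description
\[
    P_{LM,0} = \{\, p \in P \mid \lVert p - v \rVert \leq \varepsilon \text{ for some } v \in V_{LM} \,\},
\]
and then show that the components of $\check{\mathcal{C}}_{\frac{3\varepsilon}{2}}(P_{LM,0})$ correspond to the elements of $V_{LM}$ via $v \mapsto$ the component containing the samples $\varepsilon$-close to $v$. The inclusion $\supseteq$ is immediate from \Cref{prop:lmvertex}: if $\lVert p-v\rVert\le\varepsilon$ for some $v \in V_{LM}$ then $\sig_{\varepsilon,R}(p) = (0,0)$, so $\mathfrak{D}_{\varepsilon,R}(p)=0$ and $p \in P_{LM,0}$. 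Moreover each $v\in V_{LM}$ has a sample within $\varepsilon$, since $P$ is an $\varepsilon$-sample and hence $d(v,P)\le\varepsilon$.

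The inclusion $\subseteq$ is the substantive step. Let $p \in P_{LM,0}$, so $\sig_{\varepsilon,R}(p) = (0,0)$ and in particular the shell set $S_{R-\varepsilon}^{R+\varepsilon}(p)\cap C^{\frac{3\varepsilon}{2}}_p$ is empty. Write $p_X = \operatorname{proj}_{|X|}(p)$, so $\lVert p-p_X\rVert \le \varepsilon$, and let $\sigma_p$ be the cell containing $p_X$. The crucial point is that condition 1 of \Cref{as:embed} forces every edge to have length greater than $6(R+\varepsilon)$, whence every edge and (a fortiori) every triangle is large compared to $R$. I will rule out all possibilities for $\sigma_p$ except a locally maximal vertex. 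If $\sigma_p$ is an edge $\overline{ab}$, then $\lVert p_X-a\rVert+\lVert p_X-b\rVert>6(R+\varepsilon)$, so one endpoint, say $b$, satisfies $\lVert p_X-b\rVert>3(R+\varepsilon)$; the portion of $\overline{ab}$ between $p_X$ and $b$ therefore passes from inside $B_R(p)$ to outside $B_{R+\varepsilon}(p)$, hence crosses the shell, and since it is $\varepsilon$-sampled and its samples are $3\varepsilon$-connected to $p$ along the edge, it contributes a point to $S_{R-\varepsilon}^{R+\varepsilon}(p)\cap C^{\frac{3\varepsilon}{2}}_p$, contradicting emptiness. The triangle case is identical, with a large triangle meeting $\partial B_R(p)$. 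Finally, if $\sigma_p$ is a vertex that is a face of some edge or triangle, then a large straight cell emanates from it and again meets the shell. Hence $\sigma_p$ is a locally maximal vertex $v$ with $\lVert p-v\rVert\le\varepsilon$, proving the description.

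Given this description, the bijection follows from a separation argument. For $v\in V_{LM}$, the samples within $\varepsilon$ of $v$ are pairwise at distance at most $2\varepsilon \le 3\varepsilon$, so they form a clique, hence a single connected component of $\check{\mathcal{C}}_{\frac{3\varepsilon}{2}}(P_{LM,0})$, which is non-empty because $d(v,P)\le\varepsilon$. For distinct $v,v'\in V_{LM}$ we have $\lVert v-v'\rVert>6(R+\varepsilon)$ by condition 1 of \Cref{as:embed}, so a sample within $\varepsilon$ of $v$ and one within $\varepsilon$ of $v'$ are at distance more than $6(R+\varepsilon)-2\varepsilon>3\varepsilon$ (using $R\ge14\varepsilon$). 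Since every point of $P_{LM,0}$ lies within $\varepsilon$ of exactly one locally maximal vertex, every edge of $\check{\mathcal{C}}_{\frac{3\varepsilon}{2}}(P_{LM,0})$ joins two samples near the same vertex, so no path can connect the neighbourhood of $v$ to that of $v'$. This shows the assignment $v\mapsto$ component is well defined and injective, and surjectivity is immediate, since by the description each component is made of samples near locally maximal vertices and the separation forces each component to concentrate near a single such vertex.

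The main obstacle is the inclusion $\subseteq$: all the work lies in showing that $\sig_{\varepsilon,R}(p)=(0,0)$ forces $p$ to sit near an isolated vertex, which relies on the largeness of cells from condition 1 of \Cref{as:embed} together with the verification that the shell points produced by a neighbouring cell genuinely belong to the connected component $C^{\frac{3\varepsilon}{2}}_p$ of $p$ — this last point using that $\varepsilon$-density makes the samples along a straight cell $3\varepsilon$-connected and keeps the connecting chain inside $B_{R+\varepsilon}(p)$.
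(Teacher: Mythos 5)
Your proposal is correct, and its overall skeleton coincides with the paper's: one direction of the characterization of $P_{LM,0}$ comes from \Cref{prop:lmvertex}, and the bijection is then forced by the separation $\lVert v - v' \rVert > 6(R+\varepsilon)$ from \Cref{as:embed}, exactly as in the paper. The genuine difference is how the converse inclusion $P_{LM,0} \subseteq \bigcup_{v \in V_{LM}} B_{\varepsilon}(v)$ is obtained. The paper's proof simply asserts that each $p$ in a component of $\check{\mathcal{C}}_{\frac{3\varepsilon}{2}}\left(P_{LM,0}\right)$ is within $\varepsilon$ of a locally maximal vertex, implicitly leaning on \Cref{prop:lmvertex,prop:lmedge,prop:lmtriangle,prop:nlmvertex,prop:nlmedge}; but those propositions only classify samples lying in designated regions (within $\frac{R}{2}-2\varepsilon$ of a non-locally-maximal cell, farther than $\frac{3R}{2}+\varepsilon$ from the boundary of a locally maximal edge or triangle, and so on) and do not on their own cover every sample of $P$. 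Your shell-crossing argument proves that assertion directly and uniformly: if $\sig_{\varepsilon,R}(p)=(0,0)$ then $S_{R-\varepsilon}^{R+\varepsilon}(p)\cap C_p^{\frac{3\varepsilon}{2}}$ is empty, whereas if $\operatorname{proj}_{|X|}(p)$ lay on an edge, a triangle, or a non-locally-maximal vertex, condition 1 of \Cref{as:embed} would supply a straight segment of $|X|$ emanating from $\operatorname{proj}_{|X|}(p)$ of length greater than $3(R+\varepsilon)$; convexity of the distance to $p$ along that segment keeps it inside $B_R(p)$ up to its first crossing of $\partial B_R(p)$, so $\varepsilon$-density yields a $3\varepsilon$-chain of samples inside $B_{R+\varepsilon}(p)$ joining $p$ to a shell sample, contradicting emptiness. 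What your route buys is a self-contained proof of precisely the step the paper leaves implicit (and which its regional propositions do not strictly cover); what the paper's route buys is brevity, since it treats the classification of local structures as already settled. The concluding clique-plus-separation argument, identifying each component with the set of samples near a single locally maximal vertex, is the same in both proofs.
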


    Next, we show that the edge spanning components are in bijection with the locally maximal edges. 
    
    \begin{proposition}\label{prop:equivlmspanedge}
        The \emph{spanning} components of $\check{\mathcal{C}}_{\frac{3\varepsilon}{2}} \left(P_{LM,1}\right)$ are in bijection with the set $E_{LM}$ of locally maximal edges of $X$.
    \end{proposition}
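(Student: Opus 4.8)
The plan is to build the bijection in the direction $E_{LM} \to \{\text{spanning components}\}$, sending a locally maximal edge $\overline{uv}$ to the connected component $\Phi(\overline{uv})$ of $\check{\mathcal{C}}_{\frac{3\varepsilon}{2}}(P_{LM,1})$ that contains a sample lying within $\varepsilon$ of the midpoint $m_{uv}$, and then to verify that $\Phi$ is well defined, surjective, and injective. The ingredients I expect to use are Proposition \ref{prop:lmedge} (which places samples near the interior of a locally maximal edge into $P_{LM,1}$), Proposition \ref{prop:nlmvertex} (which places samples near a non-locally-maximal vertex into $P_{NLM}$), the separation bounds of \Cref{as:embed}, and the $\varepsilon$-sample hypothesis.

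For \emph{well-definedness}, I would first note that since $u$ and $v$ are faces of $\overline{uv}$ they are not locally maximal, and by condition 1 of \Cref{as:embed} we have $\lVert u - v \rVert > 6(R+\varepsilon)$, so $m_{uv}$ sits at distance greater than $3(R+\varepsilon) \geq \frac{3R}{2}+\varepsilon$ from both endpoints. As $P$ is an $\varepsilon$-sample there is some $p_0 \in P$ with $\lVert p_0 - m_{uv} \rVert \leq \varepsilon$, and Proposition \ref{prop:lmedge} gives $p_0 \in P_{LM,1}$; its component therefore spans $\overline{uv}$ in the sense of \Cref{def:2comps:spanedge}. Any two samples within $\varepsilon$ of $m_{uv}$ lie in $P_{LM,1}$ and within $2\varepsilon \leq 3\varepsilon$ of each other, hence in a single component, so $\Phi(\overline{uv})$ is the unique spanning component of $\overline{uv}$. \emph{Surjectivity} is then almost immediate from the definition: every spanning component contains a sample within $\varepsilon$ of the midpoint of some locally maximal edge $\overline{uv}$, and by the uniqueness just established it must coincide with $\Phi(\overline{uv})$.

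The crux will be \emph{injectivity}, where I must rule out a single component of $\check{\mathcal{C}}_{\frac{3\varepsilon}{2}}(P_{LM,1})$ meeting the midpoint neighbourhoods of two distinct locally maximal edges $\overline{uv}$ and $\overline{xy}$; equivalently, no edge-path in $\check{\mathcal{C}}_{\frac{3\varepsilon}{2}}(P_{LM,1})$ may join a sample near $m_{uv}$ to one near $m_{xy}$. Along such a hypothetical path every vertex lies in $P_{LM,1}$ and within $\varepsilon$ of $|X|$, while consecutive vertices are within $3\varepsilon$. If $\overline{uv}$ and $\overline{xy}$ share no vertex then condition 6 of \Cref{as:embed} gives $d(\overline{uv},\overline{xy}) > 6(R+\varepsilon)$; since any portion of $|X|$ bridging the two edges is reached only through vertices of $X$, each of which (being a face of an edge) is non-locally-maximal and hence carries a $P_{NLM}$ barrier by Proposition \ref{prop:nlmvertex}, no $P_{LM,1}$-path can cross, and the metric gap cannot be spanned by $3\varepsilon$-steps staying within $\varepsilon$ of $|X|$. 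If instead they share a vertex $w$, then $w$ is a face of both edges and not locally maximal, so Proposition \ref{prop:nlmvertex} forces every sample within $\frac{R}{2}-2\varepsilon$ of $w$ into $P_{NLM}$; because the angle at $w$ between the two edges is at least $\Psi_1(\varepsilon,R)$ by condition 8 of \Cref{as:embed}, the two edges are $\varepsilon$-separated outside a small ball about $w$, so any path from one edge to the other must traverse this $P_{NLM}$ barrier, a contradiction.

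Thus distinct locally maximal edges yield distinct spanning components, $\Phi$ is injective, and hence a bijection. I expect the delicate point to be precisely the separation argument in the injectivity step: making rigorous that the $P_{NLM}$ region guaranteed near each shared vertex genuinely disconnects the $P_{LM,1}$ samples of the two incident edges in $\check{\mathcal{C}}_{\frac{3\varepsilon}{2}}(P_{LM,1})$, which is where the angle bound $\Psi_1(\varepsilon,R)$ and the distance conditions of \Cref{as:embed} must be combined carefully to preclude a $3\varepsilon$-step path sneaking around the barrier.
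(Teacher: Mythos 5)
Your proposal is correct and follows essentially the same route as the paper's proof: existence of a spanning component for each locally maximal edge via the $\varepsilon$-sample property and \Cref{prop:lmedge}, surjectivity from the definition of spanning, and injectivity via the separation and angle conditions of \Cref{as:embed} split into the same two cases (shared vertex versus disjoint edges), with \Cref{prop:nlmvertex} supplying the obstruction near common vertices. The only cosmetic difference is that the paper asserts the shared-vertex separation directly as a pairwise distance bound ($>6\varepsilon$) between $P_{LM,1}$ samples near distinct edges, whereas you unfold the same mechanism as a $P_{NLM}$ barrier argument.
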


    Finally, we show that the spanning components of $\check{\mathcal{C}}_{\frac{3\varepsilon}{2}} \left ( P_{LM,2} \right )$ are in bijection with the triangles of $X$.

    \begin{proposition}\label{prop:equivlmspantri}
        The \emph{spanning} components of $\check{\mathcal{C}}_{\frac{3\varepsilon}{2}} \left(P_{LM,2}\right)$ are in bijection with the set $T$ of triangles in $X$.
    \end{proposition}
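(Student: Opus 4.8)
The plan is to construct an explicit bijection $\Phi$ from the set $T$ of triangles to the set of spanning components of $\check{\mathcal{C}}_{\frac{3\varepsilon}{2}}(P_{LM,2})$, following the same pattern already used for edges in \Cref{prop:equivlmspanedge}. To a triangle $\triangle uvw$ I would associate the connected component $\Phi(\triangle uvw)$ of $\check{\mathcal{C}}_{\frac{3\varepsilon}{2}}(P_{LM,2})$ that contains a sample within $\varepsilon$ of its midpoint $m_{\triangle uvw}$. The work then splits into checking that $\Phi$ is well defined, that its image is exactly the set of spanning components, and that it is injective.

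For well-definedness I would first argue that the midpoint region genuinely lies in $P_{LM,2}$. Using \Cref{as:embed} (the lower bounds on edge lengths coming from the vertex-separation and angle conditions, together with the inscribed-circle bound), the midpoint $m_{\triangle uvw}$ sits at distance at least $\frac{3R}{2}+2\varepsilon$ from $\partial\triangle uvw$, so by \Cref{prop:lmtriangle} every sample within $\varepsilon$ of $m_{\triangle uvw}$ has maximal $(\varepsilon,R)$-local structure of dimension $2$, i.e.\ lies in $P_{LM,2}$. Since $P$ is an $\varepsilon$-sample such a sample exists, and any two samples within $\varepsilon$ of $m_{\triangle uvw}$ are within $2\varepsilon<3\varepsilon$ of each other, hence lie in a single component of $\check{\mathcal{C}}_{\frac{3\varepsilon}{2}}(P_{LM,2})$; this makes $\Phi(\triangle uvw)$ unambiguous, and by \Cref{def:spantri} it is a spanning component. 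Surjectivity onto the spanning components is then immediate: by \Cref{def:spantri} (equivalently \Cref{prop:spanningtri}) any spanning component contains a sample within $\varepsilon$ of some $m_{\triangle uvw}$, and by the connectivity just observed that component must coincide with $\Phi(\triangle uvw)$.

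The substantive step, and the one I expect to be the main obstacle, is injectivity: a single connected component of $\check{\mathcal{C}}_{\frac{3\varepsilon}{2}}(P_{LM,2})$ cannot meet the midpoint regions of two distinct triangles. Suppose $C=\Phi(\triangle_1)=\Phi(\triangle_2)$; then there is a path in $\check{\mathcal{C}}_{\frac{3\varepsilon}{2}}(P_{LM,2})$ joining a sample near $m_{\triangle_1}$ to one near $m_{\triangle_2}$, with consecutive samples within $3\varepsilon$ and each within $\varepsilon$ of $|X|$. Projecting to $|X|$ yields a discrete path whose steps are at most $5\varepsilon$ and whose vertices, being projections of $P_{LM,2}$ samples, stay at distance more than $\frac{R}{2}-3\varepsilon$ from every non-locally-maximal edge and vertex (by \Cref{prop:nlmedge} and \Cref{prop:nlmvertex}). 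If $\triangle_1$ and $\triangle_2$ are disjoint, the bound $d(\triangle_1,\triangle_2)>6(R+\varepsilon)$ in \Cref{as:embed} already forbids a $5\varepsilon$-step crossing. If they share a vertex or an edge, that shared cell is not locally maximal, and I would show that the dihedral lower bound $\Psi_1$ in \Cref{as:embed} forces the two collars $\{x\in\triangle_i : d(x,\text{shared cell})\ge \frac{R}{2}-3\varepsilon\}$ to be separated by more than $5\varepsilon$ in $\mathbb{R}^n$, so the projected path cannot pass from one triangle into the other. The delicate point is precisely this separation estimate: it is where the dihedral-angle hypotheses enter, and the constants must be tracked carefully so that the guaranteed gap exceeds the $5\varepsilon$ step size coming from the sampling density and the \v{C}ech scale. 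Establishing that estimate yields injectivity and, with the earlier steps, the claimed bijection.
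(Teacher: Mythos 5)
Your proposal follows essentially the same route as the paper's proof: both use \Cref{prop:spanningtri} together with the $\varepsilon$-sample property to obtain existence and surjectivity via midpoints, and both reduce injectivity to the fact that samples of $P_{LM,2}$ lying near two distinct triangles are separated by more than the \v{C}ech connectivity scale, which follows from the distance and dihedral-angle conditions in \Cref{as:embed}. The collar-separation estimate you flag as the delicate step is exactly the point the paper dispatches in one line -- it asserts that $p, p' \in P_{LM,2}$ near distinct triangles satisfy $\lVert p - p' \rVert > 6\varepsilon$ by \Cref{as:embed} -- so your plan is, if anything, more explicit than the paper about the constants that need to be tracked there.
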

    
    Having identified the locally maximal cells $X_{LM}$ of $X$, we could learn the combinatorial structure of $X$ by identifying the structure of $X_{NLM}$ from $P_{NLM}$, and combining this with what we know about $X_{LM}$ from $P_{LM}$. The process in \cite{fods-graph} could be applied, but this requires the existence of some $\widetilde{\varepsilon}$ such that $P_{NLM}$ is a $\widetilde{\varepsilon}$-sample of $X_{NLM}$ satisfying Assumptions 1 in \cite{fods-graph} This would impose stricter assumptions than \Cref{as:embed}, but after ensuring these new assumptions are satisfied, works out of the box. 
    
    To avoid placing stricter assumptions on $|X|$, we use the idea of \emph{witness points} to discover the combinatorics. For each sample $p \in P_{NLM}$, we can examine the spanning connected components $C_{LM}$ of $\check{\mathcal{C}}_{\frac{3\varepsilon}{2}}\left(P_{LM,1}\right)$ and $\check{\mathcal{C}}_{\frac{3\varepsilon}{2}}\left(P_{LM,2}\right)$ such that $C_{LM} \cap B_{R+3\varepsilon}\left(p\right) \neq \emptyset$. In particular, we can use $\mathfrak{D}_{\varepsilon, R}(q)$ for some $q \in C_{LM}$, to determine of what dimension the local structure is maximal. If there is a $q$ in $C_{LM} \cap S_{R-\varepsilon}^{R+\varepsilon}\left(p\right)$ such that $\mathfrak{D}_{\varepsilon, R}(q) = 1$, then $p$ is near a vertex. 
    
    If there are no connected components $C_{LM}$ which are $(\varepsilon, R)$-locally maximal of dimension 1, then $p$ only \emph{witnesses} samples $q \in P_{LM}$ such that the $(\varepsilon, R)$-local structure of $P$ at $q$ is maximal of dimension 2. Hence, we need to understand the combinatorics of $|X|\setminus \left(E_{LM}\cup V_{LM}\right)$ where $E_{LM}$ is the set of locally maximal edges and $V_{LM}$ the set of locally maximal vertices. 

    In \Cref{as:embed}, we assumed that for any triangle $\triangle u v w$, \[ \angle u v w, \angle v w u , \angle w u v \geq \frac{\pi}{6}.\] This means that for any sample $p \in P_{NLM}$ with $d(\partial \triangle u v w, p ) < R+ \varepsilon$ for some $\triangle u v w$, there is some sample $q \in P_{LM,2}$ with $d(\triangle u v w, q) \leq \varepsilon$ and $d(\partial \triangle u v w, p ) \geq R+ \varepsilon$, such that $\lVert q - p \rVert \leq \frac{2\sqrt{2}(R+2\varepsilon)}{\sqrt{3}-1}$. Further, $q$ is in a triangle spanning component $\mathcal{T}$.

    Similarly, for any sample $p \in P_{NLM}$ with $d(\partial \overline{uv}, p) < \frac{3R}{2}+ \varepsilon$ for some edge $\overline{uv}$, there is a sample $q \in P_{LM,1}$ with $d(\partial \overline{uv}, p) \geq \frac{3R}{2} + \varepsilon$ such that $\lVert q - p \rVert \leq \frac{2\sqrt{2}(R+2\varepsilon)}{\sqrt{3}-1}$. Further, $q$ is in an edge spanning component $\mathcal{E}$.

    This leads us to say a sample $p \in P_{NLM}$ \emph{witnesses} a spanning connected component $\mathcal{C}$ if \[B_{\frac{2\sqrt{2}(R+2\varepsilon)}{\sqrt{3}-1}}(p) \cap \mathcal{C} \neq \emptyset.\] For ease of reading, we set $\kappa  = \frac{2\sqrt{2}}{\sqrt{3}-1}$.

    \begin{definition}[Witnessing a spanning component]\label{def:witness-spanning}
        Let $P$ be an $\varepsilon$-sample $P$ of an embedded $2$-complex $|X|$ satisfying \Cref{as:embed}. Then a sample $p \in P_{NLM}$ \emph{witnesses} an edge/triangle spanning component if 
        \begin{equation*}
            B_{\kappa(R+\varepsilon)}(p) \cap \mathcal{C} \neq \emptyset.
        \end{equation*}
    \end{definition}

    To determine the final combinatorial structure of $X$, we look at the local neighbourhood of each $p \in P_{NLM}$ and look at both
    \begin{align*}
        B_{(R+2\varepsilon)\kappa}(p) &\cap \check{\mathcal{C}}_{\frac{3\varepsilon}{2}} \left( P_{LM,1} \right) \\
        B_{(R+2\varepsilon)\kappa}(p) &\cap \check{\mathcal{C}}_{\frac{3\varepsilon}{2}} \left( P_{LM,2} \right). \\
    \end{align*}

    If 
    
    \begin{equation*}
         B_{(R+2\varepsilon)\kappa}(p) \cap \check{\mathcal{C}}_{\frac{3\varepsilon}{2}} \left( P_{LM,1} \right) \neq \emptyset
    \end{equation*}

    then we know that $p$ is \emph{near} a vertex, and the spanning components $\mathcal{E}$ of $\check{\mathcal{C}}_{\frac{3\varepsilon}{2}} \left( P_{LM,1} \right)$ that $p$ witnesses, share a boundary vertex. Further, if 

    \begin{equation*}
         B_{(R+2\varepsilon)\kappa}(p) \cap \check{\mathcal{C}}_{\frac{3\varepsilon}{2}} \left( P_{LM,2} \right) \neq \emptyset
    \end{equation*}

    as well, then there are spanning components of $\check{\mathcal{C}}_{\frac{3\varepsilon}{2}} \left ( P_{LM,2}\right)$ that $p$ witnesses, which have a vertex in common with the edges.

    If only 
    
    \begin{equation*}
         B_{(R+2\varepsilon)\kappa}(p) \cap \check{\mathcal{C}}_{\frac{3\varepsilon}{2}} \left( P_{LM,2} \right) \neq \emptyset
    \end{equation*}

    we examine how many spanning components $\mathcal{T}$ are seen by $p$, as well as if samples $p \in P_{NLM}$ that witness $\mathcal{T}$, also witness any other spanning components $\mathcal{T}'$. We use this information to partition $P_{NLM}$ into $\{P_i\}$ in \Cref{alg:pnlmpart}, with a final clean of the partitions, to account for some special cases. As $R \leq 16 \varepsilon$, for all $p \in P_{NLM}$ there is some spanning connected component $\mathcal{C}$ such that $B_{\frac{R+\varepsilon}{\kappa}}(p) \cap \mathcal{C} \neq \emptyset$.

    We then label each component $P_i$ as follows, from \Cref{alg:P1class,alg:P2class}:
    \begin{itemize}
        \item $-1$ if $P_i$ corresponds to $2$ vertices,
        \item $0$ if $P_i$ corresponds to a vertex, 
        \item $1$ if $P_i$ corresponds to a vertex and an edge,
        \item $2$ if $P_i$ corresponds to two vertices and an edge, 
        \item $3$ if $P_i$ corresponds to just an edge, 
        \item $4$ if $P_i$ corresponds to two edges and a vertex, 
        \item $5$ if $P_i$ corresponds to three edges and two vertices, 
        \item $6$ if $P_i$ corresponds to three edges and a vertex, 
        \item $7$ if $P_i$ corresponds to three edges and three vertices,
        \item $8$ if $P_i$ corresponds to three edges,
        \item $9$ if $P_i$ corresponds to two edges,
    \end{itemize}
    
    \begin{algorithm}\caption{Spanning triangle components}\label{alg:spanningtri}
        \KwData{Parameters $\varepsilon, R$ and $P_{LM,1}$.} \KwResult{The set of triangle spanning components.}
            
            \Begin{
                Initialise empty set $T$\;
                Let $C$ be the set of connected components of  $\check{\mathcal{C}}_{\frac{3\varepsilon}{2}}\left(P_{LM,2}\right)$\;
                \For{$\mathcal{T} \in C$}{
                    \If{$\exists p \in \mathcal{C}$ such that $B_{R/2 +\varepsilon}(p) \cap P \subset P_{LM,2}$}{
                        Add $\mathcal{T}$ to $T$\;
                    }
                }
                \Return $T$
            }
    \end{algorithm}

    \begin{algorithm}\caption{Spanning edge components}\label{alg:spanningedge}
        \KwData{Parameters $\varepsilon, R$ and $P_{LM,1}$.} \KwResult{The set of triangle spanning components.}
            
            \Begin{
                Initialise empty set $E$\;
                Let $C$ be the set of connected components of  $\check{\mathcal{C}}_{\frac{3\varepsilon}{2}}\left(P_{LM,2}\right)$\;
                \For{$\mathcal{E} \in C$}{
                    \If{$\mathcal{D}(\mathcal{T}) \geq \frac{3R}{2}-2\varepsilon$}{
                        Add $\mathcal{E}$ to $E$\;
                    }
                }
                \Return $E$
            }
    \end{algorithm}
    
    \begin{algorithm}\caption{Partitioning $P_{NLM}$}\label{alg:pnlmpart}
        \KwData{An $\varepsilon$-dense sample $P$ of an embedded $2$-complex $|X|$, partitioned into $P_{NLM}, \, P_{LM,0}, \, P_{LM,1}, \, P_{LM,2}$.} \KwResult{A partition $\{ P_i \}$ of $P_{NLM}$, and for each $P_i$, two sets $S_E(P_i), \, S_T(P_i)$.}
                
    	\Begin{
            For each $p \in P_{NLM}$, find all the edge spanning components $\mathcal{E}$ such that $\mathcal{E} \cap B_{(R+2\varepsilon)\kappa}(p) \neq \emptyset$, and place them in $S_E(p)$\;
            Find all the triangle spanning components $\mathcal{T}$ such that $\mathcal{T} \cap B_{(R+2\varepsilon)\kappa}(p) \neq \emptyset$, and place them in $S_T(p)$\;
            Partition $P_{NLM}$ into $\{ \, P_i \, \}$ such that for each $p , q \in P_i$, $S_E(p) = S_E(q)$ and $S_T(p) = S_T(q)$\;
            Assign $S_E(P_i)$ and $S_T(P_i)$ to each $P_i$\;
            \For{$P_i$ and $P_j$ with $S_E(P_j) \subseteq S_E(P_i)$ and $S_T(P_j) \subseteq S_T(P_i)$}{ 
                \If{$S_E(P_j), S_T(P_j) \neq \emptyset$}{
                    Merge $P_j$ into $P_i$ with labels $S_E(P_i), \, S_T(P_i)$\;
                } \ElseIf{$|S_T(P_j)|\geq 2$ and $\forall p \in P_j$ such that $\sig_{\varepsilon,R}(p) =(n,0), \, n \in \mathbb{Z}_{\geq 0}$}{
                    Merge $P_j$ into $P_i$           with labels $S_E(P_i), \, S_T(P_i)$\;
                }
            }
            \Return $\{ P_i \}$, and $S_E(P_i), \, S_T(P_i)$ for each $P_i$
        }
 	\end{algorithm}

    \begin{algorithm}\caption{Order $\{ P_i \}$}\label{alg:Piorder}
    	\KwData{An $\varepsilon$-dense sample $P$ of an embedded $2$-complex $|X|$, partition $\{P_i\}$ of $P_{NLM}$ with two sets $S_E(P_i), \, S_T(P_i)$ for each $P_i$ and partitions of $P_{LM,0}, \, P_{LM,1}, \, P_{LM,2}$.} \KwResult{Two sets $P^1, P^2 \subset \{ \, P_i \, \}$.}
    
    	\Begin{
                Initialise empty $P^1$ and $P^2$\;
               \For{$P_i \in \{ P_i \}$}{
                    \If{$S_E(P_i) \neq \emptyset$}{Add $P_i$ to $P^1$}
                    \ElseIf{$\exists p \in P_i$ such that $\sig(p) \neq (1,n)$}{Add $P_i$ to $P^1$}
                    \ElseIf{$|S_T(P_i)| \neq 1$}{Add $P_i$ to $P^1$}
                    \Else{Add $P_i$ to $P^2$}
               }
               \Return $P^1, \, P^2$
        }
    \end{algorithm}

  \begin{algorithm}\caption{Classification of $P^1$}\label{alg:P1class}
    	\KwData{An $\varepsilon$-dense sample $P$ of an embedded $2$-complex $|X|$, $P^1$, and partitions of $P_{NLM}, \, P_{LM,0}, \, P_{LM,1}, \, P_{LM,2}$.} \KwResult{A labeled list $C$, where the label for $P_i$ is $-1$ if $P_i$ corresponds to $2$ vertices, $0$ if $P_i$ corresponds to a vertex, $1$ if $P_i$ corresponds to a vertex and an edge, $2$ if $P_i$ corresponds to two vertices and an edge, $3$ if $P_i$ corresponds to just an edge.}
    
    	\Begin{
            Initialise empty list $C$\;
            \For{$P_i \in P^1$}{
                \If{$|S_E(P_i)| = 1$ and $S_T(P_i) = \emptyset$}{
                    \If{$\mathcal{E} \notin S_E(P_j) \forall P_j \neq P_i$ }{
                        Add $P_i$ to $C$ with label $-1$\;
                    } \ElseIf{$\exists P_j \neq P_i$ such that $\mathcal{E} \in S_E(P_j)$}{
                        Add $P_i$ to $C$ with label 0\;
                    }
                } \ElseIf{$S_E(P_i) \neq \emptyset$}{
                    Add $P_i$ to $C$ with label 0\;
                } \Else{
                    \For{$\mathcal{T} \in S_T(P_i)$}{
                        Let $LN(\mathcal{T}) = \{ P_k \mid \mathcal{T} \in S_T(P_k)\}$
                    }
                    Let $N(P_i) = \bigcap_{\mathcal{T} \in S_T(P_i)} LN(\mathcal{T})$\;
                    \If{$N(P_i) = \{ P_i, P_k \}$}{
                        Add $P_i$ to $C$ with label $1$\;
                        Add $P_k$ to $C$ with label $0$, unless $P_k$ is already in $C$\;
                    } \ElseIf {$N(P_i) = \{ P_i, P_k, P_l \}$}{
                        Add $P_i$ to $C$ with label $3$\;
                        Add $P_k$ to $C$ with label $0$, unless $P_k$ is already in $C$\;
                        Add $P_l$ to $C$ with label $0$, unless $P_l$ is already in $C$\;
                    }
                }
            }
            \If{$\exists P_i \in P^1 \setminus C$}{
                Add $P_i$ to $C$ with label $2$\;
            }
            \Return $C$
        }
    \end{algorithm}

    \begin{algorithm}\caption{Classification of $P^2$}\label{alg:P2class}
    	\KwData{An $\varepsilon$-dense sample $P$ of an embedded $2$-complex $|X|$, $P^2$, and partitions of $P_{NLM}, \, P_{LM,0}, \, P_{LM,1}, \, P_{LM,2}$, a labelled list $C$ obtained from \Cref{alg:P1class}.} \KwResult{A labelled list $C$.}
    
    	\Begin{
            \For{$P_i \in P^2$}{
                \If{ $P_i \notin C$}{
                    Let $LN  = \{ P_k \mid \mathcal{T} \in S_T(P_k)\}$\;
                    \If {$LN \cap P^2  = \{ P_i, P_k, P_l \}$}{
                            Add $P_i, P_k, P_l$ to $C$ with label $3$\;
                    } \ElseIf{$LN \cap P^2  = \{ P_i, P_k \}$}{
                        Add $P_i$ to $C$ with label $3$\;
                        Add $P_l$ to $C$ with label $4$\;
                    } \ElseIf { $LN \cap P^2 = \{ P_i\}$}{
                        \If{$LN= \{ P_i \}$}{
                            Add $P_i$ to $C$ with label $7$\;
                        } \ElseIf {$LN= \{ P_i, P_k \}$ and $P_k$ has label $0$}{
                            Add $P_i$ to $C$ with label $5$\;
                        } \ElseIf {$LN= \{ P_i, P_k \}$ and $P_k$ has label $2$}{
                                Add $P_i$ to $C$ with label $4$\;
                        } \ElseIf {$LN= \{ P_i, P_k, P_l \}$ and $P_k$ has label $0$, $P_l$ label $1$}{
                                Add $P_i$ to $C$ with label $4$\;
                        } \ElseIf {$LN= \{ P_i, P_k, P_l \}$ and $P_k$ has label $1$, $P_l$ label $2$}{
                                Add $P_i$ to $C$ with label $3$\;
                        } \ElseIf {$LN= \{ P_i, P_k, P_l \}$ and $P_k$ has label $0$, $P_l$ label $0$}{
                                Add $P_i$ to $C$ with label $6$\;
                        } \ElseIf {$LN= \{ P_i, P_k, P_l, P_j \}$ and $P_k, P_l, P_j$ have label $0$}{
                                Add $P_i$ to $C$ with label $8$\;
                        } \ElseIf {$LN= \{ P_i, P_k, P_l, P_j, P_m \}$ and $P_k, P_l, P_j$ have label $0$ and $P_m$ has label $3$}{
                                Add $P_i$ to $C$ with label $9$\;
                        }
                    }
                }
            }
            \Return $C$
        }
    \end{algorithm}

    \begin{algorithm}[H]\caption{Number of triangles, edges and vertices.}\label{alg:numcells}
    	\KwData{An $\varepsilon$-dense sample $P$ of an embedded $2$-complex $|X|$, partitions of $P_{NLM}, \, P_{LM,0}, \, P_{LM,1}, \, P_{LM,2}$ and the labelled list $C$ from \Cref{alg:P2class}.} \KwResult{The triangles, edges, and vertices in $X$.}
                
    	\Begin{
            Initialise an empty weighted graph $B$\;
            $\forall$ spanning components $\mathcal{T}$ of $P_{LM,2}$, add weight $2$ node to $B$, labelled with $\mathcal{T}$\;
            $\forall$ spanning components $\mathcal{E}$ of $P_{LM,1}$, add weight $1$ node to $B$, labelled with $\mathcal{E}$\;
            $\forall$ components $\mathcal{V}$ of $P_{LM,0}$, add weight $0$ node to $B$, labelled with $\mathcal{V}$\;
            \For{$P_i \in C$}{
                \If{$P_i$ has label $-1$}{
                    Add $2$ weight $0$ nodes to $B$, labelled with $P_i$\;
                }\ElseIf{$P_i$ has label $0$}{
                    Add weight $0$ node to $B$, labelled with $P_i$\;
                } \ElseIf{$P_i$ has label $1$}{
                    Add $2$ weight $0$ nodes to $B$, labelled with $P_i$\;
                    Add weight $1$ node to $B$, labelled with $P_i$\;
                } \ElseIf{$P_i$ has label $2$}{
                    Add weight $0$ node to $B$, labelled with $P_i$\;
                    Add weight $1$ node to $B$, labelled with $P_i$\;
                } \ElseIf{$P_i$ has label $3$}{
                    Add two weight $0$ nodes to $B$, labelled with $P_i$\;
                    Add weight $1$ node to $B$, labelled with $P_i$\;
                } \ElseIf{$P_i$ has label $4$}{
                    Add weight $1$ node to $B$, labelled with $P_i$\;
                } \ElseIf{$P_i$ has label $5$}{
                    Add weight $0$ node to $B$, labelled with $P_i$\;
                    Add two weight $1$ nodes to $B$, labelled with $P_i$\;
                } \ElseIf{$P_i$ has label $6$}{
                    Add two weight $0$ nodes to $B$, labelled with $P_i$\;
                    Add three weight $1$ nodes to $B$, labelled with $P_i$\;
                } \ElseIf{$P_i$ has label $7$}{
                    Add three weight $0$ nodes to $B$, labelled with $P_i$\;
                    Add three weight $1$ nodes to $B$, labelled with $P_i$\;
                }
            }
        }
    \end{algorithm}

    The following lemmas together show that \Cref{alg:pnlmpart,alg:P1class,alg:P2class} correctly partition $P_{NLM}$ and label the partitions $P_i$ appropriately.

    \begin{lemma}\label{lem:lmedge-disconnected}
        Let $\overline{uv}$ be a locally maximal edge of $X$, such that $u,v$ are only faces of $\overline{uv}$. Then, there is a unique partition $P_1$ of $P_{NLM}$ which witnesses $\mathcal{E}$, where $\mathcal{E}$ is the edge spanning component corresponding to $\overline{uv}$. Further, $P_1$ is assigned label $-1$ by \Cref{alg:P1class,alg:P2class}.
    \end{lemma}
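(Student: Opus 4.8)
The plan is to pin down exactly which samples of $P_{NLM}$ sit near the component $\overline{uv}$, show they all carry identical witness data, and then trace them through \Cref{alg:pnlmpart,alg:Piorder,alg:P1class}. Since $\overline{uv}$ is locally maximal and $u,v$ are faces of no cell other than $\overline{uv}$, the edge $\overline{uv}$ is an isolated connected component of $X$ whose only non-locally-maximal cells are the two degree-$1$ vertices $u$ and $v$. By \Cref{prop:lmedge} every sample near the interior of $\overline{uv}$ lies in $P_{LM,1}$ and hence feeds the unique edge spanning component $\mathcal{E}$ guaranteed by \Cref{prop:equivlmspanedge}, while by \Cref{prop:nlmvertex} every sample within $\frac{R}{2}-2\varepsilon$ of $u$ or of $v$ lies in $P_{NLM}$. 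These near-endpoint samples are precisely the $P_{NLM}$ samples I must account for, and each such $p$ satisfies $d(p,\overline{uv})\le\varepsilon$ because $\overline{uv}$ is the only cell within $6(R+\varepsilon)$.

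The key computation is to determine $S_E(p)$ and $S_T(p)$ for such a $p$. First I would show $\mathcal{E}\in S_E(p)$: the samples of $\mathcal{E}$ extend along the interior of $\overline{uv}$ down to distance roughly $\frac{3R}{2}$ from the endpoints, so the nearest sample of $\mathcal{E}$ lies within about $R+3\varepsilon$ of $p$, which is well inside the witnessing radius $\kappa(R+2\varepsilon)$ since $\kappa\approx 3.86$. Next I would rule out every other spanning component: by \Cref{as:embed} any other vertex, edge, or triangle lies at distance greater than $6(R+\varepsilon)$ from $\overline{uv}$, so any sample $q$ belonging to a different spanning component satisfies $\lVert p-q\rVert > 6(R+\varepsilon)-2\varepsilon = 6R+4\varepsilon$; using $\kappa(R+2\varepsilon)<6R+4\varepsilon$, which holds whenever $R\ge 14\varepsilon$, no such $q$ lies in $B_{\kappa(R+2\varepsilon)}(p)$. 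Hence $S_E(p)=\{\mathcal{E}\}$ and $S_T(p)=\emptyset$ for every near-$u$ and every near-$v$ sample. The main obstacle — and exactly the feature distinguishing this \emph{disconnected} case from \Cref{lem:lmedge-connected} — is that the samples near $u$ and those near $v$ have \emph{identical} witness data, so the equivalence relation of \Cref{alg:pnlmpart} places them all in one class $P_1$, even though they sit at opposite ends of $\overline{uv}$; this is why a single partition will correspond to two vertices.

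I would then establish uniqueness: any $P_{NLM}$ sample witnessing $\mathcal{E}$ must lie within $\kappa(R+2\varepsilon)$ of $\overline{uv}$, and by the same distance estimate the only $P_{NLM}$ samples that close to $\overline{uv}$ are those near $u$ or $v$, all of which lie in $P_1$; thus $P_1$ is the unique partition with $\mathcal{E}\in S_E(P_1)$. I would also check that the merging loop of \Cref{alg:pnlmpart} leaves $P_1$ intact: $P_1$ cannot be merged into another $P_i$, since that would force $\mathcal{E}\in S_E(P_i)$, contradicting uniqueness; and no $P_j$ can be merged into $P_1$, since a candidate needs $S_E(P_j)\subseteq\{\mathcal{E}\}$ and $S_T(P_j)=\emptyset$, whence (because every $P_{NLM}$ sample witnesses some spanning component) $S_E(P_j)=\{\mathcal{E}\}$ and $P_j=P_1$.

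Finally I would run $P_1$ through the classification. Since $S_E(P_1)=\{\mathcal{E}\}\neq\emptyset$, \Cref{alg:Piorder} places $P_1$ in $P^1$. In \Cref{alg:P1class} the partition satisfies $|S_E(P_1)|=1$ and $S_T(P_1)=\emptyset$, and by the uniqueness just established $\mathcal{E}\notin S_E(P_j)$ for all $P_j\neq P_1$, so the first branch fires and $P_1$ is assigned label $-1$, as required; \Cref{alg:P2class} then leaves it untouched. The only step requiring genuine care is the geometric inequality $\kappa(R+2\varepsilon)<6R+4\varepsilon$ that simultaneously secures the witnessing of $\mathcal{E}$, the exclusion of all other components, and the uniqueness of $P_1$; everything else is bookkeeping through the algorithms.
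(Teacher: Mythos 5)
Your proposal is correct and takes essentially the same route as the paper's proof: the isolation of $\overline{uv}$ guaranteed by \Cref{as:embed}, combined with \Cref{prop:lmedge,prop:nlmvertex}, forces every $P_{NLM}$ sample witnessing $\mathcal{E}$ to witness no other spanning component, so all such samples (those near $u$ and those near $v$) share the witness data $\left(\{\mathcal{E}\},\emptyset\right)$, form a single partition $P_1$, no other partition witnesses $\mathcal{E}$, and \Cref{alg:P1class} assigns label $-1$. The only difference is that you make explicit the distance estimates, the merging-loop check, and the algorithm tracing that the paper's terse proof leaves implicit.
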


   \begin{lemma}\label{lem:lmedge-connected}
        Let $\overline{uv}$ be a locally maximal edge of $X$, such that $u$ and/or $v$ is the face of some locally maximal cell $\sigma \in X, \, \sigma \neq \overline{uv}$. Then, there are partitions $P_1, P_2$ of $P_{NLM}$, which witness $\mathcal{E}$, where $\mathcal{E}$ is the edge spanning component corresponding to $\overline{uv}$. Further, $P_1$ and $P_2$ are assigned label $0$ by \Cref{alg:P1class,alg:P2class}.
   \end{lemma}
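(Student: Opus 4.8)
The plan is to describe the $P_{NLM}$ samples lying near $\overline{uv}$, show they split into exactly two blocks of the partition of \Cref{alg:pnlmpart} (one per endpoint), argue these blocks are distinguished precisely because one endpoint carries extra incident structure, and then trace both blocks through \Cref{alg:Piorder} and \Cref{alg:P1class} to the label $0$. First I would pin down the local picture. By \Cref{prop:lmedge} every sample within $\varepsilon$ of the interior of $\overline{uv}$ (distance $\geq \tfrac{3R}{2}+\varepsilon$ from both $u$ and $v$) lies in $P_{LM,1}$, and by \Cref{prop:spanningedge} and \Cref{prop:equivlmspanedge} these form the spanning component $\mathcal{E}$ corresponding to $\overline{uv}$. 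Since neither $u$ nor $v$ is locally maximal (each is a face of $\overline{uv}$), \Cref{prop:nlmvertex} gives that every sample within $\tfrac{R}{2}-2\varepsilon$ of $u$ (resp. $v$) lies in $P_{NLM}$, organising the nearby $P_{NLM}$ samples into a $u$-cluster $U$ and a $v$-cluster $V$. The witnessing bound established just before \Cref{def:witness-spanning} shows every $p$ with $d(\partial\overline{uv},p)<\tfrac{3R}{2}+\varepsilon$ witnesses $\mathcal{E}$; in particular $\mathcal{E}\in S_E(p)$ for every $p\in U\cup V$.

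Next I would show $U$ and $V$ give distinct blocks. By hypothesis some endpoint, say $u$, is a face of a locally maximal cell $\sigma\neq\overline{uv}$; let $\mathcal{C}_\sigma$ be its spanning component. The same witnessing bound gives $\mathcal{C}_\sigma\in S_E(p)\cup S_T(p)$ for all $p\in U$. Conversely $\sigma$ contains $u$ but not $v$ (a cell incident to both would be $\overline{uv}$ or a triangle $\triangle uvw$, the latter impossible as $\overline{uv}$ is locally maximal), so $d(\sigma,v)>6(R+\varepsilon)$ by \Cref{as:embed}; since $\kappa(R+2\varepsilon)<6(R+\varepsilon)-\tfrac{R}{2}$ for $R\geq 14\varepsilon$, no sample of $V$ witnesses $\mathcal{C}_\sigma$. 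Hence the witness data $(S_E,S_T)$ differ on $U$ and $V$, placing them in different blocks; the separation bounds of \Cref{as:embed} further show every sample of $U$ (resp. $V$) witnesses exactly the spanning components incident to $u$ (resp. $v$), so each cluster is a single block, say $P_2\supseteq U$ and $P_1\supseteq V$, each witnessing $\mathcal{E}$.

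I would then read off the labels. Both blocks have $S_E\neq\emptyset$, so \Cref{alg:Piorder} places both in $P^1$. For $P_2$ the component $\mathcal{C}_\sigma$ is present, so either $|S_E|\geq 2$ or $S_T\neq\emptyset$; the first test of \Cref{alg:P1class} ($|S_E|=1$ and $S_T=\emptyset$) fails and the branch ``else if $S_E\neq\emptyset$'' assigns label $0$. For $P_1$ there are two possibilities: if $v$ is also incident to a further cell it receives label $0$ by the same branch; if $v$ is a degree-$1$ endpoint then $S_E=\{\mathcal{E}\}$, $S_T=\emptyset$, the first test fires, and since $\mathcal{E}\in S_E(P_2)$ the sub-branch ``else if $\exists P_j\neq P_i$ with $\mathcal{E}\in S_E(P_j)$'' again assigns label $0$. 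This is exactly where the connected case departs from \Cref{lem:lmedge-disconnected}: there both clusters carry the identical data $(\{\mathcal{E}\},\emptyset)$, collapse into one block and, no other block witnessing $\mathcal{E}$, take label $-1$; here the extra incident cell at $u$ breaks that symmetry.

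The step I expect to be delicate is controlling the ``grey region'' and the merge loop of \Cref{alg:pnlmpart}. I would verify that any ambiguous $P_{NLM}$ sample in the transition annulus $\tfrac{R}{2}-2\varepsilon<\|p-u\|<\tfrac{3R}{2}+\varepsilon$ has witness data agreeing with $U$ (so it joins $P_2$ rather than forming a spurious block), and, crucially, that $P_1$ is not absorbed by the merge loop: although $S_E(P_1)=\{\mathcal{E}\}\subseteq S_E(P_2)$ and $S_T(P_1)=\emptyset\subseteq S_T(P_2)$, the merge requires both $S_E(P_j)$ and $S_T(P_j)$ non-empty (and $P_1$ has $S_T=\emptyset$), and otherwise $|S_T(P_j)|\geq 2$, so neither trigger fires and $P_1$ survives as its own label-$0$ block. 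Turning the separation inequalities into the quantitative bounds on $\kappa$ needed for the witnessing dichotomy is the main bookkeeping obstacle.
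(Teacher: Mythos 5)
Your proposal is correct and follows essentially the same route as the paper's proof: identify the two $P_{NLM}$ clusters at $u$ and $v$, show one witnesses the spanning component $\mathcal{C}_\sigma$ of the extra incident cell while the other does not (via the separation bounds of \Cref{as:embed}), conclude these are two distinct partitions witnessing $\mathcal{E}$, and read off label $0$ for both. Your version is in fact more careful than the paper's at the final step, explicitly tracing both blocks through the branches of \Cref{alg:P1class} and checking that the merge loop of \Cref{alg:pnlmpart} cannot absorb the block with $S_T=\emptyset$, details the paper simply asserts.
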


    \begin{lemma}\label{lem:lmtriangle-disconnected}
        Let $\triangle u v w$ be a triangle of $X$, such that for all locally maximal cells $\sigma \in X$ with $\sigma \neq \triangle u v w$, we have 

        \begin{equation*}
            u, \, v, \, w \notin \sigma.
        \end{equation*}

        Then, there is a unique partition $P_1$ of $P_{NLM}$ which witness $\mathcal{T}$, where $\mathcal{T}$ is the edge spanning component corresponding to $\triangle u v w$.  Further, $P_1$ is given label $7$ by \Cref{alg:P1class,alg:P2class}.
    \end{lemma}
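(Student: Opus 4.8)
The plan is to first pin down which samples of $P_{NLM}$ lie near the isolated triangle, then show they all witness the same single spanning component $\mathcal{T}$ and nothing else, and finally trace the resulting partition block through \Cref{alg:Piorder,alg:P1class,alg:P2class}. By hypothesis the vertices $u,v,w$ are faces of no locally maximal cell other than $\triangle uvw$, so the three edges $\overline{uv},\overline{vw},\overline{wu}$ and the three vertices $u,v,w$ are precisely the non-locally-maximal cells of $\partial\triangle uvw$, and by \Cref{as:embed} every other cell of $X$ is at distance $>6(R+\varepsilon)$ from $\triangle uvw$. Applying \Cref{prop:nlmvertex,prop:nlmedge} to these six cells shows that the samples near them lie in $P_{NLM}$; since each vertex is the corner of a single triangular sector and each edge is of degree $1$, the link $\partial B_R(\cdot)\cap|X|$ is a contractible arc in both cases, so $\sig_{\varepsilon,R}(p)=(1,0)$ for every such $p$. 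By \Cref{prop:lmtriangle} the deep-interior samples lie in $P_{LM,2}$ and, by \Cref{prop:equivlmspantri}, form the unique spanning component $\mathcal{T}$ associated with $\triangle uvw$.

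Next I would establish that the witnessing data $(S_E(p),S_T(p))$ is constant on this collection. Because the isolated triangle carries no locally maximal edge and all other cells are more than $6(R+\varepsilon)$ away, which (since $R\ge 14\varepsilon$) is well beyond the witnessing radius $\kappa(R+2\varepsilon)$, we get $S_E(p)=\emptyset$ for every such $p$, and the only spanning component within range is $\mathcal{T}$. The key geometric input is that every such $p$ \emph{does} witness $\mathcal{T}$: using the angle lower bound $\angle uvw,\angle vwu,\angle wuv\ge\tfrac{\pi}{6}$ of \Cref{as:embed}, any sample within $\tfrac{3R}{2}+\varepsilon$ of $\partial\triangle uvw$ lies within $\kappa(R+2\varepsilon)$ of a point of $P_{LM,2}$ in $\mathcal{T}$, which is exactly the estimate recorded before \Cref{def:witness-spanning}. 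Hence $S_T(p)=\{\mathcal{T}\}$ for all of them. Consequently all these samples share the common pair $(S_E,S_T)=(\emptyset,\{\mathcal{T}\})$, so \Cref{alg:pnlmpart} places them in a single block $P_1$; as no sample outside this region witnesses $\mathcal{T}$, neither branch of the merge step applies to $P_1$, so it is left intact and is the unique block witnessing $\mathcal{T}$.

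Finally I would run $P_1$ through the classification. In \Cref{alg:Piorder} we have $S_E(P_1)=\emptyset$, every $p\in P_1$ satisfies $\sig_{\varepsilon,R}(p)=(1,0)$, which is of the form $(1,n)$, and $|S_T(P_1)|=1$, so all three tests fail and $P_1$ is sent to $P^2$. In \Cref{alg:P2class}, since $\mathcal{T}$ is witnessed only by $P_1$ we get $LN=\{P_1\}$, whence $LN\cap P^2=\{P_1\}=LN$ and the algorithm assigns label $7$, as claimed.

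The main obstacle is the geometric witnessing step of the second paragraph: one must verify that \emph{every} $P_{NLM}$ sample near the triangle, in particular those clustered near the corners where the triangle is most nearly degenerate, falls within the witnessing radius of $\mathcal{T}$. This is precisely where the angle bound $\tfrac{\pi}{6}$ is needed to exclude sliver triangles, together with the relation $R\le 16\varepsilon$ which keeps $\kappa(R+2\varepsilon)$ comfortably larger than the boundary band of width $\tfrac{3R}{2}+\varepsilon$. The remaining bookkeeping, namely confirming $\sig_{\varepsilon,R}=(1,0)$ throughout and that the grey region between $P_{LM,2}$ and $P_{NLM}$ does not strand any sample outside $P_{NLM}$, is routine given the earlier propositions.
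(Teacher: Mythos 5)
Your proposal is correct and takes essentially the same route as the paper's proof: both use the separation bounds of \Cref{as:embed} together with \Cref{prop:lmtriangle,prop:nlmvertex,prop:nlmedge} and the witnessing estimate recorded before \Cref{def:witness-spanning} to conclude that the $P_{NLM}$ samples near the isolated triangle witness $\mathcal{T}$ and nothing else, hence form the unique partition $P_1$, which then receives label $7$. Your write-up is in fact more explicit than the paper's (which does not trace \Cref{alg:Piorder,alg:P2class} step by step, instead simply exhibiting samples near the three vertices and three edges of $\triangle uvw$); the only inaccuracy is your description of the band needing to be covered, since the $P_{NLM}$ samples near the triangle lie within roughly $R+\varepsilon$ of $\partial \triangle uvw$ (beyond that depth the spherical shell meets a full annulus of the triangle, so the sample lands in $P_{LM,2}$), which is exactly the band the paper's estimate addresses, rather than $\tfrac{3R}{2}+\varepsilon$.
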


    \begin{lemma}\label{lem:lmtriangle-share-1-vertex}
        Let $\triangle u v w$ be a triangle of $X$, such that there is some locally maximal cell $\sigma \in X$ with $\sigma \neq \triangle u v w$, such that $v \in \sigma$, without loss of generality, and for all locally maximal $\tau \in X, \, \tau \neq \sigma, \triangle u v w$, either $\triangle u v w \cap \tau = v$ or $\triangle u v w \cap \tau = \emptyset$.

        Then, there are exactly two partitions $P_1, P_2$ of $P_{NLM}$ which witness $\mathcal{T}$, where $\mathcal{T}$ is the edge spanning component corresponding to $\triangle u v w$.  Further, $P_1$ is given label $0$ and $P_2$ label $5$ by \Cref{alg:P1class,alg:P2class}.
    \end{lemma}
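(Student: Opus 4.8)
The plan is to combine the local-structure results with the witnessing machinery and then trace the classification algorithms. Since $\triangle uvw$ is a triangle it is locally maximal, so by \Cref{prop:lmtriangle} and \Cref{prop:equivlmspantri} it corresponds to a triangle spanning component $\mathcal{T}$. The non-locally-maximal cells in $\partial\triangle uvw$ are the three edges $\overline{uv},\overline{vw},\overline{uw}$ and the three vertices $u,v,w$, and by \Cref{prop:nlmvertex,prop:nlmedge} every sample near them lies in $P_{NLM}$. I would first pin down exactly which samples witness $\mathcal{T}$: by the discussion preceding \Cref{def:witness-spanning} (which uses the angle lower bound $\frac{\pi}{6}$ from \Cref{as:embed}), any $p\in P_{NLM}$ with $d(\partial\triangle uvw,p)<R+\varepsilon$ has a point of $\mathcal{T}$ within $\kappa(R+\varepsilon)$ and hence witnesses $\mathcal{T}$, while the $>6(R+\varepsilon)$ separation bounds of \Cref{as:embed} prevent samples away from $\partial\triangle uvw$ from witnessing it.

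The heart of the argument is to show that these $\mathcal{T}$-witnessing samples split into exactly two blocks of the partition produced by \Cref{alg:pnlmpart}. Because $u$ and $w$ are free (they lie in no locally maximal cell other than $\triangle uvw$) and the three boundary edges are not locally maximal, the separation bounds of \Cref{as:embed} force every sample near $u$, near $w$, or near the interior of one of the three edges to have witnessing data $S_E=\emptyset$ and $S_T=\{\mathcal{T}\}$; I would check that these samples are path-connected in $\check{\mathcal{C}}_{\frac{3\varepsilon}{2}}$ around the boundary (away from $v$), so they assemble into a single block $P_2$. The remaining $\mathcal{T}$-witnessing samples are those near $v$; since $v\in\sigma$, they additionally witness the spanning component contributed by $\sigma$ (in particular a nonempty $S_E$ from an incident locally maximal edge at $v$), so their witnessing data differs from that of $P_2$ and they form a distinct block $P_1$. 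I would then confirm these are the only two $\mathcal{T}$-witnessing blocks and that the merge step of \Cref{alg:pnlmpart} keeps them separate: the candidate merge of $P_2$ into $P_1$ is blocked because $S_E(P_2)=\emptyset$ fails the first merge test and $|S_T(P_2)|=1$ fails the second.

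It then remains to trace the labelling. For $P_1$, the incident edge at $v$ gives $S_E(P_1)\neq\emptyset$, so $P_1$ is placed in $P^1$ by \Cref{alg:Piorder} and the ``$S_E(P_i)\neq\emptyset$'' branch of \Cref{alg:P1class} assigns it label $0$. For $P_2$, every sample near a degree-$1$ edge or a free corner has $\sig(p)=(1,0)$, and together with $S_E(P_2)=\emptyset$ and $|S_T(P_2)|=1$ this places $P_2$ in $P^2$; in \Cref{alg:P2class} one computes $LN=\{P_1,P_2\}$ and $LN\cap P^2=\{P_2\}$, and since $P_1$ already carries label $0$, the matching branch assigns $P_2$ label $5$, recording the three edges $\overline{uv},\overline{vw},\overline{uw}$ and the two free vertices $u,w$.

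I expect the main obstacle to be the middle step: rigorously controlling the witnessing sets and proving that the three edges together with the two free vertices genuinely assemble into a \emph{single} connected block $P_2$ rather than several, while the shared vertex $v$ is cleanly separated into $P_1$. This rests on the quantitative separation of \Cref{as:embed} together with the connectivity of $\check{\mathcal{C}}_{\frac{3\varepsilon}{2}}$ along $\partial\triangle uvw$, and one must also verify that the precise case conditions in \Cref{alg:P1class,alg:P2class} are satisfied so that no stray label (such as label $2$) is produced in place of $0$ and $5$.
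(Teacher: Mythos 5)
Your overall architecture (identify $\mathcal{T}$, determine the witnessing samples, show they split into exactly two blocks by their witnessing data, then trace \Cref{alg:Piorder,alg:P1class,alg:P2class}) is the same as the paper's, but there is a genuine gap in the labelling step: you assume that the shared vertex $v$ inherits \emph{``a nonempty $S_E$ from an incident locally maximal edge at $v$''}. The hypothesis only says $v\in\sigma$ for some locally maximal cell $\sigma$, and in a $2$-complex every triangle is locally maximal, so $\sigma$ may perfectly well be a triangle $\triangle vxy$ meeting $\triangle uvw$ only at $v$, with no locally maximal edge at $v$ at all. In that case $S_E(P_1)=\emptyset$ and $S_T(P_1)=\{\mathcal{T},\mathcal{C}_\sigma\}$ with $\mathcal{C}_\sigma$ a \emph{triangle} spanning component, so $P_1$ cannot be labelled through the ``$S_E(P_i)\neq\emptyset$'' branch of \Cref{alg:P1class} that you invoke; one must instead go through the $S_T$-based branches (computing $N(P_1)=\bigcap_{\mathcal{T}'\in S_T(P_1)}LN(\mathcal{T}')$), and your derivation of label $5$ for $P_2$ also collapses, since it relied on the companion block $P_k=P_1$ already carrying label $0$ in \Cref{alg:P2class}. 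The paper's proof avoids committing to the dimension of $\sigma$: it only asserts that $\sigma$ contributes \emph{some} spanning component $\mathcal{C}_\tau$ (edge- or triangle-spanning) witnessed by $P_1$ and not by $P_2$, and it additionally splits into the case of a unique cell at $v$ versus several cells at $v$ (handling the latter by induction over the cells meeting $\triangle uvw$ in $v$), a case distinction absent from your write-up.

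A secondary, more minor point: the ``main obstacle'' you flag --- proving that the samples near $u$, $w$ and the three edges are path-connected in $\check{\mathcal{C}}_{\frac{3\varepsilon}{2}}$ so that they form a single block --- is not actually required. The partition in \Cref{alg:pnlmpart} is defined by equality of the witness sets ($S_E(p)=S_E(q)$ and $S_T(p)=S_T(q)$), not by connectivity, so all samples whose witnessing data is $S_E=\emptyset$, $S_T=\{\mathcal{T}\}$ lie in one block automatically; what does need care (and what you correctly identify elsewhere) is that the merge step of \Cref{alg:pnlmpart} does not fuse $P_2$ into $P_1$, and that no sample near $u$, $w$ or an edge interior accidentally witnesses a second spanning component, which is where the separation bounds of \Cref{as:embed} enter.
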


    \begin{lemma}\label{lem:lmtriangle-share-1-edge-with-vertices}
        Let $\triangle u v w$ be a triangle of $X$, such that there is some locally maximal cell $\sigma \in X$ with $\sigma \neq \triangle u v w$, such that $v \in \sigma$, without loss of generality, and for all locally maximal $\tau \in X, \, \tau \neq \sigma, \triangle u v w$, either $\triangle u v w \cap \tau = \overline{uv}$ or $\triangle u v w \cap \tau = \emptyset$.

        Then, there are exactly two partitions $P_1, P_2$ of $P_{NLM}$ which witness $\mathcal{T}$, where $\mathcal{T}$ is the edge spanning component corresponding to $\triangle u v w$.  Further, $P_1$ is given label $0$ and $P_2$ label $5$ by \Cref{alg:P1class,alg:P2class}.
    \end{lemma}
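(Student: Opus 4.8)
The plan is to follow the architecture of the proof of \Cref{lem:lmtriangle-share-1-vertex}, treating the shared edge $\overline{uv}$ as the only genuinely new feature. \textbf{Step 1 (locating the relevant samples).} First I would identify the samples of $P_{NLM}$ lying near $\triangle uvw$ and sort them by their nearest face. By \Cref{prop:nlmedge}, every sample within $\frac{R}{2}-2\varepsilon$ of one of the faces $\overline{uv}, \overline{vw}, \overline{uw}$ lies in $P_{NLM}$, and by \Cref{prop:nlmvertex} the same holds in a $\left(\frac{R}{2}-2\varepsilon\right)$-neighbourhood of each of $u,v,w$. The separation bounds of \Cref{as:embed} (every non-incident pair of cells is at distance more than $6(R+\varepsilon)$, hence more than $\kappa(R+\varepsilon)$) then guarantee that the only locally maximal cells whose spanning components can be witnessed from this region are $\triangle uvw$ itself, the triangles meeting $\triangle uvw$ along $\overline{uv}$, and $\sigma$.

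\textbf{Step 2 (witness sets).} Next I would compute $S_E(p)$ and $S_T(p)$ for each such sample, using the displacement estimate $\lVert q-p\rVert \le \kappa(R+\varepsilon)$ recorded before \Cref{def:witness-spanning} together with the angle and inradius conditions of \Cref{as:embed} (parts 5 and 12). These force the edges of $\triangle uvw$ to be long enough that a sample near the interior of a free face $\overline{vw}$ or $\overline{uw}$, or near the free vertex $w$, witnesses only the triangle spanning component $\mathcal{T}$, so that all such samples carry the common witness data $S_E=\emptyset$, $S_T=\{\mathcal{T}\}$. Samples near $\overline{uv}$ and near $u$ additionally witness the spanning components $\mathcal{T}_1,\dots,\mathcal{T}_k$ of the triangles sharing that edge, and samples near $v$ also witness the component(s) arising from $\sigma$. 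The homological input is the isomorphism identifying $\sig_{\varepsilon,R}(p)$ with $|H_\bullet(\partial B_R(\cdot)\cap |X|)|$ from the opening of the proof of \Cref{prop:nlmvertex}, supplemented by \Cref{lem:antipodal-not-flat} in the degree-two subcase of $\overline{uv}$.

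\textbf{Step 3 (partition and labelling).} I would then feed these witness sets into \Cref{alg:pnlmpart}. The initial partition separates samples by their exact $(S_E,S_T)$-value, producing (a) the free-face class with $S_T=\{\mathcal{T}\}$, (b) the $\overline{uv}$-and-$u$ class with $S_T=\{\mathcal{T},\mathcal{T}_1,\dots,\mathcal{T}_k\}$, and (c) the $v$-cluster with a still larger witness set; I must show the merging loop absorbs (b) into (c), leaving exactly the two classes witnessing $\mathcal{T}$ promised by the statement, namely $P_1$ (the $\overline{uv}$-and-$v$ region) and $P_2$ (the free faces). Finally I would run \Cref{alg:Piorder,alg:P1class,alg:P2class}: $P_1$ is routed into $P^1$ (since $|S_T(P_1)|\ge 2$, and $S_E(P_1)\neq\emptyset$ whenever $\sigma$ contributes an edge spanning component) and is sent to the label-$0$ branch of \Cref{alg:P1class}; $P_2$, with $S_E=\emptyset$, $|S_T|=1$ and every sample satisfying $\sig_{\varepsilon,R}=(1,n)$, is routed into $P^2$ and reached in \Cref{alg:P2class} with $LN=\{P_1,P_2\}$ and $P_1$ already labelled $0$, so the branch emitting label $5$ fires.

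\textbf{Main obstacle.} The crux is Step 3. Unlike in \Cref{lem:lmtriangle-share-1-vertex}, the samples along the shared edge $\overline{uv}$ carry the extra witnessed components $\mathcal{T}_1,\dots,\mathcal{T}_k$, so their class (b) has a witness set strictly between those of the free-face class and the $v$-cluster. I must therefore verify that the merging conditions of \Cref{alg:pnlmpart} genuinely collapse (b) into the $v$-cluster — checking the relevant local signatures $\sig_{\varepsilon,R}$ along $\overline{uv}$ and at $u$ so that the $|S_T|\ge 2$ merge clause (or the both-nonempty clause) applies — rather than leaving a spurious third class witnessing $\mathcal{T}$, and simultaneously confirm that $P_2$ keeps $|S_T|=1$ so it is correctly diverted to $P^2$. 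This collapse is what makes the conclusion coincide with the pure-vertex case, and it rests delicately on the inradius bound keeping the free faces beyond witnessing range of the edge-sharing triangles; establishing it is the main work.
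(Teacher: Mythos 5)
Your reconstruction diverges from the paper's own proof, and the divergence is instructive. The paper's proof of \Cref{lem:lmtriangle-share-1-edge-with-vertices} never engages with the vertex-sharing cell $\sigma$ at all: it case-splits only on whether one or several locally maximal cells intersect $\triangle u v w$ in $\overline{uv}$, asserts directly (by analogy with \Cref{lem:lmtriangle-share-1-vertex}) that there is a single partition witnessing $\mathcal{T}$ together with the components $\mathcal{C}_{\tau}$ and a single partition witnessing only $\mathcal{T}$, and then assigns labels $1$ and $4$ --- contradicting the labels $0$ and $5$ claimed in the statement. (The statement's labels appear to be carried over from \Cref{lem:lmtriangle-share-1-vertex}; the proof of \Cref{thm:recover} pairs edge-plus-two-vertices sharing with labels $2$ and $4$, so the paper is internally inconsistent at this point.) You instead take the hypothesis literally, which correctly produces the three witness classes you describe, and you put all the weight on the merging loop of \Cref{alg:pnlmpart}. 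That is exactly where your argument breaks.

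The merge you need cannot fire for your class (b). Any locally maximal cell $\tau$ with $\tau \cap \triangle u v w = \overline{uv}$ is necessarily a triangle (the edge $\overline{uv}$ itself is a face of $\triangle u v w$, hence not locally maximal), so $S_E(P_j) = \emptyset$ for the class of samples along $\overline{uv}$ and near $u$, and the first merge clause, which requires $S_E(P_j) \neq \emptyset$, is unavailable. The second clause requires every sample of $P_j$ to have signature of the form $(n,0)$; but once at least one triangle shares $\overline{uv}$, that edge has degree at least $2$, and samples near its interior have $\sig_{\varepsilon,R}(p) = (1,k)$ with $k \geq 1$ (by the argument of \Cref{prop:nlmedge}), so this clause fails as well. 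Hence class (b) survives as a third partition witnessing $\mathcal{T}$, and the conclusion ``exactly two partitions'' is not recoverable along your route. Note also that even if the collapse did occur, the merged class would correspond to the two vertices $u,v$ and the edge $\overline{uv}$ (label $2$, ``two vertices and an edge''), not to a single vertex (label $0$), and the free-face class corresponds to one vertex and two edges (label $4$), not label $5$; so the labels you aim for are unreachable whenever an edge-sharing triangle is actually present. The statement as written only holds in the degenerate case where no cell meets $\triangle u v w$ in $\overline{uv}$, i.e.\ when the configuration reduces to that of \Cref{lem:lmtriangle-share-1-vertex}.
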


    \begin{lemma}\label{lem:lmtriangle-share-2-vertices}
        Let $\triangle u v w$ be a triangle of $X$, such that there are some locally maximal cells $\sigma_1 \neq \sigma_2 \in X$ with $\sigma_1, \sigma_2 \neq \triangle u v w$, such that 

        \begin{align*}
            \sigma_1 \cap \triangle u v w &= v \\
            \sigma_2 \cap \triangle u v w &= u \\
        \end{align*}
        and for all other locally maximal cells $\tau \in X$, either

        \begin{enumerate}
            \item $\tau \cap \triangle u v w = v$,
            \item $\tau \cap \triangle u v w = u$,
            \item $\tau \cap \triangle u v w = \emptyset$.
        \end{enumerate}

        Then, there are exactly three partitions $P_1, P_2, P_2$ of $P_{NLM}$ which witness $\mathcal{T}$, where $\mathcal{T}$ is the edge spanning component corresponding to $\triangle u v w$.  Further, $P_1, P_2$ are given label $0$ and $P_3$ label $6$ by \Cref{alg:P1class,alg:P2class}.
    \end{lemma}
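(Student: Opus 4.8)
The plan is to follow the same template as the neighbouring triangle lemmas (\Cref{lem:lmtriangle-disconnected,lem:lmtriangle-share-1-vertex}), exploiting the fact that the number of \emph{shared} vertices of $\triangle uvw$ controls both the partition count and the labels. First I would fix the unique spanning component $\mathcal{T}$ of $\check{\mathcal{C}}_{\frac{3\varepsilon}{2}}(P_{LM,2})$ corresponding to $\triangle uvw$ (\Cref{prop:equivlmspantri}) and identify the part of $P_{NLM}$ sitting near this triangle. By \Cref{prop:nlmvertex} and \Cref{prop:nlmedge}, every sample within $\frac{R}{2}-2\varepsilon$ of the boundary $\partial\triangle uvw$ (its edges $\overline{uv},\overline{vw},\overline{wu}$ and vertices $u,v,w$) lies in $P_{NLM}$. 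Using the angle bound $\angle uvw,\angle vwu,\angle wuv\ge\frac{\pi}{6}$ from \Cref{as:embed} together with the inscribed-circle estimate recorded in the witness-point discussion, I would show every such $p$ satisfies $\mathcal{T}\cap B_{(R+2\varepsilon)\kappa}(p)\neq\emptyset$, so that $\mathcal{T}\in S_T(p)$; hence each relevant partition of $P_{NLM}$ witnesses $\mathcal{T}$.

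Next I would localise the witnessing. Since $\sigma_1\cap\triangle uvw=v$, $\sigma_2\cap\triangle uvw=u$, and every other locally maximal cell meets $\triangle uvw$ only in $u$, $v$, or $\emptyset$, the three edges and the vertex $w$ are faces of $\triangle uvw$ alone. The separation bound $6(R+\varepsilon)>(R+2\varepsilon)\kappa$ (valid because $R\ge14\varepsilon$) then forces a $P_{NLM}$ sample to witness only spanning components of cells incident to the face of $\partial\triangle uvw$ it is near. Consequently a sample near $w$ or near the interior of an edge has $S_E=\emptyset$ and $S_T=\{\mathcal{T}\}$; a sample close enough to $v$ additionally witnesses the spanning components of $\sigma_1$ and of any other cell meeting $\triangle uvw$ at $v$; and a sample close enough to $u$ additionally witnesses those at $u$. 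Grouping by $(S_E,S_T)$ and applying the merge step of \Cref{alg:pnlmpart} (which consolidates finer subdivisions near a shared vertex into a single class carrying the maximal witness set), I would obtain exactly three classes witnessing $\mathcal{T}$: $P_1$ near $v$, $P_2$ near $u$, and the remainder $P_3$ consisting of the three edges together with the free vertex $w$. Here I must check that $P_3$ is not absorbed by the merge, which holds since $S_E(P_3)=\emptyset$ and $|S_T(P_3)|=1$, so neither merge branch fires.

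Finally I would run the classification. Every sample of $P_3$ has $\sig_{\varepsilon,R}(p)=(1,0)$ (degree-$1$ edges and a vertex of a single triangle), $S_E(P_3)=\emptyset$, and $|S_T(P_3)|=1$, so \Cref{alg:Piorder} sends $P_3$ to $P^2$; meanwhile $P_1,P_2$ have $\sig_{\varepsilon,R}\neq(1,n)$ on some sample (two distinct cells meeting at the vertex force $\ge2$ components or a nonempty $S_E$), so they are placed in $P^1$ and receive label $0$ (via the $S_E\neq\emptyset$ branch of \Cref{alg:P1class}, exactly as for the label-$0$ vertex of \Cref{lem:lmtriangle-share-1-vertex}). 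With $P_1,P_2$ already carrying label $0$, in \Cref{alg:P2class} the set $LN(\mathcal{T})=\{P_1,P_2,P_3\}$ satisfies $LN\cap P^2=\{P_3\}$ with its two companions labelled $0$, which is precisely the branch producing label $6$ (three edges and a vertex). This yields the claimed count of three partitions and the labels $0,0,6$.

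The hard part will be the labelling bookkeeping rather than the geometry. I must verify that the $LN$ and $N(\cdot)$ intersections have the stated cardinalities in every sub-case: whether $\sigma_1,\sigma_2$ are locally maximal edges (so $S_E\neq\emptyset$ near the shared vertex and label $0$ comes directly) or triangles (so $S_E=\emptyset$ there and label $0$ must instead be forced through the $N(P_i)$ branch of \Cref{alg:P1class} or by propagation from a co-witnessing partition), and the situation where several cells meet at $u$ or $v$. Pinning down the witnessing radius so that the ``near $v$'' and ``near $u$'' classes are genuinely disjoint from $P_3$ and from each other, and confirming the merge step neither over- nor under-merges, is where the bound $6(R+\varepsilon)>(R+2\varepsilon)\kappa$ and the angle assumptions of \Cref{as:embed} must be used most delicately.
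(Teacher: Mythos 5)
Your proposal is correct and takes essentially the same route as the paper, whose entire proof is a two-sentence remark that one adapts \Cref{lem:lmtriangle-share-1-vertex} at both shared vertices to get three partitions ($P_1$ near $v$ witnessing $\mathcal{C}_1$, $P_2$ near $u$ witnessing $\mathcal{C}_2$, and $P_3$ witnessing only $\mathcal{T}$) with labels $0,0,6$. Your write-up is in fact considerably more detailed than the paper's: the explicit walk through \Cref{alg:pnlmpart}, \Cref{alg:Piorder}, \Cref{alg:P1class}, and \Cref{alg:P2class}, and the flagged sub-case analysis of whether $\sigma_1,\sigma_2$ are edges or triangles, are exactly the bookkeeping the paper leaves implicit.
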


    \begin{lemma}\label{lem:share-edge-with-vertex-and-vertex}
        Let $\triangle u v w$ be a triangle of $X$, such that there are some locally maximal cells $\sigma_1 \neq \sigma_2 \in X$ with $\sigma_1, \sigma_2 \neq \triangle u v w$, such that 

        \begin{align*}
            \sigma_1 \cap \triangle u v w &= \overline{uv} \\
            \sigma_2 \cap \triangle u v w &= v \\
        \end{align*}
        
        and for all other locally maximal cells $\tau \in X$, either

        \begin{enumerate}
            \item $\tau \cap \triangle u v w = \overline{uv}$,
            \item $\tau \cap \triangle u v w = v$,
            \item $\tau \cap \triangle u v w = \emptyset$.
        \end{enumerate}
        
        Then, there are exactly three partitions $P_1, P_2, P_2$ of $P_{NLM}$ which witness $\mathcal{T}$, where $\mathcal{T}$ is the edge spanning component corresponding to $\triangle u v w$.  Further, $P_1$ has label $0$, $P_2$ label $1$ and $P_3$ label $4$ by \Cref{alg:P1class,alg:P2class}.
    \end{lemma}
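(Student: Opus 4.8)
The plan is to first pin down the combinatorial configuration forced by the hypotheses, then read off which parts of $P_{NLM}$ witness the triangle spanning component $\mathcal{T}$ associated to $\triangle uvw$, and finally trace these parts through \Cref{alg:pnlmpart}, \Cref{alg:Piorder}, \Cref{alg:P1class} and \Cref{alg:P2class} to recover the labels. First I would observe that since $\sigma_1$ is locally maximal with $\sigma_1\cap\triangle uvw=\overline{uv}$, and $\overline{uv}$ is itself a face of $\triangle uvw$ (hence not locally maximal), $\sigma_1$ must be a second triangle having $\overline{uv}$ in its boundary; in particular $\overline{uv}$ has degree at least $2$. The cells $u,v,w,\overline{uv},\overline{uw},\overline{vw}$ are all faces of $\triangle uvw$ and thus not locally maximal. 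The intersection hypotheses force $w$ to be a corner of $\triangle uvw$ only, $\overline{uw}$ and $\overline{vw}$ to be degree-$1$ edges of $\triangle uvw$ only, $u$ to lie only on the triangles sharing $\overline{uv}$, while $v$ additionally carries $\sigma_2$ (a locally maximal edge or triangle meeting $\triangle uvw$ only at $v$) and possibly further cells. Existence and uniqueness of $\mathcal{T}$ come from \Cref{prop:equivlmspantri}, and membership of the relevant samples in $P_{NLM}$ from \Cref{prop:nlmvertex} and \Cref{prop:nlmedge}. Using $\kappa(R+\varepsilon)<6(R+\varepsilon)$ together with the separation bounds in \Cref{as:embed}, I would conclude that the only samples of $P_{NLM}$ witnessing $\mathcal{T}$ are those within $\kappa(R+\varepsilon)$ of $\partial\triangle uvw$, i.e.\ those near $u,v,w,\overline{uv},\overline{uw},\overline{vw}$.

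Next I would compute the witnessed spanning sets $\left(S_E,S_T\right)$ region by region. Samples near $w,\overline{uw},\overline{vw}$ are within witnessing range of $\triangle uvw$ only, so there $S_E=\emptyset$ and $S_T=\{\mathcal{T}\}$; these form a single part, which I name $P_3$. Samples near $u$ and near the interior of $\overline{uv}$ witness every triangle on $\overline{uv}$, so $S_E=\emptyset$ and $\left|S_T\right|\ge 2$; these coincide and form $P_2$. Samples near $v$ additionally witness the cells supplied by $\sigma_2$ (and any further cells at $v$), yielding a pair $\left(S_E,S_T\right)$ that is either $S_E$-nonempty or has strictly larger $S_T$; these form $P_1$. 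The delicate step is showing these are \emph{exactly} the parts witnessing $\mathcal{T}$ after the merge loop of \Cref{alg:pnlmpart}: $P_3$ survives because $S_E(P_3)=\emptyset$ fails the first merge test and $\left|S_T(P_3)\right|=1$ fails the second; $P_2$ is not merged into $P_1$ because, although $S_T(P_2)\subseteq S_T(P_1)$, the part $P_2$ contains samples near the interior of $\overline{uv}$ whose signature is $\sig_{\varepsilon,R}=(1,n)$ with $n\ge 1$, so the requirement that every sample have signature of the form $(n,0)$ fails. I expect this merge analysis to be the main obstacle, since it requires simultaneously tracking the geometry of the degree-$\ge 2$ edge $\overline{uv}$ and the exact merge predicates.

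Finally I would run the labelling. By \Cref{alg:Piorder}, $P_3$ has $S_E=\emptyset$, all of its samples satisfy $\sig_{\varepsilon,R}=(1,n)$, and $\left|S_T\right|=1$, so $P_3\in P^2$; whereas $P_2\in P^1$ because $\left|S_T(P_2)\right|\neq 1$. Applying \Cref{alg:P1class} to $P_2$, the set $N(P_2)=\bigcap_{\mathcal{T}''\in S_T(P_2)}LN(\mathcal{T}'')$ equals $\{P_1,P_2\}$, since the only parts witnessing \emph{every} triangle on $\overline{uv}$ are those near its shared endpoints $u$ and $v$; hence $P_2$ receives label $1$ and $P_1$ receives label $0$. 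Applying \Cref{alg:P2class} to $P_3$, with $S_T(P_3)=\{\mathcal{T}\}$ one computes $LN(\mathcal{T})=\{P_1,P_2,P_3\}$ and $LN(\mathcal{T})\cap P^2=\{P_3\}$; since $P_1$ already carries label $0$ and $P_2$ carries label $1$, the matching branch assigns $P_3$ the label $4$. This produces exactly the three parts with labels $0,1,4$, as claimed.
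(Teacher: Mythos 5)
Your proposal is correct, and its skeleton matches the paper's: identify the three witnessing regions (near $v$; near $u$ together with the interior of $\overline{uv}$; near $w$ and the interiors of $\overline{uw}$, $\overline{vw}$), determine their witnessed spanning sets, and read the labels off the algorithms. The paper's own proof, however, is a two-line adaptation-by-reference: it says the argument combines the proofs of \Cref{lem:lmtriangle-share-1-vertex,lem:lmtriangle-share-1-edge-with-vertices}, asserts the witnessing pattern of $P_1, P_2, P_3$, and then simply states the labels. What you add is precisely the content the paper asserts without verifying: (i) the merge-loop analysis in \Cref{alg:pnlmpart}, in particular that $P_2$ is \emph{not} absorbed into $P_1$ because samples near the interior of the degree-$\geq 2$ edge $\overline{uv}$ have $\sig_{\varepsilon,R}(p)=(1,n)$ with $n\geq 1$, so the $(n,0)$ predicate in the second merge test fails, while $P_3$ escapes both tests since $S_E(P_3)=\emptyset$ and $|S_T(P_3)|=1$; and (ii) the explicit branch-tracing through \Cref{alg:Piorder,alg:P1class,alg:P2class}, computing $N(P_2)=\{P_1,P_2\}$ and $LN(\mathcal{T})\cap P^2=\{P_3\}$. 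Notably, your trace through \Cref{alg:P2class} (the branch requiring $LN=\{P_i,P_k,P_l\}$ with $P_k$ of label $0$ and $P_l$ of label $1$) yields label $4$ for $P_3$, which agrees with the lemma statement; the paper's proof text instead ends by assigning $P_3$ the label $3$, an internal inconsistency that your more careful derivation shows is a typo in the paper's proof rather than in the statement. One small caveat: your labelling of $P_1$ via \Cref{alg:P1class} implicitly relies on $P_2$ being processed (which inserts $P_1$ with label $0$) --- if $\sigma_2$ is a triangle and $P_1$ were processed in isolation, $N(P_1)=\{P_1\}$ matches no branch --- but this order-dependence is a fragility of the paper's algorithm, not a gap introduced by your argument.
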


    \begin{lemma}\label{lem:share-edge-with-vertices-and-vertex}
        Let $\triangle u v w$ be a triangle of $X$, such that there are some locally maximal cells $\sigma_1 \neq \sigma_2 \in X$ with $\sigma_i \neq \triangle u v w$ and $\sigma_i \neq \sigma_j$ for $i \neq j$, such that 

        \begin{align*}
            \sigma_1 \cap \triangle u v w &= \overline{uv} \\
            \sigma_2 \cap \triangle u v w &= w \\
        \end{align*}
        
        and for all other locally maximal cells $\tau \in X$, either

        \begin{enumerate}
            \item $\tau \cap \triangle u v w = \overline{uv}$,
            \item $\tau \cap \triangle u v w = w$,
            \item $\tau \cap \triangle u v w = \emptyset$.
        \end{enumerate}
        
        Then, there are exactly three partitions $P_1, P_2, P_2$ of $P_{NLM}$ which witness $\mathcal{T}$, where $\mathcal{T}$ is the edge spanning component corresponding to $\triangle u v w$.  Further, $P_1$ has label $0$, $P_2$ label $2$ and $P_3$ label $9$ by \Cref{alg:P1class,alg:P2class}.
    \end{lemma}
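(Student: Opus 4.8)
The plan is to reduce the statement to a local analysis of the samples of $P_{NLM}$ lying in the collar of $\partial \triangle uvw$, and then to trace these samples through \Cref{alg:pnlmpart,alg:Piorder,alg:P1class,alg:P2class}. The guiding principle is that the three required partitions should correspond to the three pieces into which the intersection pattern breaks the boundary of the triangle: the far vertex $w$ (shared with $\sigma_2$), the degree-two edge $\overline{uv}$ together with its endpoints $u,v$ (shared with $\sigma_1$), and the two remaining degree-one edges $\overline{vw}$ and $\overline{wu}$. These should receive labels $0$, $2$, and $9$ respectively.

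First I would use \Cref{as:embed} to localise. Since $\sigma_1\cap\triangle uvw=\overline{uv}$, $\sigma_2\cap\triangle uvw=w$, and every other locally maximal cell meets $\triangle uvw$ in $\overline{uv}$, $w$, or $\emptyset$, the separation bounds (distance $>6(R+\varepsilon)$ between non-incident cells) guarantee that the only cells within the witnessing radius $\kappa(R+\varepsilon)$ of a sample near $\partial\triangle uvw$ are $\triangle uvw$, $\sigma_1$, $\sigma_2$, and the faces they share with $\triangle uvw$. By \Cref{prop:equivlmspantri} the triangle $\triangle uvw$ has a well-defined spanning component $\mathcal{T}$; likewise $\sigma_1$ gives a triangle spanning component $\mathcal{T}_1$, and $\sigma_2$ gives either a triangle or an edge spanning component.

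Next I would identify $P_{NLM}$ near the triangle and compute the witnessing signatures $(S_E,S_T)$ of \Cref{alg:pnlmpart}. By \Cref{prop:nlmvertex} the samples within $\tfrac{R}{2}-2\varepsilon$ of each of $u,v,w$ lie in $P_{NLM}$, and by \Cref{prop:nlmedge} so do those within $\tfrac{R}{2}-2\varepsilon$ of each boundary edge; in particular the two degree-one edges $\overline{vw},\overline{wu}$ yield samples with $\sig_{\varepsilon,R}=(1,0)$. Using the edge-length bound $\lVert u-v\rVert>6(R+\varepsilon)>\kappa(R+\varepsilon)$, a sample in the middle of a boundary edge witnesses only $\mathcal{T}$, a sample near $\overline{uv}$ or near $u,v$ witnesses exactly $\{\mathcal{T},\mathcal{T}_1\}$, and a sample near $w$ witnesses $\mathcal{T}$ together with the component of $\sigma_2$. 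Grouping by equal $(S_E,S_T)$, and checking that the merge step of \Cref{alg:pnlmpart} does not collapse the pure-$\{\mathcal{T}\}$ class (its $S_E$ is empty and $|S_T|=1$, so neither merge condition fires), yields exactly three classes that witness $\mathcal{T}$: the class $P_1$ around $w$, the class $P_2$ around $\overline{uv}\cup\{u,v\}$, and the class $P_3$ around $\overline{vw}\cup\overline{wu}$.

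Finally I would run the classification. \Cref{alg:Piorder} sends $P_1,P_2$ (which have $|S_T|\ge 2$ or nonempty $S_E$) to $P^1$ and $P_3$ (with $S_E=\emptyset$, $|S_T|=1$, and all $\sig_{\varepsilon,R}=(1,0)$) to $P^2$. In \Cref{alg:P1class}, $P_1$ is picked up by the branch detecting the vertex $w$ and labelled $0$, while $P_2$ falls through the $N(P_i)$-intersection tests, since $N(P_2)=LN(\mathcal{T})\cap LN(\mathcal{T}_1)=\{P_2\}$, and is caught by the final leftover rule with label $2$; \Cref{alg:P2class} then labels $P_3$ from the neighbourhood $LN(\mathcal{T})=\{P_1,P_2,P_3\}$ and the already-assigned labels of $P_1,P_2$. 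The hard part will be exactly this last step: the labels are mutually dependent through the $LN$ and $N$ sets, so I must verify both that the corner witnessing is computed correctly (this is where the angle condition and the inscribed-circle bound of \Cref{as:embed} are needed to exclude spurious cross-witnessing between the three corners) and that the resulting $LN$-configuration lands in the intended branch yielding label $9$. Distinguishing the sub-case in which $\sigma_2$ is an edge, so that $S_E(P_1)\neq\emptyset$ routes $P_1$ cleanly to label $0$, from the sub-case in which $\sigma_2$ is a triangle, where the same label must instead be forced through the $N(P_1)$ test, is the most delicate piece of the bookkeeping.
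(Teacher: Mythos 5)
Your proposal follows essentially the same route as the paper's proof: identify the three witnessing classes of $P_{NLM}$ (samples near $w$ witnessing $\sigma_2$'s spanning component but not $\sigma_1$'s, samples near $\overline{uv}\cup\{u,v\}$ witnessing $\sigma_1$'s component but not $\sigma_2$'s, and samples near $\overline{vw}\cup\overline{uw}$ witnessing only $\mathcal{T}$), and then read off the labels $0$, $2$, $9$ from \Cref{alg:P1class,alg:P2class}. The paper's own proof is in fact terser --- it just combines the arguments of \Cref{lem:lmtriangle-share-1-vertex,lem:lmtriangle-share-1-edge-with-vertices} and asserts the labelling --- so your explicit trace through the algorithm branches (the leftover rule producing label $2$, and the sub-case split on whether $\sigma_2$ is an edge or a triangle when routing $P_1$ to label $0$) supplies detail the paper leaves implicit rather than diverging from it.
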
 

    \begin{lemma}\label{lem:share-three-vertex}
        Let $\triangle u v w$ be a triangle of $X$, such that there are some locally maximal cells $\sigma_1, \sigma_2, \sigma_3 \in X$ with $\sigma_i \neq \triangle u v w$ and $\sigma_i \neq \sigma_j$ for $i \neq j$, such that 

        \begin{align*}
            \sigma_1 \cap \triangle u v w &= u \\
            \sigma_2 \cap \triangle u v w &= v \\
            \sigma_3 \cap \triangle u v w &= w \\
        \end{align*}
        
        and for all other locally maximal cells $\tau \in X$, either

        \begin{enumerate}
            \item $\tau \cap \triangle u v w = u$,
            \item $\tau \cap \triangle u v w = v$,
            \item $\tau \cap \triangle u v w = w$,
            \item $\tau \cap \triangle u v w = \emptyset$.
        \end{enumerate}
        
        Then, there are exactly four partitions $P_1, P_2, P_3, P_4$ of $P_{NLM}$ which witness $\mathcal{T}$, where $\mathcal{T}$ is the edge spanning component corresponding to $\triangle u v w$.  Further, $P_1, P_2$ and $P_3$ are labelled with $0$ and $P_4$ with $8$ by \Cref{alg:P1class,alg:P2class}.
    \end{lemma}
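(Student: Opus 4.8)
The plan is to first understand the local combinatorial picture around $\triangle uvw$, then translate it into the $S_E$/$S_T$ data that \Cref{alg:pnlmpart} uses to partition $P_{NLM}$, and finally trace \Cref{alg:Piorder,alg:P1class,alg:P2class} to read off the labels. First I would identify the cells that are \emph{not} locally maximal in a neighbourhood of $\triangle uvw$. Since $u,v,w$ each bound $\triangle uvw$, none of them is locally maximal; likewise the three edges $\overline{uv},\overline{vw},\overline{wu}$ bound $\triangle uvw$, and by the hypothesis that every other locally maximal cell meets $\triangle uvw$ only in $\{u,v,w\}$ or not at all, no other triangle shares any of these edges, so each is a degree-$1$ edge and hence not locally maximal. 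By \Cref{prop:nlmedge} every sample within $\frac{R}{2}-2\varepsilon$ of one of these edges lies in $P_{NLM}$, and by \Cref{prop:nlmvertex} the same holds for samples within $\frac{R}{2}-2\varepsilon$ of $u,v$ or $w$; the separation bounds in \Cref{as:embed} guarantee these neighbourhoods are the only part of $P_{NLM}$ that can witness the (unique, by \Cref{prop:equivlmspantri}) spanning component $\mathcal T$ of $\triangle uvw$.

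Next I would compute the witnessed spanning components. A sample near the interior of one of the three edges is, by the separation bounds of \Cref{as:embed}, within $\kappa(R+\varepsilon)$ of $\mathcal T$ and of no other spanning component, so for all such samples $S_E=\emptyset$ and $S_T=\{\mathcal T\}$; they therefore form a single block $P_4$. A sample near $u$ witnesses $\mathcal T$ together with the spanning component(s) of the cell(s) meeting $\triangle uvw$ at $u$, in particular $\sigma_1$ (which must be a locally maximal edge or triangle through $u$, never a vertex), and analogously for $v$ with $\sigma_2$ and $w$ with $\sigma_3$. Because $\sigma_1,\sigma_2,\sigma_3$ are distinct and attach at distinct vertices, the three vertex neighbourhoods carry three pairwise distinct pairs $(S_E,S_T)$, each distinct from that of $P_4$; hence exactly four blocks $P_1,P_2,P_3,P_4$ witness $\mathcal T$, with $P_1,P_2,P_3$ the vertex blocks. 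The merge step of \Cref{alg:pnlmpart} collapses none of these, since $P_4$ has $S_E(P_4)=\emptyset$ and $|S_T(P_4)|=1$, failing both merge conditions, while the vertex blocks have incomparable witnessed sets.

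Finally I would run the classification. In \Cref{alg:Piorder}, $P_4$ has $S_E=\emptyset$, every one of its samples has signature $(1,0)$, which is of the form $(1,n)$, and $|S_T(P_4)|=1$, so $P_4$ is placed in $P^2$; each $P_i$ ($i\le 3$) has either $S_E\neq\emptyset$ (when $\sigma_i$ is an edge) or $|S_T|\ge 2$ (when $\sigma_i$ is a triangle), so is placed in $P^1$. For the vertex blocks in $P^1$, \Cref{alg:P1class} should assign label $0$: when $\sigma_i$ is an edge this is the branch $S_E(P_i)\neq\emptyset$, and when $\sigma_i$ is a triangle it follows from the computation $N(P_i)=\bigcap_{\mathcal T'\in S_T(P_i)}LN(\mathcal T')$, consistently with the labelling forced on the same block when it is viewed from $\sigma_i$'s side by the companion cases (\Cref{lem:lmtriangle-share-1-vertex} and its analogues). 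Once $P_1,P_2,P_3$ carry label $0$, \Cref{alg:P2class} processes $P_4\in P^2$: here $LN=\{P_1,P_2,P_3,P_4\}$ with the three vertex blocks already labelled $0$, which is exactly the four-element branch that assigns $P_4$ the label $8$.

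The main obstacle is this last step, and specifically the interdependence of the two classification passes together with the sub-case where the $\sigma_i$ are triangles. \Cref{alg:P2class} can only award $P_4$ the label $8$ after $P_1,P_2,P_3$ have been given label $0$, and when $\sigma_i$ is a triangle that label does not come from a simple $S_E\neq\emptyset$ test but from the intersection $N(P_i)=\bigcap_{\mathcal T'\in S_T(P_i)}LN(\mathcal T')$, which must be shown to isolate precisely the single shared vertex (allowing for the case where several cells attach at the same vertex). Verifying this, and checking that it agrees with how the adjacent lemmas label the very same blocks, is the delicate bookkeeping; by contrast the geometric input — four blocks with the stated witnessed sets — is comparatively direct once \Cref{prop:nlmvertex,prop:nlmedge,prop:equivlmspantri} and the separation bounds of \Cref{as:embed} are in hand.
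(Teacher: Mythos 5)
Your proposal follows essentially the same route as the paper's proof: identify the four partitions by their witnessed spanning components (three vertex blocks each witnessing $\mathcal{T}$ together with exactly one $\mathcal{C}_i$, and one block witnessing only $\mathcal{T}$), then read the labels $0,0,0,8$ off \Cref{alg:P1class,alg:P2class}. You are in fact more explicit than the paper, whose proof merely states that it is an adaptation of \Cref{lem:lmtriangle-share-1-vertex,lem:lmtriangle-share-1-edge-with-vertices}, asserts the witness structure, and concludes the labels without tracing the classification algorithms or the merge step as you do.
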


    \begin{lemma}\label{lem:share-edge-with-vertex-and-2-vertices}
        Let $\triangle u v w$ be a triangle of $X$, such that there are some locally maximal cells $\sigma_1, \sigma_2, \sigma_3 \in X$ with $\sigma_i \neq \triangle u v w$ and $\sigma_i \neq \sigma_j$ for $i \neq j$, such that 

        \begin{align*}
            \sigma_1 \cap \triangle u v w &= \overline{uv}\\
            \sigma_2 \cap \triangle u v w &= v \\
            \sigma_3 \cap \triangle u v w &= w \\
        \end{align*}
        
        and for all other locally maximal cells $\tau \in X$, either

        \begin{enumerate}
            \item $\tau \cap \triangle u v w = \overline{uv}$,
            \item $\tau \cap \triangle u v w = v$,
            \item $\tau \cap \triangle u v w = w$,
            \item $\tau \cap \triangle u v w = \emptyset$.
        \end{enumerate}
        
        Then, there are exactly four partitions $P_1, P_2, P_3, P_4$ of $P_{NLM}$ which witness $\mathcal{T}$, where $\mathcal{T}$ is the edge spanning component corresponding to $\triangle u v w$.  Further, $P_1$ is labelled with $1$, $P_2, P_3$ with $0$ and $P_4$ with $9$ by \Cref{alg:P1class,alg:P2class}.
    \end{lemma}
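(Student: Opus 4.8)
The plan is to describe $P_{NLM}$ near $\partial\triangle uvw$ explicitly, record for each local region which spanning components it witnesses, and then trace these witnessed sets through \Cref{alg:pnlmpart,alg:Piorder,alg:P1class,alg:P2class}. First I would use \Cref{prop:nlmvertex,prop:nlmedge} to see that the samples of $P_{NLM}$ relevant to $\mathcal{T}$ are exactly those lying within $\frac{R}{2}-2\varepsilon$ of one of the six boundary cells $u,v,w,\overline{uv},\overline{vw},\overline{wu}$, and I would use the separation bounds of \Cref{as:embed} (non-incident cells are more than $6(R+\varepsilon)$ apart, against a witnessing radius of only $\kappa(R+\varepsilon)$) to conclude that such a sample witnesses only the spanning components of cells incident to its nearest boundary cell. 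The incidence data supplied by the hypotheses is: $\overline{uv}$ has degree at least $2$, the edges $\overline{vw},\overline{wu}$ have degree $1$, the vertex $u$ touches only the triangle-fan along $\overline{uv}$, and $v,w$ additionally carry $\sigma_2,\sigma_3$.

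Next I would compute the witnessed pair $(S_E,S_T)$ region by region. Writing $\mathcal{F}$ for the set of triangle-spanning components of the fan on $\overline{uv}$ (so the component $\mathcal{T}$ of $\triangle uvw$ and the component $\mathcal{T}_1$ of $\sigma_1$ both lie in $\mathcal{F}$), the regions near the interior of $\overline{uv}$ and near $u$ both witness exactly $(\emptyset,\mathcal{F})$; the regions near the interiors of the degree-$1$ edges $\overline{vw},\overline{wu}$ both witness exactly $(\emptyset,\{\mathcal{T}\})$; and the regions near $v$ and near $w$ witness strictly larger sets, obtained by adjoining the component(s) of $\sigma_2$, respectively $\sigma_3$. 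Since \Cref{alg:pnlmpart} partitions $P_{NLM}$ precisely by the value of $(S_E,S_T)$, this collapses to four classes witnessing $\mathcal{T}$: $P_1$ (the $\overline{uv}$-and-$u$ region), $P_2$ (near $v$), $P_3$ (near $w$), and $P_4$ (the two degree-$1$ edge interiors). Establishing that there are exactly these four — in particular that $u$ merges with the edge $\overline{uv}$ while $v,w$ stay separate, and that the two free edges coincide in one class — is the core geometric content.

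I would then check that the merge loop of \Cref{alg:pnlmpart} leaves these classes intact. The only candidate is $P_1$ into $P_2$, since $S_T(P_1)=\mathcal{F}\subseteq S_T(P_2)$; but the merge is blocked because $S_E(P_1)=\emptyset$ and, on the interior of $\overline{uv}$, \Cref{prop:nlmedge} gives $\sig_{\varepsilon,R}=(1,1)$, which is not of the form $(n,0)$, so neither clause of the merge test fires. The same signature computation (value $(1,1)$ on $\overline{uv}$, hence $|S_T(P_1)|\geq 2$, and value $(1,0)$ on the degree-$1$ edges with $|S_T(P_4)|=1$) places $P_1$ in $P^1$ and $P_4$ in $P^2$ when \Cref{alg:Piorder} is applied.

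Finally I would run the labelling. For $P_1$, the closing branch of \Cref{alg:P1class} forms $N(P_1)=\bigcap_{\mathcal{S}\in S_T(P_1)}LN(\mathcal{S})$, and the decisive fact is that the only classes witnessing both $\mathcal{T}$ and $\mathcal{T}_1$ are $P_1$ and $P_2$: this holds because $\sigma_1$ meets $\triangle uvw$ along $\overline{uv}$, whose endpoints are $u$ (already absorbed into $P_1$) and $v$ (giving $P_2$), while $w$ and the two degree-$1$ edges lie off $\sigma_1$. Thus $N(P_1)=\{P_1,P_2\}$ and the algorithm assigns $P_1$ label $1$ and $P_2$ label $0$; $P_3$ receives label $0$ from the incidence branch, and $P_4\in P^2$ receives label $9$ from \Cref{alg:P2class}. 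The main obstacle is exactly this bookkeeping: one must show $N(P_1)$ is no larger than $\{P_1,P_2\}$, and must verify that the two sub-cases — $\sigma_2,\sigma_3$ locally maximal edges versus triangles — both lead to the same four labels, since they change the individual local signatures at $v$ and $w$ but must not change the final classification.
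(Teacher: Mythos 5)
Your strategy coincides with the paper's: classify the regions of $P_{NLM}$ along $\partial\triangle uvw$ by which spanning components they witness, then trace \Cref{alg:pnlmpart,alg:Piorder,alg:P1class,alg:P2class}. Your execution is in fact considerably more careful than the paper's own proof, which is garbled at this lemma: it announces ``three partitions'' while listing four, and its closing sentence assigns labels $0,0,0,8$ --- a carry-over from the proof of \Cref{lem:share-three-vertex} --- contradicting the labels $1,0,0,9$ asserted in the statement (and relied on in Case~4 of \Cref{thm:recover}). Your identification of the four witness classes (the $u$-plus-$\overline{uv}$ region, the regions at $v$ and at $w$, and the two free edge interiors), your check that the merge loop of \Cref{alg:pnlmpart} cannot collapse $P_1$ into $P_2$, and your computation $N(P_1)=\{P_1,P_2\}$, giving $P_1$ label $1$ and $P_2$ label $0$, are correct and constitute the argument the paper should have given.

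However, two of your labelling claims do not survive a literal trace of the algorithms, and one of them is exactly the sub-case you defer without discharging. First, ``$P_3$ receives label $0$ from the incidence branch'' holds only when $\sigma_3$ is a locally maximal edge, for then $S_E(P_3)\neq\emptyset$ and the second branch of \Cref{alg:P1class} fires. When $\sigma_3$ is a triangle meeting $\triangle uvw$ only at $w$, we have $S_E(P_3)=\emptyset$, so \Cref{alg:P1class} computes $N(P_3)=LN(\mathcal{T})\cap LN(\mathcal{C}_3)$; every partition other than $P_3$ that witnesses $\mathcal{C}_3$ lies along the rest of $\partial\sigma_3$ and does not witness $\mathcal{T}$, hence $N(P_3)=\{P_3\}$ is a singleton, neither the two- nor the three-element branch matches, and the terminal catch-all assigns $P_3$ label $2$, not $0$. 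Second, the assertion that $P_4$ ``receives label $9$ from \Cref{alg:P2class}'' is never verified: for $P_4$ one gets $LN=\{P_1,P_2,P_3,P_4\}$ and $LN\cap P^2=\{P_4\}$, but the unique branch of \Cref{alg:P2class} emitting label $9$ requires $LN$ to contain five partitions, three labelled $0$ and one labelled $3$, while the four-element branch (label $8$) requires the other three to carry label $0$, which fails precisely because $P_1$ carries label $1$; so no branch fires and $P_4$ is never labelled. (Note that the paper's garbled conclusion $0,0,0,8$ is the one configuration under which the four-element branch \emph{would} fire, which is presumably how the inconsistency arose.) These failures are arguably defects of the paper's algorithms as much as of your write-up, but a complete proof of the stated labelling must either repair these traces or justify a non-literal reading of the branching; as it stands, the conclusion for $P_3$ (triangle sub-case) and for $P_4$ is not established.
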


    \begin{lemma}\label{lem:share-edge-with-2-vertices}
        Let $\triangle u v w$ be a triangle of $X$, such that there are some locally maximal cells $\sigma_1, \sigma_2, \sigma_3 \in X$ with $\sigma_i \neq \triangle u v w$ and $\sigma_i \neq \sigma_j$ for $i \neq j$, such that 

        \begin{align*}
            \sigma_1 \cap \triangle u v w &= \overline{uv}\\
            \sigma_2 \cap \triangle u v w &= u \\
            \sigma_3 \cap \triangle u v w &= v \\
        \end{align*}
        
        and for all other locally maximal cells $\tau \in X$, either

        \begin{enumerate}
            \item $\tau \cap \triangle u v w = \overline{uv}$,
            \item $\tau \cap \triangle u v w = u$,
            \item $\tau \cap \triangle u v w = v$,
            \item $\tau \cap \triangle u v w = \emptyset$.
        \end{enumerate}
        
        Then, there are exactly four partitions $P_1, P_2, P_3, P_4$ of $P_{NLM}$ which witness $\mathcal{T}$, where $\mathcal{T}$ is the edge spanning component corresponding to $\triangle u v w$.  Further, $P_1$ is labelled with $3$, $P_2, P_3$ with $0$ and $P_4$ with $4$ by \Cref{alg:P1class,alg:P2class}.
    \end{lemma}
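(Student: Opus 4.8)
The strategy is to pin down, cell by cell, which samples of $P_{NLM}$ lie near $\triangle uvw$, record for each the witnessing data $\bigl(S_E,S_T\bigr)$ of \Cref{def:witness-spanning}, and then feed the resulting equivalence classes through \Cref{alg:pnlmpart,alg:Piorder,alg:P1class,alg:P2class} to read off that exactly four classes witness $\mathcal{T}$ and that they carry the labels $3,0,0,4$. Throughout, write $\mathcal{T}$ for the triangle spanning component of $\triangle uvw$ and $\mathcal{T}_{\sigma_1}$ for that of $\sigma_1$; note $\sigma_1$ must be a $2$-cell, since a locally maximal cell meeting $\triangle uvw$ in the full edge $\overline{uv}$ cannot equal the (non-locally-maximal) face $\overline{uv}$.

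\emph{Locating and classifying the nearby samples.} By \Cref{prop:nlmvertex,prop:nlmedge} every sample within $\tfrac{R}{2}-2\varepsilon$ of one of $u,v,w,\overline{uv},\overline{uw},\overline{vw}$ lies in $P_{NLM}$, and by the observation preceding \Cref{def:witness-spanning} each such sample witnesses $\mathcal{T}$; conversely the separation bounds of \Cref{as:embed} (every non-incident pair of cells is more than $6(R+\varepsilon)$ apart, comfortably exceeding the witnessing radius $\kappa(R+\varepsilon)$) force any sample witnessing $\mathcal{T}$ to sit near $\partial\triangle uvw$. I would then compute the four signatures. Near the interior of $\overline{uv}$ the only locally maximal cells within reach are $\triangle uvw$ and the triangle(s) sharing $\overline{uv}$, so $\bigl(S_E,S_T\bigr)=\bigl(\emptyset,\{\mathcal{T},\mathcal{T}_{\sigma_1},\dots\}\bigr)$; these form $P_1$. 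Because $w$ and the interiors of $\overline{uw},\overline{vw}$ are incident to no locally maximal cell besides $\triangle uvw$ and are more than $6(R+\varepsilon)$ from $\sigma_1,\sigma_2,\sigma_3$, their samples have signature $\bigl(\emptyset,\{\mathcal{T}\}\bigr)$ and constitute $P_4$. Samples near $u$ additionally witness $\mathcal{T}_{\sigma_1}$ and the spanning component(s) of $\sigma_2$, and those near $v$ witness $\mathcal{T}_{\sigma_1}$ and those of $\sigma_3$; since $\sigma_2\neq\sigma_3$ have distinct spanning components lying more than $6(R+\varepsilon)$ apart, these signatures differ from each other and from $P_1,P_4$, giving $P_2$ and $P_3$. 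As the classes of \Cref{alg:pnlmpart} are exactly the fibres of $p\mapsto\bigl(S_E(p),S_T(p)\bigr)$, precisely four classes contain $\mathcal{T}$ in their $S_T$; a check that the merge clauses of \Cref{alg:pnlmpart} do not fire on $P_4$ (its $S_E$ is empty and $|S_T|=1$) confirms that four distinct partitions survive.

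\emph{Reading off the labels.} In \Cref{alg:Piorder}, $P_1,P_2,P_3$ all have $|S_T|\geq 2$ (each witnesses $\mathcal{T}$ and $\mathcal{T}_{\sigma_1}$), so they enter $P^1$, while $P_4$ has $S_E=\emptyset$, uniform $\sig_{\varepsilon,R}=(1,0)$, and $|S_T|=1$, so it enters $P^2$. In \Cref{alg:P1class} the decisive step is $P_1$: since $S_E(P_1)=\emptyset$ we form $N(P_1)=\bigcap_{\mathcal{T}'\in S_T(P_1)}LN(\mathcal{T}')$, where $LN(\mathcal{T})=\{P_1,P_2,P_3,P_4\}$ and each $LN(\mathcal{T}_{\sigma_j})$ meets $\{P_1,\dots,P_4\}$ in $\{P_1,P_2,P_3\}$ (a triangle on $\overline{uv}$ touches $u,v$ but not $w$), whence $N(P_1)=\{P_1,P_2,P_3\}$ and the three-element branch returns label $3$ for $P_1$ and label $0$ for $P_2,P_3$. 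This label for $P_2,P_3$ is robust to whether $\sigma_2,\sigma_3$ are edges or triangles: in the edge case the direct branch ``$S_E(P_i)\neq\emptyset$'' already yields $0$, and in the triangle case any later visit computes $N(P_2)=\{P_2\}$ (resp.\ $\{P_3\}$) and leaves the label untouched. Finally \Cref{alg:P2class} classifies $P_4$ from $LN=\{P_1,P_2,P_3,P_4\}$, in which $P_2,P_3$ carry label $0$ and $P_1$ carries label $3$: this is the configuration that returns label $4$.

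\emph{Main obstacle.} The genuine difficulty is combinatorial bookkeeping rather than any single estimate. One must verify that $\kappa(R+\varepsilon)$ is small enough relative to the $6(R+\varepsilon)$ separations (and the inscribed-radius bound $\geq 2R+3\varepsilon$ on $\triangle uvw$) that the four signatures are genuinely pairwise distinct and that the transition zones along $\overline{uw},\overline{vw}$ attach to $P_4$ rather than to $P_2$ or $P_3$; and one must trace the interlocking conditionals of \Cref{alg:P1class,alg:P2class}, together with the order in which classes are committed to the labelled list $C$ and the ``unless already in $C$'' guards, to be sure the outcome is exactly $(3,0,0,4)$ and that $P_2,P_3$ are never captured by the default ``leftover'' label in \Cref{alg:P1class}.
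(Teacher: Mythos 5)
Your proposal is correct and takes essentially the same approach as the paper: both decompose the samples witnessing $\mathcal{T}$ into the same four classes --- the interior of $\overline{uv}$ (witnessing $\mathcal{T}$ and the spanning component of $\sigma_1$), neighbourhoods of $u$ and of $v$ (additionally witnessing the components of $\sigma_2$, respectively $\sigma_3$), and $w \cup \overline{uw} \cup \overline{vw}$ (witnessing only $\mathcal{T}$) --- and then read off the labels $3, 0, 0, 4$. In fact the paper's own proof is far terser, merely citing \Cref{lem:lmtriangle-share-1-vertex,lem:lmtriangle-share-1-edge-with-vertices} and asserting this witnessing pattern and the resulting labels, so your explicit tracing of \Cref{alg:Piorder,alg:P1class,alg:P2class} (including the robustness of the labels of $P_2, P_3$ to whether $\sigma_2, \sigma_3$ are edges or triangles) supplies detail the paper leaves implicit.
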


    \begin{lemma}\label{lem:share-2-edge-with-vertex-and-vertex}
        Let $\triangle u v w$ be a triangle of $X$, such that there are some locally maximal cells $\sigma_1, \sigma_2, \sigma_3 \in X$ with $\sigma_i \neq \triangle u v w$ and $\sigma_i \neq \sigma_j$ for $i \neq j$, such that 

        \begin{align*}
            \sigma_1 \cap \triangle u v w &= \overline{uv}\\
            \sigma_2 \cap \triangle u v w &= \overline{vw} \\
            \sigma_3 \cap \triangle u v w &= v \\
        \end{align*}
        
        and for all other locally maximal cells $\tau \in X$, either

        \begin{enumerate}
            \item $\tau \cap \triangle u v w = \overline{uv}$,
            \item $\tau \cap \triangle u v w = \overline{vw}$,
            \item $\tau \cap \triangle u v w = v$,
            \item $\tau \cap \triangle u v w = \emptyset$.
        \end{enumerate}
        
        Then, there are exactly four partitions $P_1, P_2, P_3, P_4$ of $P_{NLM}$ which witness $\mathcal{T}$, where $\mathcal{T}$ is the edge spanning component corresponding to $\triangle u v w$.  Further, $P_1$ is labelled with $0$, $P_2, P_3$ with $1$, and $P_3$ with $3$ by \Cref{alg:P1class,alg:P2class}.
    \end{lemma}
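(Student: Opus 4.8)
The plan is to read off the non-locally-maximal cells forced by the intersection hypotheses, compute the local signatures and witness sets of the samples near each such cell, and then run these through \Cref{alg:pnlmpart,alg:Piorder,alg:P1class,alg:P2class} to obtain the labels.

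First I would record the combinatorial picture. Since $\sigma_1,\sigma_2$ are locally maximal and meet $\triangle uvw$ in the edges $\overline{uv}$ and $\overline{vw}$, each must be a triangle containing that edge, so $\overline{uv}$ and $\overline{vw}$ both have degree at least $2$ and are not locally maximal; the allowed intersection list forbids any cell from meeting $\triangle uvw$ along $\overline{uw}$, so $\overline{uw}$ is a free (degree-$1$) edge, also not locally maximal. The vertices $u,v,w$ are faces of $\triangle uvw$ and hence not locally maximal, and $\sigma_3$ attaches only at $v$, either as a locally maximal edge or as a triangle. Thus the non-locally-maximal cells attached to $\triangle uvw$ are exactly $u,v,w,\overline{uv},\overline{vw},\overline{uw}$, and by \Cref{prop:nlmvertex,prop:nlmedge} every sample within the stated radii of these cells lies in $P_{NLM}$. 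The edge-interior samples along $\overline{uv},\overline{vw}$ have $\sig_{\varepsilon,R}(p)=(1,m)$ with nonzero $H_1$-rank, while the samples along the free edge $\overline{uw}$ have $\sig_{\varepsilon,R}(p)=(1,0)$.

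Next I would compute the witness sets. Using the separation bounds of \Cref{as:embed} (non-incident cells at distance exceeding $6(R+\varepsilon)$, together with the angle and inscribed-circle conditions keeping the interior of each face far from the triangles glued along the other faces), I would show that within the witness radius $\kappa(R+2\varepsilon)$: samples near $v$ witness the triangle-spanning components of every triangle incident to $v$ (all triangles along $\overline{uv}$ and $\overline{vw}$, plus $\sigma_3$ if it is a triangle, with $S_E\neq\emptyset$ exactly when $\sigma_3$ is an edge); samples near $u$ and along $\overline{uv}$ witness exactly the triangles sharing $\overline{uv}$; samples near $w$ and along $\overline{vw}$ witness exactly the triangles sharing $\overline{vw}$; and samples along the interior of $\overline{uw}$ witness only $\mathcal{T}$. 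This gives four witness-set values, hence, after the merge step of \Cref{alg:pnlmpart}, four classes $P_1$ (near $v$), $P_2$ (near $u$ and $\overline{uv}$), $P_3$ (near $w$ and $\overline{vw}$), $P_4$ (near $\overline{uw}$). The subtle point is that $S_T(P_2),S_T(P_3)\subsetneq S_T(P_1)$, yet the second merge clause cannot fire on $P_2,P_3$ because their edge-interior samples have nonzero $H_1$-rank and so are not of the form $(n,0)$, and $P_4$ is immune because $|S_T(P_4)|=1$. As $\mathcal{T}$ lies in all four witness sets and in no others, these are precisely the partitions witnessing $\mathcal{T}$.

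Finally I would trace the labelling. \Cref{alg:Piorder} places $P_1,P_2,P_3$ in $P^1$ (each has $|S_T|\geq 2$, or $S_E\neq\emptyset$) and $P_4$ in $P^2$ (here $S_E=\emptyset$, $|S_T|=1$, and all signatures are $(1,0)$). In \Cref{alg:P1class}, $P_1$ receives label $0$ --- directly via the $S_E\neq\emptyset$ branch when $\sigma_3$ is an edge, and otherwise as the common neighbour $P_k$ produced when $N(P_2)=\{P_1,P_2\}$ and $N(P_3)=\{P_1,P_3\}$ each trigger the $\{P_i,P_k\}$ clause --- while $P_2$ and $P_3$ receive label $1$. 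Then \Cref{alg:P2class}, applied to $P_4$ with $S_T(P_4)=\{\mathcal{T}\}$ and $LN(\mathcal{T})=\{P_1,P_2,P_3,P_4\}$, assigns $P_4$ the edge label $3$, consistent with the count of three vertices ($u,v,w$) and three edges ($\overline{uv},\overline{vw},\overline{uw}$). I expect the main obstacle to be the witness-set computation of the third paragraph: establishing from the metric hypotheses alone that the interior of the free edge $\overline{uw}$ sees no spanning component other than $\mathcal{T}$ while $v$ sees all of them, and verifying that the grey-region samples straddling $u$ (resp.\ $w$) and the adjacent edge interiors are absorbed by the merge step rather than spawning spurious classes. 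This is exactly the place where the constant $\kappa$, the inscribed-circle bound $2R+3\varepsilon$, and the $\frac{\pi}{6}$ angle lower bound must all be used together.
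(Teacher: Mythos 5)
Your proposal is correct and follows essentially the same route as the paper's (very terse) proof: identify the partitions of $P_{NLM}$ witnessing $\mathcal{T}$ by their witness sets, check that the merge step of \Cref{alg:pnlmpart} does not collapse them, and push them through \Cref{alg:Piorder,alg:P1class,alg:P2class}. Your execution is in fact more careful than the paper's own: the paper asserts that $P_2$ witnesses $\mathcal{C}_1,\mathcal{C}_2$ and $P_3$ witnesses $\mathcal{C}_1,\mathcal{C}_3$, a pattern carried over verbatim from the proof of \Cref{lem:share-edge-with-2-vertices} that is geometrically wrong in this configuration, whereas your witness sets (samples near $u$ and the interior of $\overline{uv}$ see only the triangles through $\overline{uv}$, those near $w$ and $\overline{vw}$ only the triangles through $\overline{vw}$, those near the free edge $\overline{uw}$ only $\mathcal{T}$) are the correct ones, and they lead to the same labels $0,1,1,3$; your observation that the $(n,0)$-signature condition is what blocks the spurious merge of $P_2,P_3$ into $P_1$ is also the right justification, which the paper leaves implicit. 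One caveat applies equally to your proof and to the paper's: the final step, that \Cref{alg:P2class} assigns $P_4$ label $3$, does not follow from a literal reading of that algorithm, since none of its branches covers $LN=\{P_1,P_2,P_3,P_4\}$ with companion labels $0,1,1$ (the closest branch demands three companions of label $0$); the paper simply asserts this label, and your trace inherits that gap in the algorithm's case analysis rather than introducing a new one.
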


    \begin{lemma}\label{lem:share-edge-vertex-vertex-vertex}
        Let $\triangle u v w$ be a triangle of $X$, such that there are some locally maximal cells $\sigma_1, \sigma_2, \sigma_3, \sigma_4 \in X$ with $\sigma_i \neq \triangle u v w$ and $\sigma_i \neq \sigma_j$ for $i \neq j$, such that 

        \begin{align*}
            \sigma_1 \cap \triangle u v w &= u\\
            \sigma_2 \cap \triangle u v w &= v \\
            \sigma_3 \cap \triangle u v w &= w \\
            \sigma_4 \cap \triangle u v w &= \overline{uv}
        \end{align*}
        
        and for all other locally maximal cells $\tau \in X$, either

        \begin{enumerate}
            \item $\tau \cap \triangle u v w = u$,
            \item $\tau \cap \triangle u v w = v$,
            \item $\tau \cap \triangle u v w = w$,
            \item $\tau \cap \triangle u v w = \overline{uv}$
            \item $\tau \cap \triangle u v w = \emptyset$.
        \end{enumerate}
        
        Then, there are exactly five partitions $P_1, P_2, P_3, P_4, P_5$ of $P_{NLM}$ which witness $\mathcal{T}$, where $\mathcal{T}$ is the edge spanning component corresponding to $\triangle u v w$.  Further, $P_1, P_2, P_3$ are labelled with $0$, and $P_4$ with $8$ by \Cref{alg:P1class,alg:P2class}.
    \end{lemma}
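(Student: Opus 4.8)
The plan is to treat this as a geometric bookkeeping argument built on top of the witness machinery, exactly parallel to the simpler configuration in \Cref{lem:share-three-vertex}, with the single genuinely new feature being the degree-$2$ edge $\overline{uv}$. First I would use \Cref{as:embed} to localise: since every locally maximal cell not meeting $\triangle uvw$ in $u,v,w$ or $\overline{uv}$ is at distance $>6(R+\varepsilon)$, the only cells within witnessing range of $\triangle uvw$ are $\triangle uvw$ itself, the cells $\sigma_1,\sigma_2,\sigma_3$ at the three vertices, and $\sigma_4$. I would first observe that $\sigma_4\cap\triangle uvw=\overline{uv}$ forces $\sigma_4$ to be a triangle (an edge meeting $\triangle uvw$ in a full edge would have to equal $\overline{uv}$, which is a face of $\triangle uvw$ and hence not locally maximal), so $\overline{uv}$ has degree $2$ and yields its own triangle-spanning component $\mathcal{T}_4$ via \Cref{prop:equivlmspantri}, while $\overline{uw}$ and $\overline{vw}$ have degree $1$.

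Next I would identify the regions of $P_{NLM}$ near $\triangle uvw$ and their witness sets. Using \Cref{prop:nlmvertex} for the three vertices, \Cref{prop:nlmedge} (case~2) for the interior of $\overline{uv}$, and \Cref{prop:nlmedge} (case~1) for the interiors of $\overline{uw},\overline{vw}$, every such sample lies in $P_{NLM}$; the inscribed-circle bound in condition~12 of \Cref{as:embed} together with \Cref{prop:spanningtri,prop:lmtriangle} guarantees the interior samples of $P_{LM,2}$ forming $\mathcal{T}$ exist and are witnessed (in the sense of \Cref{def:witness-spanning}). I would then compute, region by region, the pair $(S_E,S_T)$: the three vertex regions each witness $\mathcal{T}$ together with the spanning components coming from $\sigma_1,\sigma_2,\sigma_3$ (and, for $u,v$, also $\mathcal{T}_4$); the degree-$1$ edges $\overline{uw},\overline{vw}$ both witness $S_T=\{\mathcal{T}\},\,S_E=\emptyset$ and so collapse into a single class; and the degree-$2$ edge $\overline{uv}$ witnesses $S_T=\{\mathcal{T},\mathcal{T}_4\},\,S_E=\emptyset$. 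The key point is that the $\overline{uv}$-region is \emph{distinct} from the degree-$1$ edge region precisely because it also sees $\mathcal{T}_4$, giving exactly the five distinct $(S_E,S_T)$-classes $P_1,\dots,P_5$ produced by \Cref{alg:pnlmpart}. I would also check that the merge step of \Cref{alg:pnlmpart} performs no merges here: the candidate $(\emptyset,\{\mathcal{T},\mathcal{T}_4\})$ class fails the second merge clause because its samples have $\sig_{\varepsilon,R}=(1,1)$ rather than $(n,0)$.

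Finally I would trace the classification. \Cref{alg:Piorder} sends the three vertex regions and the degree-$2$ edge region to $P^1$ (each has $|S_T|\neq1$ or $S_E\neq\emptyset$) and the degree-$1$ edge region to $P^2$. Running \Cref{alg:P1class}, the intersection computation $N(P_i)=\bigcap_{\mathcal{T}'\in S_T(P_i)}LN(\mathcal{T}')$ assigns label $0$ to the vertex partitions and resolves the $\overline{uv}$-region through its shared membership in $LN(\mathcal{T})\cap LN(\mathcal{T}_4)$; then \Cref{alg:P2class} assigns the degree-$1$ edge partition its label, which I would verify against the claimed value, matching the combinatorial accounting of the three edges and three vertices of $\triangle uvw$ as in \Cref{lem:share-three-vertex}. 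The main obstacle I anticipate is twofold: establishing the witness sets rigorously (showing each region does witness the components I claim and, crucially, does \emph{not} witness the ones I exclude, which leans entirely on the separation and angle bounds of \Cref{as:embed}), and then correctly threading the many mutually dependent branches of \Cref{alg:P1class,alg:P2class}, since labels assigned in the first algorithm feed the conditions of the second; the degree-$2$ edge region is the delicate case that separates this lemma from the all-degree-$1$ situation of \Cref{lem:share-three-vertex}.
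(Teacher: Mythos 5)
Your overall strategy is the same witness-set bookkeeping the paper uses for this whole family of lemmas, but your execution is in fact \emph{more} careful than the paper's own proof, which is internally inconsistent: the paper asserts there are only four partitions witnessing $\mathcal{T}$ and that the non-vertex one ``only witnesses $\mathcal{T}$'', contradicting the statement's count of five and silently ignoring the spanning component $\mathcal{T}_4$ of $\sigma_4$. Your observation that $\sigma_4$ is forced to be a triangle, so that the region near the interior of $\overline{uv}$ witnesses $\{\mathcal{T},\mathcal{T}_4\}$ while the regions near $\overline{uw},\overline{vw}$ witness only $\{\mathcal{T}\}$, is exactly the point the paper's proof misses, and it is what yields the five classes the statement claims. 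Your check that the merge step of \Cref{alg:pnlmpart} does nothing (the $(\emptyset,\{\mathcal{T},\mathcal{T}_4\})$ class has samples of signature $(1,1)$, not $(n,0)$) is also correct.

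The gap is precisely where you hedge at the end. Completing the trace you set up: processing the $\overline{uv}$-partition in \Cref{alg:P1class} gives $N(P_4)=LN(\mathcal{T})\cap LN(\mathcal{T}_4)=\{P_4,P_1,P_2\}$, so $P_4$ receives label $3$ and $P_1,P_2$ receive label $0$; the degree-one edge partition then sits in $P^2$ with $LN=\{P_1,\dots,P_5\}$, a \emph{five}-element set whose other members carry labels $0,0,0,3$, so \Cref{alg:P2class} assigns it label $9$. Label $8$ can only be produced by the branch requiring $LN$ to have exactly four elements (the all-degree-one configuration of \Cref{lem:share-three-vertex}); with five witnesses of $\mathcal{T}$ it is unattainable. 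So the labelling your analysis actually produces, $(0,0,0,3,9)$, agrees with Case 5 of \Cref{thm:recover} (``three vertices and one edge'') but contradicts the ``$P_4$ with $8$'' in the statement -- the error lies in the paper's statement and proof, not in your reasoning, but a complete attempt must carry the computation through and flag this rather than deferring to ``the claimed value''. One further wrinkle you gloss over: your claim that $N(P_i)$ hands label $0$ to all three vertex partitions only works for $P_1,P_2$ (via $P_4$'s branch); if $\sigma_3$ is a triangle rather than an edge, then $N(P_3)=\{P_3\}$ matches no branch of \Cref{alg:P1class} and $P_3$ is swept into the terminal ``label $2$'' clause, so even that part of the conclusion needs either the assumption that an edge spanning component is witnessed at $w$ or a repair of the algorithm.
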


    \begin{lemma}\label{lem:share-edge-with-vertex-edge-vertex-vertex}
        Let $\triangle u v w$ be a triangle of $X$, such that there are some locally maximal cells $\sigma_1, \sigma_2, \sigma_3, \sigma_4 \in X$ with $\sigma_i \neq \triangle u v w$ and $\sigma_i \neq \sigma_j$ for $i \neq j$, such that 

        \begin{align*}
            \sigma_1 \cap \triangle u v w &= u\\
            \sigma_2 \cap \triangle u v w &= v \\
            \sigma_3 \cap \triangle u v w &= \overline{vw} \\
            \sigma_4 \cap \triangle u v w &= \overline{uv}
        \end{align*}
        
        and for all other locally maximal cells $\tau \in X$, either

        \begin{enumerate}
            \item $\tau \cap \triangle u v w = u$,
            \item $\tau \cap \triangle u v w = v$,
            \item $\tau \cap \triangle u v w = w$,
            \item $\tau \cap \triangle u v w = \overline{uv}$
            \item $\tau \cap \triangle u v w = \overline{vw}$
            \item $\tau \cap \triangle u v w = \emptyset$.
        \end{enumerate}
        
        Then, there are exactly five partitions $P_1, P_2, P_3, P_4, P_5$ of $P_{NLM}$ which witness $\mathcal{T}$, where $\mathcal{T}$ is the edge spanning component corresponding to $\triangle u v w$.  Further, $P_1, P_2$ are labelled with $0$, $P_3$ with $1$, and $P_4, P_5$ with $3$ by \Cref{alg:P1class,alg:P2class}.
    \end{lemma}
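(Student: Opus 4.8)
The plan is to reduce the statement to a cell-by-cell bookkeeping and then push each region of $P_{NLM}$ through \Cref{alg:pnlmpart}, \Cref{alg:Piorder}, \Cref{alg:P1class} and \Cref{alg:P2class}. First I would pin down the geometry. Since $\sigma_3\cap\triangle uvw=\overline{vw}$ and $\sigma_4\cap\triangle uvw=\overline{uv}$, only a triangle can meet $\triangle uvw$ in an edge, so $\sigma_3,\sigma_4$ are triangles and the edges $\overline{uv},\overline{vw}$ have degree $2$; since no permitted intersection equals $\overline{uw}$, the edge $\overline{uw}$ has degree $1$. Writing $\mathcal{T},\mathcal{T}_3,\mathcal{T}_4$ for the triangle-spanning components of $\triangle uvw,\sigma_3,\sigma_4$ (\Cref{prop:equivlmspantri}), the non-locally-maximal cells incident to $\triangle uvw$ are exactly $u,v,w,\overline{uv},\overline{vw},\overline{uw}$, and by \Cref{prop:nlmvertex} and \Cref{prop:nlmedge} the nearby samples all lie in $P_{NLM}$ and witness $\mathcal{T}$.

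Next I would compute the witness sets $(S_E,S_T)$ of \Cref{def:witness-spanning} for a sample near each cell, using \Cref{as:embed} to rule out every non-incident cell (each sits at distance $>6(R+\varepsilon)>\kappa(R+2\varepsilon)$). This gives $(\emptyset,\{\mathcal{T}\})$ near $\overline{uw}$; $(\emptyset,\{\mathcal{T},\mathcal{T}_4\})$ near $\overline{uv}$; $(\emptyset,\{\mathcal{T},\mathcal{T}_3\})$ near both $\overline{vw}$ and $w$; $\{\mathcal{T},\mathcal{T}_4\}$ (together with $\sigma_1$'s component) near $u$; and $\{\mathcal{T},\mathcal{T}_3,\mathcal{T}_4\}$ (together with $\sigma_2$'s component) near $v$. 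I would also record the signatures: $\sig=(1,0)$ near the free edge $\overline{uw}$ and near $w$ (by computing the link of $w$, which is a single arc), and $\sig=(1,1)$ near the degree-$2$ edges $\overline{uv},\overline{vw}$.

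I would then run \Cref{alg:pnlmpart}. Because $w$ and $\overline{vw}$ have identical witness sets they form one block, while the other four regions are pairwise distinct; the merge step fixes all five, since every edge-block has $S_E=\emptyset$ (failing the first merge test) and the degree-$2$ blocks have $\sig=(1,1)\neq(n,0)$ (failing the second). This produces exactly the five partitions $\{u\},\{v\},\{w,\overline{vw}\},\{\overline{uv}\},\{\overline{uw}\}$, all witnessing $\mathcal{T}$. \Cref{alg:Piorder} then sends the first four to $P^1$ and the free-edge block $\{\overline{uw}\}$ (with $|S_T|=1$, $S_E=\emptyset$, and all signatures of type $(1,n)$) to $P^2$. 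In \Cref{alg:P1class} the intersections $N(\{\overline{uv}\})=\{\{\overline{uv}\},\{u\},\{v\}\}$ and $N(\{w,\overline{vw}\})=\{\{w,\overline{vw}\},\{v\}\}$ then yield $\{\overline{uv}\}\mapsto 3$, $\{u\},\{v\}\mapsto 0$, and $\{w,\overline{vw}\}\mapsto 1$.

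The main obstacle is the final label of the free edge $\overline{uw}$ through \Cref{alg:P2class}: here $LN$ is the full five-element witness list of $\mathcal{T}$ while $LN\cap P^2=\{\overline{uw}\}$, and the label must be read off from the labels $0,0,1,3$ already assigned to its four $P^1$-neighbours $\{u\},\{v\},\{w,\overline{vw}\},\{\overline{uv}\}$. Verifying that this neighbour pattern is the branch returning label $3$ is the delicate step, and it is precisely the bookkeeping that distinguishes the degree-$1$ edge $\overline{uw}$ from the degree-$2$ edges $\overline{uv},\overline{vw}$ (via $|S_T|=1$ versus $2$ and $\sig=(1,0)$ versus $(1,1)$). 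I would complete the argument by matching the neighbour-label multiset against the branches of \Cref{alg:P2class}, concluding that the five labels are $0,0,1,3,3$ as claimed.
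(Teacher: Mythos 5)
Your setup is correct and considerably more careful than the paper's own argument. The identification of $\sigma_3,\sigma_4$ as triangles, the witness sets for samples near each of $u$, $v$, $w$, $\overline{uv}$, $\overline{vw}$, $\overline{uw}$, the observation that $w$ and the interior of $\overline{vw}$ carry the same witness set $\{\mathcal{T},\mathcal{T}_3\}$ and therefore form a single block (which is exactly why there are five partitions rather than six), the check that the merge step of \Cref{alg:pnlmpart} collapses nothing further, and the trace through \Cref{alg:Piorder} and \Cref{alg:P1class} giving $N(\{\overline{uv}\})=\{\{\overline{uv}\},\{u\},\{v\}\}$, $N(\{w,\overline{vw}\})=\{\{w,\overline{vw}\},\{v\}\}$ and hence labels $0,0,1,3$ on the four $P^1$-blocks: all of this matches what the paper intends. (The paper's proof merely asserts the witness patterns, with internal typos, and never runs the algorithms at all, so your version is strictly more rigorous up to this point.)

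The gap is precisely the step you defer. For the free-edge block $\{\overline{uw}\}$ you have $LN$ equal to all five blocks, $LN\cap P^2=\{\,\{\overline{uw}\}\,\}$, and neighbour labels $0,0,1,3$. Matching this against the branches of \Cref{alg:P2class} does \emph{not} conclude label $3$: the only branch admitting $|LN|=5$ requires three neighbours of label $0$ and one of label $3$ (and returns label $9$), and every other branch requires $|LN|\leq 4$. So the bookkeeping you postpone as "delicate" in fact fails — no branch fires, the block receives no label at all, and the claimed conclusion cannot be read off the algorithm as printed. This is not something more careful execution of your plan would repair; it is a mismatch between \Cref{alg:P2class} and the label asserted in the lemma, which the paper's proof hides by jumping straight from the witness patterns to the labels. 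To close the argument you must either argue semantically, as the paper implicitly does (the block corresponds to the single boundary edge $\overline{uw}$, hence "just an edge", hence label $3$), or note that \Cref{alg:P2class} needs an additional branch handling the neighbour pattern $(0,0,1,3)$; your proposal, as written, does neither.
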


    \begin{lemma}\label{lem:share-edge-edge-vertex-vertex-vertex}
        Let $\triangle u v w$ be a triangle of $X$, such that there are some locally maximal cells $\sigma_1, \sigma_2, \sigma_3, \sigma_4, \sigma_5 \in X$ with $\sigma_i \neq \triangle u v w$ and $\sigma_i \neq \sigma_j$ for $i \neq j$, such that 

        \begin{align*}
            \sigma_1 \cap \triangle u v w &= u\\
            \sigma_2 \cap \triangle u v w &= v \\
            \sigma_3 \cap \triangle u v w &= w \\
            \sigma_4 \cap \triangle u v w &= \overline{uv}\\
            \sigma_5 \cap \triangle u v w &= \overline{vw}\\
        \end{align*}
        
        and for all other locally maximal cells $\tau \in X$, either

        \begin{enumerate}
            \item $\tau \cap \triangle u v w = u$,
            \item $\tau \cap \triangle u v w = v$,
            \item $\tau \cap \triangle u v w = w$,
            \item $\tau \cap \triangle u v w = \overline{uv}$
            \item $\tau \cap \triangle u v w = \overline{vw}$
            \item $\tau \cap \triangle u v w = \emptyset$.
        \end{enumerate}
        
        Then, there are exactly five partitions $P_1, P_2, P_3, P_4, P_5, P_6$ of $P_{NLM}$ which witness $\mathcal{T}$, where $\mathcal{T}$ is the edge spanning component corresponding to $\triangle u v w$.  Further, $P_1, P_2, P_3$ are labelled with $0$, $P_4, P_5, P_6$ with $3$ by by \Cref{alg:P1class,alg:P2class}.
    \end{lemma}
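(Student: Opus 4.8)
The plan is to localise $P_{NLM}$ around $\triangle uvw$ and then push the resulting blocks through \Cref{alg:pnlmpart,alg:Piorder,alg:P1class,alg:P2class}. First I would invoke the separation bounds of \Cref{as:embed} to show that the only cells of $X$ which are not locally maximal and lie within $6(R+\varepsilon)$ of $\triangle uvw$ are its own faces: the vertices $u,v,w$ and the edges $\overline{uv},\overline{vw},\overline{uw}$. By hypothesis $\overline{uv}$ and $\overline{vw}$ are also faces of the triangles $\sigma_4,\sigma_5$ and so have degree $2$, whereas $\overline{uw}$ is a face of $\triangle uvw$ alone and has degree $1$. Applying \Cref{prop:nlmvertex} at each vertex and \Cref{prop:nlmedge} at each edge, every $p\in P_{NLM}$ that witnesses $\mathcal{T}$ lies in the guaranteed region of exactly one of these six faces, which is where I would pin down that there are exactly six blocks (the statement's ``five'' being a miscount of the six listed).

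Next I would record the witness sets. Writing $\mathcal{T}_4,\mathcal{T}_5$ for the triangle spanning components of $\sigma_4,\sigma_5$, the block of $\overline{uv}$ witnesses precisely $\{\mathcal{T},\mathcal{T}_4\}$, the block of $\overline{vw}$ precisely $\{\mathcal{T},\mathcal{T}_5\}$, and the block of $\overline{uw}$ precisely $\{\mathcal{T}\}$, while the blocks of $u,v,w$ additionally witness the spanning components of $\sigma_1,\sigma_2,\sigma_3$ respectively. I would check these sets are pairwise distinct, so that \Cref{alg:pnlmpart} keeps the six blocks separate; in particular the merge step is vacuous here, since the degree-$2$ edge samples have $\sig_{\varepsilon,R}=(1,1)\neq(n,0)$ and the degree-$1$ edge has $|S_T|=1$. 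Feeding the blocks into \Cref{alg:Piorder}, the three vertex blocks and the two degree-$2$ edge blocks enter $P^1$ (each carries a nonempty $S_E$ or has $|S_T|\geq 2$), while the degree-$1$ edge block of $\overline{uw}$, whose samples all have $\sig_{\varepsilon,R}=(1,0)$ and which witnesses only $\mathcal{T}$, enters $P^2$.

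I would then run \Cref{alg:P1class}. For $\overline{uv}$ the set $N=LN(\mathcal{T})\cap LN(\mathcal{T}_4)$ consists exactly of the block of $\overline{uv}$ together with the blocks of its two endpoints $u,v$ (the only blocks near both $\triangle uvw$ and $\sigma_4$), so $|N|=3$ and the algorithm gives $\overline{uv}$ the label $3$ and $u,v$ the label $0$; the symmetric computation for $\overline{vw}$ gives it label $3$ and confirms $v,w$ the label $0$. Each vertex block in fact receives label $0$ robustly: directly when its $S_E$ is nonempty (e.g. when $\sigma_i$ is an edge), or otherwise as one of the two endpoints recovered in the $N$-computation of an incident edge block. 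Finally, running \Cref{alg:P2class} on $\overline{uw}\in P^2$, its set $LN$ is the whole collection of six blocks witnessing $\mathcal{T}$ with $LN\cap P^2=\{\overline{uw}\}$, the other five carrying three label-$0$'s (the vertices) and two label-$3$'s (the degree-$2$ edges), from which $\overline{uw}$ is assigned label $3$, completing the claim.

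The main obstacle will be the bookkeeping that separates the six regions and collapses each $N$-intersection to exactly one edge plus its two endpoints: I must use the angle bounds $\Psi_1,\Psi_2,\Psi_3$ and the distance bounds of \Cref{as:embed} to rule out spurious overlaps with the neighbouring cells $\sigma_1,\dots,\sigma_5$ and to confirm that no vertex block is a cone point swallowed by the merge step. The genuinely new point relative to the earlier lemmas is carrying two degree-$2$ edges and one degree-$1$ edge on a single triangle simultaneously, so the delicate step is checking that the degree-$1$ edge $\overline{uw}$ is routed into $P^2$ and nonetheless receives the same label $3$ as the two degree-$2$ edges handled in $P^1$.
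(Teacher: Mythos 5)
Your proposal is correct and follows essentially the same route as the paper's proof: identify the six witness regions (three vertex blocks and three edge blocks) via \Cref{as:embed}, \Cref{prop:nlmvertex} and \Cref{prop:nlmedge}, record their pairwise-distinct witness sets, and push them through \Cref{alg:pnlmpart,alg:Piorder,alg:P1class,alg:P2class} to obtain the labels $0,0,0,3,3,3$ — you are in fact more explicit than the paper, which compresses this into ``an adaption of the proofs of'' the earlier lemmas. The paper's own proof also confirms your observation that the statement's ``exactly five'' is a miscount: it exhibits six partitions with precisely the witness structure you describe.
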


    \begin{lemma}\label{lem:share-edge-edge-edge-vertex-vertex-vertex}
        Let $\triangle u v w$ be a triangle of $X$, such that there are some locally maximal cells $\sigma_1, \sigma_2, \sigma_3, \sigma_4, \sigma_5, \sigma_6 \in X$ with $\sigma_i \neq \triangle u v w$ and $\sigma_i \neq \sigma_j$ for $i \neq j$, such that 

        \begin{align*}
            \sigma_1 \cap \triangle u v w &= u\\
            \sigma_2 \cap \triangle u v w &= v \\
            \sigma_3 \cap \triangle u v w &= w \\
            \sigma_4 \cap \triangle u v w &= \overline{uv}\\
            \sigma_5 \cap \triangle u v w &= \overline{vw}\\
            \sigma_6 \cap \triangle u v w &= \overline{uw}
        \end{align*}
        
        and for all other locally maximal cells $\tau \in X$, either

        \begin{enumerate}
            \item $\tau \cap \triangle u v w = u$,
            \item $\tau \cap \triangle u v w = v$,
            \item $\tau \cap \triangle u v w = w$,
            \item $\tau \cap \triangle u v w = \overline{uv}$,
            \item $\tau \cap \triangle u v w = \overline{vw}$,
            \item $\tau \cap \triangle u v w = \overline{uw}$,
            \item $\tau \cap \triangle u v w = \emptyset$.
        \end{enumerate}
        
        Then, there are exactly six partitions $P_1, P_2, P_3, P_4, P_5, P_6$ of $P_{NLM}$ which witness $\mathcal{T}$, where $\mathcal{T}$ is the edge spanning component corresponding to $\triangle u v w$.   Further, $P_1, P_2, P_3$ are labelled with $0$, $P_4, P_5, P_6$ with $3$ by \Cref{alg:P1class,alg:P2class}.
    \end{lemma}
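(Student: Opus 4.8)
The plan is to follow the template of the preceding lemmas (most closely \Cref{lem:share-edge-edge-vertex-vertex-vertex}), adapting it to the fully saturated configuration in which all three edges of $\triangle u v w$ have degree at least $2$ and all three vertices carry additional cells.

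\textbf{Geometry and the six pieces.} First I would invoke the separation bounds of \Cref{as:embed} (conditions 1--7, each of the form $>6(R+\varepsilon)$) together with the inequality $\kappa(R+2\varepsilon)<6(R+\varepsilon)$, which holds since $R\ge 14\varepsilon$, to conclude that the only spanning components an $P_{NLM}$-sample near $\triangle u v w$ can witness are $\mathcal{T}$ and the spanning components of the cells sharing a vertex or an edge with $\triangle u v w$. Because $\sigma_4,\sigma_5,\sigma_6$ meet $\triangle u v w$ in the edges $\overline{uv},\overline{vw},\overline{uw}$, each of those edges is a face of at least two triangles and hence is not locally maximal; so by \Cref{prop:nlmedge} every sample near the interior of one of these edges has $\sig_{\varepsilon,R}=(1,n)$, while by \Cref{prop:nlmvertex} samples near $u,v,w$ have signatures governed by the vertex links. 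I would then show the samples of $P_{NLM}$ witnessing $\mathcal{T}$ split into exactly six regions with distinct witness signatures $(S_E,S_T)$: the three vertex pieces $P_1,P_2,P_3$ (near $u,v,w$) and the three edge-interior pieces $P_4,P_5,P_6$ (near the interiors of $\overline{uv},\overline{vw},\overline{uw}$). Distinctness is where the geometry is used: I must verify that a sample near the interior of $\overline{uv}$ witnesses $\mathcal{T}$ and the triangle(s) sharing $\overline{uv}$ but neither the triangle(s) attached along $\overline{vw}$ or $\overline{uw}$ nor the cells attached only at $w$, which is precisely the content of the witness-radius calibration recorded before \Cref{def:witness-spanning} combined with the angle and inscribed-radius hypotheses (conditions 5, 8, 12) that prevent $\triangle u v w$ from being too thin or too small.

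\textbf{Routing through \Cref{alg:Piorder}.} Each edge piece $P_j$ ($j=4,5,6$) has $|S_T(P_j)|\ge 2$ because its edge is shared, and each vertex piece $P_i$ ($i=1,2,3$) has $|S_T(P_i)|\ge 3$ because three of the triangles of the configuration meet at each vertex ($\triangle u v w$ and the two triangles on its two incident shared edges). In every case $|S_T|\neq 1$, so all six pieces are placed in $P^1$ and classified by \Cref{alg:P1class}; this is exactly the structural difference from \Cref{lem:share-edge-edge-vertex-vertex-vertex}, where the degree-$1$ free edge produced an $|S_T|=1$ piece routed instead to $P^2$.

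\textbf{Classification through \Cref{alg:P1class}.} For each edge piece $P_j$ I compute $N(P_j)=\bigcap_{\mathcal{T}'\in S_T(P_j)}LN(\mathcal{T}')$. Every triangle in $S_T(P_j)$ contains the edge in question, so the only pieces witnessing all of them are the edge-interior piece itself and the two pieces at its endpoint vertices; hence $N(P_j)$ has exactly three elements and the algorithm assigns $P_j$ the label $3$ and its two endpoint vertex pieces the label $0$. Ranging over $P_4,P_5,P_6$ covers each of $u,v,w$ (each vertex lies on two of the three edges), so $P_1,P_2,P_3$ all receive label $0$; nothing remains in $P^1\setminus C$, so the terminal label-$2$ rule never fires, yielding the claimed labelling.

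\textbf{Main obstacle.} The hardest part is the geometric step: establishing that the witness signatures are genuinely constant on each of the six regions and pairwise distinct, so that the transition from vertex behaviour to edge-interior behaviour creates no spurious extra pieces and an edge-interior sample cannot reach the spanning component at the opposite vertex. This rests on the quantitative separation $\kappa(R+2\varepsilon)<6(R+\varepsilon)$ for non-adjacent cells and on the minimum-angle and inscribed-radius hypotheses for the adjacent ones; the algorithmic bookkeeping is then a finite deterministic check mirroring the earlier lemmas, the only delicate point being that each vertex piece is labelled $0$ as a by-product of processing an incident edge piece rather than in its own iteration, where $N(P_i)=\{P_i\}$ assigns no label.
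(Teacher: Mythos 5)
Your proposal is correct and takes essentially the same route as the paper: the paper's own proof likewise establishes that the six witness-distinct pieces are the three vertex regions (each witnessing $\mathcal{T}$ and the two triangle components across its incident edges, plus the cell meeting only at that vertex) and the three edge-interior regions (each witnessing only $\mathcal{T}$ and the component(s) across that edge), and then concludes the labels $0,0,0,3,3,3$. Your write-up is in fact more explicit than the paper's, which asserts the witness structure by adapting \Cref{lem:lmtriangle-share-1-vertex,lem:lmtriangle-share-1-edge-with-vertices} and states the labels without tracing \Cref{alg:Piorder,alg:P1class}; your tracing (all six pieces routed to $P^1$, edge pieces labelled $3$ via $|N(P_j)|=3$, vertex pieces labelled $0$ as by-products of the edge iterations) is consistent with the paper's conclusion, and you correctly flag the only genuinely delicate point -- that the witness sets are constant on the six regions with no spurious transition pieces -- which the paper also leaves implicit.
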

    
    \begin{theorem}\label{thm:recover}
        Let $P$ be an $\varepsilon$-sample of an embedded $2$-complex $|X| \subset \mathbb{R}^n$ satisfying \Cref{as:embed}, and let $B$ be the graph obtained from \Cref{alg:numcells}. 

        Then, we can complete $B$ to be the incidence graph of $X$, to recover the abstract structure.
    \end{theorem}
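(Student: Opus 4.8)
The plan is to assemble the propositions and lemmas of this section into a single verification that the weighted graph $B$ produced by \Cref{alg:numcells}, once its incidence edges are added, is isomorphic to the incidence graph of $X$ from \Cref{def:incidencegraph}. I would first establish the bijection on \emph{nodes}. By \Cref{prop:equivlmvert}, \Cref{prop:equivlmspanedge}, and \Cref{prop:equivlmspantri}, the weight-$2$ nodes of $B$ are in bijection with the triangles $T$ of $X$, the weight-$1$ nodes arising from spanning components of $\check{\mathcal{C}}_{\frac{3\varepsilon}{2}}(P_{LM,1})$ are in bijection with the locally maximal edges $E_{LM}$, and the weight-$0$ nodes arising from components of $\check{\mathcal{C}}_{\frac{3\varepsilon}{2}}(P_{LM,0})$ are in bijection with the locally maximal vertices $V_{LM}$. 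What remains is to account for the non-locally-maximal cells, i.e. the edges that are faces of triangles and the vertices detected only through $P_{NLM}$.

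For these I would invoke the sequence of lemmas. Since $X$ is a $2$-complex, every non-locally-maximal edge is a face of at least one triangle, so each triangle $\triangle uvw$, together with the way its boundary meets the surrounding locally maximal cells, falls into exactly one of the configurations enumerated in \Cref{lem:lmtriangle-disconnected,lem:lmtriangle-share-1-vertex,lem:lmtriangle-share-1-edge-with-vertices,lem:lmtriangle-share-2-vertices,lem:share-edge-with-vertex-and-vertex,lem:share-edge-with-vertices-and-vertex,lem:share-three-vertex,lem:share-edge-with-vertex-and-2-vertices,lem:share-edge-with-2-vertices,lem:share-2-edge-with-vertex-and-vertex,lem:share-edge-vertex-vertex-vertex,lem:share-edge-with-vertex-edge-vertex-vertex,lem:share-edge-edge-vertex-vertex-vertex,lem:share-edge-edge-edge-vertex-vertex-vertex}, with the purely edge-based situations handled by \Cref{lem:lmedge-disconnected,lem:lmedge-connected}. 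Each such lemma fixes how many partitions of $P_{NLM}$ witness the triangle-spanning component $\mathcal{T}$ and records the label each partition receives from \Cref{alg:P1class,alg:P2class}. Reading these labels, \Cref{alg:numcells} adds exactly the weight-$0$ and weight-$1$ nodes corresponding to the non-locally-maximal vertices and edges of each local picture. I would then argue, using the separation bounds in \Cref{as:embed} (distinct cells are more than $6(R+\varepsilon)$ apart except along shared faces), that the witnessing regions merged and labelled in \Cref{alg:pnlmpart} overlap only along genuine shared faces; hence no non-locally-maximal cell is counted twice, and the node set of $B$ is in bijection with $V\cup E\cup T$.

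With the node bijection fixed, I would \emph{complete} $B$ by adding its edges and checking the six adjacency conditions of \Cref{def:incidencegraph}. For each triangle-spanning component $\mathcal{T}$ and each partition $P_i$ witnessing it, the label of $P_i$ records whether its associated vertex and/or edge is a face of $\triangle uvw$, so I would add the corresponding weight-$2$--weight-$0$ and weight-$2$--weight-$1$ edges. For each edge node, locally maximal or not, I would add weight-$1$--weight-$0$ edges to the vertex nodes witnessed at its two endpoints, which are identified as the partitions of $P_{NLM}$ (or components of $P_{LM,0}$) lying within $\kappa(R+\varepsilon)$ of the ends of the relevant spanning component. Because \Cref{as:embed} guarantees that the only cells within this radius of an endpoint are true cofaces, each added edge realises an incidence $\mathcal{I}(\sigma,\tau)=1$, and conversely every such incidence is produced; the completed $B$ is therefore the incidence graph of $X$.

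The main obstacle is the bookkeeping of shared faces. A single non-locally-maximal vertex or edge may be witnessed from the neighbourhoods of several triangles, so the case analysis must be shown to be simultaneously exhaustive and non-redundant: a vertex shared by $k$ triangles must yield exactly one weight-$0$ node rather than $k$, and an edge shared by two triangles must be neither omitted nor duplicated. This reconciliation of the local pictures coming from adjacent triangles is precisely what is distributed across \Cref{lem:lmtriangle-share-1-vertex,lem:lmtriangle-share-1-edge-with-vertices,lem:lmtriangle-share-2-vertices,lem:share-edge-with-vertex-and-vertex,lem:share-edge-with-vertices-and-vertex,lem:share-three-vertex,lem:share-edge-with-vertex-and-2-vertices,lem:share-edge-with-2-vertices,lem:share-2-edge-with-vertex-and-vertex,lem:share-edge-vertex-vertex-vertex,lem:share-edge-with-vertex-edge-vertex-vertex,lem:share-edge-edge-vertex-vertex-vertex,lem:share-edge-edge-edge-vertex-vertex-vertex}, so the role of the theorem's proof is mainly to verify that these local counts patch into a globally consistent one, relying on the separation bounds of \Cref{as:embed} to ensure that overlaps between witnessing regions occur only along true shared faces.
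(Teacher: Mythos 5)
Your proposal follows essentially the same route as the paper's proof: node bijections from \Cref{prop:equivlmvert,prop:equivlmspanedge,prop:equivlmspantri}, then a case analysis of the partitions of $P_{NLM}$ witnessing each spanning component (the content of the lemmas, which the paper organises by the size of $\mathcal{P}_{\mathcal{T}}$, from one to six witnessing partitions, plus the two cases for locally maximal edges), and finally adding the weight-$2$--weight-$1$, weight-$2$--weight-$0$, and weight-$1$--weight-$0$ edges dictated by the labels, with \Cref{as:embed} ensuring witnessing regions overlap only along true shared faces. The consistency concern you raise (a shared vertex yielding exactly one node) is handled in the paper implicitly by the fact that partitions are defined by the full set of spanning components they witness, so your plan and the paper's argument coincide in substance.
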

    
    \begin{proof}
        From \Cref{prop:equivlmvert,prop:equivlmspanedge,prop:equivlmspantri}, we correctly identify the locally maximal components of $X$. It remains to show that we correctly learn the number of not locally maximal cells, and the incidence relationship. 
        
        For a locally maximal edge, we need to identify two vertices as its faces. To do so, we must identify which partition(s) of $P_{NLM}$ correspond to these vertices. 
    
        Take a spanning edge component $\mathcal{E}$. Then there is some locally maximal edge $\overline{uv}$ corresponding to $\mathcal{E}$. There are two cases to consider:
            
        \begin{enumerate}
            \item[A:] $\overline{uv}$ is disconnected from every other part of $X$, 
            \item[B:] $\overline{uv}$ is not disconnected every other part of $X$. 
        \end{enumerate}
            
        \underline{Case A:} From \Cref{prop:lmvertex,prop:lmedge,prop:lmtriangle,prop:nlmedge,prop:nlmvertex} and \Cref{as:embed}, there is a single partition $P_i \subset P_{NLM}$ which contains points $p$ such that $\mathcal{E} \cap B_{(R+\varepsilon)/\kappa + 3 \varepsilon}(p) \neq \emptyset$. Hence, $P_i$ contains samples $p$ such that either $\lVert v -p \rVert \leq \frac{3R}{2} + \varepsilon$ or $\lVert u - p \rVert \leq \frac{3R}{2} + \varepsilon$, and $P_i$ corresponds to $u$ and $v$. In this case, $P_i$ is labelled with $-1$ in \Cref{alg:P1class}. This occurs only when $\overline{uv}$ is disconnected from the rest of $|X|$; hence, we infer the two boundary vertices.
    
        \underline{Case B:} As $\overline{uv}$ is not disconnected, there is some locally maximal cell $\sigma \in X, \, \sigma \neq \overline{uv}$ such that either $u$ or $v$ is a vertex of $\sigma$. Without loss of generality, let $v \in \sigma$. For the vertices $u$ and $v$ let the set of locally maximal faces they see be $S(u)$ and $S(v)$, respectively.  As $X$ is a $2$-complex, and $\overline{uv}$ a locally maximal edge, $\sigma \notin S(u)$. Hence, there are two partitions, $P_u, P_v$, which correspond to the vertices $u$ and $v$, respectively. In this case, $P_u$ and $P_v$ are labelled with $0$ in \Cref{alg:P1class}.
    
        We now need to examine how we identify the faces of triangles. 
        
        For a triangle spanning component $\mathcal{T}$, let $\mathcal{P}_{\mathcal{T}}$ be the set of partitions $P_i$ of $P_{NLM}$ such that $d(\mathcal{T}, P_i) \leq 3 \varepsilon$. There are a few cases we need to consider to ensure we correctly recover the structure of $X$:

        \begin{enumerate}
            \item $|\mathcal{P}_{\mathcal{T}}|=1$,
            \item $|\mathcal{P}_{\mathcal{T}}|=2$,
            \item $|\mathcal{P}_{\mathcal{T}}|=3$,
            \item $|\mathcal{P}_{\mathcal{T}}|=4$,
            \item $|\mathcal{P}_{\mathcal{T}}|=5$,
            \item $|\mathcal{P}_{\mathcal{T}}|=6$.
        \end{enumerate}

        Let the weight $2$ node labelled with $\mathcal{E}$ be $t$.

        \underline{Case 1 $|\mathcal{P}_{\mathcal{T}}| = 1$:}
        Let $P_1$ be the single partition in $\mathcal{P}_{\mathcal{T}}$.  

        This can only occur if the triangle $\triangle u v w$ corresponding to $\mathcal{T}$ does not share any faces with another cell. Then, $P_1$ corresponds to three edges and three vertices and is correctly labelled with $7$ by \Cref{alg:P1class,alg:P2class}. Let the corresponding weight $1$ nodes of $B$ be $e_1, e_2, e_3$ and the weight $0$ nodes be $v_1,v_2,v_3$. We add an edge between  $t$ and $e_1,e_2,e_3, v_1,v_2,v_3$ and between the following pairs:
        
        \begin{equation*}
            (e_1, v_1), (e_1, v_2), (e_2, v_2), (e_2,v_3), (e_3,v_3), (e_3,v_1).
        \end{equation*}

        \underline{Case 2 $|\mathcal{P}_{\mathcal{T}}| = 2$:}
        Let $\mathcal{P}_{\mathcal{T}}=\{P_1, P_2\}$. 

        This can only occur if the triangle $\triangle u v w$ corresponding to $\mathcal{T}$ shares either a vertex, or an edge and two vertices with other triangles or locally maximal edges. Thus, either $P_1$ is labelled with $0$ and $P_2$ with $5$, or $P_1$ is labelled with $2$ and $P_2$ with $4$ by \Cref{alg:P1class,alg:P2class}.
        
        If $P_1$ has label $0$ and $P_2$ has label $5$, we find the weight $0$ node $v_1$ with label $P_1$ and the three weight $1$ nodes $e_1,e_2,e_3$ and two weight $0$ nodes $v_2, v_3$ with label $P_2$. Then, we add an edge between $t$ and each of $e_1,e_2,e_3, v_1,v_2,v_3$ and between the following pairs:
        
        \begin{equation*}
            (e_1, v_1), (e_1, v_2), (e_2, v_2), (e_2,v_3), (e_3,v_3), (e_3,v_1).
        \end{equation*}
        
        If $P_1$ has label $2$ and $P_2$ has label $4$, we find the weight $1$ note $e_1$ and two weight $0$ node $v_1,v_2$ with label $P_1$, the two weight $1$ nodes $e_2,e_3$ and one weight $0$ nodes $v_3$ with label $P_2$. We add an edge between $t$ and each of $e_1,e_2,e_3, v_1,v_2,v_3$ and between the following pairs:

        \begin{equation*}
            (e_1, v_1), (e_1, v_2), (e_2, v_2), (e_2,v_3), (e_3,v_3), (e_3,v_1).
        \end{equation*}

        \underline{Case 3 $|\mathcal{P}_{\mathcal{T}}| = 3$:}
        Let $\mathcal{P}_{\mathcal{T}}=\{P_1, P_2, P_3\}$. 

        This can only occur if the triangle $\triangle u v w$ corresponding to $\mathcal{T}$ shares either two vertices, or two vertices and an edge with other triangles or locally maximal edges. Thus, either $P_1$ and $P_2$ are labelled with $0$ and $P_2$ with $6$; or $P_1$ is labelled with $0$, $P_2$ with $1$ and $P_3$ with $4$; or $P_1$ is labelled $0$, $P_2$ with $2$ and $P_3$ with $9$.

        If $P_1, P_2$ have label $0$ and $P_3$ has label $6$, we find the weight $0$ node $v_1$ with label $P_1$, the weight $0$ node $v_2$ with label $P_2$, the three weight $1$ nodes $e_1,e_2,e_3$ and the weight $0$ node $v_3$ with label $P_3$. Then add an edge between $t$ and each of $e_1,e_2,e_3, v_1,v_2,v_3$ and between following pairs: 
        
        \begin{equation*}
            (e_1, v_1), (e_1, v_2), (e_2, v_2), (e_2,v_3), (e_3,v_3), (e_3,v_1).
        \end{equation*}

        If $P_1$ has label $0$, $P_2$ label $1$ and $P_3$ label $4$, we find the weight $0$ node $v_1$ with label $P_1$, the weight $0$ node $v_2$ with label $P_2$, weight $1$ node $e_1$ with label $P_2$, the weight $0$ node $v_3$ with label $P_3$, and the two weight $1$ nodes $e_2, e_3$ with label $P_3$. Then add an edge between $t$ and each of $e_1,e_2,e_3, v_1,v_2,v_3$ and between the following pairs: 

        \begin{equation*}
            (e_1, v_1), (e_1, v_2), (e_2, v_2), (e_2,v_3), (e_3,v_3), (e_3,v_1).
        \end{equation*}

        If $P_1$ has label $0$, $P_2$ label $2$ and $P_3$ label $9$, we find the weight $0$ node $v_1$ with label $P_1$, the weight $0$ node $v_2$ and weight $1$ node $e_1$ with label $P_2$, and the weight $1$ nodes $e_2, e_3$ and weight $0$ node $v_3$ with label $P_3$. Then add an edge between $t$ and each of $e_1,e_2,e_3, v_1,v_2,v_3$ and between the following pairs: 

        \begin{equation*}
            (e_1, v_1), (e_1, v_2), (e_2, v_2), (e_2,v_3), (e_3,v_3), (e_3,v_1).
        \end{equation*}

        \underline{Case 4 $|\mathcal{P}_{\mathcal{T}}| = 4$:} Let $\mathcal{P}_{\mathcal{T}} = \{ P_1, P_2, P_3, P_4 \}$.

        This can only occur if the triangle $\triangle u v w$ corresponding to $\mathcal{T}$ shares three vertices, or three vertices and an edge, or three vertices and two edges with other triangles or locally maximal edges. Thus, either $P_1, P_2$ and $P_3$ are labelled with $0$ and $P_4$ with $8$; or $P_1$ is labelled with $1$, $P_2, P_3$ with $0$ and $P_3$ with $9$; or $P_1$ with $3$, $P_2, P_3$ with $0$ and $P_4$ with $4$; or $P_1$ is labelled with $0$, $P_2, P_3$ with $1$, and $P_3$ with $3$ by \Cref{alg:P1class,alg:P2class}.

        If $P_1, P_2, P_3$ have label $0$ and $P_4$ has label $8$, find the weight $0$ node $v_1$ with label $P_1$, weight $0$ node $v_2$ with label $P_2$, weight $0$ node $v_3$ with label $P_3$, and the three weight $1$ nodes $e_1, e_2, e_3$ with label $P_4$. Then add an edge between $t$ and each of $e_1,e_2,e_3, v_1,v_2,v_3$ and between the following pairs: 

        \begin{equation*}
            (e_1, v_1), (e_1, v_2), (e_2, v_2), (e_2,v_3), (e_3,v_3), (e_3,v_1).
        \end{equation*}

        If $P_1$ has label $1$, $P_2, P_3$ have label $0$, and $P_4$ has label $9$, find the weight $0$ node $v_1$ and weight $1$ node $e_1$ with label $P_1$, weight $0$ node $v_2$ with label $P_2$, weight $0$ node $v_3$ with label $P_3$, and the two weight $1$ nodes $e_2, e_3$ with label $P_4$. Then add an edge between $t$ and each of $e_1,e_2,e_3, v_1,v_2,v_3$ and between the following pairs: 

        \begin{equation*}
            (e_1, v_1), (e_1, v_2), (e_2, v_2), (e_2,v_3), (e_3,v_3), (e_3,v_1).
        \end{equation*}

        If $P_1$ has $3$, $P_2, P_3$ label $0$ and $P_4$ label $4$;, find the weight $1$ node $e_1$ with label $P_1$, weight $0$ node $v_1$ with label $P_2$, weight $0$ node $v_2$ with label $P_3$, and the two weight $1$ nodes $e_2$ and weight $0$ node $e_3$ with label $P_4$. Then add an edge between $t$ and each of $e_1,e_2,e_3, v_1,v_2,v_3$ and between the following pairs: 

        \begin{equation*}
            (e_1, v_1), (e_1, v_2), (e_2, v_2), (e_2,v_3), (e_3,v_3), (e_3,v_1).
        \end{equation*}
        
        If $P_1$ has label $0$, $P_2, P_3$ have label $1$, and $P_4$ has label $3$, find the weight $0$ node $v_1$ with label $P_1$, weight $0$ node $v_2$ and weight $1$ node $e_1$ with label $P_2$, weight $0$ node $v_3$ and weight $1$ node $e_3$  with label $P_3$, and the two weight $1$ nodes $e_2$ with label $P_4$. Then add an edge between $t$ and each of $e_1,e_2,e_3, v_1,v_2,v_3$ and between the following pairs: 

        \begin{equation*}
            (e_1, v_1), (e_1, v_2), (e_2, v_2), (e_2,v_3), (e_3,v_3), (e_3,v_1).
        \end{equation*}
        
       \underline{Case 5 $|\mathcal{P}_{\mathcal{T}}| = 5$:}  Let $\mathcal{P}_{\mathcal{T}} = \{ P_1, P_2, P_3, P_4, P_5 \}$.

       This can occur if the triangle $\triangle u v w$ corresponding to $\mathcal{T}$ shares three vertices and two edges; or three vertcies and one edge with other triangles or locally maximal edges. Thus, $P_1, P_2$ are labelled with $0$, $P_3$ with $1$ and $P_4,P_5$ with $3$; or $P_1, P_2, P_3$ are labelled with $0$, $P_4$ with $3$ and $P_5$ with $9$ by \Cref{alg:P1class,alg:P2class}.

       If $P_1, P_2$ are labelled with $0$, $P_3$ with $1$ and $P_4,P_5$ with $3$ we find the weight $0$ node $v_1$ with label $P_1$, find the weight $0$ node $v_2$ with label $P_2$, find the weight $1$ node $e_1$ and weight $0$ node $v_3$ with label $P_3$, find the two weight $1$ nodes $e_2, e_3$ with label $P_4$, and the two weight $1$ nodes $e_2, e_3$ with label $P_5$. Then add an edge between $t$ and each of $e_1,e_2,e_3, v_1,v_2,v_3$ and between $e_i$ with label $P_i$ and $v_j$ with label $P_j$ if $d(P_i, P_j)$.

       If $P_1, P_2, P_3$ are labelled with $0$, $P_4$ with $3$ and $P_5$ with $9$ we find the weight $0$ node $v_1$ with label $P_1$, find the weight $0$ node $v_2$ with label $P_2$, find the weight $0$ node $v_3$ with label $P_3$, find the weight $1$ node $e_1$ with label $P_4$, and the two weight $1$ nodes $e_2, e_3$ with label $P_5$. Then add an edge between $t$ and each of $e_1,e_2,e_3, v_1,v_2,v_3$ and between $e_i$ with label $P_i$ and $v_j$ with label $P_j$ if $d(P_i, P_j)$.

    \underline{Case 6 $|\mathcal{P}_{\mathcal{T}}| = 6$:}  Let $\mathcal{P}_{\mathcal{T}} = \{ P_1, P_2, P_3, P_4, P_5, P_6 \}$.
        
        This can only occur if the triangle $\triangle u v w$ corresponding to $\mathcal{T}$ shares three vertices and two edges, or three vertices and three edges with other triangles or locally maximal edges. In either case, $P_1, P_2, P_3$ are labelled with $0$, $P_4, P_5, P_6$ with $3$ by \Cref{alg:P1class,alg:P2class}.

       So we find the weight $0$ node $v_1$ with label $P_1$, find the weight $0$ node $v_2$ with label $P_2$, find the weight $0$ node $v_3$ with label $P_3$, find the weight $1$ node $e_1$ with label $P_4$, the weight $1$ node $e_2$ with label $P_5$, and the weight $1$ node $e_3$ with label $P_6$. Then add an edge between $t$ and each of $e_1,e_2,e_3, v_1,v_2,v_3$ and between $e_i$ with label $P_i$ and $v_j$ with label $P_j$ if $d(P_i, P_j)$.

    In each of these $6$ cases, we have connected the weight $2$ node $t$ corresponding to the cell $\tau$ to each weight $1$ node $e$ corresponding to an edge $\sigma_e$ of $\tau$, as well as to each weight $0$ node $v$ corresponding to a vertex $\sigma_v$ of $\tau$. Further, in the process, we also connect the weight $1$ node $e$ and weight $0$ node $v$ if $\sigma_v$ is a vertex of $\sigma_e$.

    We have shown that the weight $2$ nodes of $B$ correspond bijectively to the triangles of $X$, the weight $1$ nodes of $B$ correspond bijectively to the edges of $X$, and the weight $0$ nodes of $B$ correspond bijectively to the vertices of $X$. We have also shown that for any pair of nodes $n_1, n_2$ with corresponding cells $\sigma_1, \sigma_2$, there an edge between them if and only if $\sigma_1 \subset \sigma_2$ or $\sigma_2 \subset \sigma_1$.

    Hence, $B$ is the incidence graph of $X$.
    \end{proof}

    In this article, we have presented a method for learning the abstract structure $X$ underlying an embedded $2$-simplicial complex $|X| = (X, \Theta)$  (satisfying \Cref{as:embed}) from an $\varepsilon$-sample $P$. For abstract $2$-complexes, modelling the embedding is future work. In particular, to modelling embeddings that are not linear or where we allow for cells of dimension $2$, which are not triangles (along the lines of CW-complexes), we need to develop the process for learning the faces of locally maximal cells further. 

\section{Future directions}
    There are several natural paths for the work in this article to be extended. In particular, removing the assumption that the maximal dimension of a cell in the complex is $2$ is a direct next step. It is also natural to consider how to modify the algorithm to allow for non-linear embeddings, in particular using semi-algebraic sets, as well as what happens when the noise is not assumed to be Hausdorff. These directions form a sort of 'orthogonal' basis for future research, as they can be thought of as independent problems, but when combined present a rather significant development towards learning stratified spaces.  

\backmatter

\bmhead{Acknowledgments}
This work was undertaken during the author's PhD, which was supported by Australian Federal Government Grant, 2019-2022, `Stratified Space Learning'. The author would like to thank Kate Turner, Chris Williams, Jonathan Spreer, Stephan Tillmann, Vanessa Robins, Vigleik Angeltveit, Martin Helmer, and James Morgan for very helpful discussions.

\section*{Declarations}

\subsection*{Competing interests}
    The work in this paper was undertaken during my PhD at the Australian National University and the University of Sydney, which was supported by an Australian Federal Government Grant, 2019-2022, \emph{Stratified Space Learning}.

\begin{appendices}

\section{Proofs of Geometric Lemmas}\label{sec:geom-lems-proofs}

\begin{proof-app}[Proof of \Cref{lem:sample-sphere-distance}]
        Consider $S_{R-\varepsilon}^{R+\varepsilon}(p) \cap L$ say $C$. Consider a point $q \in S_{R-\varepsilon}^{R+\varepsilon}(p)$ with $d(L,q) \leq \varepsilon$. Let $q_L$ be the projection of $q$ to $L$, $p_L$ the projection of $p$ to $L$.

        There are two cases we need to consider, 
        \begin{enumerate}
            \item $\lVert x - q_L \rVert \geq \lVert q_L - p_L \rVert$,
            \item $\lVert x - q_L \rVert < \lVert q_L - p_L \rVert$.
        \end{enumerate}

        \begin{figure}[h]
            \centering
            \begin{subfigure}[b]{0.4\textwidth}
            \begin{tikzpicture}[scale=0.5, x=1pt, y=1pt]
                \draw (0,0) -- (200,0);
                \draw (10,-20) node [anchor = north] {$p$}; 
                \draw (90,20) node [anchor = south] {$q$};
                \draw (90,0) node [anchor = north, yshift=-8pt] {$q_L$};
                \draw (180,0) node [anchor = north] {$x$};
                \draw (10,0) node [anchor = south] {$p_L$};
                \draw (10,-20) -- (180,0);
                \draw (10,-20) -- (90,20);
                \draw (180,0) -- (90,20);
                \draw (90,20) -- (90,0);
                \draw (10,-20) -- (10,0);
            \end{tikzpicture}
            \caption{When $p$ and $q$ are on the same side of $x$}\label{fig:sample-sphere-distance-larger}
            \end{subfigure}
            \hspace{1cm}
            \begin{subfigure}[b]{0.4\textwidth}
            \begin{tikzpicture}[scale=0.5, x=1pt, y=1pt]
                \draw (0,0) -- (200,0);
                \draw (10,-20) node [anchor = north] {$p$}; 
                \draw (180,20) node [anchor = south] {$q$};
                \draw (180,0) node [anchor = north] {$q_L$};
                \draw (70,0) node [anchor = south] {$x$};
                \draw (10,0) node [anchor = south] {$p_L$};
                \draw (10,-20) -- (70,0);
                \draw (10,-20) -- (180,20);
                \draw (70,0) -- (180,20);
                \draw (180,20) -- (180,0);
                \draw (10,-20) -- (10,0);
            \end{tikzpicture}
           \caption{When $p$ and $q$ are on different sides of $x$}\label{fig:sample-sphere-distance-smaller}
           \end{subfigure}
            \caption{}
            \label{fig:sample-sphere-distance}
        \end{figure}
        
        We begin with case 1.

        We want to bound $\lVert x - q \rVert$. Note that 
        \begin{align*}
            \lVert q-x \rVert^2 &= \lVert q-q_L \rVert^2 + \lVert q_L - x \rVert^2,\\
            \lVert q_L - x \rVert &= \lVert p_L - x \rVert - \lVert p_L - q_L \rVert, \\
            \lVert p_L - q_L \rVert^2 & = \lVert q - p \rVert^2 - \left( \lVert p-p_L \rVert + \lVert q-q_L \rVert \right)^2,\\
            \lVert p_L - x \rVert^2 &= \lVert x - p \rVert^2 - \lVert p_L - p \rVert^2.
        \end{align*}
        Hence, 
        \begin{align*}
            &\lVert q-x \rVert^2  \\
                                    &= \lVert q - q_L \rVert^2 + \left( \lVert p_L - x \rVert - \lVert p_L - q_L \rVert \right)^2 \\
                                    &= \lVert q - q_L \rVert^2 + \left( \sqrt{\lVert q - p \rVert^2 - \lVert p_L - p \rVert^2} - \sqrt{\lVert q - p \rVert^2 - \left( \lVert p-p_L \rVert + \lVert q-q_L \rVert \right)^2}\right)^2 \\
                                    &= \lVert q - q_L \rVert^2 + \\
                                    & \, \left( \sqrt{\lVert q - p \rVert^2 - \lVert p_L - p \rVert^2} - \sqrt{\lVert q - p \rVert^2 - \lVert p-p_L \rVert^2 - \left(\lVert q-q_L \rVert^2 + \lVert p-p_L \rVert \lVert q-q_L \rVert\right)}\right).
        \end{align*}
        \newcommand{\A}{\lVert q - p \rVert^2 - \lVert p-p_L \rVert^2}
        \newcommand{\B}{\lVert q-q_L \rVert^2 + \lVert p-p_L \rVert \lVert q-q_L \rVert}
        Let 
        \begin{align*}
            A &= \A, \\
            B &= \B.
        \end{align*}
        As
        \begin{align*}
            \lVert q - p \rVert     &\leq  R, \\          
            \lVert p - p_L \rVert   &\leq \frac{R}{2},\\
            \lVert q - q_L \rVert   &\leq \varepsilon,
        \end{align*}
        we have
        \begin{align*}
            A &>(R-\varepsilon)^2 - \varepsilon^2 \\
            B &< 3 \varepsilon^2
        \end{align*}
        and  so $A > \frac{4B}{3}$.
        Then
        \begin{align*}
            \frac{A B}{3}           & > \frac{4B^2}{9} \\
            A^2 - AB                &> A^2 - \frac{4 AB}{3} + \frac{4B^2}{9} \\
            \sqrt{A(A-B)}           &> A - \frac{2B}{3} \\
            - 2 \sqrt{A(A-B)}       &< -2 A + \frac{4B}{3} \\
            2A - B- 2\sqrt{A(A-B)}  &< \frac{B}{3} \\
            \left(\sqrt{A} - \sqrt{ A - B} \right)^2 &< \frac{B}{3}
        \end{align*}

        Recall $A > \frac{4B}{3}$, thus
    
        \begin{align*}
             \lVert q-x \rVert^2    &= \lVert q - q_L \rVert + \left( \sqrt{A} - \sqrt{A-B} \right)^2 \\
                                    &\leq \varepsilon^2 + \frac{B}{3} \\
                                    & \leq 2 \varepsilon
        \end{align*}

        A similar calculation in case 2 gives a smaller bound, so \[\lVert q- x \rVert \leq \sqrt{2} \varepsilon.\]
    \end{proof-app}

    \begin{proof-app}[Proof of \Cref{lem:antipodal-flat}]
        First, let $p_H$ be the projection of $p$ to $H$, and note that $\lVert p_H -p \rVert \leq \varepsilon$. Take $q_1 \in S_{R-\varepsilon}^{R+\varepsilon}(p)\cap P$. Let $x_1$ be the point in $\partial B_R(p) \cap H$ closest to $q_1$, and $q_H$ the projection of $q_1$ to $H$. Note that $p_H, q_H, x_1$ are co-linear, lying on the ray $L$ from $p_H$, and $\lVert q_1 - q_H \rVert \leq \varepsilon$. By \Cref{lem:sample-sphere-distance}, $\lVert q_1 - x_1 \rVert \leq \sqrt{2} \varepsilon.$
        
        As $H \cap \partial B_R(p)$ is a circle with radius $\sqrt{R^2 - \lVert p_H -p \rVert^2}$, there is a point $x_2 \in H \cap \partial B_R(p)$ such that $\lVert x_2 - x_1 \rVert = 2 \sqrt{R^2 - \lVert p_H -p \rVert^2}$. As $d_H(p,H) \leq \varepsilon$, we have 
        \begin{equation*}
            \lVert x_2 - x_1 \rVert \geq 2\sqrt{R^2 - \varepsilon^2},
        \end{equation*}
        and as $d_H(P,H) \leq \varepsilon$, there is $q_1 \in P$ with $\lVert q_1 - x_1 \rVert \leq \varepsilon$. Hence
        \begin{equation*}
            \Vert q_2 - q_1 \rVert \geq 2\sqrt{R^2- \varepsilon^2} - (1+\sqrt{2})\varepsilon.
        \end{equation*}
    \end{proof-app}

    \begin{proof-app}[Proof of \Cref{lem:antipodal-not-flat}]
        First, let $H_1'$ be the half plane containing $H_1$ with bounding line $L'$ such that $D(L,L') = \varepsilon$, $p_H$ be the projection of $p$ onto $H_1'$ and $p_L$ the projection of $p$ to $L$. Then take $x_1 \in H_1$ such that $\lVert p - x_1 \rVert = R$ and $p_H, p_L$ and $x_1$ are co-linear. Take $q_1 \in P$ with $\lVert q_1 - x_1 \rVert \leq \varepsilon$, so $q_1 \in S_{R-\varepsilon}^{R+\varepsilon}(p) \cap P$. 
        
        Let $q_2$ be a point in $S_{R-\varepsilon}^{R+\varepsilon}(p) \cap P$. There are two cases to consider: $d(q_2, H_1') \leq \varepsilon$ and $d(q_2, H_2) \leq \varepsilon$.
        
        If $d(q_2, H_1') \leq \varepsilon$, take $x_2 \in \partial B_R(p) \cap H_1'$  such that $x_2, p_H$ and the projection of $q_2$ to $H_1'$ are co-linear. Then by \Cref{lem:sample-sphere-distance} $\lVert q_2 - x_2 \rVert \leq \sqrt{2} \varepsilon$. 
        
        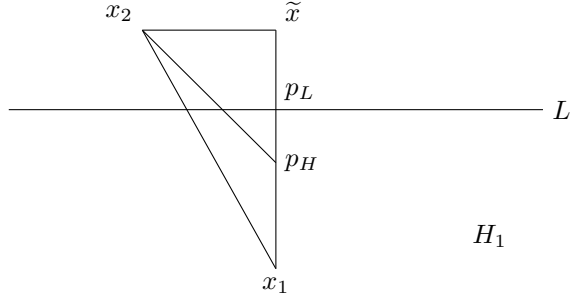
\begin{figure}[h!]
            \centering
            \begin{tikzpicture}[x=1pt,y=1pt]
                \draw (-100,0) -- (100,0); 
                \draw (100,0) node [anchor =west] {$L$};
                \draw (80,-40) node [anchor = north] {$H_1$};
                \draw (0,0) node [anchor = south west] {$p_L$};
                \draw (0,-60) node [anchor = north] {$x_1$};
                \draw (-50, 30) node [anchor = south east] {$x_2$};
                \draw (0,-20) node [anchor = west] {$p_H$};
                \draw (-50,30) -- (0,-20);
                \draw (0,0) -- (0,-60);
                \draw (0,-60) -- (-50,30);
                \draw (-50,30) -- (0,30);
                \draw (0,30) -- (0,0);
                \draw (0,30) node [anchor = south west] {$\widetilde{x}$};
            \end{tikzpicture}
            \caption{Understanding the behaviour of points near the common boundary of two half-planes}\label{fig:flat-angle}
        \end{figure}

        Consider the triangle formed by $x_1, p_h, x_2$.  
        By assumption,
        \begin{align*}
            \lVert \widetilde{x} - p_H \rVert &< \frac{R}{2} < R -7 \varepsilon, \\
            \lVert x_2 - p_H \rVert = \lVert x_1 - p_H \rVert &\leq R.
        \end{align*}
        Let $\widehat{R} = \sqrt{R^2 - \lVert p_H - p\rVert^2}$.
        Then
        \begin{align*}
            \lVert \widetilde{x} - p_H \rVert &< \widehat{R} \\
            \lVert \widetilde{x} - p_H \rVert &< \widehat{R} - 6 \varepsilon \\
            2\widehat{R} \lVert \widetilde{x} - p_H \rVert &< 2 \widehat{R}^2 - 12\widehat{R} \varepsilon \\
            2\widehat{R}^2 + 2 \widehat{R} \lVert \widetilde{x} -p_H \rVert &< 4 \widehat{R}^2 - 4(1+\sqrt{2}) \widehat{R}\varepsilon + (1+\sqrt{2}) \varepsilon \\
            2 \widehat{R}^2 + 2 \widehat{R} \left( \frac{\lVert \widetilde{x} - p_H \rVert}{\widehat{R}}\right) &< \left( 2\widehat{R} - (1 + \sqrt{2})\varepsilon \right)^2.
        \end{align*}
        Further, 
        \begin{align*}
            2 \widehat{R}^2 + 2 \widehat{R} \left( \frac{\lVert \widetilde{x} - p_H \rVert}{\widehat{R}}\right)&= \lVert x_1 - p_H \rVert^2 + \lVert x_2 - p_H \rVert^2 + 2 \lVert x_1 -p_H \rVert \lVert x_2 -p_h \rVert \cos \angle x_2 p_h \widetilde{x} \\
            &= \lVert x_1 - p_H \rVert^2 + \lVert x_2 - p_H \rVert^2 - 2 \lVert x_1 -p_H \rVert \lVert x_2 -p_h \rVert \cos \angle x_2 p_h x_1 \\
                &= \lVert x_2 - x_1 \rVert^2,
        \end{align*}
        so
        \begin{align*}
            \lVert x_2 - x_1 \rVert & < \sqrt{R^2 - \lVert p_H - p\rVert^2} - (1 + \sqrt{2})\varepsilon.
        \end{align*}
        which implies 
        \begin{align*}  
            \lVert q_2 - q_1 \rVert &< 2\sqrt{R^2 - \varepsilon^2} - (2+2\sqrt{2}) \varepsilon. 
        \end{align*}

        \begin{figure}[h!]
            \centering
            \begin{tikzpicture}[x=1.5pt,y=1.5pt]
                \draw (25,52.5) node [anchor = south] {$x_2$};
                \draw (-100,-20) -- (100,20);
                \draw (0,0) node [anchor = south east] {$p_L$};
                \draw (29.5,-40) node [anchor = north] {$x_1$};
                \draw (29.5,-40) -- (0,0);
                \draw (25,52.5) -- (0,0);
                \draw (25,52.5) -- (29.5,-40);
                \draw (10,-15) node [anchor = east] {$p_H$};
            \end{tikzpicture}
            \caption{$d(q_2, H_2) \leq \varepsilon$}
        \end{figure}
        Now assume $d(q_2, H_2) \leq \varepsilon$. Let $H_2'$ be the half-plane which contains $H_2$ and has boundary $L'$ with $d(L,L') = \varepsilon$. As $d(q_2, H_2) \leq \varepsilon$, then there is $x_2 \in \partial B_R(p) \cap H'_2$ with $\lVert q_2 - x_2 \rVert \leq \sqrt{2} \varepsilon$. Hence, \[ \lVert x_1 -x_2 \rVert \geq 2\sqrt{R^2- \varepsilon^2} - (2+ 2\sqrt{2}) \varepsilon.\] If $x_2 \in H_2' \setminus H_2$, then by a similar argument to above, \[ \lVert x_1 -x_2 \rVert \geq 2\sqrt{R^2- \varepsilon^2} - (2+ 2\sqrt{2}) \varepsilon.\]
        
        If $x_2 \in H_2 \subsetneq H_2 '$, by the cosine rule we have
        \begin{equation*}
            \lVert x_2 - x_1 \rVert^2 = \lVert x_2 - p_L \rVert^2 + \lVert x_1 - p_L \rVert^2 - 2 \lVert x_2 - p_L \rVert  \lVert x_1 - p_L \rVert \cos \angle x_1 p_L x_2.
        \end{equation*}
        Note $\lVert x_1 -p_L \rVert = \lVert x_1 -p_H  \rVert + \lVert p_H - p_L \rVert$, and $\lVert x_2 - x_1 \rVert$ is bounded above by the case when 
        \begin{align*}
            \angle x_1 p_L x_2 &= \alpha, \\
            \lVert x_2 - p_L \rVert &= R+2 \varepsilon, \\
            \lVert x_1 -p_L \rVert &= \lVert x_1 -p_H  \rVert + \lVert p_H - p_L \rVert = \frac{3R}{2} + \varepsilon.
        \end{align*}
        Hence, we have 
        \begin{align*}
            \lVert x_2 - x_1 \rVert & < (R+2 \varepsilon)^2 + \left( \frac{3R}{2} + \varepsilon \right)^2 - (R+2 \varepsilon) \left( \frac{3R}{2} + \varepsilon \right) \cos \alpha.
        \end{align*}

        By assumption, $\alpha \in \left(0, \Psi(\varepsilon,R)\right)$, and so 
        \begin{align*}
            \lVert x_2 - x_1 \rVert & < 2\sqrt{R^2- \varepsilon^2} - \left( 2+ 2\sqrt{2} \right)\varepsilon,
        \end{align*}
        which implies that 
        \begin{align*}
            \lVert q_2 - q_1 \rVert & < 2\sqrt{R^2- \varepsilon^2} - \left( 1+ \sqrt{2} \right) \varepsilon.
        \end{align*}

        Hence, there is a $q_1 \in S_{R-\varepsilon}^{R+\varepsilon}(p) \cap P$ such that for all $q_2 \in S_{R-\varepsilon}^{R+\varepsilon}(p) \cap P$
         \begin{align*}
            \lVert q_2 - q_1 \rVert & < 2\sqrt{R^2- \varepsilon^2} - \left( 1+ \sqrt{2} \right) \varepsilon.
        \end{align*}
    \end{proof-app}

    \begin{proof-app}[Proof of \Cref{lem:diamline}]
        By \Cref{lem:sample-sphere-distance}, every $q \in S_{R-\varepsilon}^{R+\varepsilon}(p) \cap P$ is with in $\sqrt{2} \varepsilon$ of the point $x$ in $L$ with $\lVert x - p \rVert = R$. Hence, $\left(S_{R-\varepsilon}^{R+\varepsilon}(p) \cap P \right)^{\frac{3\varepsilon}{2}}$ consists of a single connected component and it has diameter less than $2\sqrt{2}\varepsilon$. 
    \end{proof-app}

    \begin{proof-app}[Proof of \Cref{lem:diamtri}]
        As $\lVert p - z \rVert \leq \frac{R-\varepsilon}{2}$, the intersection $S_{R-\varepsilon}^{R+\varepsilon}(p) \cap T$ is not empty, connected, and $\mathcal{H}_1\left( S_{R-\varepsilon}^{R+\varepsilon}(p) \cap T\right) = 0$. Further, the intersections $S_{R-\varepsilon}^{R+\varepsilon}(p) \cap L_1$ and $S_{R-\varepsilon}^{R+\varepsilon}(p) \cap L_2$ are also connected.
        
        Now, let $x_1$ be the point on $L_1$ with $\lVert q_1 - p \rVert = R$ and let $x_2$ be the point on $L_2$ with $\lVert x_2 - p \rVert = R$. As $S_{R-\varepsilon}^{R+\varepsilon}(p) \cap T$ is path connected, $x_1$ and $x_2$ are path connected in $T$. 
        
        \begin{figure}
            \centering
            \begin{tikzpicture}[x=1pt,y=1pt]
                \draw (0,0) node [anchor = east] {$p$};
                \draw (50,-50) node [anchor = north] {$x_1$};
                \draw (50,50) node [anchor = south] {$x_2$};
                \draw (20,0) node [anchor = east] {$z$};
                \draw (20,0) node [anchor = west] {$\alpha$};
                \draw (0,0) -- (50,-50);
                \draw (0,0) -- (50,50);
                \draw (20,0) -- (50,-50);
                \draw (20,0) -- (50,50);
                \draw (50,50) -- (50,-50);
            \end{tikzpicture}
            \caption{Bounding the diameter of a set of points}
            \label{fig:diamtri}
        \end{figure}
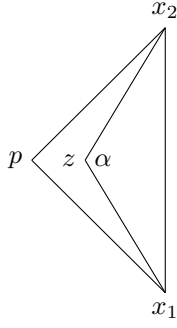
        
        Consider the triangle $\triangle x_1 p x_2$, we have
        
        \begin{align*}
            \lVert x_1 - x_2 \rVert^2   &= \lVert x_1 - z\rVert^2 + \lVert x_2 -z \rVert^2 - 2 \lVert x_1 - z \rVert \lVert q_2 - z \rVert \cos \alpha \\
                                        &\geq \left( R - \frac{R-\varepsilon}{2}\right)^2 + \left( R - \frac{R-\varepsilon}{2}\right)^2 - 2\left( R - \frac{R-\varepsilon}{2}\right)^2 \cos \alpha \\
                                        &= 2 \left( \frac{R+\varepsilon}{2}\right)^2(1 - \cos \alpha).
        \end{align*}     

        Now, as $d_H(P, T) \leq \varepsilon$, there are points $q_1, q_2 \in P$ with \[\lVert q_1 - x_1\rVert, \lVert q_2 - x_2 \rVert \leq \varepsilon.\] Then by the triangle inequality
        
        \begin{align*}
            \lVert q_1 - q_2 \rVert^2  &= 2 \left( \frac{R+\varepsilon}{2}\right)^2(1 - \cos \alpha) - 2 \varepsilon\\
                    & > 2 \sqrt{2} \varepsilon, \, \text{ as } \alpha \in \left [ \frac{\pi}{6}, \pi \right )
        \end{align*} 
    \end{proof-app}

\section{Proof of Correctness Lemmas}\label{sec:correctness-lems-proofs}

\begin{proof-app}[Proof of \Cref{prop:spanningedge}]
        Let $C$ be a connected component of $\check{\mathcal{C}}_{\frac{3\varepsilon}{2}}(P_{LM,1})$ which spans a locally maximal edge $\overline{uv}$, with midpoint $m_{uv}$. Then, there is a sample $p_m \in C$ such that $\lVert p_m - m_{uv} \rVert \leq \varepsilon$. 
        
        To show that $\mathcal{D}(C) \geq \frac{9R}{2}$, we show that there are two points $x_u, x_y \in \overline{uv}$ such that 
        \begin{enumerate}
            \item $\lVert u - x_u \rVert > \frac{3R}{2} + 2 \varepsilon$,
            \item $\lVert v - x_v \rVert > \frac{3R}{2} + 2 \varepsilon$,
            \item $\lVert x_u - x_v \rVert \geq \frac{3R}{2}$.
        \end{enumerate}

        Without loss of generality, we show that $x_u$ exists, and 
       
       \begin{equation*}
            \lVert x_u - m_{uv} \rVert \geq \frac{3R}{4}+\varepsilon.
        \end{equation*} 
        
        By \Cref{as:embed}, $\lVert u - v \rVert \geq 6(R+\varepsilon)$. As $\overline{uv}$ is a line segment, for all $\eta \in [0, \frac{9R}{4} + 3 \varepsilon]$ there is a point $x_{\eta} \in \overline{uv}$ such that $\lVert x_{\eta} - u \rVert = \eta$. Letting $\eta = \frac{3R}{2} + 2\varepsilon$, there is a point, namely $x_u$ such that $\lVert x_u - u \rVert = \frac{3R}{2} + 2\varepsilon$. As $P$ is an $\varepsilon$-sample, there is a sample $p_u$ such that $\lVert x_u - p_u \rVert \leq \varepsilon$, and hence $\lVert p_u - u \rVert > \frac{3R}{2}+\varepsilon$. Thus, the $(\varepsilon,R)$-local structure of $P$ at $p_u$ is maximal of dimension $1$.

        We can repeat this argument for all $\eta \in [\frac{3R}{2}+2 \varepsilon, \frac{9R}{4}+3\varepsilon]$, and obtain a path of points $x_{\eta} \in \overline{uv}$ and samples $p_{\eta} \in P$ connecting $p_u$ to $p_m$. 

        This also holds when we replace $u$ with $v$, and hence we have $p_u$ and $p_v$. Finally, we have

        \begin{align*}
            \lVert p_u - p_v \rVert &\geq \rVert  x_u - x_v \rVert - \lVert p_u - x_u \rVert - \lVert p_v - x_v \rVert \\
                                    &\geq \frac{3R}{2} - 2\varepsilon, \\
        \end{align*}

        and hence $\mathcal{D}(C) \geq \frac{3R}{2} - 2\varepsilon.$
        
        Now, we show that if $\mathcal{D}(C) \geq \frac{3R}{2} - 2 \varepsilon$, then $C$ spans some locally maximal edge.

        If $\mathcal{D}(C) \geq \frac{3R}{2} - 2 \varepsilon$, then there are points $p,q \in C$ with \[\lVert p - q \rVert \geq \frac{3R}{2} - 2 \varepsilon.\] As $P$ is an $\varepsilon$-sample of $|X|$, there are points $x_p, x_q \in |X|$, with 
        
        \begin{align*}
            \lVert x_p - p \rVert,\, \lVert x_q - q \rVert & \leq \varepsilon.
        \end{align*}

        Let $m_{pq}$ be the midpoint of $x_p$ and $x_q$. As $p$ and $q$ are in the same connected component of $\check{\mathcal{C}}_{\frac{3\varepsilon}{2}}\left(P_{LM,1}\right)$, we know there is a sequence of points $\{ q_i \}_{i=0}^m$ with $q_0 = p, \, q_m = q$ and for all $0 < i \leq m$, $\lVert q_i - q_{i-1} \rVert \leq 3\varepsilon$. Again, $P$ is an $\varepsilon$-sample of $|X|$, and as $q_i \in P_{LM,1}, \, \forall 0 \leq i \leq m$, for each $q_i$ there is some $x_i \in |X|$ which is on a locally maximal edge, and $\lVert q_i - x_i \rVert \leq \varepsilon$. From \Cref{as:embed} and \Cref{prop:lmedge}, there is a locally maximal edge, say $\overline{uv}$ such $x_i \in \overline{uv}, \forall 0 \leq i \leq m$. Let the midpoint of $\overline{uv}$ be $x_{uv}$.
        
        We now split into two cases:
        \begin{enumerate}
            \item[I] there is some $i$ such that $x_i = x_{uv}$,
            \item[II] for all $i$ we have $x_i \neq x_{uv}$.
        \end{enumerate}
        
        \underline{Case I:} The connected component $C$ is a spanning connected component, as it contains a sample which is within $\varepsilon$ of the midpoint $x_{uv}$ of the locally maximal edge $\overline{uv}$.

        \underline{Case II:} As no $q_i$ is within $\varepsilon$ of $m_{uv}$, we know that $q_i \, \forall 0\leq i \leq m$ are on the same side of $\overline{uv}$. That is, for all $q_i$, without loss of generality,
        
        \begin{align*}
            \lVert q_i - x_{uv} \rVert \leq \lVert q_i - u \rVert &\geq \frac{3\sqrt{3}}{2} R + 3\varepsilon \\
            \lVert q_m - v \rVert &\geq \frac{3R}{2} + \varepsilon.
        \end{align*}
        
        Further, assume that 
        \begin{equation*}
            \lVert q_0 - m_{uv} \rVert \leq \lVert q_m - x_{uv} \rVert.
        \end{equation*}

        There is another sequence of points $\{x'_j\}_{j=0}^{m'}$ in $\overline{uv}$ with $x'_0 = x_m$ and $x'_{m'}=x_{uv}$, and for $0 < j \leq m'$
        
        \begin{equation*}
            \lVert x'_j - x'_{j-1} \rVert \leq \varepsilon.
        \end{equation*}

        Then, there exists $q'_j \in P$ with 
        \begin{align*}
            \lVert q'_j - x'_j \rVert &\leq \varepsilon \\
            \lVert q'j - q'_{j-1} \rVert &\leq \varepsilon \, \forall 0 < j \leq m' \\
            \lVert q'_j - v \rVert &\geq \frac{3R}{2} + \varepsilon.
        \end{align*}
        
        By \Cref{as:embed} and \Cref{prop:lmedge}, $q'_j \in P_{LM,1}$ for all $ 0 \leq j \leq m'$. Hence, each $q'_j$ is in the same connected component $C$ as $q_m$. 

        Thus, $C$ contains a sample $q'_{m'}$ which is within $\varepsilon$ of the midpoint of the locally maximal edge $\overline{uv}$. Hence, $C$ is a spanning connected component.

        Thus a component $C$ of $\check{\mathcal{C}}_{\frac{3\varepsilon}{2}}\left( P_{LM,1}\right)$ spans a locally maximal edge $\overline{uv}$ if and only if $\mathcal{D}(C) \geq \frac{3R}{2} - 2\varepsilon$.
    \end{proof-app}

    \begin{proof-app}[Proof of \Cref{prop:spanningtri}]
        First, let $C$ be a connected component of $\check{\mathcal{C}}_{\frac{3\varepsilon}{2}}$ which spans some triangle $\triangle u v w$ with midpoint $m$. As $P$ is an $\varepsilon$-sample of $X$, there is a sample $p_m \in P$ with $\lVert p_m - m \rVert \leq \varepsilon$. As the radius of the inscribed circle of $\triangle u v w$ is at least $2R+ 3\varepsilon$, $m$ is at least $2R+ 3\varepsilon$ from $\partial \triangle u v w$. Thus, $d(p_m, \partial \triangle u v w) \geq 2R + 2 \varepsilon$.

        Hence, for all $q \in B_{\frac{R}{2}+2\varepsilon}(p) \cap P$, $d(q, \partial \triangle u v w) \geq \frac{3R}{2} + \varepsilon$, and so $q \in P_{LM,2}$.

        Now, take $p \in P_{LM,2}$ such that $B_{\frac{R}{2} + \varepsilon}(p) \cap P \subset P_{LM,2}$. Then, there is some triangle $\triangle u v w$ with $d(\triangle u v w, p) \leq \varepsilon$. As $p \in P_{LM,2}$, we know that $d(\partial \triangle u v w, p) > \frac{R}{2}- \varepsilon$. By assumption, for all $q \in B_{\frac{R}{2} + \varepsilon}(p) \cap P$, we have $d(\partial \triangle u v w, q) > \frac{R}{2}- \varepsilon$. Recall that $P$ is an $\varepsilon$-sample of $|X|$, so there is a point $x \in X$ such that $\lVert p -x \rVert \leq \varepsilon$. As $\triangle u v w$ is convex, and every $B_{\frac{R}{2} + \varepsilon}(p) \cap P \subset P_{LM,2}$, we have
        \begin{align*}
            d(\partial \triangle u v w, x)  &\geq \frac{R}{2} +2\varepsilon + \frac{R}{2} - 2\varepsilon = R.
        \end{align*}
        Hence, is a point $y \in B_{\frac{R}{2}  + 2\varepsilon}(p) \cap \triangle u v w$ with 
        \begin{align*}
           d(\partial \triangle u v w, y)  &\geq \frac{R}{2} +2\varepsilon.
        \end{align*}
        and a sample $q \in B_{\frac{R}{2}+2\varepsilon}(p) \cap P_{LM,2}$  with $\lVert q - y\rVert \leq \varepsilon$.
        
        Now, we can construct a sequence of points $\{ y_i\}_{i=0}^{m} \subset \triangle u v w$ such that $\lVert y_i - y_{i-1} \rVert \leq \varepsilon$ for $1 \leq i \leq m$, and $y_0 = x, \, y_m =y$. Further, for each $y_i$ there is a $q_i \in P$ with $\lVert q_i - y_i \rVert \leq \varepsilon$, and $q_i \in P_{LM,2}$. Note, that this means $p$ and $q_m$ are in the same connected component $C$ of $\check{\mathcal{C}}_{\frac{3\varepsilon}{2}}\left(P_{LM,2}\right)$.

        Finally, we construct a similar sequence of points $\{ \widetilde{y}_j\}_{j=0}^{\widetilde{m}}$ in $|X|$ from $y$ to $m_{\triangle u v w}$ with $\widetilde{y_0}=y, \, \widetilde{y}_{\widetilde{m}}=m_{\triangle u v w}$. Again, for each $\widetilde{y}_j$, there is a $\widetilde{q}_j \in P$ with $\lVert \widetilde{y}_j - \widetilde{q}_j \rVert \leq \varepsilon$ and $\widetilde{q}_j \in P_{LM,2}$. Hence, the $\widetilde{q}_j$ are in the same connected component of $\check{\mathcal{C}}_{\frac{3\varepsilon}{2}}\left(P_{LM,2}\right)$, and further, this connected component is $C$.
    \end{proof-app}

    \begin{proof-app}[Proof of \Cref{prop:equivlmvert}]
        Let $V_{LM}$ be the set of locally maximal vertices of $X$. Let $v$ be a locally maximal vertex, then by \Cref{prop:lmvertex}, $\forall p \in P$ with $\lVert p - v \rVert \leq 4 \varepsilon$, $p \in P_{LM,0}$. In fact, by \Cref{as:embed}, any $p \in P$ with $\lVert p - v \rVert \leq 4 \varepsilon$ is actually within $\varepsilon$ of $v$. Hence, every $p \in P_{LM,0}$ within $\varepsilon$ of $v$ are in the same connected component of $\check{\mathcal{C}}_{\frac{3\varepsilon}{2}}\left(P_{LM,0}\right)$.

        Now, take a connected component $C$ of $\check{\mathcal{C}}_{\frac{3\varepsilon}{2}}\left(P_{LM,0}\right)$. Each $p \in C$ is within $\varepsilon$ of a locally maximal vertex $v_p$ of $X$. By \Cref{as:embed}, every locally maximal vertex $v$ is at least $5 \varepsilon$ away from any other cell of $X$, and hence $\forall p \in C$, $v_p$ is the same.

        Hence, the connected components of $\check{\mathcal{C}}_{\frac{3\varepsilon}{2}}\left(P_{LM,0}\right)$ correspond bijectively to the locally maximal vertices of $X$.
    \end{proof-app}

\begin{proof-app}[Proof of \Cref{prop:equivlmspanedge}]
       Let $E_{LM} \subset E$ be the set of locally maximal edges in $X$. By \Cref{prop:spanningedge}, a connected component $C$ of $\check{\mathcal{C}}_{\frac{3\varepsilon}{2}} \left(P_{LM,1}\right)$ spans an edge $\overline{uv}$ if and only if it contains a sample $p$ within $\varepsilon$ of the midpoint $m$ of $\overline{uv}$.

        If a connected component $C$ is a spanning component, then there is some locally maximal edge $\overline{uv}$ with midpoint $m$ such that there is a sample $p \in C$ with $\lVert m - p \rVert \leq \varepsilon$.

        For any locally maximal $\overline{uv} \in E_{LM}$ with midpoint $m$, there is some sample $p \in P$ such that $\lVert m -p \rVert \leq \varepsilon$. Then, by \Cref{as:embed,prop:lmedge}, $p \in P_{LM,1}$, and so there is some spanning connected component $C_{\overline{uv}}$ in $\check{\mathcal{C}}_{\frac{3\varepsilon}{2}} \left(P_{LM,1}\right)$.

        Now, consider a locally maximal edge $\overline{uv'}, \, v' \neq v$, and take samples $p,q \in P_{LM,2}$ such that $d(\overline{uv}, p ), \, d(\overline{uv'}, q) \leq \varepsilon$. By \Cref{as:embed}, $\lVert p - q \rVert > 6 \varepsilon$, and so $p$ and $q$ are in different connected components of $\check{\mathcal{C}}_{\frac{3\varepsilon}{2}}(P_{LM,1})$. 
       
        Finally, consider a locally maximal edge $\overline{u'v'}$ such that $\overline{uv}$ and $\overline{u'v'}$ do not have a common vertex. Take samples $p,q \in P_{LM,2}$ such that 
        \begin{equation*}
            d(\overline{uv}, p ), \, d(\overline{u'v'}, q) \leq \varepsilon.
        \end{equation*}
        Again, by \Cref{as:embed}, $\lVert p - q \rVert > 6 \varepsilon$, and so $p$ and $q$ are in different connected components of $\check{\mathcal{C}}_{\frac{3\varepsilon}{2}}(P_{LM,1})$. 
       
        Hence, each connected component $C$ only consists of samples $p$ with $d(\overline{uv}, p) \leq \varepsilon$ for a single locally maximal edge $\overline{uv}$. 

        Thus, the spanning connected components of $\check{\mathcal{C}}_{\frac{3\varepsilon}{2}}\left(P_{LM,2}\right)$ are in bijection with the locally maximal edges of $|X|$.
    \end{proof-app}

\begin{proof-app}[Proof of \Cref{prop:equivlmspantri}]
        From \Cref{prop:spanningtri}, a connected component $C$ of $\check{\mathcal{C}}_{\frac{3\varepsilon}{2}}\left(P_{LM,2}\right)$ spans a triangle $\triangle u v w$ if and only if it contains a sample $p$ within $\varepsilon$ of the midpoint $m$ of $\triangle u v w$.

        As $P$ is a $\varepsilon$-sample of $|X|$, for every $\triangle u v w$ with midpoint $m$, there is a sample $p \in P$ such that $\lVert p - m \rVert \leq \varepsilon$. Hence, there is a spanning connected component $C$ in $\check{\mathcal{C}}_{\frac{3\varepsilon}{2}} \left(P_{LM,2}\right)$.

        Now, consider $C$ a spanning component of $\check{\mathcal{C}}_{\frac{3\varepsilon}{2}} \left(P_{LM,2}\right)$. Then, as $P$ is a $\varepsilon$-sample, there is some $\triangle u v w$ with midpoint $m$ such that there is a sample $p \in C$ with $\lVert p - m \rVert \leq \varepsilon$.

        Consider two triangles $\triangle u v w, \, \triangle u' v' w'$, and take two samples $p, p' \in P_{LM,2}$ with 

        \begin{equation*}
            d(\triangle u v w, p ), \, d( \triangle u' v' w', p') \leq \varepsilon.
        \end{equation*}

        As $p, p' \in P_{LM,2}$, we know that 

        \begin{equation*}
            d(\partial \triangle u v w, p ), \, d(\partial \triangle u' v' w', p') > R + \varepsilon,
        \end{equation*}

        and so by \Cref{as:embed}, $\lVert p - p' \rVert > 6 \varepsilon$.

        Hence, the spanning components of $\check{\mathcal{C}}_{\frac{3\varepsilon}{2}}\left( P_{LM,2}\right)$ are in bijection with the triangles of $X$.
    \end{proof-app}

\begin{proof-app}[Proof of \Cref{lem:lmedge-disconnected}]
        As $\overline{uv}$ is a locally maximal edge, there is a corresponding edge spanning component $\mathcal{E}$. As $u, v$ are not faces of any other cell $\sigma \in X$, by \Cref{as:embed} and \Cref{prop:lmedge,prop:nlmvertex}, the points $p \in P_{NLM}$ which witness $\mathcal{E}$ do not witness any other edge spanning component $\mathcal{E}'$ or any triangle spanning component $\mathcal{T}$. 
        
        Thus, there is a single partition $P_1$ of $P_{NLM}$ which contains all the samples $p$ that witness $\mathcal{E}$. By \Cref{as:embed}, there is no other partition $P_2$ of $P_{NLM}$ that witnesses $\mathcal{E}$. Hence, $P_1$ is assigned label $-1$. 
   \end{proof-app}

\begin{proof-app}[Proof of \Cref{lem:lmedge-connected}]
        As $\overline{uv}$ is a locally maximal edge, there is a corresponding edge spanning component $\mathcal{E}$. Without loss of generality, assume $v$ is the face of some locally maximal cell $\sigma \neq \overline{uv}$.

        By \Cref{as:embed} and \Cref{prop:lmtriangle,prop:lmedge,prop:nlmvertex}, there are samples $p_u, pv_v \in P_{NLM}$ such that 

        \begin{equation*}
            \lVert p_u - u \rVert, \, \lVert p_v - v \rVert \leq \varepsilon.
        \end{equation*}

        Further, there is a spanning connected component $\mathcal{C}$ which $p_v$ also witnesses but $p_u$ does not witness. Hence, there are two partitions $P_v, P_u$ which witness $\mathcal{E}$. By \cref{as:embed} and \Cref{alg:pnlmpart}, there are no other partitions which witness $\mathcal{E}$.

        Hence, both $P_v$ and $P_u$ are labelled with $0$ by \Cref{alg:P1class,alg:P2class}.
    \end{proof-app}

\begin{proof-app}[Proof of \Cref{lem:lmtriangle-disconnected}]
        Let $\mathcal{T}$ be the triangle spanning component that corresponds to $\triangle u v w$. By \Cref{as:embed,prop:lmtriangle,prop:nlmedge,prop:lmvertex}, the samples $p \in P_{NLM}$ that witness $\mathcal{T}$ do not witness any spanning connected component $\mathcal{C} \neq \mathcal{T}$. By \cref{as:embed} and \Cref{alg:pnlmpart} there is a unique connected component $P_1$ that witnesses $\mathcal{T}$.

        As $P$ is an $\varepsilon$-sample of $|X|$, and from Cref{prop:nlmedge,prop:lmvertex}, there are samples $p_u, p_v, p_w, p_{uv}, p_{vw}, p_{uw} \in P_1$ such that 

        \begin{align*}
            \lVert p_u -u \rVert, \, \lVert p_v -v \rVert, \, \lVert p_w -w \rVert &\leq \varepsilon, \\
            d(\overline{uv}, p_{uv}), \, d(\overline{vw}, p_{vw}), \, d(\overline{uw}, p_{uw})  & \leq \varepsilon.
        \end{align*} 

        Hence, $P_1$ is assigned label $7$ by \Cref{alg:P1class,alg:P2class}.
    \end{proof-app}

\begin{proof-app}[Proof of \Cref{lem:lmtriangle-share-1-vertex}]
        Let $\mathcal{T}$ be the triangle spanning component that corresponds to $\triangle u v w$. By \Cref{as:embed} and \Cref{prop:lmtriangle,prop:lmedge,prop:nlmedge,prop:lmvertex}, any spanning connected component $\mathcal{C}$ witnessed by samples $p \in P_{NLM}$ that witness $\mathcal{T}$ corresponds to a locally maximal cell $\tau$ such that $\triangle u v w \cap \tau \neq \emptyset$.

        We need to split into two cases:

        \begin{enumerate}
            \item there is a unique locally maximal cell $\tau \in X$ with $\triangle u v w \cap \tau = v$
            \item there are at least two locally maximal cells $\tau, \sigma \in X, \, \tau \neq \sigma$ with $\triangle u v w \cap \tau = \triangle u v w \cap \sigma = v$.
        \end{enumerate}

        \underline{Case 1:} We assumed there was a unique locally maximal $\tau$ with $\triangle u v w \cap \tau = v$, and hence, by \Cref{prop:equivlmspantri,prop:equivlmspanedge} there is some spanning component $\mathcal{C}_{\tau}$ which corresponds to $\tau$. with  By \Cref{as:embed} and \Cref{prop:lmtriangle,prop:lmedge,prop:nlmedge,prop:lmvertex}, in \Cref{alg:pnlmpart} there is a single partition $P_1$ of $P_{NLM}$ which witnesses $\mathcal{T}$ and $\mathcal{C}_{\tau}$, and there is a unique partition $P_2$ which witnesses just $\mathcal{T}$. Further, $P_1$ is assigned label $0$ and $P_2$ label $5$ by \Cref{alg:P1class,alg:P2class}.
         
        \underline{Case 2:} From our assumptions, there are two locally maximal cells $\tau, \sigma \in X, \, \tau \neq \sigma$ such that 

        \begin{equation*}
             \tau \cap \triangle u v w = v = \sigma \cap \triangle u v w.
        \end{equation*}

        By \Cref{prop:equivlmspantri,prop:equivlmspanedge} there is some spanning component $\mathcal{C}_{\tau}$ which corresponds to $\tau$, and some spanning component $\mathcal{C}_{\sigma}$ which corresponds to $\sigma$. 
        
        By \Cref{as:embed} and from \Cref{alg:pnlmpart}, there is a single partition $P_1$ of $P_{NLM}$ which witnesses $\mathcal{T}, \mathcal{C}_{\tau}, \mathcal{C}_{\sigma}$, and no partitions which witness a subset of these spanning components. This holds, by induction, for any locally maximal cell $\tau' \in X, \tau' \neq \tau, \sigma$ with $\tau' \cap \triangle u v w = v$. Similarly, there is a single partition $P_2$ of $P_{NLM}$ which witnesses only $\mathcal{T}$. Further, $P_1$ is assigned label $0$ and $P_2$ label $5$ by \Cref{alg:P1class,alg:P2class}. 
    \end{proof-app}

\begin{proof-app}[Proof of \Cref{lem:lmtriangle-share-1-edge-with-vertices}]
        Let $\mathcal{T}$ be the triangle spanning component that corresponds to $\triangle u v w$. By \Cref{as:embed} and \Cref{prop:lmtriangle,prop:lmedge,prop:nlmedge,prop:lmvertex}, any spanning connected component $\mathcal{C}$ witnessed by samples $p \in P_{NLM}$ that witness $\mathcal{T}$ corresponds to a locally maximal cell $\tau$ such that $\triangle u v w \cap \tau \neq \emptyset$.

        We need to split into two cases:

        \begin{enumerate}
            \item there is a unique locally maximal cell $\tau \in X$ with $\triangle u v w \cap \tau = \overline{uv}$
            \item there are at least two locally maximal cells $\tau, \sigma \in X, \, \tau \neq \sigma$ with $\triangle u v w \cap \tau =  \triangle u v w \cap \sigma = \overline{uv}$.
        \end{enumerate}

        \underline{Case 1:} We assumed there was a unique locally maximal $\tau$ with $\triangle u v w \cap \tau = \overline{uv}$, and hence, by \Cref{prop:equivlmspantri,prop:equivlmspanedge} there is some spanning component $\mathcal{C}_{\tau}$ which corresponds to $\tau$. with  By \Cref{as:embed},\Cref{prop:lmtriangle,prop:lmedge,prop:nlmedge,prop:lmvertex}, in \Cref{alg:pnlmpart} there is a single partition $P_1$ of $P_{NLM}$ which witnesses $\mathcal{T}$ and $\mathcal{C}_{\tau}$, and there is a unique partition $P_2$ which witnesses just $\mathcal{T}$. Further, $P_1$ is assigned label $1$ and $P_2$ label $4$ by \Cref{alg:P1class,alg:P2class}.
         
        \underline{Case 2:} From our assumptions, there are two locally maximal cells $\tau, \sigma \in X, \, \tau \neq \sigma$ such that 

        \begin{equation*}
             \tau \cap \triangle u v w = \overline{uv} = \sigma \cap \triangle u v w.
        \end{equation*}

        By \Cref{prop:equivlmspantri,prop:equivlmspanedge} there is some spanning component $\mathcal{C}_{\tau}$ which corresponds to $\tau$, and some spanning component $\mathcal{C}_{\sigma}$ which corresponds to $\sigma$. 
        
        By \Cref{as:embed} and from \Cref{alg:pnlmpart}, there is a single partition $P_1$ of $P_{NLM}$ which witnesses $\mathcal{T}, \mathcal{C}_{\tau}, \mathcal{C}_{\sigma}$, and no partitions which witness a subset of these spanning components. This holds, by induction, for any locally maximal cell $\tau' \in X, \tau' \neq \tau, \sigma$ with $\tau' \cap \triangle u v w = v$. Similarly, there is a single partition $P_2$ of $P_{NLM}$ which witnesses only $\mathcal{T}$. Further, $P_1$ is assigned label $1$ and $P_2$ label $4$ by \Cref{alg:P1class,alg:P2class}. 
    \end{proof-app}

\begin{proof-app}[Proof of \Cref{lem:lmtriangle-share-2-vertices}]
        Let $\mathcal{T}$ be the triangle spanning component which corresponds to $\triangle u v w$. Then, the proof is an adaption of the proof of \Cref{lem:lmtriangle-share-1-vertex}. By combining the arguments at the two shared vertices, there are three partitions $P_1, P_2, P_3$ from \Cref{alg:pnlmpart} which witness $\mathcal{T}$, and there are spanning connected components $\mathcal{C}_1, \mathcal{C}_2$ such that $P_1$ witnesses $\mathcal{C}_1$ but not $\mathcal{C}_2$, and $P_2$ witnesses $\mathcal{C}_2$ but not $\mathcal{C}_1$. Further, $P_3$ only witnesses $\mathcal{T}$. Hence, $P_1, P_2$ are labelled with $0$ and $P_3$ with $6$.
    \end{proof-app}

\begin{proof-app}[Proof of \Cref {lem:share-edge-with-vertex-and-vertex}]
        Let $\mathcal{T}$ be the triangle spanning component which corresponds to $\triangle u v w$. Then, the proof is an adaption of the proof of \Cref{lem:lmtriangle-share-1-vertex,lem:lmtriangle-share-1-edge-with-vertices}. By combining the arguments there are three partitions $P_1, P_2, P_3$ from \Cref{alg:pnlmpart} which witness $\mathcal{T}$, and there are spanning connected components $\mathcal{C}_1, \mathcal{C}_2$ such that $P_1$ witnesses $\mathcal{C}_1$ and $\mathcal{C}_2$, and $P_2$ witnesses $\mathcal{C}_2$ but not $\mathcal{C}_1$. Further, $P_3$ only witnesses $\mathcal{T}$. Hence, $P_1$ is labelled with $0$, $P_2$ with $1$ and $P_3$ with $3$.
\end{proof-app}

\begin{proof-app}[Proof of \Cref{lem:share-edge-with-vertices-and-vertex}]
        Let $\mathcal{T}$ be the triangle spanning component which corresponds to $\triangle u v w$. Then, the proof is an adaption of the proof of \Cref{lem:lmtriangle-share-1-vertex,lem:lmtriangle-share-1-edge-with-vertices}. By combining the arguments there are three partitions $P_1, P_2, P_3$ from \Cref{alg:pnlmpart} which witness $\mathcal{T}$, and there are spanning connected components $\mathcal{C}_1, \mathcal{C}_2$ such that $P_1$ witnesses $\mathcal{C}_1$ but not $\mathcal{C}_2$, and $P_2$ witnesses $\mathcal{C}_2$ but not $\mathcal{C}_1$. Further, $P_3$ only witnesses $\mathcal{T}$. Hence, $P_1$ is labelled with $0$, $P_2$ with $2$ and $P_3$ with $9$.
    \end{proof-app}

\begin{proof-app}[Proof of \Cref{lem:share-three-vertex}]
        Let $\mathcal{T}$ be the triangle spanning component which corresponds to $\triangle u v w$. Then, the proof is an adaption of the proof of \Cref{lem:lmtriangle-share-1-vertex,lem:lmtriangle-share-1-edge-with-vertices}. By combining the arguments there are three partitions $P_1, P_2, P_3,P_4$ from \Cref{alg:pnlmpart} which witness $\mathcal{T}$, and there are spanning connected components $\mathcal{C}_1, \mathcal{C}_2, \mathcal{C_3}$ such that $P_1$ witnesses $\mathcal{C}_1$ but not $\mathcal{C}_2, \mathcal{C}_3$, $P_2$ witnesses $\mathcal{C}_2$ but not $\mathcal{C}_1,\mathcal{C}_3$, and $P_2$ witnesses $\mathcal{C}_3$ but not $\mathcal{C}_1,\mathcal{C}_2$. Further, $P_4$ only witnesses $\mathcal{T}$. Hence, $P_1, P_2$ and $P_3$ are labelled with $0$ and $P_4$ with $8$.
    \end{proof-app}

\begin{proof-app}[Proof of \Cref{lem:share-edge-with-vertex-and-2-vertices}]
        Let $\mathcal{T}$ be the triangle spanning component which corresponds to $\triangle u v w$. Then, the proof is an adaption of the proof of \Cref{lem:lmtriangle-share-1-vertex,lem:lmtriangle-share-1-edge-with-vertices}. By combining the arguments there are three partitions 
        \[
            P_1, P_2, P_3,P_4
        \]
        from \Cref{alg:pnlmpart} which witness $\mathcal{T}$, and there are spanning connected components $\mathcal{C}_1, \mathcal{C}_2, \mathcal{C_3}$ such that $P_1$ witnesses $\mathcal{C}_1$ but not $\mathcal{C}_2, \mathcal{C}_3$, $P_2$ witnesses $\mathcal{C}_1, \mathcal{C}_2$ but not $\mathcal{C}_3$, and $P_2$ witnesses $\mathcal{C}_3$ but not $\mathcal{C}_1,\mathcal{C}_2$. Further, $P_4$ only witnesses $\mathcal{T}$. Hence, $P_1, P_2$ and $P_3$ are labelled with $0$ and $P_4$ with $8$.
    \end{proof-app}

\begin{proof-app}[Proof of \Cref{lem:share-edge-with-2-vertices}]
        Let $\mathcal{T}$ be the triangle spanning component which corresponds to $\triangle u v w$. Then, the proof is an adaption of the proof of \Cref{lem:lmtriangle-share-1-vertex,lem:lmtriangle-share-1-edge-with-vertices}. By combining the arguments there are three partitions $P_1, P_2, P_3,P_4$ from \Cref{alg:pnlmpart} which witness $\mathcal{T}$, and there are spanning connected components $\mathcal{C}_1, \mathcal{C}_2, \mathcal{C_3}$ such that $P_1$ witnesses $\mathcal{C}_1$ but not $\mathcal{C}_2, \mathcal{C}_3$, $P_2$ witnesses $\mathcal{C}_1, \mathcal{C}_2$ but not $\mathcal{C}_3$, and $P_2$ witnesses $\mathcal{C}_1, \mathcal{C}_3$ but not $\mathcal{C}_2$. Further, $P_4$ only witnesses $\mathcal{T}$. Hence, $P_1$ is labelled with $3$, $P_2, P_3$ with $0$ and $P_4$ with $4$.
    \end{proof-app}

\begin{proof-app}[Proof of \Cref{lem:share-2-edge-with-vertex-and-vertex}]
        Let $\mathcal{T}$ be the triangle spanning component which corresponds to $\triangle u v w$. Then, the proof is an adaption of the proof of \Cref{lem:lmtriangle-share-1-vertex,lem:lmtriangle-share-1-edge-with-vertices}. By combining the arguments there are four partitions $P_1, P_2, P_3,P_4$ from \Cref{alg:pnlmpart} which witness $\mathcal{T}$, and there are spanning connected components $\mathcal{C}_1, \mathcal{C}_2, \mathcal{C_3}$ such that $P_1$ witnesses $\mathcal{C}_1, \mathcal{C}_2, \mathcal{C}_3$, $P_2$ witnesses $\mathcal{C}_1, \mathcal{C}_2$ but not $\mathcal{C}_3$, and $P_3$ witnesses $\mathcal{C}_1, \mathcal{C}_3$ but not $\mathcal{C}_2$. Further, $P_4$ only witnesses $\mathcal{T}$. Hence, $P_1$ is labelled with $0$, $P_2, P_3$ with $1$, and $P_3$ with $3$.
    \end{proof-app}

\begin{proof-app}[Proof of \Cref{lem:share-edge-vertex-vertex-vertex}]
        Let $\mathcal{T}$ be the triangle spanning component which corresponds to $\triangle u v w$. Then, the proof is an adaption of the proof of \Cref{lem:lmtriangle-share-1-vertex,lem:lmtriangle-share-1-edge-with-vertices}. By combining the arguments there are four partitions $P_1, P_2, P_3, P_4$ from \Cref{alg:pnlmpart} which witness $\mathcal{T}$, and there are spanning connected components $\mathcal{C}_1, \mathcal{C}_2, \mathcal{C_3}, \mathcal{C_3}$ such that $P_1$ witnesses $\mathcal{C}_1$ and not $\mathcal{C}_2, \mathcal{C}_3$, $P_2$ witnesses $\mathcal{C}_1$ and not $\mathcal{C}_2, \mathcal{C}_3$, and $P_3$ witnesses $\mathcal{C}_3$ but not $\mathcal{C}_1, \mathcal{C}_2$. Further, $P_4$ only witnesses $\mathcal{T}$. Hence, $P_1, P_2, P_3$ are labelled with $0$, and $P_4$ with $8$.
    \end{proof-app}

\begin{proof-app}[Proof of \Cref{lem:share-edge-with-vertex-edge-vertex-vertex}]
        Let $\mathcal{T}$ be the triangle spanning component which corresponds to $\triangle u v w$. Then, the proof is an adaption of the proof of \Cref{lem:lmtriangle-share-1-vertex,lem:lmtriangle-share-1-edge-with-vertices}. By combining the arguments, there are five partitions 
        
        \[
            P_1, P_2, P_3, P_4,P_5
        \] 
        
        from \Cref{alg:pnlmpart} which witness $\mathcal{T}$, and there are spanning connected components $\mathcal{C}_1, \mathcal{C}_2, \mathcal{C_3}, \mathcal{C_3}, \mathcal{C_4}$ such that $P_1$ witnesses $\mathcal{C}_1,\mathcal{C}_2, \mathcal{C_4}$ and not $\mathcal{C}_3$, $P_2$ witnesses $\mathcal{C}_2$ and not $\mathcal{C}_1, \mathcal{C}_3, \mathcal{C}_4$, $P_3$ witnesses $\mathcal{C}_2, \mathcal{C}_3$ but not $\mathcal{C}_1, \mathcal{C}_4$, and $P_4$ witnesses $\mathcal{C}_4$ but not $\mathcal{C}_1, \mathcal{C}_2, \mathcal{C}_3$. Further, $P_5$ only witnesses $\mathcal{T}$, and hence $P_4$ only witnesses $\mathcal{T}$. Hence, $P_1, P_2$ are labelled with $0$, $P_3$ with $1$, and $P_4, P_5$ with $3$.
    \end{proof-app}

\begin{proof-app}[Proof of \Cref{lem:share-edge-edge-vertex-vertex-vertex}]
        Let $\mathcal{T}$ be the triangle spanning component which corresponds to $\triangle u v w$. Then, the proof is an adaption of the proof of \Cref{lem:lmtriangle-share-1-vertex,lem:lmtriangle-share-1-edge-with-vertices}. By combining the arguments there are six partitions 
            \[ P_1, P_2, P_3, P_4, P_5, P_6 \]
        from \Cref{alg:pnlmpart} which witness $\mathcal{T}$, and there are spanning connected components $\mathcal{C}_1, \mathcal{C}_2, \mathcal{C}_3, \mathcal{C}_4,\mathcal{C}_5$ such that $P_1$ witnesses $\mathcal{C}_1,\mathcal{C}_4$ and not $\mathcal{C}_2, \mathcal{C}_3, \mathcal{C}_5$, $P_2$ witnesses $\mathcal{C}_2,\mathcal{C}_4, \mathcal{C}_5$ and not $\mathcal{C}_1, \mathcal{C}_3$, $P_3$ witnesses $\mathcal{C}_2, \mathcal{C}_3, \mathcal{C}_5$ but not $\mathcal{C}_1, \mathcal{C}_4$, $P_4$ witnesses $\mathcal{C}_4$ but not $\mathcal{C}_1, \mathcal{C}_2, \mathcal{C}_3, \mathcal{C}_5$, and $P_5$ witnesses $\mathcal{C}_5$ but not $\mathcal{C}_1, \mathcal{C}_2, \mathcal{C}_3, \mathcal{C}_4$. Further, $P_6$ only witnesses $\mathcal{T}$, and hence $P_1, P_2, P_3$ are labelled with $0$, $P_4, P_5, P_6$ with $3$.
    \end{proof-app}

\begin{proof-app}[Proof of \Cref{lem:share-edge-edge-edge-vertex-vertex-vertex}]
        Let $\mathcal{T}$ be the triangle spanning component which corresponds to $\triangle u v w$. Then, the proof is an adaption of the proof of \Cref{lem:lmtriangle-share-1-vertex,lem:lmtriangle-share-1-edge-with-vertices}. By combining the arguments there are six partitions 
            \[P_1, P_2, P_3, P_4, P_5, P_6\]
        from \Cref{alg:pnlmpart} which witness $\mathcal{T}$, and there are spanning connected components $\mathcal{C}_1, \mathcal{C}_2, \mathcal{C}_3, \mathcal{C}_4,\mathcal{C}_5, \mathcal{C}_6$ such that $P_1$ witnesses $\mathcal{C}_1,\mathcal{C}_4, \mathcal{C}_6$ and not $\mathcal{C}_2, \mathcal{C}_3, \mathcal{C}_5$, $P_2$ witnesses $\mathcal{C}_2,\mathcal{C}_4, \mathcal{C}_5$ and not $\mathcal{C}_1, \mathcal{C}_3, \mathcal{C}_6$, $P_3$ witnesses $\mathcal{C}_3, \mathcal{C}_5, \mathcal{C}_6$ but not $\mathcal{C}_1, \mathcal{C}_2\mathcal{C}_4$, $P_4$ witnesses $\mathcal{C}_4$ but not $\mathcal{C}_1, \mathcal{C}_2, \mathcal{C}_3, \mathcal{C}_5, \mathcal{C}_6$, and $P_5$ witnesses $\mathcal{C}_5$ but not $\mathcal{C}_1, \mathcal{C}_2, \mathcal{C}_3, \mathcal{C}_4, \mathcal{C}_6$, and $P_6$ witnesses $\mathcal{C}_6$ but not $\mathcal{C}_1, \mathcal{C}_2, \mathcal{C}_3, \mathcal{C}_4, \mathcal{C}_5$. Hence $P_1, P_2, P_3$ are labelled with $0$, $P_4, P_5, P_6$ with $3$.
    \end{proof-app}




\end{appendices}


\bibliography{bibliography}

\end{document}